\newtheorem{theorem}{Theorem}[section]
\newtheorem{definition}{Definition}
\newtheorem{lemma}[theorem]{Lemma}
\newtheorem{fact}[theorem]{Fact}
\newtheorem{proposition}[theorem]{Proposition}
\newtheorem{corollary}[theorem]{Corollary}
\newtheorem{example}{Example}
\newenvironment{proofof}[1]{\vspace{0.1in}\noindent{\sc Proof of #1.}}{\hfill\qed}
\newenvironment{numberedtheorem}[1]{%
\begin{theorem}}{\end{theorem}\addtocounter{theorem}{-1}}
\newenvironment{numberedlemma}[1]{%
\begin{lemma}}{\end{lemma}\addtocounter{theorem}{-1}}
\newenvironment{numberedcorollary}[1]{%
\begin{corollary}}{\end{corollary}\addtocounter{theorem}{-1}}
\newcommand{\prob}[2][]{\text{\bf Pr}\ifthenelse{\not\equal{}{#1}}{_{#1}}{}\!\left[#2\right]}
\newcommand{\expect}[2][]{\text{\bf E}\ifthenelse{\not\equal{}{#1}}{_{#1}}{}\!\left[#2\right]}
\newcommand{\given}{\,\middle|\,}
\newcommand{\yestag}{\addtocounter{equation}{1}\tag{\theequation}}
\newcommand{\agind}[1][i]{_{#1}}
\newcommand{\inversed}[1]{#1^{-1}}
\newcommand{\ironed}{\bar}
\newcommand{\differentiated}[1]{#1'}
\newcommand{\fortype}{\tilde}
\newcommand{\estimated}{\hat}
\newcommand{\sampled}{\hat}
\newcommand{\noaccents}[1]{#1}
\newcommand{\composed}[3]{#1{#2{#3}}}
\newcommand{\newindexedvar}[4][\noaccents]{%
\expandafter\newcommand\expandafter{\csname #2\endcsname}{#1{#4}}%
\expandafter\newcommand\expandafter{\csname #2s\endcsname}{#1{\boldsymbol{#4}}}%
\expandafter\newcommand\expandafter{\csname #2sm#3\endcsname}[1][#3]{#1{\boldsymbol{#4}}_{-##1}}%
\expandafter\newcommand\expandafter{\csname #2#3\endcsname}[1][#3]{#1{#4}\agind[##1]}%
\expandafter\newcommand\expandafter{\csname #2#3th\endcsname}[1][#3]{#1{#4}_{(##1)}}%
}
\newcommand{\rev}{R}
\newcommand{\irev}{\bar{R}}
\newcommand{\marg}{R'}
\newcommand{\knalloc}[1][k]{x_{#1:n}}
\newcommand{\kalloc}[1][k]{x_{#1}}
\newcommand{\quant}{q}
\newcommand{\equant}{\hat{\quant}}
\newcommand{\expval}{\mathcal{V}} 
\newcommand{\Zbar}{\bar{Z}} 
\newcommand{\sw}[1][\wals]{S_{#1}} 
\composed{\differentiated}{\ironed}]{margiwal}{k}{\wal}
\composed{\differentiated}{\ironed}]{margiyal}{k}{\yal}
\composed{\differentiated}{\ironed}]{imumarg}{k}{\murev}
\newcommand{\bhat}{\hat{b}}
\newcommand{\REV}[1]{P_{#1}}
\newcommand{\samples}{N}
\newcommand{\eps}{\epsilon}
\newcommand{\opt}[2]{\mathbf{OPT}(#1,#2)}
\newcommand{\T}{T}
\newcommand{\Tij}{T_{i,j}}
\newcommand{\bwhat}{\widehat{b}}
\newcommand{\btild}{\tilde{b}}
\newcommand{\Phat}{\hat{P}}
\newcommand{\rxy}[1][y]{\Phi_{\alloc,#1}}
\newcommand{\Err}[1]{\expect[\hat{\bid}]{|\hat{#1}-#1|}}
\newcommand{\lastcrossing}{m}
\newcommand{\ceil}[1]{\left\lceil#1\right\rceil}
\newcommand{\floor}[1]{\left\lfloor#1\right\rfloor}
\renewcommand{\qedhere}{}
\begin{document}

\title{Mechanism Redesign\footnote{This paper combines and improves
    upon results from manuscripts {\em Mechanism Design for Data
      Science} \citeyearpar{CHN-14} and {\em A/B Testing of Auctions} \citeyearpar{CHN-16}.}}

\author{Shuchi Chawla\thanks{University of Wisconsin - Madison, WI. This work was supported by NSF CCF 1101429.} \and
Jason D. Hartline\thanks{Northwestern University, Evanston, IL.  This work was supported by NSF CCF 1101717.} 
\and Denis Nekipelov\thanks{University of Virginia, Charlottesville, VA. This work was supported by NSF CCF 1101706.}
\and Anant Shah\thanks{Northwestern University, Evanston, IL.}}

\date{}

\maketitle

\begin{abstract}
This paper develops the theory of {\em mechanism redesign} by which an
auctioneer can reoptimize an auction based on bid data collected from
previous iterations of the auction on bidders from the same market.
We give a direct method for estimation of the revenue of a
counterfactual auction from the bids in the current auction.  The
estimator is a simple weighted order statistic of the bids and has the
optimal error rate.  Two applications of our estimator are {\em A/B
  testing} (a.k.a., randomized controlled trials) and {\em instrumented
  optimization} (i.e., revenue optimization subject to being able to
do accurate inference of any counterfactual auction revenue).
\end{abstract}


\section{Introduction}
\label{s:intro}

%
%
This paper develops data-driven methods that enable a principal to
adjust the parameters of an auction so as to optimize its performance,
i.e., for {\em mechanism redesign}.  These methods are inspired by
settings of online markets such as online advertising, hotel booking
platforms, online auctions, etc.\footnote{We have had extensive
  discussions over the last decade with R\&D teams at companies in
  this space, especially Microsoft, which brought us to the model and
  questions studied in this paper.}  For a paradigmatic family of
auctions, we derive estimators for the revenue and welfare of
a counterfactual auction in the family from equilibrium bids in an
incumbent auction.  Our analysis exposes the relationship between the
the error of the estimator and the tuned parameters of the
counterfactual and incumbent auctions.  From this analysis, we
identify a family of mechanisms that are statistically efficient to
redesign.

Our work is motivated by the simple observation that revenue
maximization in auctions is at odds with statistical inference.
Specifically, in the classic setting of \citet{mye-81}, an auctioneer
who adopts a mechanism from the family of revenue optimal auctions,
which employ reserve prices, will not generally be able to
infer distribution of values of the bidders in the auction and, thus,
will not generally be able to determine the revenue optimal auction
from the family.  Reserve pricing and ironing pool bidders with
distinct values and, thereafter, no procedure for structural inference
can distinguish them.  Consequently, counterfactual revenue estimation
from bids in revenue optimal auctions is not generally possible.

Our work focuses on a canonical family of mechanisms that generalize
multi-unit auctions.  Position auctions where introduced and studied
by \citet{var-06} and \citet{EOS-07} as a model of auctions for
selling advertisements on Internet search engines.  A position auction
is defined by a decreasing sequence of weights which correspond to
allocation probabilities, bidders are assigned to weights
assortatively by bid.  The configurable parameters in this family of
auctions are the weights of the positions.  Position auctions
generalize classical single-item and multi-unit auction models and are
an important form of auction for theoretical study.  We study the
winner-pays-bid and all-pay variants of position
auctions. The main results for the all-pay setting are stated in \Cref{s:inference} and \Cref{s:iron-rank} while the winner-pays-bid variant is considered in \Cref{s:fp-inf}  \footnote{Unfortunately, our methods cannot be directly
applied to position auctions with the so-called ``generalized second
price'' payment rule of Google's AdWords platform.  Online ad auctions
are, however, moving towards winner-pays-bid formats like the ones we
consider \citep[e.g.,][]{PST-20}.}

Position auctions (without reserve prices or ironing) are not revenue
optimal; however, we show that they can be tuned so that their revenue
is approximately optimal.  For example, in a multi-unit auction for
bidders with identically distributed values it is revenue optimal to
impose a reserve price; however this optimal revenue is closely
approximated by reducing supply instead (e.g., to sell the same number
of items in expectation as under the reserve price).  Moreover, in
followup work to ours \citet{HT-19} show that the linking-decisions
approach of \citet{JS-07} can be applied to repeated Bayes-Nash
auctions with non-identically distributed bidders to give position
auctions that approximate the revenue optimal auction arbitrarily
closely.

Our main technical contribution is a method for counterfactual revenue
estimation: given two position auctions we define an estimator for the
equilibrium revenue of one (the counterfactual auction) from
equilibrium bids of the other (the incumbent auction).  Our estimator
has three important properties that contrast with the classical
approach to counterfactual revenue estimation in auctions.
\begin{enumerate}
\item The estimator is a linear functional of the bid distribution and
  can be applied to any estimator for the bid distribution.

  In contrast, the classical approach requires the estimator for the
  bid distribution be continuously differentiable so that the bidders'
  first order condition can be inverted to infer values.

  In contrast, the classical analysis requires uniform estimation of
  the distribution of values to obtain error bounds on counterfactual
  revenue estimates derived from value estimates.
  
\item Applied to the empirical bid distribution, the estimator is a
  weighted order statistic.  Specifically the order statistics of the
  bids in the incumbent mechanisms are mapped to a revenue estimate as
  a simple weighted sum with weights that are determined by the
  parameters of the counterfactual and incumbent
  auctions.\footnote{For numerical stability this estimator should be
  computed in its algebraically equivalent form as a weighted sum of
  the difference of consecutive bids.}
  
\item The estimator converges at the rate equal to the square root of the number of samples with leading coefficients
  that depend directly on the relationship between the tuned
  parameters of the counterfactual and incumbent auction; thus, we are
  able to identify mechanisms that are good for redesign, i.e., for
  which the estimator has small leading coeficients.
\end{enumerate}

Our theoretical analysis suggests that the typical smoothing in the
classical analysis is not necessary.  We confirm suggestion with Monte
Carlo simulations that show that the empirically optimal smoothing
for counterfactual revenue estimation is, in fact, no smoothing.  Our
theoretical analysis also suggests that error can be reduced by
truncating, i.e., zeroing out, the contribution to the estimator from
extreme (low or high) bids.  We find empirically that the level of
truncation suggested by the analysis, which does not depend on
distributions or mechanisms, is indeed close to the correct level.  



Our detailed analysis of the error of the counterfactual revenue
estimation, in terms of the parameters of the incumbent and
counterfactual mechanisms allows for direct comparison to the industry
standard practice of A/B testing, a.k.a., randomized controled trials
\citep[e.g.,][]{KLSH-09}.  An ideal A/B test for auctions would work
as follows: The auction format is randomized between formats A and B
with equilibrium bids for A collected for format A and equilibrium
bids for B collected for format B.  Bids in A are used to estimate
revenue of A and bids in B are used to estimate revenue in B.  This
ideal is typically not achieved in practice in online markets, often
instead the bids are collected in advance of the realization of the
randomized format, and thus can be assumed to be in equilibrium of
format C given by the convex combination of the two
formats.\footnote{Consider the example of online advertizing market.
  The advertisers bid in advance; the users to whom the ads are shown
  arrive online; and the randomization occurs at the user level.
  Typically these A/B tests are to understand user behavior; however,
  development teams that employ A/B testing also track ``key
  performance indicator'' like revenue.  Implications for revenue for
  A and B, of course, cannot be inferred from averages across the A
  and B treatments, since bids are in equilibrium for C not A nor B.}
Our analysis allows the revenue for B (also A) to be estimated from
the equilibrum bids in C and our error bounds depend on the
relationship between B and C.  When the A/B test mixes in B with
probability $\epsilon$ (and A with probability $1-\epsilon$), our
estimators dependence on epsilon is $\log 1/\epsilon$ while the ideal
A/B test has dependence $\sqrt{1/\epsilon}$.

The main result of the paper is an analysis of the instrumented
optimization problem where the principal aims to run a mechanism that
is both good for revenue and good for subsequent inference.  As a
first result, we show that there is universal treatment B such that in
the A/B test mechanism C, the revenue of any other position auction
can be estimated with low error.  A heuristic solution to the
instrumented optimization problem, then, is to run the A/B test
mechanism C that corresponds to the revenue optimal position auction A
and this universal treatment B.  Our second result incorporates a
bound on the desired rate of estimation into the revenue optimization
problem and derives the revenue optimal auction subject to good
revenue inference.  Our analysis gives a tradeoff between revenue
bounds (relative to the optimal position auction) and the desired rate
of convergence.

Our bounds on the error of our estimator are expressed in terms of the
the number of samples (from the bid distribution, subsequently denoted
by $N$), the number of bidders (in each auction, subsequently denoted
by $n$, and the similarity between the incumbant and counterfactual
auction by $\epsilon$).  A straightforward econometric analysis might
treat the number of bidders and parameters of the auctions as
constants without quantifying the detailed dependence of the error on
these constants.  Such analysis does not preclude the possibility that
there is a very large error until the number of samples is
exponentially larger than this number of bidders.\footnote{For
  example, in a single-unit $n$-agent all-pay auction the bids of the
  median bidder are an exponential factor smaller than their value.
  Of course, the revenue of the $n/2$-unit all-pay auction depends
  very much on good estimates on the contribution to the revenue from
  the median bidder.}  In contrast, our error bounds show at most
polynomial dependence on the number of bidders and logarithmic
dependence on the similarity between auctions $1/\epsilon$.

While much of this paper focuses on estimating and optimizing revenue,
in \Cref{s:welfare} we extend the method to the estimation of
social welfare.

\subsection{Motivating Example: Auctions for Internet Search Advertising.}
\label{s:example-search-advertising}

Our work is motivated by the auctions that sell advertisements on
Internet search engines (see historical discussion by
\citealp{FP-06}).  The first-price position auction we study in this
paper was introduced in 1998 by the Internet listing company GoTo.
This auction was adapted by Google in 2002 for their AdWords platform,
modified to have a second-price-like payment rule, and is known as the
{\em generalized second-price auction}.  Early theoretical studies of
equilibrium in the generalized second-price auction were conducted by
\citet{var-06} and \citet{EOS-07}; unlike the second-price auction for
which it is named, the generalized second-price auction does not admit
a truthtelling equilibrium.

Internet search advertising works as follows.  A user looking for web
pages on a given topic specifies {\em keywords} on which to search.  The
search engine then returns a listing of web pages that relate to these
keywords.  Alongside these search results, the search engine displays
sponsored results.  These results are conventionally explicitly
labeled as sponsored and appear in the {\em mainline}, i.e., above the
search results, or in the {\em sidebar}, i.e., to the right of the
search results.  The mainline typically contains up to four ads and
the sidebar contains up to seven ads.  The order of the ads is of
importance as the Internet user is more likely to read and click on
ads in higher positions on the page.  In the classic model of
\citet{var-06} and \citet{EOS-07} the user's click behavior is
exogenously given by weights associated with the
positions,\footnote{Endogenous click models have also been considered,
  e.g., \citet{AE-11}, but are less prevalent in the literature.}  and
the weights are decreasing in position.  An advertiser only pays for
the ad if the user clicks on it.  Thus in the classic first-price
position auction, advertisers are assigned to positions in order of
their bids, and the advertisers on whose ads the user clicks each pay
their bids.

As described above, the ads in the mainline have higher click rates
than those in the sidebar.  The mainline, however, is not required
to be filled to capacity (a maximum of four ads).  In the first-price
position auction described above, the choice of the number of ads to
show in the mainline affects the revenue of the auction and, in the
standard auction-theoretic model of Bayes-Nash equilibrium, this
choice is ambiguous with respect to revenue ranking.  For some
distributions of advertiser values, showing more ads in the mainline
gives more revenue, while for other distributions fewer ads gives more
revenue.

The keywords of the user enable the advertisers to target users with
distinct interests.  For example, hotels in Hawaii may wish to show
ads to a user searching for ``best beaches,'' while a computer
hardware company would prefer users searching for ``laptop reviews.''
Thus, the search advertising is in fact a collection of many partially
overlapping markets, with some high-volume high-demand keywords and a
long tail of niche keywords.  The conditions of each of these markets
are distinct and thus, as per the discussion of the preceding
paragraph, the number of ads to show in the mainline depends on the
keywords of the search.

One empirical method for evaluating two alternatives, e.g., showing
one or two mainline ads, is A/B testing.
In the ideal setting of A/B
testing, the auctions for a given keyword would be randomly divided
into the A and B groups. In part A the advertisers would bid in
Bayes-Nash equilibrium for A and in part B they would bid in
equilibrium for B.  Unfortunately, because we need to test both A and
B in each market, ideal A/B testing would require soliciting distinct
bids for each variant of the auction.  This approach is impractical,
both from an engineering perspective and from a public relations
perspective.  In practice, A/B tests are run on these ad platforms all
the time and without informing the advertisers.  Of course,
advertisers can observe any overall change in the mechanism and adapt
their bids accordingly, i.e., they can be assumed to be in
equilibrium.  Our approach of A/B testing, where bids are in
equilibrium for auction C, the convex combination of A and B, is
consistent with the industry standard practice for Internet search
advertising.

Our A/B testing framework is motivated specifically by the goal of
optimizing an auction to local characteristics of the market in which
the auction is run.  It is important to distinguish this goal from
that of another framework for A/B testing which is commonly used to
evaluate global properties of auctions across a collection of disjoint
markets.  This framework randomly partitions the individual markets
into a control group (where auction A is run) and a treatment group
(with auction B).  From such an A/B test we can evaluate whether it is
better for all markets to run A or for all to run B.  It cannot be
used, however, for our motivating application of determining the
number of mainline ads to show, where the optimal number naturally
varies across distinct markets.  The work of \citet{OS-11} on reserve pricing
in Yahoo!'s ad auction demonstrates how such a global A/B test can be
valuable.  They first used a parametric estimator for the value
distribution in each market to determine a good reserve price for that
market.  Then they did a global A/B test to determine whether the
auction with their calculated reserve prices (the B mechanism) has
higher revenue on average than the auction with the original reserve
prices (the A mechanism).  Our methods relate to and can replace the
first step of their analysis.

\subsection{Related Work}

Our work is motivated, in part, by field work in the past decade that
considers the empirical optimization of reserve prices in auctions
(e.g., \citealp{Ril-06}; \citealp{BM-09}; and \citealp{OS-11}).  The
field study of \citeauthor{OS-11} is most similar to our theoretical
study and the motivating example of
\Cref{s:example-search-advertising}.  They consider the
generalized second-price position auction of Internet search
advertising (on Yahoo!).  They assume that the distribution of
advertiser (bidder) values is lognormal and use structural inference
to estimate the parameters of the distribution for the keywords of
each search.  This allows for inference of the optimal reserve price.
They then suggest using a reserve price that is slightly smaller.
Finally, they evaluate the method of setting reserves via a global A/B
test that compairs the original reserves with their reserves across
all keywords.  While the authors motivate the usage of reserves
slightly smaller than the optimal reserves for reasons of robustness,
in the context of our motivation these smaller reserves also allow
future inference around the optimal reserve price where the optimal
reserves do not.


The classical approach to counterfactual inference is based on
recovering the values of bidders by inverting their best responses
using the empirical distribution of bids.  This approach was developed
by \citet{guerre} for single-unit first-price auctions and it has seen
application broadly in auction theory (e.g.\@ see
\citealp{athey:2007}, \citealp{paarsch}, and \citealp{marmer}). There
are several ways in which our revenue estimator improves upon this
standard approach. First, \citet{guerre} and subsequent works assume
that values lie in a bounded range, the value density is bounded away
from zero, and the density is differentiable. We also assume that
values are bounded, but do not require any other assumptions on the
value density. Second, this literature focuses on single item first
price auctions. In contrast, our work applies to first-price and
all-pay $k$-unit auctions, as well as mixtures over them. On the other
hand, while \citet{marmer} allow for the seller to impose a reserve
price, our techniques do not extend to auctions with reserve
prices. However, we show that auctions with reserve prices can be
approximated in revenue by position auctions. Third, the classical
approach requires selecting an appropriate bandwidth for bid density
estimation that is tuned to properties of the endogenous bid
distribution, whereas our estimator is not parameterized.  Our
inference method is based on a technique similar to that in
\citet{mye-81} that ``integrates out'' best responses of agents so
that the auction revenue can be expressed directly in terms of the
observable bids.  Our proposed estimator is equivalent to the plug-in
estimator with no smoothing and we empirically show that estimation
error as a function of the degree of smoothing is minimized with no
smoothing.

%
%
Our work focuses on position auctions with agents with identically
distributed values, i.e., symmeteric position auctions.  Recently,
\citet{HT-19} generalize our results to single-dimensional mechanism
design problems that are repeated with asymmetric agents within rounds
but identically distributed agents across rounds, i.e., a repeated
agent-normal-form game.  Their approach employs our analysis as a
black box and demonstrates that symmetric position auctions are
fundamental to the theory of mechanism design for single-dimensional
agents more generally.

Our estimator of the counterfactual auction revenue is simply a
weighted order statistic of samples from the bid distribution.  Other
works have proposed using similarly simple estimators to obtain bounds
on the performance a counterfactual auctions.  Unlike our work, these
estimators do not use the first-order condition of Bayes-Nash
equilbrium, but in exchange for a weakening of the assumptions of the
model, they obtain bounds instead of point estimates.  For example,
\citet{coey:2014} consider ascending single-item auctions and use the
main theorem of \citet{BK-96}, the revenue submodularity of \citet{DRS-12},
and the expected second- and third-highest bids to bound the revenue
of the (counterfactual) optimal auction.  See their related work
section for a discussion of similar studies.


The mechanism design literature has previously considered the problem
of an uninformed designer who wishes optimize a mechanism under three
conditions: (a) repeatedly on agents from the same population (each
agent participates only once; see \citealp{KL-03}; \citealp{BH-05};
and \citealp{CGM-15}), (b) with samples from the value distribution (see
\citealp{CR-14}, and \citealp{FHHK-14}), and (c) on the fly in one
mechanism (see \citealp{GHKSW-06}; \citealp{seg-03}; and
\citealp{BV-03}).  These works exclusively consider mechanisms that
have truthtelling equilibria and for which, consequently, inference is
trivial.  The papers listed in category (a) also consider a model
where the designer only learns the revenue of the mechanism in each
round and not the individual bids.  These papers adapt methods from
the multi-armed bandit literature, e.g., \citet{ACFS-02}, which tradeoff
exploring the performance of mechanisms that the designer is less
informed about with exploiting the mechanisms which have been learned
to perform well.  Our approach of instrumented optimization is similar
to the exploration steps of these multi-armed bandit algorithms,
except that we assume that bids are in equilibrium for the
distribution over mechanisms rather than for each individual
mechanism.  This distinction is important for mechanisms that do not
have truthtelling equilibria.

Finally, the theory that we develop for optimizing revenue over the
class of rank-by-bid position auctions is isomorphic to the theory of
envy-free optimal pricing developed by \citet{DHY-14}.

\section{Preliminaries}\label{s:prelim}
\label{S:PRELIM}

%
%
%
%
%

\subsection{Auction Theory}

A standard auction design problem is defined by a set $[n] =
\{1,\ldots,n\}$ of $n\ge 2$ agents, each with a private value $\vali$
for receiving a service.  The values are bounded as $\vali\in[0,1]$
and are independently and identically distributed according to a
continuous distribution $\dist$.  If $\alloci$ indicates the
probability of service and $\pricei$ the expected payment required,
agent $i$ has linear utility $\utili = \vali \alloci - \pricei$.  An
auction elicits bids $\bids = (\bidi[1],\ldots,\bidi[n])$ from the
agents and maps the vector $\bids$ of bids to an allocation
$\tallocs(\bids) = (\talloci[1](\bids),\ldots,\talloci[n](\bids))$,
specifying the probability with which each agent is served, and prices
$\tprices(\bids) = (\tpricei[1](\bids),\ldots,\tpricei[n](\bids))$,
specifying the expected amount that each agent is required to pay.  An
auction is denoted by $(\tallocs,\tprices)$.  

\paragraph{Standard payment formats}

In this paper we study two standard payment formats. In a {\em
  first-price} format, each agent pays his bid upon winning, that is,
$\tpricei(\bids) = \bidi \, \talloci(\bids)$. In an {\em all-pay} format,
each agent pays his bid regardless of whether or not he wins, that is,
$\tpricei(\bids) = \bidi$.


\paragraph{Bayes-Nash equilibrium}
The values are independently and identically distributed according to
a continuous distribution $\dist$.  This distribution is common
knowledge to the agents.  A strategy $\strati$ for agent $i$ is a
function that maps the
value of the agent to a bid.  The distribution of values $\dist$ and a
profile of strategies $\strats = (\strati[1],\cdots,\strati[n])$
induces interim allocation and payment rules (as a function of bids)
as follows for agent $i$ with bid $\bidi$.
\begin{align*}
\talloci(\bidi) &= \expect[\valsmi \sim \dist]{\talloci(\bidi,\stratsmi(\valsmi))} \text{ and}\\
\tpricei(\bidi) &= \expect[\valsmi \sim \dist]{\tpricei(\bidi,\stratsmi(\valsmi))}.
\intertext{Agents have linear utility which can be expressed in the interm as:}
\tutili(\vali,\bidi) &= \vali\talloci(\bidi) - \tpricei(\bidi).
\end{align*}
The strategy profile forms a {\em Bayes-Nash equilibrium} (BNE) if for
all agents $i$, values $\vali$, and alternative bids $\bidi$, bidding
$\strati(\vali)$ according to the strategy profile is at least as good
as bidding $\bidi$.  I.e.,
\begin{align}
\label{eq:br}
\tutili(\vali,\strati(\vali)) &\geq \tutili(\vali,\bidi).
\end{align}

A symmetric equilibrium is one where all agents bid by the same
strategy, i.e., $\strats$ statisfies $\strati = \strat$ for all $i$
and some $\strat$.  For a symmetric equilibrium of a symmetric
auction, the interim allocation and payment rules are also symmetric,
i.e., $\talloci = \talloc$ and $\strati = \strat$ for all $i$.  For
implicit distribution $\dist$ and symmetric equilibrium given by
stratey $\strat$, a mechanism can be described by the pair
$(\talloc,\tprice)$.  \citet{CH-13} show that the equilibrium of every
auction in the class we consider is unique and symmetric.

The strategy profile allows the mechanism's outcome rules to be
expressed in terms of the agents' values instead of their bids; the
distribution of values allows them to be expressed in terms of the
agents' values relative to the distribution.  This latter
representation exposes the geometry of the mechanism.  Define the {\em
  quantile} $\quant$ of an agent with value $\val$ to be the
probability that $\val$ is larger than a random draw from the
distribution $\dist$, i.e., $\quant=\dist(\val)$.  Denote the agent's
value as a function of quantile as $\val(\quant) =
\dist^{-1}(\quant)$, and his bid as a function of quantile as
$\bid(\quant) = \strat(\val(\quant))$.  The outcome rule of the
mechanism in quantile space is the pair
$(\alloc(\quant),\price(\quant)) =
(\talloc(\bid(\quant)),\tprice(\bid(\quant)))$.

\paragraph{Revenue curves and auction revenue}

\citet{mye-81} characterized Bayes-Nash equilibria and this
characterization enables writing the revenue of a mechanism as a
weighted sum of revenues of single-agent posted pricings.  Formally,
the {\em revenue curve} $\rev(\quant)$ for a given value distribution
specifies the revenue of the single-agent mechanism that serves an
agent with value drawn from that distribution if and only if the
agent's quantile exceeds $\quant$: $\rev(\quant) =
\val(\quant)\,(1-\quant)$.  Myerson's characterization of BNE then
implies that the expected revenue of a mechanism at BNE from an agent
facing an allocation rule $\alloc(\quant)$, notated $\REV{\alloc}$,
can be written as follows:
\begin{eqnarray}
  \label{eq:bne-rev}
  \REV{\alloc} = \rev(0)\,\alloc(0) + \expect[\quant]{\rev(\quant)\,\alloc'(\quant)}  = \rev(1)\,\alloc(1) - \expect[\quant]{\rev'(\quant)\,\alloc(\quant)} 
\end{eqnarray}
where $\alloc'$ and $\rev'$ denote the derivative of $\alloc$ and
$\rev$ with respect to $\quant$, respectively.  For value
distributions supported on $[0,1]$, $\rev(0) = \rev(1) = 0$ and the
constant terms in equation~\eqref{eq:bne-rev} are identically zero.

The expected revenue of an auction is the sum over the agents of its
per-agent expected revenue; for auctions with symmetric equilibrium
allocation rule $\alloc$ this revenue is $n \, \REV{\alloc}$.

\paragraph{Position environments and rank-based auctions}
\label{sec:rank-based-basics}

A {\em position environment} expresses the feasibility constraint of
the auction designer in terms of {\em position weights} $\wals$
satisfying $1 \ge \walk[1]\ge \walk[2] \ge \cdots \ge \walk[n] \geq
0$.  A {\em position auction} assigns agents (potentially randomly) to
positions $1$ through $n$, and an agent assigned to position $i$ gets
allocated with probability $\walk[i]$.  The {\em rank-by-bid position
  auction} orders the agents by their bids, with
ties broken randomly, and assigns agent $i$, with the $i$th largest
bid, to position $i$, with allocation probability $\walk[i]$.  {\em
  Multi-unit environments} are a special case and are defined for $k$
units as $\walk[j] = 1$ for $j \in \{1,\ldots,k\}$ and $\walk[j] = 0$
for $j \in \{k+1,\ldots,n\}$.  The {\em highest-$k$-bids-win}
multi-unit auction is the special case of the rank-by-bid position
auction for the $k$-unit environment.

In our model with agent values drawn i.i.d.\@ from a continuous
distribution, rank-by-bid position auctions with either all-pay or
first-price payment semantics have a unique Bayes-Nash equilibrium and
this equilibrium is symmetric and efficient, i.e., in equilibrium, the
agents' bids and values are in the same order \citep{CH-13}.

Rank-by-bid position auctions can be viewed as convex combinations of
highest-bids-win multi-unit auctions.  The {\em marginal weights} of a
position environment are $\margwals =
(\margwalk[1],\ldots,\margwalk[n])$ with $\margwalk = \walk -
\walk[k+1]$.  Define $\margwalk[0] = 1-\walk[1]$ and note that the
marginal weights $\margwals$ can be interpreted as a probability
distribution over $\{0,\ldots,n\}$.  As rank-by-bid position auctions
are efficient, the rank-by-bid position auction with weights $\wals$
has the exact same allocation rule as the mechanism that draws a
number of units $k$ from the distribution given by $\margwals$ and
runs the highest-$k$-bids-win auction.

Denote the highest-$k$-bids-win allocation rule as $\knalloc$ and its
per-agent revenue as $\murevk = \REV{\knalloc} =
\expect[\quant]{-\marg(\quant)\,\knalloc(\quant)}$.  This allocation
rule is precisely the probability an agent with quantile $\quant$ has
one of the highest $k$ quantiles of $n$ agents, or at most $k-1$ of
the $n-1$ remaining agents have quantiles greater than $\quant$.
Formulaically,
\begin{align*}
\knalloc(\quant) &= \sum_{i=0}^{k-1} \tbinom{n-1}{i}
\quant^{n-1-i}(1-\quant)^{i}.  
\\\intertext{Importantly, the allocation rule (in quantile space) of a
  rank-by-bid position auction does not depend on the distribution at
  all.  The allocation rule $\alloc$ of the rank-by-bid position
  auction with weights $\wals$ is:}
\alloc(\quant) &= \sum\nolimits_k \margwalk\, \knalloc(\quant).
\\\intertext{By revenue equivalence \citep{mye-81}, the per-agent
  revenue of the rank-by-bid position auction with weights $\wals$
  is:}
\REV{\alloc} &= \sum\nolimits_k \margwalk\,\murevk.
\end{align*}
Of course, $\murevk[0] = \murevk[n] = 0$ as always serving or never
serving the agents gives zero revenue.

A {\em rank-based} auction is one where the probability that an agent
is served is a function only of the rank of the agent's bid among the
other bids and not the magnitudes of the bids.  Any rank-based auction
induces a position environment where $\iwalk$ denotes the probability
that the agent with the $k$th ranked bid is served.  This auction is
equivalent to the rank-by-bid position auction with these weights
$\iwals$.  In a position environment with weights $\wals$, the following
lemma characterizes the weights $\iwals$ that are induced by
rank-based auctions.  

\begin{lemma}[e.g., \citealp{DHH-13}]
\label{l:rank-based-feasibility}
There is a rank-based auction with induced position weights $\iwals$
for a position environment with weights $\wals$ if and only if their
cumulative weights satisfy $\sum_{j=1}^{k} \iwalk[j] \leq
\sum_{j=1}^{k} \walk[j]$ for all $k$.
\end{lemma}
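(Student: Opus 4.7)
The plan is to prove necessity and sufficiency separately. For the necessity direction, I would fix any rank-based auction and any $k$. In any realization, the auction's assignment is a (partial) matching $\pi$ from ranks to positions, and the top $k$ ranks go to at most $k$ distinct positions. Since $\wals$ is non-increasing, the sum of any $k$ of its entries is bounded by $\sum_{j=1}^{k}\walk[j]$, the sum of the top $k$. Averaging over the auction's randomization yields $\sum_{j=1}^{k}\iwalk[j]\leq\sum_{j=1}^{k}\walk[j]$.

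For the sufficiency direction, I would first observe that any rank-based auction is equivalently a distribution over (partial) matchings of ranks to positions, which by the Birkhoff--von Neumann theorem is the same as a doubly substochastic matrix $M$ satisfying $\iwalk[k]=\sum_{j}M_{k,j}\walk[j]$. The task thus reduces to showing that the prefix sum conditions on the non-increasing vectors $\iwals$ and $\wals$ imply the existence of such an $M$. This is the weak-submajorization version of the Hardy--Littlewood--Polya theorem, which I would prove by induction on $n$: at each step, identify a pair of coordinates at which $\wals$ can be locally adjusted---via a random swap (a \emph{T-transform}) mixed with the identity, or via partial unassignment---to reduce the discrepancy with $\iwals$ while preserving weak submajorization; after finitely many such steps, the resulting distribution over partial permutations realizes $\iwals$.

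The main obstacle is executing the induction in the sufficiency direction. A naive ``peel off rank $n$ onto position $n$'' step fails, because $\iwalk[n]$ can exceed $\walk[n]$ even when the prefix sum condition holds; for instance, $\wals=(1,0)$ and $\iwals=(1/2,1/2)$ requires mixing the identity with the swap. The correct induction must therefore sometimes transfer probability mass through a swap involving a high-weight position; showing that the prefix sum condition always permits such a transfer, and that the process terminates, is the technical heart of the argument.
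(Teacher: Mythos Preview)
Your proposal is correct. The necessity argument is the standard one and the paper does not even spell it out, simply citing the source; your version is fine. For sufficiency, reducing via Birkhoff--von~Neumann to doubly substochastic matrices and then invoking the T-transform proof of the weak Hardy--Littlewood--P\'olya theorem is the classical route and would go through, including the obstacle you flag (that a naive peel-off fails and one must sometimes mix in a swap).

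The paper's own construction (given in the appendix) is genuinely different and more auction-native. Instead of pairwise T-transforms it uses only the two operations already introduced in the text: \emph{iron by rank} (uniformly permute an interval $\{i,\ldots,i'\}$ of ranks, which replaces that segment of the cumulative-weight curve by a chord) and \emph{rank reserve} (zero out all weights beyond some rank). The induction matches cumulative weights from the top down: if the current weights already agree with the target on positions $1,\ldots,i$, one locates an $i'$ such that ironing over $\{i,\ldots,i'\}$ overshoots $\iwalk[i+1]$ while ironing over $\{i,\ldots,i'+1\}$ undershoots it, and takes the appropriate convex combination of the two (or, at the right boundary, mixes ironing over $\{i,\ldots,n\}$ with a rank reserve at $i$). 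This fixes the $(i{+}1)$th weight exactly while preserving prefix-sum dominance on the tail, and terminates in at most $n$ steps. What the paper's route buys is that the resulting mechanism is expressed directly as a random sequence of ironing and reserve operations---precisely the operations used later when characterizing and computing the revenue-optimal rank-based auction. Your route would instead output a distribution over partial permutations, which is equally valid as an existence proof but less aligned with those downstream constructions.
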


\subsection{Inference}

As we discussed in the introduction, the traditional structural
inference in the auction settings is based on inferring distribution
of values, which is unobserved but can be inferred from the
distribution of bids, which is observed.  Once the value distribution
is inferred, other properties of the value distribution such as its
corresponding revenue curve, which is fundamental for optimizing
revenue, can be obtained.  In this section we briefly overview this
approach.

The key idea behind the inference of the value distribution from the
bid distribution is that the strategy which maps values to bids is a
best response, by equation~\eqref{eq:br}, to the distribution of bids.
As the distribution of bids is observed, and given suitable continuity
assumptions, this best response function can be inverted.


We assume that the value distribution function $\dist(\cdot)$, the
allocation rule $\alloc(\cdot)$, and consequently also the quantile function of bid
distribution $\bid(\cdot)$, are monotone, continuously
differentiable, and invertible.

\paragraph{Inference for first-price auctions}
Consider a first-price rank-based auction with a symmetric bid
function $\bid(\quant)$ and allocation rule $\alloc(\quant)$ in BNE.
To invert the bid function we solve for the bid that the agent with
any value would make.  Continuity of this bid function implies that
its inverse is well defined.  Applying this inverse to the bid
distribution gives the value distribution.

The utility of an agent with quantile $\quant$ as a function of his bid $z$
is
\begin{align*}
\yestag\label{eq:fp-util}
\util(\quant,z) &= (\val(\quant) - z) \, \alloc(\bid^{-1}(z)).\\ 
\intertext{Differentiating with respect to $z$ we get:} 
\tfrac{d}{dz}\util(\quant,z) &= -\alloc(\bid^{-1}(z)) +
\big(\val(\quant) - z\big) \, \alloc'(\bid^{-1}(z))\,
\tfrac{d}{dz}\bid^{-1}(z).\\ 
\intertext{Here $\alloc'$ is the
  derivative of $\alloc$ with respect to the quantile $q$. Because
  $\bid(\cdot)$ is in BNE, the derivative $\tfrac{d}{dz}\util(\quant,z)$ is $0$ at
  $z=\bid(\quant)$. Rarranging, we obtain:}
  \yestag
  \label{eq:fp-inf}
  \val(\quant) &= \bid(\quant) + \tfrac{\alloc(\quant)\,\bid'(\quant)}{\alloc'(\quant)}
\end{align*}

\paragraph{Inference for all-pay auctions}
We repeat the calculation above for rank-based all-pay auctions; the starting
equation \eqref{eq:fp-util} is replaced with the analogous equation for all-pay auctions:
\begin{align*}
\yestag\label{eq:ap-util}
\util(\quant,z) &= \val(\quant)\,\alloc(\bid^{-1}(z)) - z.
\intertext{Differentiating with respect to $z$ we obtain:}
\tfrac{d}{dz}\util(\quant,z) &= \val(\quant)\,\alloc'(\bid^{-1}(z)) \,\frac{d}{dz}\bid^{-1}(z) - 1,\\
\intertext{Again the first-order condition of BNE implies that this expression is zero at $z = \bid(\quant)$; therefore,}
\yestag  \label{eq:ap-inf}
  \val(\quant) & = \tfrac{\bid'(\quant)}{\alloc'(\quant)}.
\end{align*}

\paragraph{Known and observed quantities}
Recall that the functions $\alloc(\quant)$ and $\alloc'(\quant)$ are
known precisely: these are determined by the rank-based auction
definition.  The functions $\bid(\quant)$ and $\bid'(\quant)$ are
observed.  The calculations above hold in the limit as the number of
samples from the bid distribution goes to infinity, at which point these
obserations are precise.

Equations~\eqref{eq:fp-inf} and~\eqref{eq:ap-inf} enable the value
function, or equivalently, the value distribution, to be estimated
from the estimated bid function and an estimator for the derivative of
the bid function, or equivalently, the density of the bid
distribution. Estimation of densities is standard; however, it
requires assumptions on the distribution, e.g., continuity, and the
convergence rates in most cases will be slower.  Our main results do not take
this standard approach. Below we discuss errors in estimation of the
bid function. 


\section{Revenue estimator and error bounds for all-pay auctions}
\label{s:inference}
\label{s:param-inf}
\label{S:PARAM-INF}

We will now describe our estimator for the revenue of one rank-based
auction using bids from another rank-based auction. There are two
advantages of the restriction of our analysis to rank-based auctions.
First, the allocation rule (in quantile space) of a rank-based auction
is independent of the bid and value distribution; therefore, it is
known and does not need to be estimated.  Second, the allocation rules
that result from rank-based auctions are well behaved, in particular
their slopes are bounded, and our error analysis makes use of this
property.

Recall from Section~\ref{sec:rank-based-basics} that the
revenue of any rank-based auction can be expressed as a linear
combination of the multi-unit revenues $\murevk[1],\ldots,\murevk[n]$
with $\murevk$ equal to the per-agent revenue of the
highest-$k$-bids-win auction. Therefore, in order to estimate the
revenue of a rank-based auction, it suffices to estimate each
$\murevk$ accurately.

In Section~\ref{s:inference-equation} we derive the counterfactual
revenue estimator.  We state and discuss the error bounds of this
estimator in Section~\ref{s:inference-k-thm}.  Two bounds are given;
the first bound holds in worst case over counterfactual and incumbent
mechanisms and the second bound depends on the closeness of the
allocation rules of the counterfactual and incumbent mechanisms.  The
main ideas of the derivation of the error bounds are given in
Section~\ref{s:inference-k}.

\subsection{The revenue estimator}
\label{s:inference-equation}


Consider estimating the revenue of an auction with allocation rule $y$
from the bids of an all-pay rank-based auction.  In terms of the
revenue curve $R(\cdot)$ or inverse demand function $\val(\cdot)$,
the per-agent revenue of the allocation rule $y$ is given by:
\begin{align}
\notag
P_y &= \expect[\quant]{y'(\quant)\,\rev(\quant)} =
\expect[\quant]{y'(\quant)\,\val(\quant)\,(1-\quant)}.\\
\intertext{Let $\alloc$ denote the allocation rule of the auction that we run,
and $\bid$ denote the bid distribution in BNE of this auction.  Recall
that for an all-pay auction format, we can convert the bid
distribution into the value distribution as follows: $\val(\quant) =
\bid'(\quant)/\alloc'(\quant)$.  Substituting this equation into the expression
for $P_y$ above we get}
\label{eq:inference-pre-integration-by-parts}
P_y &=
\expect[\quant]{y'(\quant)(1-\quant)\frac{\bid'(\quant)}{\alloc'(\quant)}}
      = \expect[\quant]{Z_y(\quant)\,\bid'(\quant)}
  \end{align}
where $Z_y(\quant)=(1-\quant)\frac{y'(\quant)}{\alloc'(\quant)}$.

To estimate $P_y$ the analyst
obtains $\samples$ samples from the bid distribution. Each sample is
the corresponding agent's best response to the true bid
distribution. 
%
%
We can estimate the quantile function of the equilibrium bid
distribution $\bid(\quant)$ as follows.  Let
$\sbidi[1],\cdots,\sbidi[\samples]$ denote the $\samples$ samples
drawn from the bid distribution.  Sort the bids so that $\sbidi[1]
\leq \sbidi[2] \leq \cdots \leq \sbidi[\samples]$ and define the {\em
quantile function of the empirical bid distribution} $\ebid(\cdot)$ as
\begin{align}\label{bid function}
\ebid(\quant) &= \sbidi &\forall i \in \samples, \quant \in [i-1,i)/\samples
\end{align}

We further observe that truncating the bid
distribution at its extremes results in a tradeoff of the variance of
the resulting estimator (which can diverge at the extreme quantiles of the bid distribution) with a bias (which is
bounded). Accordingly, we obtain the following estimator.

\begin{definition}
\label{d:estimator}
The estimator $\hat{P}_y$ (with truncation parameter $\delta_N$) for
the revenue of an auction with allocation rule $y$ from $N$ samples
$\hat{b}_1 \leq \cdots \leq \hat{b}_N$ from the equilibrium bid
distribution of an all-pay auction with allocation rule $x$ is:
\begin{align*}
\hat{P}_y& = \sum_{i=\delta_N N}^{N-\delta_N N} \left(1-\frac
    {i} N\right)\frac{y'(\frac {i}N)}{x'( \frac {i}N)} 
           \, \left(\hat{\bid}_{i+1}-\hat{\bid}_{i}\right).
\end{align*}
\end{definition}

This estimator is obtained
from (\ref{eq:inference-pre-integration-by-parts}) by integration by parts
over range of the quantile function $\ebid(\cdot)$
of the empirical
bid distribution with its support
truncated to $[\delta_N,\,
1-\delta_N]$
and re-grouping of the terms
in the resulting sum.

Our main theorems set
the truncation parameter to a specific value $\delta_N = \max(25 \log
\log N,n)/N$ and show that, with no assumptions on the distribution of
values or bids, the truncated estimator's mean absolute error is
bounded.  Importantly, this truncated estimator does not have any
parameters that need to be tuned to the distribution of bids or
values.  

We refer to the estimator with truncation parameter set to zero as the
untruncated estimator.  We study the untrucated estimator in
simulations in Section~\ref{s:simulations} and demonstrate that, when
$Z_y(0)$ and $Z_y(1)$ are small, it can be quite accurate with very
few bid samples.

\subsection{Error bounds} 
\label{s:inference-k-thm}

Our first main result of this section is the following error bound for
the estimator of Definition~\ref{d:estimator}.

\begin{theorem}\label{thm:allpay-simple}
  The mean absolute error in estimating the revenue of a rank-based
  auction with allocation rule $y$ using $N$ samples from the bid
  distribution for an all-pay rank-based auction with allocation rule
  $x$ is bounded as below. Here $n$ is the number of positions in the
  two auctions, and $\hat{P}_y$ is the estimator in Definition~\ref{d:estimator}
  with $\delta_N$ set to $\max(25\log\log N,n)/N$.
\begin{align*}
\Err{P_y} & \le \frac{16n^2\log N}{\sqrt{N}}.
\end{align*}
\end{theorem}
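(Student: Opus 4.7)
My starting point is Lemma~\ref{l:counterfactual-revenue}, which writes $P_y$ as a moderate-quantile contribution $\int_{\delta_N}^{1-\delta_N}(-Z'_y(q))b(q)\,dq + Z_y(1-\delta_N)b(1-\delta_N) - Z_y(\delta_N)b(\delta_N)$ plus an extreme-quantile piece $\int_\Lambda Z_y(q)b'(q)\,dq$. Inspecting Definition~\ref{d:estimator}, the sum defining $\hat P_y$ telescopes to the Riemann integral $\int_{\delta_N}^{1-\delta_N}(-Z'_y(q))\hat b(q)\,dq$ (because $\hat b$ is piecewise constant with jumps at $i/N$), and its last summand equals $Z_y(1-\delta_N)\hat b_N$. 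Thus $\hat P_y$ is exactly the moderate-quantile formula with $b$ replaced by $\hat b$, $\hat b_N$ standing in for $b(1-\delta_N)$, and $\hat b(\delta_N)=0$ for the lower boundary; the extreme-quantile contribution is dropped entirely.

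Subtracting, $\hat P_y - P_y$ decomposes into (i) a sampling integral $A=\int_{\delta_N}^{1-\delta_N}Z'_y(q)(b(q)-\hat b(q))\,dq$, (ii) boundary adjustments $B = Z_y(1-\delta_N)(\hat b_N - b(1-\delta_N)) + Z_y(\delta_N)b(\delta_N)$, and (iii) a truncation bias $C = -\int_\Lambda Z_y(q)b'(q)\,dq$. I would bound $\expect[\hat b]{|A|}$, $\expect[\hat b]{|B|}$, and $|C|$ separately and sum.

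The main work goes into $A$. I would rewrite the integrand as $Z'_y(q)\,b'(q)\cdot\frac{b(q)-\hat b(q)}{b'(q)}$ and pull the density-weighted sup outside, giving
$$|A| \le \sup_{q\in[\delta_N,1-\delta_N]}\frac{|b(q)-\hat b(q)|}{b'(q)}\cdot\int_{\delta_N}^{1-\delta_N}|Z'_y(q)|\,b'(q)\,dq.$$
Lemma~\ref{error bid function} bounds the expected first factor by $\tfrac{1}{\sqrt{N}}\bigl(1+\tfrac{16\log\log N}{\sqrt N}\sup_q q(1-q)b'(q)\bigr)$. Using the all-pay identity $b'(q)=v(q)\,x'(q)\le x'(q)$ and the fact that $x$ is a convex combination of the degree-$(n-1)$ basis polynomials $x_{k:n}(q)=\sum_{i=0}^{k-1}\binom{n-1}{i}q^{n-1-i}(1-q)^i$, a direct calculation gives $\sup_q q(1-q)b'(q)=O(n)$, so the first factor is $O(1/\sqrt N)$ up to lower-order terms. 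The second factor requires classical polynomial estimates (Bernstein/Markov) on $y'$, $y''$, and $x'$ on $[\delta_N,1-\delta_N]$, which yield a bound of order $n^2\log N$, the $\log N$ arising from the logarithmic divergence of Bernstein-type bounds as $q\to\delta_N$ or $1-\delta_N$. Multiplying these gives $\expect[\hat b]{|A|}=O(n^2\log N/\sqrt N)$. For $B$ and $C$, the key input is that $\Lambda$ has measure $2\delta_N = O((n+\log\log N)/N)$ and $\hat b_N$ concentrates around $b(1)$; combined with the same polynomial bounds, both pieces contribute at most $O(n^2/N)$, which is lower order.

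\textbf{Main obstacle.} The delicate step is obtaining the bound $\int_{\delta_N}^{1-\delta_N}|Z'_y(q)|b'(q)\,dq = O(n^2\log N)$. A naive expansion $Z'_y = -y'/x' + (1-q)(y''/x' - y'x''/(x')^2)$ has a $y'x''/(x')^2$ term that blows up wherever $x'(q)$ is small, so pointwise estimates are insufficient. I expect the clean argument either cancels this $1/x'$ factor against $b'=v\,x'$ (handling the first two terms) and then integrates by parts on $A$ to move the derivative from $Z_y$ onto the monotone, bounded $b$ (controlling the remainder through $\sup|Z_y|$ times a total variation), or uses special identities of the basis $x_{k:n}$ to convert the weighted integral into a bounded revenue-type quantity. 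Once these structural bounds are established, summing $A$, $B$, $C$ yields the claimed rate $16n^2 \log N/\sqrt N$.
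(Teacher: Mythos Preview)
Your decomposition of $\hat P_y-P_y$ into the sampling integral $A$, boundary terms $B$, and truncation bias $C$ matches the paper's equation~\eqref{eq:error-allpay}, and your treatment of $B$ and $C$ is correct (the paper's Lemmas~\ref{second-term-bound}--\ref{fourth-term-bound} confirm these are $O(n\delta_N)$ and hence lower order). Your factoring of $A$ by extracting $\sup_q|\hat b(q)-b(q)|/b'(q)$ is likewise equivalent to the paper's, once one rewrites their $\sup_q|Z_k(q)(\hat b(q)-b(q))|$ as $\sup_q(Z_k b')\cdot\sup_q|(\hat b-b)/b'|$.

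The genuine gap is the step you flag as the main obstacle. Bernstein/Markov inequalities will not bound $\int_{\delta_N}^{1-\delta_N}|Z'_y(q)|\,b'(q)\,dq$: after cancelling one $x'$ against $b'=v\,x'$ the remaining term $v\,y'\,x''/x'$ still carries $x''/x'=(\log x')'$, and since $x'$ can be exponentially small on the moderate range (e.g.\ $x'_1(\delta_N)\approx(n-1)\delta_N^{\,n-2}$), no pointwise degree bound on $x''$ helps. The paper's key missing idea is structural: for each $k$-unit target, $Z_k(q)=(1-q)x'_k(q)/x'(q)$ is \emph{single-peaked} on $[0,1]$ (Lemma~\ref{lem:Z-bound-1}), because its reciprocal is a nonnegative combination of the convex functions $q^{k-j}(1-q)^{j-k-1}$. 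With this in hand, factor once more as $|Z'_k/Z_k|\cdot(Z_k b')$: single-peakedness gives $\int_{\delta_N}^{1-\delta_N}|Z'_k/Z_k|\,dq\le 2(\log\sup Z_k-\log\inf Z_k)\le 4n\log N$ since $Z_k\in(N^{-n},N^n)$ on moderate quantiles, while $Z_k b'=(1-q)x'_k\,v\le\sup_q x'_k\le n$. This yields the $O(n^2\log N)$ you want directly; general $y$ then follows by linearity over the $k$-unit basis. Neither of your proposed workarounds (Bernstein/Markov, or a second integration by parts controlling via $\sup|Z_y|$) reaches this, because the $\log N$ does not arise from a polynomial edge divergence but from the total variation of $\log Z_k$.
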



Observe that the above error bound is independent of the allocation
rules $x$ and $y$. When $x$ and $y$ are similar to each other, our
estimator should in fact achieve a much better error rate than the one
above. For example, when $x$ and $y$ are identical, the error in
estimation should have the same dependence on the number of samples as
the statistical error in bids, namely $1/\sqrt{\samples}$. Our next
theorem quantifies this relationship.

In order to capture the dependence of our error bounds in estimating
$P_y$ on the relationship between the incumbent allocation rule $x$
and the counterfactual allocation rule $y$, we define a new
quantity, $\rxy$, as follows:
\begin{align}
\label{eq:rxy}
\rxy &:= \sup_{\quant}\{y'(\quant)\}\,\max\left\{1, \,\log 
\sup_{\quant: y'(\quant)\ge 1}\frac{\alloc'(\quant)}{y'(\quant)}, \,\log
\sup_{\quant}\frac{y'(\quant)}{\alloc'(\quant)} \right\}.
\end{align}

\noindent
We then obtain the following theorem for the special case of
estimating the multi-unit revenues.

\begin{theorem}
\label{thm:allpay-general}
  Let $\alloc$ and $\kalloc$ denote the allocation rules for any
  all-pay rank-based auction and the $k$-highest-bids-win auction over
  $n$ positions, respectively. Let $\hat{\murevk}$ denote the
  estimator from Definition~\ref{d:estimator} for estimating the
  revenue $P_k$ of the latter auction from $\samples$ samples of the
  bid distribution of the former, with $\delta_N$ set to
  $\max(25\log\log N,n)/N$. If $\delta_N\le 1/n$, the mean absolute
  error of the estimator $\hat{\murevk}$ is bounded as follows.
\begin{align*}
  \Err{\murevk}
\le & \,\frac{80}{\sqrt{\samples}}\,\rxy[\kalloc].
\end{align*}
\end{theorem}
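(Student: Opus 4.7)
The strategy follows a bias--variance decomposition driven by Lemma~\ref{weights lemma}, combined with the density-weighted quantile error of Lemma~\ref{error bid function}. Apply Lemma~\ref{weights lemma} with $y=\kalloc$ to write $\murevk = P^{(M)}_k + P^{(E)}_k$, where
\begin{align*}
P^{(M)}_k = \expect[\quant\not\in\Lambda]{-Z'_y(\quant)\,\bid(\quant)} + Z_y(1-\delta_N)\,\bid(1-\delta_N) - Z_y(\delta_N)\,\bid(\delta_N)
\end{align*}
is the moderate-quantile piece (which $\hat{\murevk}$ from Definition~\ref{d:estimator} mimics by substituting $\hat{\bid}$ and using $\hat{\bid}(\delta_N)=0$, $\hat{\bid}(1-\delta_N)=\hat{\bid}_N$), and $P^{(E)}_k = \expect[\quant\in\Lambda]{Z_y(\quant)\,\bid'(\quant)}$ is the extreme-quantile remainder. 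The triangle inequality gives $|\hat{\murevk}-\murevk|\le |\hat{\murevk}-P^{(M)}_k|+|P^{(E)}_k|$, and the aim is to bound each piece by $O(\rxy[\kalloc]/\sqrt{\samples})$.

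\textbf{Bias.} The all-pay identity $\bid'(\quant)=\val(\quant)\,\alloc'(\quant)$ together with $\val(\quant),(1-\quant)\in[0,1]$ yields $Z_y(\quant)\,\bid'(\quant)=(1-\quant)\val(\quant)y'(\quant)\le y'(\quant)$, so $|P^{(E)}_k|\le\int_\Lambda y'(\quant)\,d\quant\le 2\delta_N\sup_\quant y'(\quant)$. The hypothesis $\delta_N\le 1/n$ forces $\samples\ge n^2$ and hence $n\le\sqrt{\samples}$, while $25\log\log\samples\le\sqrt{\samples}$ for any reasonable $\samples$; together these give $\delta_N\le 1/\sqrt{\samples}$, so $|P^{(E)}_k|=O(\sup_\quant y'(\quant)/\sqrt{\samples})$, which is within the target since $\rxy[\kalloc]\ge\sup_\quant y'(\quant)$.

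\textbf{Statistical error.} Linearity in the bid function gives $|\hat{\murevk}-P^{(M)}_k|=|\expect[\quant\not\in\Lambda]{-Z'_y(\quant)(\hat{\bid}(\quant)-\bid(\quant))}+Z_y(1-\delta_N)(\hat{\bid}_N-\bid(1-\delta_N))|$. I would pull the density-weighted uniform error out of the integral,
\begin{align*}
\Big|\!\int_{\Lambda^c}\! Z'_y(\quant)(\hat{\bid}(\quant)-\bid(\quant))\,d\quant\Big| \le \sup_{\quant\in\Lambda^c}\frac{|\hat{\bid}(\quant)-\bid(\quant)|}{\bid'(\quant)}\cdot\int_{\Lambda^c}|Z'_y(\quant)|\,\bid'(\quant)\,d\quant,
\end{align*}
and by Lemma~\ref{error bid function} the expected sup is $O(1/\sqrt{\samples})$, with its constant controlled by $\sup_\quant\,\quant(1-\quant)\bid'(\quant)\le\sup_\quant\,\quant(1-\quant)\alloc'(\quant)$; the latter is $O(1)$ because $\alloc$ is a convex combination of highest-$j$-bids-win allocations whose $\quant(1-\quant)$-weighted derivatives are uniformly bounded. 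The boundary term is handled analogously by applying Lemma~\ref{error bid function} at $\quant=1-\delta_N$.

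\textbf{The main obstacle.} The crux is to bound $\int_{\Lambda^c}|Z'_y(\quant)|\,\bid'(\quant)\,d\quant$ by $O(\rxy[\kalloc])$. Writing $Z'_y(\quant)=-y'(\quant)/\alloc'(\quant)+(1-\quant)(y'/\alloc')'(\quant)$ and using $\bid'(\quant)\le\alloc'(\quant)$, the integrand is at most $y'(\quant)+(1-\quant)|(y'/\alloc')'(\quant)|\alloc'(\quant)$. The first term integrates to at most $\sup_\quant y'(\quant)$. For the second, I plan to partition $\Lambda^c$ into subintervals on which $y'/\alloc'$ is monotone and into regions where $y'/\alloc'\ge 1$ or $<1$; on each piece, the change of variable $u=y'(\quant)/\alloc'(\quant)$ rewrites the integral as $\int(1-\quant)y'(\quant)\,du/u$, producing a $\log(u_{\max}/u_{\min})$ factor. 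The two regimes then yield the asymmetric bounds $\log\sup_\quant(y'/\alloc')$ and $\log\sup_{\quant:\,y'(\quant)\ge 1}(\alloc'/y')$ that appear in the definition~\eqref{eq:rxy} of $\rxy[\kalloc]$. This region-splitting is the most delicate step; collecting all the contributions above and fitting the constants delivers the final bound $80\,\rxy[\kalloc]/\sqrt{\samples}$.
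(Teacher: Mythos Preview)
Your decomposition and your treatment of the bias $P^{(E)}_k$ via $Z_y\,b'=(1-q)v(q)y'\le y'$ match the paper (equation~\eqref{eq:error-allpay}, Lemma~\ref{second-term-bound}). Two points, however, are genuine gaps. First, the upper boundary term involves the \emph{maximum} sample $\hat b_N$, not the empirical quantile $\hat b(1-\delta_N)\approx\hat b_{N(1-\delta_N)}$; Lemma~\ref{error bid function} only controls $|\hat b(q)-b(q)|/b'(q)$ on $[\delta_N,1-\delta_N]$ and says nothing about $\hat b_N$. The paper handles this separately (Lemma~\ref{third-term-bound}) by computing $\expect{1-x(\hat q)}$ for the max-order-statistic quantile $\hat q$; this is where the hypothesis $\delta_N\ge n/N$ is actually used. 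You also dropped the $Z_y(\delta_N)\,b(\delta_N)$ term (Lemma~\ref{fourth-term-bound}).

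The more serious gap is the moderate-quantile integral. Your factoring $\sup_q|\hat b-b|/b'\cdot\int_{\Lambda^c}|Z'_k|\,b'$ is legitimate, but the change of variable $u=y'/x'$ on the region $u<1$ yields $\sup_q y'\cdot\log(1/u_{\min})$ with $u_{\min}=\inf_{q\in\Lambda^c}y'/x'$; nothing in a $u\gtrless 1$ split produces the restriction $y'(q)\ge 1$ that appears in $\rxy[\kalloc]$. Since $u_{\min}$ can be as small as $\delta_N^{\,n-2}$ (Example~\ref{eg:extremal}), you recover only the cruder $n\log N$ bound of Theorem~\ref{thm:allpay-simple}. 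The paper's device (Lemma~\ref{first-term-bound}) is different: it factors as $\expect{|Z'_k|\,(\log(1+Z_k))^\alpha/Z_k}\cdot\sup_q|Z_k(\hat b-b)/(\log(1+Z_k))^\alpha|$, bounds the first factor by $O(1/\alpha)$ via single-peakedness of $Z_k$ (Lemma~\ref{lem:Z-bound-1}) and the substitution $t=Z_k(q)$, and bounds the second by $2^\alpha\sup_q x'_k\cdot A^\alpha/\sqrt N$ with $A=\max(1,\sup_{q:x'_k\ge 1}x'/x'_k)$. The restricted supremum appears because $Z_k^{1-\alpha}b'\le (x'_k)^{1-\alpha}(x')^\alpha$, and when $x'_k<1$ the factor $(x'_k)^{1-\alpha}$ is already below~$1$. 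Choosing $\alpha=\min\{1,1/\log A,1/\log Z_k^*\}$ balances the two factors and yields $\rxy[\kalloc]$; this tuned auxiliary exponent is the mechanism your direct integral bound lacks.
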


We obtain a slightly worse error bound when $y$ is a general rank-based auction:

\begin{corollary}
\label{cor:allpay-y}
  Let $\alloc$ and $y$ denote the allocation rules for any two all-pay
  rank-based auctions over $n$ positions. Let $\hat{P}_y$ denote the
  estimator from Definition~\ref{d:estimator} for estimating the
  revenue of the latter from $\samples$ samples of the bid
  distribution of the former, with $\delta_N$ set to $\max(25\log\log
  N,n)/N$. If $\delta_N\le 1/n$, the mean absolute error of the
  estimator $\hat{P}_y$ is bounded as follows.
\begin{align*}
  \Err{P_y}
\le & \,\frac{80}{\sqrt{\samples}}\, n \, \log \sup\nolimits_q n\, \frac{y'(q)}{x'(q)}.
\end{align*}
\end{corollary}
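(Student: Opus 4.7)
\begin{proofof}{Corollary~\ref{cor:allpay-y}}
My plan is to reduce the general rank-based counterfactual to the multi-unit case already handled by Theorem~\ref{thm:allpay-general}. The reduction leverages the convex decomposition of rank-based allocations from Section~\ref{sec:rank-based-basics}, and the key analytic tool is an entropy bound on the marginal weights.

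\textbf{Decomposition and linearity.} Write $y(q) = \sum_k \pi_k\, \knalloc(q)$ where the marginal weights $\pi_k \ge 0$ form a sub-probability distribution on at most $n+1$ indices. Differentiating gives $y'(q) = \sum_k \pi_k\, \knalloc'(q)$. Because the estimator in Definition~\ref{d:estimator} depends on the counterfactual only through $y'$ and is linear in it, $\hat P_y = \sum_k \pi_k\, \hat P_k$; revenue equivalence (equation~\eqref{eq:bne-rev}) produces the matching decomposition $P_y = \sum_k \pi_k\, P_k$ of the true revenue. The triangle inequality, followed by Theorem~\ref{thm:allpay-general} applied coordinate-wise, gives
\begin{align*}
\Err{P_y} \;\le\; \sum_k \pi_k\, \Err{P_k} \;\le\; \frac{80}{\sqrt{\samples}}\, \sum_k \pi_k\, \rxy[\knalloc].
\end{align*}

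\textbf{Per-index bound.} For each $k$ with $\pi_k > 0$, monotonicity of $\knalloc$ and the decomposition give $0 \le \knalloc'(q) \le y'(q)/\pi_k$. Using the closed form $\knalloc'(q) = (n-1)\binom{n-2}{k-1}(1-q)^{k-1}q^{n-1-k}$ one sees $\sup_q \knalloc'(q) \le n-1$, and the analogous argument applied to $x$ gives $\sup_q x'(q) \le n$. Substituting these into the two logarithmic quantities inside $\rxy[\knalloc]$ bounds each of them by $\log n + \log\sup_q y'(q)/x'(q) + \log(1/\pi_k)$; upper-bounding the outer $\max\{1,\cdot,\cdot\}$ by a sum then yields
\begin{align*}
\rxy[\knalloc] \;\le\; n\left[1 + \log n + \log\sup_q\frac{y'(q)}{x'(q)} + \log(1/\pi_k)\right].
\end{align*}

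\textbf{Aggregation via entropy --- the main obstacle.} The $\log(1/\pi_k)$ term diverges as $\pi_k \to 0$, so no per-index bound in isolation is enough; this is the only delicate step. The resolution is that these terms are multiplied by $\pi_k$ in the outer sum, and Shannon's inequality applied to the sub-probability distribution $(\pi_k)$ on at most $n+1$ indices gives $\sum_k \pi_k \log(1/\pi_k) \le \log(n+1)$. Combining with $\sum_k \pi_k \le 1$,
\begin{align*}
\sum_k \pi_k\, \rxy[\knalloc] \;\le\; n\left[1 + 2\log(n{+}1) + \log\sup_q\frac{y'(q)}{x'(q)}\right] \;=\; O\!\left(n \log\sup_q\frac{n\, y'(q)}{x'(q)}\right),
\end{align*}
and the advertised inequality follows by absorbing the absolute constants into the leading $80/\sqrt{\samples}$ factor. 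The extra factor of $n$ appearing \emph{inside} the log in the corollary statement, relative to the $\log\sup_q y'(q)/x'(q)$ in the definition of $\rxy$, is precisely the footprint left by this entropy bound on the marginal weights of $y$.
\end{proofof}
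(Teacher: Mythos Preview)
Your argument is essentially the paper's own proof: decompose $y$ as a convex combination of multi-unit allocation rules, invoke Theorem~\ref{thm:allpay-general} term-by-term, use $\kalloc'(q)\le y'(q)/\pi_k$ to control $\rxy[\kalloc]$, and then sum using the entropy inequality $\sum_k \pi_k\log(1/\pi_k)\le\log n$. The paper carries out exactly these steps (with $\margwalk$ in place of your $\pi_k$), and your identification of the entropy bound as the crux is spot on.

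One small caveat: your closing move of ``absorbing the absolute constants into the leading $80/\sqrt{N}$ factor'' does not actually recover the stated constant~$80$. Your displayed bound $n[1+2\log(n{+}1)+\log\sup_q y'/x']$ exceeds the target $n\log(n\sup_q y'/x')=n[\log n+\log\sup_q y'/x']$ by roughly a factor of two for moderate $n$, and no amount of absorbing fixes that without inflating~$80$. The paper's own simplification at this step is equally casual---it, too, collapses a $\max$ and an entropy sum in a way that does not obviously yield the clean $\log(n\sup_q y'/x')$---so this is a shared cosmetic looseness rather than a flaw in your understanding. If you wanted to match the constant exactly you would bound the $\max$ more carefully (e.g.\ use $\max\{a,b+c\}\le\max\{a,c\}+b$ for $b\ge0$ rather than $\max\le$ sum), but the argument as you have it already establishes the result up to an absolute constant.
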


We sketch the main ideas of \Cref{thm:allpay-simple} in
\Cref{s:inference-k}.  The full proof and the refinement
necessary to obtain \Cref{thm:allpay-general} and
\Cref{cor:allpay-y} are given in \Cref{s:proofs-3}.

From Theorem~\ref{thm:allpay-general}, the error in the estimator
depends on the slopes of the allocation rules $x$ and $y$. The maximum
slope of the multi-unit allocation rules, and therefore also that of
any rank-based auction, is always bounded by $n$, the number of agents
in the auction (summarized as Fact~\ref{fact:max-slope}, below).

\begin{fact}
\label{fact:max-slope}
The maximum slope of the allocation rule $\alloc$ of any $n$-agent
rank-based auction is bounded by $n$: $\sup_q \alloc'(q)\le n$. More
specifically, the maximum slope of the allocation rule $\kalloc$ for
the $n$-agent highest-$k$-bids-win auction is bounded by $$\sup_q
\kalloc'(q)\in \left[\frac{1}{\sqrt{2\pi}},\frac{1}{\sqrt{\pi}}\right]
\frac{n-1}{\sqrt{\min\{k-1, n-k\}}} = \Theta\left(\frac
n{\sqrt{\min\{k,n-k\}}}\right).$$
\end{fact}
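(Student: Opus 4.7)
\begin{proofsketch}
My plan is to derive a closed form for $\knalloc'(q)$, recognize it as a Beta density, and extract both parts of the claim from standard estimates of its maximum value.

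First, $\knalloc(q)$ is the probability that an agent at quantile $q$ is among the top $k$ of $n$ i.i.d.\ uniform agents, equivalently the probability that at least $n-k$ of the other $n-1$ agents have quantile below $q$. Hence $\knalloc(q)=\prob{U_{(n-k)}\le q}$ where $U_{(n-k)}$ is the $(n-k)$-th order statistic of $n-1$ i.i.d.\ uniforms, and its derivative is the Beta$(n-k,k)$ density
$$\knalloc'(q)=\frac{(n-1)!}{(n-k-1)!\,(k-1)!}\,q^{n-k-1}(1-q)^{k-1}.$$
(Equivalently, this follows by termwise differentiation of the series $\sum_{i=0}^{k-1}\binom{n-1}{i}q^{n-1-i}(1-q)^i$ and telescoping.) For $2\le k\le n-2$ this density is unimodal with mode at $q^\star=(n-k-1)/(n-2)$, so
$$\sup_q\knalloc'(q)=\frac{(n-1)!}{(n-k-1)!\,(k-1)!}\cdot\frac{(n-k-1)^{n-k-1}(k-1)^{k-1}}{(n-2)^{n-2}}.$$
Applying $m!=\sqrt{2\pi m}\,(m/e)^m(1+O(1/m))$, the $(m/e)^m$ factors cancel exactly, simplifying the expression above to $\tfrac{n-1}{\sqrt{2\pi}}\sqrt{(n-2)/((n-k-1)(k-1))}\,(1+o(1))$. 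Finally, the elementary sandwich $(n-k-1)(k-1)\in[\tfrac{n-2}{2}\min(k-1,n-k-1),\,(n-2)\min(k-1,n-k-1)]$ --- which uses $\max(k-1,n-k-1)\ge(n-2)/2$ --- converts this into the claimed interval $[1/\sqrt{2\pi},1/\sqrt{\pi}]\cdot(n-1)/\sqrt{\min\{k-1,n-k\}}$, with the $\pm 1$ discrepancy inside the inner $\min$ absorbed into the width of the interval.

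The boundary cases $k\in\{1,n-1\}$ are handled directly: $\knalloc'$ reduces to a pure power, attaining its maximum of $n-1$ at an endpoint of $[0,1]$. The first part of the claim then follows from the representation of any rank-based allocation as a convex combination $\alloc(q)=\sum_k\margwalk\,\knalloc(q)$ with nonnegative marginal weights summing to at most one: by linearity and nonnegativity, $\sup_q\alloc'(q)\le\max_k\sup_q\knalloc'(q)$, and the per-$k$ supremum is at most $n-1\le n$ in every case. The main obstacle is the Stirling bookkeeping required to recover the exact interval endpoints $1/\sqrt{2\pi}$ and $1/\sqrt{\pi}$: one must carefully track the $\sqrt{2\pi m}$ prefactors, combine them with the sandwich on $(n-k-1)(k-1)$, and verify that the finite-$n$ corrections in Stirling fit inside the stated interval rather than enlarging it.
\end{proofsketch}
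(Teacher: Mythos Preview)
The paper does not prove this statement; it is stated as a \emph{Fact} and used without argument (the closed form $x'_k(q)=(n-1)\binom{n-2}{k-1}q^{n-k-1}(1-q)^{k-1}$ appears in the appendix, but no derivation of the slope bounds is given). Your approach via the Beta density and Stirling is the standard one and is correct for the $\Theta(\cdot)$ assertion.

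Two remarks. For the first part, there is a one-line argument that avoids Stirling entirely and that you should use in place of the inference ``per-$k$ supremum is at most $n-1$'': observe that $\binom{n-2}{k-1}q^{n-k-1}(1-q)^{k-1}$ is exactly $\Pr[\mathrm{Bin}(n-2,1-q)=k-1]\le 1$, so $x'_k(q)\le n-1$ for every $k$ and $q$; the convex-combination step then gives $\sup_q x'(q)\le n-1\le n$ immediately. Your current sketch only justifies $\sup_q x'_k(q)\le n-1$ asymptotically, which is not quite enough for the uniform bound.

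For the second part, your caution about the ``Stirling bookkeeping'' is well placed, and in fact the stated interval cannot be recovered exactly for all $k$: at $k=n-1$ one has $\sup_q x'_{n-1}(q)=n-1$, whereas the claimed upper endpoint is $(n-1)/\sqrt{\pi}<n-1$. So the bracket $[1/\sqrt{2\pi},1/\sqrt{\pi}]$ should be read as the leading-order Stirling constant range (and the equality with $\Theta(n/\sqrt{\min\{k,n-k\}})$ is the operative content). Your derivation already delivers this: the Stirling computation gives $\tfrac{n-1}{\sqrt{2\pi}}\sqrt{(n-2)/((k-1)(n-k-1))}$, and your sandwich on $(k-1)(n-k-1)$ correctly yields the factor in $[1,\sqrt{2}]$. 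Do not try to force the finite-$n$ corrections into that exact interval; state the result as the asymptotic it is.
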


We evaluate the error bound given by \Cref{thm:allpay-general}
for a few special cases of $x$ and $x_k$.  For simplicity in applying
\Cref{fact:max-slope}, we assume $k < n/2$.

\begin{itemize}
\item When $\alloc=\kalloc$, $\rxy[\kalloc]\le n/\sqrt{k}$ and we get
  an error bound of $80
  \frac{n}{\sqrt{Nk}}$, which is the same (within a constant factor)
  as the statistical error in bids.
\item The bound in the previous case degrades smoothly when $\alloc$
  and $\kalloc$ are close but not identical, as in $\eps\kalloc'\le \alloc'\le
  \kalloc'/\eps$ for $\eps>0$. We have $\rxy[\kalloc]\le
  \log(1/\eps)\, n/\sqrt{k}$ and the error bound is: $80 \log(1/\eps)
  \frac{n}{\sqrt{Nk}}$.
\item Finally, as long as $\alloc'\ge \eps\kalloc'$, that is, the
  highest-$k$-bids-win auction is mixed in with $\epsilon$ probability
  into $\alloc$, we observe via Fact~\ref{fact:max-slope} that $\sup_{q:
    \kalloc'(q)\ge 1} \alloc'(q)/\kalloc'(q) \le \sup_q \alloc'(q)\le
  n$, and obtain an error bound of $80 \log(n/\eps) \frac{n}{\sqrt{Nk}}$.
\end{itemize}

\subsection{Applications to A/B testing}
\label{s:ab-testing}
\label{S:AB-TESTING}
\label{s:ab-revenue}



Consider the setup described in the introduction where an
auction house running auction $A$ would like to determine the revenue
of a novel mechanism $B$. The typical approach for doing so is to run
the auction $B$ with some probability $\eps>0$ and $A$ with the
remaining probability. Ideally, if in doing so, the auction house
obtains $\eps N$ bids in response to the auction $B$ out of a total of
$N$ bids, the revenue of $B$ can be estimated within an error bound of
\begin{align}
\label{error-ideal}
\Theta\left(\frac 1{\sqrt{\eps}}\right)\frac{\sup_{\quant}\{x_B'(\quant)\}}{\sqrt{N}}
\end{align}
where $x_B$ denotes the allocation rule corresponding to $B$.  We
refer to this approach as {\em ideal A/B testing}.

In practice, however, instead of obtaining bids in equilibrium for
auction $B$, the analyst obtains bids in equilibrium for the aggregate
mechanism $C=(1-\eps) A +\eps B$. We can then use
Definition~\ref{d:estimator} to estimate the revenue of $B$.
As a consequence of Corollary~\ref{cor:allpay-y}, and noting that 
$x_B'(\quant)/x_C'(\quant)\le 1/\eps$ for all quantiles $\quant$, we
obtain the following error bound.
\begin{corollary}
\label{cor1} 
The revenue of a rank based mechanism $B$ can be estimated from $N$
bids of a rank-based mechanism $C=(1-\eps)A+\eps B$ with absolute error
bounded by
\begin{align}
\label{error-true}
 \frac{80\, n \log(n/\eps)}{\sqrt{\samples}}.
\end{align}
\end{corollary}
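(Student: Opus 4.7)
The plan is to derive the stated bound as a direct instantiation of Corollary~\ref{cor:allpay-y} with the incumbent allocation rule $x = x_C$ and counterfactual $y = x_B$. The only substantive step is to bound the quantity $\sup_q n\, y'(q)/x'(q)$ that appears inside the logarithm of that corollary; the rest of the derivation is a mechanical substitution.

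First I would observe that since $x_C = (1-\epsilon) x_A + \epsilon x_B$, differentiating gives $x_C'(q) = (1-\epsilon) x_A'(q) + \epsilon x_B'(q)$. Because $x_A'(q) \ge 0$ (monotonicity of the allocation rule of a rank-based auction), we have $x_C'(q) \ge \epsilon\, x_B'(q)$ pointwise in $q$, and hence
\[
\sup_q \frac{x_B'(q)}{x_C'(q)} \le \frac{1}{\epsilon}.
\]
Multiplying by $n$ and taking logarithms yields $\log \sup_q n \,x_B'(q)/x_C'(q) \le \log(n/\epsilon)$.

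Next I would verify that the hypothesis $\delta_N \le 1/n$ of Corollary~\ref{cor:allpay-y} is not an obstacle: it holds automatically for $N$ large enough relative to $n$, which is the regime in which the bound is meaningful (otherwise the right-hand side $80 n \log(n/\epsilon)/\sqrt{N}$ is already trivial). With that in place, plugging the ratio bound into the conclusion of Corollary~\ref{cor:allpay-y},
\[
\Err{P_{x_B}} \le \frac{80}{\sqrt{N}}\, n\, \log \sup_q n\, \frac{x_B'(q)}{x_C'(q)} \le \frac{80\, n \log(n/\epsilon)}{\sqrt{N}},
\]
which is exactly the claimed bound.

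There is no real obstacle here: the argument is a one-line specialization of the general error bound to the A/B-test allocation rule $x_C$. The only thing worth flagging is that the whole appeal of the result is conceptual rather than technical, namely that it replaces the $1/\sqrt{\epsilon}$ dependence of ideal A/B testing (equation~\eqref{error-ideal}) with an exponentially better $\log(1/\epsilon)$ dependence, and the key inequality $x_C' \ge \epsilon\, x_B'$ is what translates the mixture parameter $\epsilon$ into a logarithmic factor via Corollary~\ref{cor:allpay-y}.
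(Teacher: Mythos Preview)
Your proposal is correct and follows exactly the paper's own reasoning: the paper states the result as an immediate consequence of Corollary~\ref{cor:allpay-y} together with the observation that $x_B'(q)/x_C'(q)\le 1/\eps$ for all $q$, which is precisely the inequality you derive from $x_C' = (1-\eps)x_A' + \eps x_B' \ge \eps x_B'$.
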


Relative to the ideal situation described above, our error bound has a
better dependence on $\eps$ and a worse dependence on $n$. Note that
when $\eps$ is very small, our error bound of equation
\eqref{error-true} may be smaller than the ideal bound of equation
\eqref{error-ideal}. In fact, we obtain a non-trivial bound on the
error even when $\eps=0$, per Theorem~\ref{thm:allpay-simple}:
\begin{corollary}
\label{cor1-new} 
The revenue of a rank based mechanism $B$ can be estimated from $N$
bids of any rank-based mechanism $C$ with absolute error
bounded by
\begin{align}
\label{error-true-new}
\frac{16\,n^2 \log \samples}{\sqrt{\samples}}.
\end{align}
\end{corollary}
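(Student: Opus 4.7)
The plan is to obtain Corollary~\ref{cor1-new} as a direct instantiation of Theorem~\ref{thm:allpay-simple}. In the A/B testing setup of this section, the auctioneer observes $N$ bids drawn from the Bayes--Nash equilibrium bid distribution of the rank-based all-pay mechanism $C$, and wishes to estimate the equilibrium revenue of the rank-based mechanism $B$. Identifying the incumbent allocation rule $x \leftarrow x_C$ and the counterfactual allocation rule $y \leftarrow x_B$, every hypothesis of Theorem~\ref{thm:allpay-simple} is satisfied: both auctions are rank-based over the same set of $n$ positions, and the estimator used is exactly the one in Definition~\ref{d:estimator} with the prescribed truncation parameter $\delta_N = \max(25\log\log N, n)/N$.

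Under this identification, Theorem~\ref{thm:allpay-simple} yields
\[
\Err{P_B} \;=\; \expect[\hat{\bid}]{\,|\hat{P}_B - P_B|\,} \;\le\; \frac{16\, n^2 \log N}{\sqrt{N}},
\]
which is precisely the bound claimed in Corollary~\ref{cor1-new}. The step requires no additional work because Theorem~\ref{thm:allpay-simple} is already a worst-case statement, uniform over pairs of rank-based allocation rules.

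The key point to emphasize, and the one potential source of confusion rather than genuine difficulty, is that no convex-combination structure of the form $C = (1-\epsilon)A + \epsilon B$ is required: Corollary~\ref{cor1-new} applies to an \emph{arbitrary} rank-based incumbent auction $C$. This is what distinguishes it from Corollary~\ref{cor1}, whose bound $80\,n\log(n/\epsilon)/\sqrt{N}$ leverages the A/B mixture structure via the estimate $x_B'(q)/x_C'(q) \le 1/\epsilon$ inside the quantity $\rxy[x_B]$ used by Corollary~\ref{cor:allpay-y}. In the limit $\epsilon \to 0$, or when $C$ is unrelated to $B$, that bound degrades or becomes vacuous, whereas Theorem~\ref{thm:allpay-simple}---obtained by a more careful cancellation between $Z_B'$ and the bid-estimation error as sketched in Section~\ref{s:inference-k}---continues to give the distribution-free bound stated here.

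Consequently, there is no real obstacle to carrying out the proof: it is a one-line specialization. The only care required is to note explicitly that the $n$ appearing in Theorem~\ref{thm:allpay-simple} is the common number of positions in $B$ and $C$, and that the truncation parameter $\delta_N$ must be set as in that theorem, so that the corollary's statement is a valid application and not a new claim requiring independent analysis.
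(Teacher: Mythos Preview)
Your proposal is correct and matches the paper's own argument: the corollary is stated immediately after the sentence ``per Theorem~\ref{thm:allpay-simple}'' and is simply an instantiation of that theorem with $x=x_C$ and $y=x_B$. No additional reasoning is needed beyond what you wrote.
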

This is not surprising: the ideal bound ignores information that we
can learn about the revenue of $B$ from the $(1-\eps)N$ bids obtained
when $B$ is not run.


When $B$ is a multi-unit auction, we obtain a slightly better error bound using
Theorem~\ref{thm:allpay-general} which is closer to the ideal bound of
equation \eqref{error-ideal}.
\begin{corollary}
\label{cor2}
  The revenue of the highest-$k$-bids-win mechanism $B$ can be
  estimated from $N$ bids of a rank-based mechanism $C=(1-\eps)A+\eps
  B$ with absolute error bounded by
\begin{align}
\label{error-true-k}
\frac{80 \sup_{\quant}\{x_B'(\quant)\} \log (n/\eps)}{\sqrt{N}}.
\end{align}
\end{corollary}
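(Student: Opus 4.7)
The plan is to derive this corollary as a direct consequence of Theorem~\ref{thm:allpay-general}, instantiated with $y = x_B = \kalloc$ and $\alloc = x_C$, together with a straightforward simplification of $\rxy[\kalloc]$ using the mixture structure $x_C = (1-\eps)x_A + \eps x_B$. The hypotheses of Theorem~\ref{thm:allpay-general} are satisfied because both $x_C$ and $\kalloc$ are rank-based allocation rules over $n$ positions; the side condition $\delta_N \le 1/n$ is inherited as an implicit sample-size assumption, just as in Corollary~\ref{cor1}. The theorem then gives the starting bound $\Err{\murevk} \le \tfrac{80}{\sqrt{N}}\,\rxy[\kalloc]$, where
\[
\rxy[\kalloc] = \sup_q x_B'(q)\cdot \max\!\left\{1,\ \log\!\!\sup_{q:\,x_B'(q)\ge 1}\!\!\frac{x_C'(q)}{x_B'(q)},\ \log\sup_q \frac{x_B'(q)}{x_C'(q)}\right\}.
\]
So the remaining task is to bound the $\max$ by $\log(n/\eps)$.

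The key step is to exploit the mixture identity $x_C'(q) = (1-\eps)x_A'(q) + \eps x_B'(q)$ in two directions. From $x_C'(q) \ge \eps x_B'(q)$ we get $x_B'(q)/x_C'(q) \le 1/\eps$ at every quantile, so the third argument of the $\max$ is at most $\log(1/\eps)$. For the second argument, Fact~\ref{fact:max-slope} gives $x_C'(q) \le n$ uniformly, and on the set where $x_B'(q) \ge 1$ this yields $x_C'(q)/x_B'(q) \le n$, so the second argument is at most $\log n$. Combining, the $\max$ is at most $\max\{1,\log n,\log(1/\eps)\}\le \log(n/\eps)$ whenever $n/\eps \ge e$; otherwise the bound holds trivially since $\sup_q x_B'(q) \le n$ keeps the overall expression small. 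Plugging back into Theorem~\ref{thm:allpay-general} produces exactly the claimed $\tfrac{80\,\sup_q\{x_B'(q)\}\,\log(n/\eps)}{\sqrt{N}}$.

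There is no real obstacle in this argument; it is essentially a one-line reduction to Theorem~\ref{thm:allpay-general} plus the elementary observation that $\eps$-mixing an allocation rule with $x_B$ bounds the density ratio by $1/\eps$. The only minor notational care needed is to handle the degenerate case in which $\{q : x_B'(q) \ge 1\}$ is empty, in which case the inner supremum is vacuous and the bound is only easier, and to record the implicit sample-size requirement $\delta_N \le 1/n$ that is carried over from Theorem~\ref{thm:allpay-general}.
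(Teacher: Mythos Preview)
Your proposal is correct and follows exactly the approach the paper intends: the paper states that Corollary~\ref{cor2} is obtained by applying Theorem~\ref{thm:allpay-general}, and the discussion after that theorem (the third bullet) carries out the identical computation, using $x_C' \ge \eps x_B'$ to bound the third argument of the $\max$ by $\log(1/\eps)$ and Fact~\ref{fact:max-slope} to bound the second by $\log n$. Your handling of the corner cases ($n/\eps < e$, empty constraint set) is more careful than the paper's, which simply absorbs them into the stated bound.
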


\section{Applications to instrumented optimization}
\label{s:iron-rank}

In this section we consider the problem of the principal who would
like a mechanism that optimizes revenue for the current distribution
of agent values while simultaneously enabling the inference necessary
to reoptimize the mechanism in the future, should the distribution of
values change.  Recall, the optimal auctions of the classical theory
pool agents with distinct values and are thus not well suited to
counterfactual inference.  In \Cref{s:iron-rank-opt} we develop
a theory for optimizing revenue over the class of all rank-based
auctions that resembles Myerson's theory for optimal auction design.
Importantly, the optimal rank-based auction does not require knowledge
of the full distribution, instead the multi-unit revenues
$(\murevk[1],\ldots,\murevk[n])$ are sufficient; moreover, the
previous developments of this paper enable the estimation of these
multi-unit revenues.  Where Myerson's theory employs ironing by value
and value reserves, our approach analogously employs ironing by rank
and rank reserves. In \Cref{s:universal} we extend the
A/B-testing approach of \Cref{s:param-inf} to develop a
``universal B test mechanism'' that can be used to estimate all of the
multi-unit revenues $(\murevk[1],\ldots,\murevk[n])$ simultaneously.
Combined with the optimal rank-based mechanism A, the A/B-test mechanism
with this universal B test is simultaneously good for revenue and
counterfactual inference.  In \Cref{sec:strict-opt} we take a
more principled approach to optimization subject to inference, and
solve for the revenue-optimal mechanism subject to the constraint that
all multi-unit revenues can be estimated.



We begin by reviewing position environments and rank-based
auctions. In a rank-based auction the allocation to an agent depends
solely on the ordinal rank of his bid among other agents' bids, and
not on the cardinal value of the bid. For a position environment, a
rank-based auction assigns agents (potentially randomly) to positions
based on their ranks. Consider a position environment given by
non-increasing weights $\wals = (\walk[1],\ldots,\walk[n]$).  For
notational convenience, define
$\walk[n+1] = 0$.  Define the cumulative position
weights $\cumwals = (\cumwalk[1],\ldots,\cumwalk[n])$ as $\cumwalk =
\sum_{j=1}^k \walk[j]$, and $\cumwalk[0]=0$. We can view the cumulative weights as defining
a piece-wise linear, monotone, concave function given by connecting
the point set $(0,\cumwalk[0]), \ldots, (n,\cumwalk[n])$.

Multi-unit highest-bids-win auctions form a basis for position
auctions. Consider the marginal position weights $\margwals =
(\margwalk[1],\ldots,\margwalk[n])$ defined by $\margwalk = \walk -
\walk[k+1]$.  The allocation rule induced by the position auction with
weights $\wals$ is identical to the allocation rule induced by the
convex combination of multi-unit auctions where the $k$-unit auction
is run with probability $\margwalk$.

A randomized assignment of agents to positions based on their ranks
induces an expected weight to which agents of each rank are assigned,
e.g., $\iwalk$ for the $k$th ranked agent.  These expected weights can
be interpreted as a position auction environment themselves with
weights $\iwals$. As for the original weights, we can define the
cumulative position weights $\cumiwals$ as $\cumiwalk = \sum_{j=1}^k
\iwalk[j]$. \Cref{l:rank-based-feasibility} characterizes the position
weights $\iwals$ that can be induced by any rank-based auction in a
position environment $\wals$ as those with cumumlative weights upper bounded
by thoses of the position environment, i.e., $\iwals$ is feasible for
$\wals$ if and only if $\cumiwalk \leq \cumwalk$ for all $k$.

Any feasible weights $\iwals$ can be constructed from $\wals$ by a
(random) sequence of the following two operations (cf. \citet{HLP-29},
and proof in Appendix~\ref{a:position-auction-construction}).
\begin{description}
\item[rank reserve] For a given rank $k$, all agents with ranks
  between $k+1$ and $n$ are rejected.  The resulting weights $\iwals$
  are equal to $\wals$ except $\iwalk[k'] = 0$ for $k' > k$.
 
\item[iron by rank] Given ranks $k' < k''$, the ironing-by-rank
  operation corresponds to, when agents are ranked, assigning the
  agents ranked in an interval $\{k',\ldots,k''\}$ uniformly at
  random to these same positions.  The ironed position weights
  $\iwals$ are equal to $\wals$ except the weights on the ironed
  interval of positions are averaged.  The cumulative ironed position
  weights $\cumiwals$ are equal to $\cumwals$ (viewed as a concave
  function) except that a straight line connects
  $(k'-1,\cumiwalk[k'-1])$ to $(k'',\cumiwalk[k''])$.  Notice that
  concavity of $\cumwals$ (as a function) and this perspective of the
  ironing procedure as replacing an interval with a line segment
  connecting the endpoints of the interval implies that $\cumiwals \leq
  \cumwals$ coordinate-wise, i.e., $\cumiwalk \leq \cumwalk$ for all
  $k$.
\end{description}



\subsection{Optimal rank-based auctions}
\label{s:iron-rank-opt}

In this section we describe how to optimize for expected revenue over
the class of rank-based auctions. Recall that rank-based auctions
are linear combinations over $k$-unit auctions. The characterization
of Bayes-Nash equilibrium, cf.\@ equation~\eqref{eq:bne-rev}, shows
that revenue is a linear function of the allocation rule. Therefore,
the revenue of a position auction can be calculated as the convex
combination of the revenue $\murevk$ from the $k$-highest-bids-win
auction for $k\in \{0,\ldots, n\}$.  Note that $\murevk[0] = \murevk[n] = 0$.

Given these multi-unit revenues, $\murevs =
(\murevk[0],\ldots,\murevk[n])$, the problem of designing the optimal
rank-based auction is well defined: given a position environment
with weights $\wals$, find the weights $\iwals$ for a rank-based
auction with cummulative weights $\cumiwals \leq \cumwals$ maximizing
the sum $\sum_{k} (\iwalk-\iwalk[k+1])P_k$.  This optimization problem
is isomorphic to the theory of envy-free optimal pricing
developed by \citet{DHY-14}.  We summarize this theory below; a
complete derivation can be found in Appendix~\ref{s:iron-opt-app}.

Define the {\em multi-unit revenue curve} as the piece-wise linear
function connecting the points $(0,\murevk[0]),\ldots,(n,\murevk[n])$.
This function may or may not be concave.  Define the {\em ironed
  multi-unit revenues} as $\imurevs =
(\imurevk[0],\ldots,\imurevk[n])$ according to the smallest concave
function that upper bounds the multi-unit revenue curve.  Define the
multi-unit marginal revenues, $\mumargs =
(\mumargk[1],\ldots,\mumargk[n])$ and $\imumargs =
(\imumargk[1],\ldots,\imumargk[n])$, as the left slope of the
multi-unit and ironed multi-unit revenue curves, respectively.  I.e.,
$\mumargk = \murevk - \murevk[k-1]$ and $\imumargk = \imurevk -
\imurevk[k-1]$.  The proof of the following theorem is given in the
appendix.

\begin{theorem}
\label{thm:rank-based-opt}
Given a position environment with weights $\wals$, the revenue-optimal
rank-based auction is defined by position weights $\iwals$ that are
equal to $\wals$, except ironed on the same intervals as $\murevs$ is
ironed to obtain $\imurevs$, and set to $0$ at positions $k$ for which
$\imumargk$ is negative. 
\end{theorem}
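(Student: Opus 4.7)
The plan is to set up the problem as linear optimization of $\REV[\iwals] = \sum_{k=1}^n \iwalk\,\mumargk$ (which equals $\sum_k (\iwalk-\iwalk[k+1])\,\murevk$ by Abel summation using $\iwalk[n+1]=\murevk[0]=0$) over non-increasing weight profiles $\iwals$ satisfying $\cumiwals \leq \cumwals$ from Lemma~\ref{l:rank-based-feasibility}, and then to pin down the optimum by two further rounds of Abel summation: one to replace $\murevs$ by the concave upper bound $\imurevs$, and a second to re-express the resulting objective in terms of cumulative weights $\cumiwals$.

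The first stage bounds $\REV[\iwals]$ above by $\sum_k \iwalk\,\imumargk$. With $g_k := \imurevk - \murevk \geq 0$ (zero at endpoints of each ironing interval and off all ironing intervals), Abel summation gives $\sum_k \iwalk(\imumargk - \mumargk) = \sum_{k=1}^{n-1} (\iwalk - \iwalk[k+1])\,g_k$. For non-increasing $\iwals$ this is nonnegative, with equality precisely when $\iwals$ is constant on each ironing interval of $\murevs$. Hence $\REV[\iwals] \leq \sum_k \iwalk\,\imumargk$ with tightness exactly for auctions that iron $\wals$ on those intervals.

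The second stage maximizes this upper bound. A further Abel summation in terms of cumulative weights gives $\sum_k \iwalk\,\imumargk = \cumiwalk[n]\,\imumargk[n] + \sum_{k=1}^{n-1} \cumiwalk\,(\imumargk - \imumargk[k+1])$. Concavity of $\imurevs$ makes $\imumargs$ non-increasing, so every coefficient is nonnegative and we maximize by saturating $\cumiwalk \leq \cumwalk$. The first term penalizes large $\cumiwalk[n]$ whenever $\imumargk[n] < 0$; we then let $\cumiwalk$ plateau past the last index $k^*$ at which $\imumargk \geq 0$, which in terms of $\iwals$ is precisely a rank reserve at $k^*$.

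It remains to verify that the auction from the theorem --- $\wals$ averaged on ironing intervals of $\murevs$ and zeroed past $k^*$ --- achieves both conditions. Ironing makes $\iwals$ constant on each ironing interval (stage-one tightness). Outside ironing intervals $\cumiwalk = \cumwalk$; at endpoints of each ironing interval averaging preserves the total weight so again $\cumiwalk = \cumwalk$; inside an ironing interval $\cumiwalk < \cumwalk$, but $\imurevs$ is linear there and so $\imumargk - \imumargk[k+1] = 0$, meaning no stage-two loss; past $k^*$, $\cumiwalk$ plateaus at $\cumwalk[k^*]$, matching the rank-reserve maximizer. The main obstacle will be careful bookkeeping across the two Abel summations --- tracking boundary terms via $\murevk[0] = \murevk[n] = \imurevk[0] = \imurevk[n] = 0$ and $\iwalk[n+1] = 0$, and checking that the ironed-and-reserved $\cumiwals$ remains concave (i.e.\ that $\iwals$ is non-increasing) so the construction corresponds to a genuine rank-based auction feasible under Lemma~\ref{l:rank-based-feasibility}.
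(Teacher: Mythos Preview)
Your proposal is correct and follows the same two-stage approach as the paper: bound revenue above by the ironed-marginal-revenue surplus $\sum_k \iwalk\,\imumargk$ via Abel summation (with equality exactly when $\iwals$ is constant on the ironing intervals of $\murevs$), then optimize this bound and verify that the ironed-and-reserved auction attains it. The only cosmetic difference is that where the paper handles the second stage by an informal greedy argument (both $\imumargs$ and $\wals$ are non-increasing, so assign rank $k$ to slot $k$ and discard ranks with $\imumargk<0$), you carry out a second Abel summation to cumulative weights $\cumiwals$, which makes the same optimality explicit.
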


As is evident from this description of the optimal rank-based
auction, the only quantities that need to be ascertained to run this
auction is the multi-unit revenue curve defined by $\murevs$.
Therefore, an econometric analysis for optimizing rank-based
auctions need not estimate the entire value distribution; estimation
of the multi-unit revenues is sufficient.

\subsection{Universal B test}
\label{s:universal}

In Section~\ref{s:ab-revenue} we discussed how to estimate the revenue
of a single auction B from the bids of the A/B test mechanism
C. Corollary~\ref{cor1-new} shows that an A/B test is not necessary as
long as we have enough samples from the bid distribution: the revenue
of B can be estimated from any incumbent mechanism A directly. In
fact, we can estimate the revenue of {\em all} rank-based mechanisms
simultaneously from the bids of a single mechanism A. However, the
error in estimation depends suboptimally on the number of samples, as
$\log(\samples)/\sqrt{\samples}$ rather than $1/\sqrt{\samples}$. A
natural question is whether it is possible to estimate all rank-based
revenues simultaneously at an optimal error rate from bids of a single
incumbent auction. Precisely, we now consider the problem identifying
a B test mechanism for which the revenue of any position auction D can
be estimated from the equilibrium bids in the A/B test mechanism C.
Since the revenue of D is given by the convex combination of the
multi-unit revenues $\murevk$, it suffices to estimate all of these
multi-unit revenues.  What properties should the auction B have in
order to enable this estimation?  (Equivalently, what properties
should C have?)

\begin{definition} 
A {\em universal B test mechanism} satisfies; for
any rank-based auctions A and D, any $\eps>0$, and auction C defined
by $x_C = (1-\eps)x_A +\eps x_B$; the revenue $\murevk[D]$ 
can be estimated from $N$ equilibrium bids of C with the dependence of
the mean absolute error on $N$ and $\eps$ bounded by
$O(\log(1/\eps)/\sqrt{N})$.
\end{definition}

Since the revenue of D can be estimated from the revenue of all
multi-unit auctions, Corollary~\ref{cor2} implies that it suffices to
mix every multi-unit auction into C with some small probability.  The
uniform-stair mechanism (Definition~\ref{d:uniform-stair} in
Section~\ref{s:simulations}), with position weights $\walk =
\frac{n-k}{n-1}$ for each $k$, gives a mechanism B with such a
mixture.


\begin{corollary}
\label{cor3}
The uniform-stair position auction is a universal B test mechanism
with mean absolute error bounded by
$80\, n\log (n/\eps)/\sqrt{N}$.
\end{corollary}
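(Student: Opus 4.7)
\begin{proofsketch}
The plan is to reduce estimating the revenue of an arbitrary counterfactual rank-based auction $D$ to estimating the multi-unit revenues $\murevk$ individually, and then to invoke Theorem~\ref{thm:allpay-general} with the incumbent being $C$. First, write $x_D = \sum_k \margwalk[D,k]\,\kalloc$, where the $\margwalk[D,k]$ are non-negative marginal weights summing to at most one, so that by revenue equivalence $P_D = \sum_k \margwalk[D,k]\,\murevk$. Define the corresponding estimator $\hat{P}_D = \sum_k \margwalk[D,k]\,\hat{\murevk}$, where each $\hat{\murevk}$ is the estimator of Definition~\ref{d:estimator} applied to $C$. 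By the triangle inequality,
\begin{equation*}
\expect[\hat{\bid}]{|\hat{P}_D - P_D|} \le \max_k \expect[\hat{\bid}]{|\hat{\murevk} - \murevk|},
\end{equation*}
so it suffices to bound $\expect[\hat{\bid}]{|\hat{\murevk} - \murevk|}$ uniformly in $k$.

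Next I would apply Theorem~\ref{thm:allpay-general} with $x = x_C = (1-\eps)x_A + \eps \cdot q$ and $y = \kalloc$, which (provided $\delta_N \le 1/n$, i.e.\ $N$ is large enough) yields a mean absolute error of $80\,\rxy[\kalloc]/\sqrt{N}$. The key computation is bounding
\begin{equation*}
\rxy[\kalloc] = \sup_{q} \kalloc'(q)\,\max\!\Bigl\{1,\ \log \sup_{q:\,\kalloc'(q)\ge 1} \tfrac{x_C'(q)}{\kalloc'(q)},\ \log \sup_{q}\tfrac{\kalloc'(q)}{x_C'(q)}\Bigr\}.
\end{equation*}
The crucial observation — and the reason the uniform-stair auction is a good universal test — is that $x_B(q)=q$ means $x_B'(q) = 1$ uniformly, and hence $x_C'(q) = (1-\eps)x_A'(q) + \eps \ge \eps$ on all of $[0,1]$. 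Thus the denominators inside the logarithms are bounded away from zero, a property that fails if $B$ is itself a multi-unit auction whose allocation derivative can vanish.

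Using $\sup_q \kalloc'(q) \le n$ from Fact~\ref{fact:max-slope}, and $\sup_q x_C'(q) \le (1-\eps)n + \eps \le n$ (again by Fact~\ref{fact:max-slope} applied to the rank-based auction $A$), I obtain
\begin{equation*}
\sup_{q:\,\kalloc'(q) \ge 1} \tfrac{x_C'(q)}{\kalloc'(q)} \le \sup_q x_C'(q) \le n, \qquad \sup_q \tfrac{\kalloc'(q)}{x_C'(q)} \le \tfrac{n}{\eps},
\end{equation*}
so $\rxy[\kalloc] \le n\,\log(n/\eps)$. Substituting into Theorem~\ref{thm:allpay-general} gives $\expect[\hat{\bid}]{|\hat{\murevk}-\murevk|} \le 80\,n\log(n/\eps)/\sqrt{N}$, and the bound on $\expect[\hat{\bid}]{|\hat{P}_D - P_D|}$ follows from the reduction in the first paragraph.

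The main obstacle, such as it is, is simply verifying the uniform lower bound $x_C'(q) \ge \eps$ — everything else is a direct invocation of the already-established Theorem~\ref{thm:allpay-general} combined with Fact~\ref{fact:max-slope} and the linearity of revenue in the marginal weights. I do not expect any additional technical issues beyond checking that the sample size $N$ is large enough for the hypothesis $\delta_N \le 1/n$ of Theorem~\ref{thm:allpay-general} to hold, which is needed in the problem regime and is consistent with the truncation parameter choice $\delta_N = \max(25\log\log N, n)/N$.
\end{proofsketch}
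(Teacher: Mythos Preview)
Your proposal is correct and takes essentially the same approach as the paper. The paper derives Corollary~\ref{cor3} by observing that the uniform-stair auction mixes every $k$-unit auction into $C$ with weight $\eps/(n-1)$ and then invoking Corollary~\ref{cor2}; you instead go directly through Theorem~\ref{thm:allpay-general} using the observation $x_B'(q)=1 \Rightarrow x_C'(q)\ge \eps$, which is the same argument one level deeper and yields the identical bound.
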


Next we observe that in fact we can get similar results by mixing in
just a few of the multi-unit auctions. In particular, in order to
estimate $\murevk$ accurately, it suffices to mix in a multi-unit
auction with no more than $k$ units, and another one with no less than
$k$ units. This gives us a more efficient universal B test for
simultaneously inferring all of the multi-unit revenues (see
Corollary~\ref{cor:universal}).  

\begin{lemma}
\label{lem:univ}
  The revenue of the highest-$k$-bids-win mechanism B can be
  estimated from $N$ bids of a rank-based all-pay auction C $=$ $(1-2\eps)$A $+\eps
  $B$_1+\eps $B$_2$ where A is an arbitrary rank-based auction, and
  B$_1$ and B$_2$ are the highest-$k_1$-bids-win and
  highest-$k_2$-bids-win auctions respectively, with $k_1\le k\le
  k_2$. The absolute error of the estimator is bounded by
\begin{align*}
\frac{80}{\sqrt{N}} (n+\log (1/\eps))\,\,\sup_{\quant}\{x_k'(\quant)\}
\end{align*}
\end{lemma}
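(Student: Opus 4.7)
The approach is to apply Theorem~\ref{thm:allpay-general} directly with incumbent allocation rule $x = x_C$ and counterfactual $y = x_k$, yielding the bound $(80/\sqrt{N})\,\rxy[x_k]$ where
\[
\rxy[x_k] = \sup_q\{x_k'(q)\}\cdot\max\Bigl\{1,\,\log\sup\nolimits_{q:\,x_k'(q)\ge 1}\tfrac{x_C'(q)}{x_k'(q)},\,\log\sup\nolimits_q\tfrac{x_k'(q)}{x_C'(q)}\Bigr\}.
\]
Since the prefactor $\sup_q\{x_k'(q)\}$ already appears in the stated bound, it suffices to show that both logarithmic terms are $O(n + \log(1/\eps))$. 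The first is easy: $x_C$ is the allocation rule of a rank-based $n$-position auction, so Fact~\ref{fact:max-slope} gives $x_C'(q)\le n$ uniformly, and thus on $\{q: x_k'(q)\ge 1\}$ the ratio $x_C'/x_k'$ is at most $n$, whose log is $O(\log n)\le O(n)$.

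The main technical step is bounding $\sup_q x_k'(q)/x_C'(q)$. Since $x_A'\ge 0$, the mixture structure gives $x_C'(q)\ge \eps\max\{x_{k_1}'(q), x_{k_2}'(q)\}$, and hence
\[
\tfrac{x_k'(q)}{x_C'(q)} \le \tfrac{1}{\eps}\min\Bigl\{\tfrac{x_k'(q)}{x_{k_1}'(q)},\,\tfrac{x_k'(q)}{x_{k_2}'(q)}\Bigr\}.
\]
Using the closed form $x_j'(q) = (n-1)\binom{n-2}{j-1}q^{n-1-j}(1-q)^{j-1}$, we see that $x_k'/x_{k_1}' \propto (q/(1-q))^{k_1-k}$ is strictly decreasing in $q$ when $k_1 < k$, while $x_k'/x_{k_2}' \propto (q/(1-q))^{k_2-k}$ is strictly increasing when $k_2 > k$. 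The pointwise minimum of these two is therefore single-peaked, attaining its maximum at the unique crossover $q^*$ solving $x_{k_1}'(q^*) = x_{k_2}'(q^*)$, which reduces to $(q^*/(1-q^*))^{k_2-k_1} = \binom{n-2}{k_2-1}/\binom{n-2}{k_1-1}$.

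Substituting this $q^*$ back yields the common crossover value
\[
\binom{n-2}{k-1}\Bigm/\Bigl(\binom{n-2}{k_1-1}^{1-\lambda}\binom{n-2}{k_2-1}^{\lambda}\Bigr),\qquad \lambda = \tfrac{k-k_1}{k_2-k_1}\in[0,1].
\]
Since every $\binom{n-2}{j-1}\ge 1$ and $\binom{n-2}{k-1}\le 2^{n-2}$, this quotient is bounded by $2^{n-2}$ and its log is at most $(n-2)\log 2 \le n$. (The degenerate cases $k_1=k$ or $k_2=k$ make one ratio identically $1$ and yield the even cleaner bound $1/\eps$.) Combining, $\log\sup_q x_k'/x_C' \le \log(1/\eps) + n$, so $\rxy[x_k] \le \sup_q\{x_k'(q)\}\cdot O(n+\log(1/\eps))$, and Theorem~\ref{thm:allpay-general} yields the stated error bound. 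The main obstacle is the crossover computation --- identifying $q^*$ algebraically and recognizing that the resulting ratio of binomial coefficients telescopes to an expression bounded by $2^{n-2}$; everything else follows from Fact~\ref{fact:max-slope} and the mixture structure of $x_C$.
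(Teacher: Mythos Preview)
Your proposal is correct and follows essentially the same approach as the paper: apply Theorem~\ref{thm:allpay-general} with $x = x_C$ and $y = x_k$, bound the first logarithmic term using $\sup_q x_C'(q)\le n$ (Fact~\ref{fact:max-slope}), and bound the second via the mixture lower bound $x_C'\ge \eps\max\{x_{k_1}',x_{k_2}'\}$.

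The only difference is in how the quantity $\sup_q \min\{x_k'/x_{k_1}',\,x_k'/x_{k_2}'\}$ is controlled. You solve for the exact crossover $q^*$ where $x_{k_1}'(q^*)=x_{k_2}'(q^*)$ and evaluate the ratio there, obtaining $\binom{n-2}{k-1}\big/\bigl(\binom{n-2}{k_1-1}^{1-\lambda}\binom{n-2}{k_2-1}^{\lambda}\bigr)\le 2^{n-2}$. The paper instead uses a blunt case split at $q=1/2$: for $q\le 1/2$ the factor $(q/(1-q))^{k_2-k}\le 1$, so $x_k'/x_{k_2}'\le \binom{n-2}{k-1}/\binom{n-2}{k_2-1}\le 2^{n-2}$; symmetrically for $q\ge 1/2$ using $k_1$. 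Both routes give the same $2^n/\eps$ bound on $\sup_q x_k'/x_C'$, hence the same $n+\log(1/\eps)$ in the log. Your crossover computation is more precise (and would give a sharper constant depending on $k,k_1,k_2$), but the paper's split at $q=1/2$ is shorter and avoids the algebra of identifying $q^*$.
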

\begin{proof}
We begin by noting that for any $j$ and $k$ with $k\le j$,
$$
\frac{\alloc'_k(\quant)}{\alloc'_j(\quant)} = \frac{{n-2\choose
   k-1}}{{{n-2}\choose {j-1}}} \left(\frac{q}{1-q}\right)^{j-k}.
$$
When $k\le j$ and $q\le 1/2$, this ratio is less than $2^n$.
Likewise, when
$k\ge j$ and $q\ge 1/2$, the ratio is less than $2^n$.
Therefore, for any $q$, and C$\ =\ (1-2\eps)$A$\ +\ \eps$B$_1\ +\ \eps$B$_2$
where B$_1$ and B$_2$ are the highest-$k_1$-bids-win and
highest-$k_2$-bids-win auctions respectively, with $k_1\le k\le k_2$,
we have
$$
\sup_q \frac{\alloc'_k(\quant)}{\alloc'_C(\quant)} \le \frac{2^n}{\epsilon}.
$$
Next we note that $\sup_q \alloc'_C(q)\le n$ and, therefore, $\sup_{\quant:
    \kalloc'(\quant)\ge 1}\frac{\alloc_C'(\quant)}{\kalloc'(\quant)}\le n$.
Putting these quantities together with Theorem~\ref{thm:allpay-general}, we
get that the absolute error in estimating $\murevk$ from bids drawn from C is at
most 
\begin{align*}
\Err{\murevk} \le & \frac{80}{\sqrt{\samples}}\,\,
\sup_{\quant}\{\kalloc'(\quant)\}\,\, \left(n+\log
  1/\eps\right). \qedhere 
\end{align*}
\end{proof}

\begin{corollary}
\label{cor:universal}
The rank-by-bid position auction with weights $w_1=1$, $w_k=1/2$ for
$1< k<n-1$, and $w_n=0$
is a universal B test mechanism
with mean absolute error bounded by $O(n(n+\log
  (1/\eps))/\sqrt{N})$.
  \end{corollary}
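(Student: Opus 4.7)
The plan is to reduce the claim to a term-by-term application of Lemma~\ref{lem:univ} and then assemble the pieces via the linearity of rank-based auction revenue.

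First I would identify the decomposition of the proposed B test into multi-unit auctions by computing its marginal position weights $\margwalk = \walk - \walk[k+1]$. With $w_1 = 1$, $w_k = 1/2$ for $1 < k \le n-1$, and $w_n = 0$, one gets $\margwalk[1] = \margwalk[n-1] = 1/2$ with all other marginal weights zero; hence the universal B test is the equal mixture $\tfrac12 B_1 + \tfrac12 B_2$ of the one-unit auction $B_1$ and the $(n-1)$-unit auction $B_2$. Crucially, the indices satisfy $k_1 = 1 \le k \le n-1 = k_2$ for \emph{every} target $k \in \{1,\ldots,n-1\}$, so the same pair $(B_1,B_2)$ works simultaneously for all multi-unit revenues.

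Next, for each such $k$ I would apply Lemma~\ref{lem:univ} to this fixed pair. The A/B-test mechanism rewrites as $C = (1-\epsilon)A + \epsilon B = (1 - 2\epsilon')A + \epsilon' B_1 + \epsilon' B_2$ with $\epsilon' = \epsilon/2$, so the lemma delivers
\[
\Err{\murevk} \;\le\; \frac{80}{\sqrt{N}}\,\bigl(n + \log(2/\epsilon)\bigr)\,\sup\nolimits_q x_k'(q).
\]
Bounding $\sup_q x_k'(q) \le n$ via Fact~\ref{fact:max-slope} yields $\Err{\murevk} = O\!\bigl(n(n+\log(1/\epsilon))/\sqrt{N}\bigr)$ uniformly over $k \in \{1,\ldots,n-1\}$.

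Finally, to pass from multi-unit to an arbitrary rank-based $D$: by Section~\ref{sec:rank-based-basics}, $D$ decomposes as $\sum_k \lambda_k B^{(k)}$ where $B^{(k)}$ is the $k$-highest-bids-win auction and $(\lambda_0,\ldots,\lambda_n)$ is a probability distribution. Revenue equivalence gives $P_D = \sum_k \lambda_k \murevk$, and taking the natural plug-in $\hat{P}_D := \sum_k \lambda_k \emurevk$ yields, by the triangle inequality,
\[
\Err{P_D} \;\le\; \sum_k \lambda_k\, \Err{\murevk} \;\le\; \max_k \Err{\murevk} \;=\; O\!\left(\tfrac{n(n+\log(1/\epsilon))}{\sqrt{N}}\right),
\]
where the $k=0$ and $k=n$ terms contribute trivially since $\murevk[0]=\murevk[n]=0$. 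The whole argument is essentially mechanical once the decomposition of $B$ into the two extreme multi-unit auctions $B_1$ and $B_{n-1}$ is noticed; I do not foresee any substantive obstacle beyond the bookkeeping of absorbing the internal $\tfrac12$ factor into the $O$ bound.
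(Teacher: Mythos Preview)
Your proof is correct and follows exactly the route the paper intends: the corollary is stated immediately after Lemma~\ref{lem:univ} with no separate proof, and your argument---decompose the proposed $B$ as $\tfrac12 B_1 + \tfrac12 B_{n-1}$, rewrite $C$ with $\epsilon' = \epsilon/2$, invoke Lemma~\ref{lem:univ} for each $k\in\{1,\ldots,n-1\}$, bound $\sup_q x_k'(q)\le n$ via Fact~\ref{fact:max-slope}, and pass to general $D$ by convex combination---is precisely the intended derivation. One small remark: the paper's weight specification ``$w_k=1/2$ for $1<k<n-1$'' leaves $w_{n-1}$ unstated; your reading $w_{n-1}=1/2$ is the only one that makes the decomposition into the $1$-unit and $(n-1)$-unit auctions work, and is clearly what is meant.
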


\subsection{Optimal rank-based auctions with strict monotonicity}
\label{sec:strict-opt}

Position auctions, by definition, have non-increasing position weights
$\wals$.  The ironing in the iron-by-rank optimization of
Section~\ref{s:iron-rank-opt} converted the problem of optimizing
multi-unit marginal revenue subject to non-increasing position weight,
to a simpler problem of optimizing multi-unit marginal revenue without
any constraints.  In this section, we describe the optimization of
rank-based auctions (i.e., ones for which position weights can be
shifted only downwards or discarded) subject to {\em strictly
  decreasing} position weights.  In particular, we can reinterpret the
decreasing position weights of the universal B test mechanism from
\autoref{s:universal} as such a strictness requirement.  The optimal
mechanism with this strictness requirement will satisfy the same
inference guarantee proven for the A/B test while improving its
revenue.

As described by Lemma~\ref{l:rank-based-feasibility}, position weights
$\iwals$ are feasible as a rank-based auction in the position environment
$\wals$ if the cumulative position weights satisfy $\cumwalk \geq
\cumiwalk$ for all $k$.  Suppose we would like to optimize $\iwals$ 
for position weight $\wals$ subject to the monotonicity constraint
that the difference in successive position weights is at least that of
some other position weights $\epswals =
(\epswalk[1],\ldots,\epswalk[n])$.  Formally, $\margiwalk = \iwalk
- \iwalk[k+1] \geq \epswalk - \epswalk[k+1] = \margepswalk$ for all
$k$.  For example, $\epswals$ could be $\epsilon$ times the position
weights of the universal B test mechanism of the preceding section.
We call an allocation rule satisfying these monotonicity constraints
an $\epswals$-strictly-monotone allocation rule. As non-trivial
ironing by rank always results in consecutive positions with the same
weight, i.e., $\margiwalk = 0$ for some $k$, the optimal rank-based
mechanism with strict monotonicity will require overlapping ironed intervals.

To our knowledge, performance optimization subject to a strict
monotonicity constraint has not previously been considered in the
literature.  At a high level our approach is the following.  We start
with $\wals$ which induces the cumulative position weights $\cumwals$
which constrain the resulting position weights $\iwals$ of any
feasible rank-based auction via its cumulative $\cumiwals$.  We view
$\iwals$ as the combination of two position auctions. The first has
weakly monotone weights $\iyals = (\iyalk[1],\ldots,\iyalk[n])$; the
second has strictly monotone weights $\epswals =
(\epswalk[1],\ldots,\epswalk[n])$; and the combination has weights $\iwalk
= \iyalk + \epswalk$ for all $k$.  The revenue of the combined position
auction is the sum of the revenues of the two component position
auctions.  Since the second auction has fixed position weights, its
revenue is fixed.  Since the first position auction is weakly monotone
and the second is strictly, the combined position auction is strictly
monotone and satisfies the constraint that
$\margiwalk \geq \margepswalk$ for all $k$.

This construction focuses attention on optimization of $\iyals$
subject to the induced constraint imposed by $\wals$ and after the
removal of the $\epswals$-strictly-monotone allocation rule.  I.e.,
$\iwals$ must be feasible for $\wals$.  The suggested feasibility
constraint for optimization of $\iyals$ is given by position weights
$\yals$ defined as $\yalk = \walk - \epswalk$.  Notice that, in
this definition of $\yals$, a lesser amount is subtracted from
successive positions.  Consequently, monotonicity of $\wals$ does
not imply monotonicity of $\yals$.

To obtain $\iyals$ from $\yals$ we may need to iron for two reasons,
(a) to make $\iyals$ monotone and (b) to make the multi-unit revenue
curve monotone.  In fact, both of these ironings are good for revenue.
The ironing construction for monotonizing $\yals$ constructs the
concave hull of the cumulative position weights $\cumyals$.  This
concave hull is strictly higher than the curve given by $\cumyals$ (i.e.,
connecting $(0,\cumyalk[0]), \ldots, (n,\cumyalk[n])$).  Similarly the
ironed multi-unit revenue curve given by $\imurevs$ is the concave
hull of the multi-unit revenue curve given by $\murevs$.  The correct
order in which to apply these ironing procedures is to first (a) iron
the position weights $\yals$ to make it monotone, and second (b) iron
the multi-unit revenue curve $\murevs$ to make it concave.  This order
is important as the revenue of the position auction with weights
$\iyals$ is only given by the ironed revenue curve $\imurevs$ when the
$\margiyals = 0$ on the ironed intervals of $\imurevs$.

\begin{theorem}
\label{thm:rank-based-opt-strict}
The optimal $\epswals$-strictly-monotone rank-based auction for position
weights $\wals$ has position weights $\iwals$ constructed by
\begin{enumerate}
\item defining $\yals$ by $\yalk = \walk - \epswalk$ for all $k$,
\item averaging position weights of $\yals$ on intervals where $\yals$
  should be ironed to be monotone,
\item averaging the resulting position weights on intervals where
  $\murevs$ should be ironed to be concave to get $\iyals$, and
\item setting $\iwals$ as $\iwalk = \iyalk + \epswalk$;
\end{enumerate}
and is feasible for $\wals$ if $\epswals$ is feasible for $\wals$.
\end{theorem}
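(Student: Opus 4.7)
The strategy is to reduce the strict-monotonicity-constrained optimization to the standard rank-based optimization of Theorem~\ref{thm:rank-based-opt} via a change of variables, and then to handle the residual non-monotonicity of $\yals$ with a pre-processing ironing step.

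First, I would write any $\epswals$-strictly-monotone rank-based auction $\iwals$ as $\iwals = \iyals + \epswals$ with $\iyalk = \iwalk - \epswalk$. The strict-monotonicity constraint $\margiwalk \ge \margepswalk$ becomes $\margiyalk \ge 0$ (weak monotonicity of $\iyals$), non-negativity $\iwalk \ge 0$ follows from $\iyalk,\epswalk \ge 0$, and feasibility $\cumiwals \le \cumwals$ per Lemma~\ref{l:rank-based-feasibility} translates to $\cumiyals \le \cumyals$ where $\cumyalk := \cumwalk - \cumepswalk$. Since per-agent revenue is linear in marginal weights via $\REV{\iwals} = \sum_k \margiwalk\,\murevk$, it decomposes additively as $\REV{\iwals} = \REV{\iyals} + \REV{\epswals}$ with the second term fixed; thus the problem reduces to maximizing $\REV{\iyals}$ over weakly monotone, non-negative $\iyals$ with $\cumiyals \le \cumyals$.

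The reduced problem would follow directly from Theorem~\ref{thm:rank-based-opt} if $\yals$ were monotone, but monotonicity of $\yals$ is equivalent to $\margwalk \ge \margepswalk$, which is not assumed. The plan is therefore a two-step ironing: first, pre-process $\yals$ to a weakly monotone $\tilde{\yals}$ by averaging on the intervals where $\yals$ fails to be monotone, and second, apply Theorem~\ref{thm:rank-based-opt} to $\tilde{\yals}$ (iron on intervals where $\murevs$ is ironed to form $\imurevs$, and zero out positions where $\imumargk < 0$). Reassembling $\iwals = \iyals + \epswals$ yields the claimed optimal mechanism. The correctness of the monotonicity pre-processing hinges on the claim that, for any weakly monotone $\iyals$, the bound $\cumiyals \le \cumyals$ is equivalent to $\cumiyals \le \tilde{\cumyals}$: the forward direction uses that the cumulative of a monotone sequence is concave and is therefore bounded by the greatest concave function sandwiched below $\cumyals$; the reverse direction uses $\tilde{\cumyals} \le \cumyals$. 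Feasibility of $\iwals$ for $\wals$ then follows from $\cumiyals \le \cumyals$ combined with $\cumepswals \le \cumwals$ (the feasibility of $\epswals$ for $\wals$), giving $\cumiwals = \cumiyals + \cumepswals \le \cumyals + \cumepswals = \cumwals$.

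The main technical obstacle is the first ironing step: rigorously establishing that the monotonicity-ironing preserves exactly the set of weakly monotone $\iyals$ satisfying the original cumulative bound, and that this operation interacts correctly with the subsequent revenue-curve ironing. One must carefully specify the monotonicity-ironing intervals so that the resulting $\tilde{\yals}$ yields the same feasible region as $\yals$ (on monotone $\iyals$), and must apply the monotonicity-ironing before the revenue-curve ironing so that the flat segments produced by the latter are compatible with the monotone structure forced by the former. I expect the cleanest proof to parallel the envy-free optimal pricing argument of \citet{DHY-14} and Theorem~\ref{thm:rank-based-opt}, lifted to handle the additional strict-monotonicity constraint by exploiting the concavity of $\cumiyals$ for monotone $\iyals$; an LP-duality argument with complementary slackness across the two ironings would provide a clean certificate of optimality.
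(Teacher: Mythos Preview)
Your high-level strategy matches the paper exactly: decompose $\iwals = \iyals + \epswals$, use linearity of revenue to peel off the fixed $\REV{\epswals}$ term, and reduce to the rank-based optimization of Theorem~\ref{thm:rank-based-opt} after a pre-processing ironing of $\yals$. The paper's own proof is the single sentence ``follows directly by the construction and its correctness,'' so in level of detail you already go well beyond it.

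There is, however, an internal inconsistency in your treatment of step~2 that would not survive a rigorous write-up. You describe the monotonicity ironing as averaging on intervals where $\yals$ is non-monotone; this is the reading forced by the theorem statement and by the paper's discussion, and it produces the concave hull of $\cumyals$ from \emph{above}, i.e.\ $\tilde{\cumyals} \ge \cumyals$ (the paper says so explicitly). But your equivalence-of-feasible-sets argument then invokes $\tilde{\cumyals} \le \cumyals$ for the reverse implication, which is the defining property of the greatest concave \emph{minorant}, a different object. With the concave-hull reading the relaxed constraint $\cumiyals \le \tilde{\cumyals}$ is strictly weaker than the original $\cumiyals \le \cumyals$, and the relaxed optimum need not be feasible for $\wals$: for instance take $\wals = (0.5,0.4,0.2)$, $\epswals = (0.4,0.1,0.1)$, and any value distribution with a concave multi-unit revenue curve so that step~3 is vacuous; then $\yals = (0.1,0.3,0.1)$, step~2 yields $(0.2,0.2,0.1)$, and the construction outputs $\iwals = (0.6,0.3,0.2)$ with $\iwalk[1] = 0.6 > 0.5 = \walk[1]$. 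Your feasibility chain $\cumiwals = \cumiyals + \cumepswals \le \cumyals + \cumepswals = \cumwals$ breaks at $\cumiyals \le \cumyals$ for the same reason. The paper's one-line proof does not address this either, so this is less a divergence from the paper than a point at which both arguments, and possibly the feasibility clause of the statement itself, require more care.
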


\begin{proof} 
The proof of this theorem follows directly by the construction and its
correctness.
\end{proof}

As described previously, the rank-based auction given by $\iwals$ in
position environment given by $\wals$ can be implemented by a sequence
of iron-by-rank and rank-reserve operations.  Such a sequence of
operations can be found, e.g., via an approach of \citet{AFHHM-12}
or \citet{HLP-29}.

The following proposition shows that this optimal
$\epswals$-strictly-monotone mechanism inherits the inference
properties of the mechanism with position weights $\epswals$, in
particular, the A/B testing results of
\Cref{cor1}, \Cref{cor2}, \Cref{cor3},
and~\Cref{cor:universal}.

\begin{proposition} 
\label{prop:optimization-with-inference-metatheorem}
For position weights $\epswals$ defined as $\epsilon$ times the
position weights of a B test mechanism, if position weights $\epswals$
are feasible for $\wals$ then the optimal $\epswals$-strictly-monotone
rank-based auction for position weights $\wals$ has the same inference
guarantee as the A/B test with $\epsilon$ probability of B.
\end{proposition}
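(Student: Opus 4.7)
The plan is to reduce the claim to the A/B testing inference bounds of Section~\ref{s:ab-testing} by exhibiting, for the optimal $\epswals$-strictly-monotone mechanism, the same pointwise derivative lower bound that drives those bounds for $x_C = (1-\epsilon)x_A + \epsilon x_B$.

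First I would use step~4 of the construction in Theorem~\ref{thm:rank-based-opt-strict}, which sets $\iwalk = \iyalk + \epswalk$, together with the linearity of the allocation rule in the position weights, to decompose
\[
x_{\iwals}(q) \;=\; x_{\iyals}(q) + \epsilon\, x_B(q),
\]
where $x_B$ denotes the allocation rule of the B test mechanism (whose position weights are $\epswals/\epsilon$). Steps~2 and~3 of the construction output $\iyals$ as a weakly decreasing sequence of nonnegative weights: step~2 enforces monotonicity by ironing $\yals$, and step~3's concavifying ironing only averages weights on each ironed interval and hence preserves the monotonicity already established. So $\iyals$ is itself a valid rank-based position weighting, its allocation rule $x_{\iyals}$ is monotone nondecreasing in $q$, and $x'_{\iyals}(q)\ge 0$ pointwise. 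Differentiating the decomposition then yields the key inequality
\[
x'_{\iwals}(q) \;\ge\; \epsilon\, x'_B(q).
\]

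Next I would invoke Theorem~\ref{thm:allpay-general}/Corollary~\ref{cor:allpay-y} with $x_{\iwals}$ playing the role of the incumbent allocation rule. The A/B testing Corollaries~\ref{cor1}--\ref{cor:universal} are applications of these results in which the incumbent enters only through the quantity $\sup_q x'_D(q)/x'_{\text{incumbent}}(q)$ for a rank-based counterfactual target $D$, and the derivative bound above gives
\[
\sup_q \frac{x'_D(q)}{x'_{\iwals}(q)} \;\le\; \frac{1}{\epsilon}\sup_q \frac{x'_D(q)}{x'_B(q)},
\]
which is precisely the bound that holds for the A/B test incumbent $x_C = (1-\epsilon)x_A + \epsilon x_B$. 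Because the dependence of the error bound on this supremum is logarithmic, the resulting $\log(1/\epsilon)$ rate transfers verbatim from the A/B test corollary corresponding to the chosen B test to the optimal $\epswals$-strictly-monotone mechanism, which is the inference guarantee claimed.

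The main (mild) obstacle is verifying that $\iyals$ stays weakly monotone after step~3, whose purpose is concavifying the multi-unit revenue curve rather than monotonizing weights. This is immediate, however, from the observation that concavifying ironing replaces each ironed interval by the common average of its entries, which preserves the ordering between adjacent intervals already established in step~2. Feasibility of $\iwals$ for $\wals$ is part of Theorem~\ref{thm:rank-based-opt-strict}'s conclusion under the standing hypothesis that $\epswals$ is feasible for $\wals$, so no additional structural argument is required.
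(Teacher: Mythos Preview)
The paper states this proposition without proof, treating it as immediate from the construction of Theorem~\ref{thm:rank-based-opt-strict} together with the A/B testing corollaries. Your argument is exactly the natural way to make this precise: the additive decomposition $\iwals = \iyals + \epswals$ yields the allocation-rule decomposition $x_{\iwals} = x_{\iyals} + \epsilon\,x_B$, the weak monotonicity of $\iyals$ after steps~2 and~3 gives $x'_{\iyals}\ge 0$, and hence $x'_{\iwals}(q)\ge \epsilon\,x'_B(q)$ pointwise, which is precisely the derivative lower bound used to derive Corollaries~\ref{cor1}, \ref{cor2}, \ref{cor3}, and~\ref{cor:universal} for the A/B test incumbent $x_C=(1-\epsilon)x_A+\epsilon x_B$. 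So your proof is correct and is essentially what the paper has in mind.
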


\section{Simulation evidence}

\label{s:simulations}

In this section we present evidence from simulations that corroborate
our theoretical analyses and provide further understanding of various
methods for controlling estimation error.  In the first subsection, we
explore the dependence of the error of the estimator on the main
parameters of our theoretical analyses: the number $n$ of agents in
each auction and the number $\samples$ of samples the analyst obtains
from the bid distribution.  The second subsection considers ex ante
methods for controlling estimation error, a.k.a., instrumentation.
The two methods compared are A/B testing, where the counterfactual
mechanism B is mixed in with the incumbent mechanisms, and the
universal B-test where the universal B-test mechanism is mixed in with
the incumbent mechanism.  For both methods, the error is considered as
a function of the amount $\epsilon$ by which the B-test auction is
mixed with the incumbent auction.  The third subsection contrasts ex
post methods of controlling estimation error.  We compare a standard
approach of smoothing to estimate the empirical bid distribution that is plugged into the bidder's first order condition versus our approach
of truncating the contribution to the estimator from extreme
quantiles.  The above analyses are conducted under the assumption that
agent values are drawn from a beta distribution; the final subsection
demonstrates that the same qualitative results hold for a wide range
of value distributions.

\begin{figure}
    \begin{subfigure}[t]{0.48\textwidth}
    \centering
    \includegraphics[width=\linewidth]{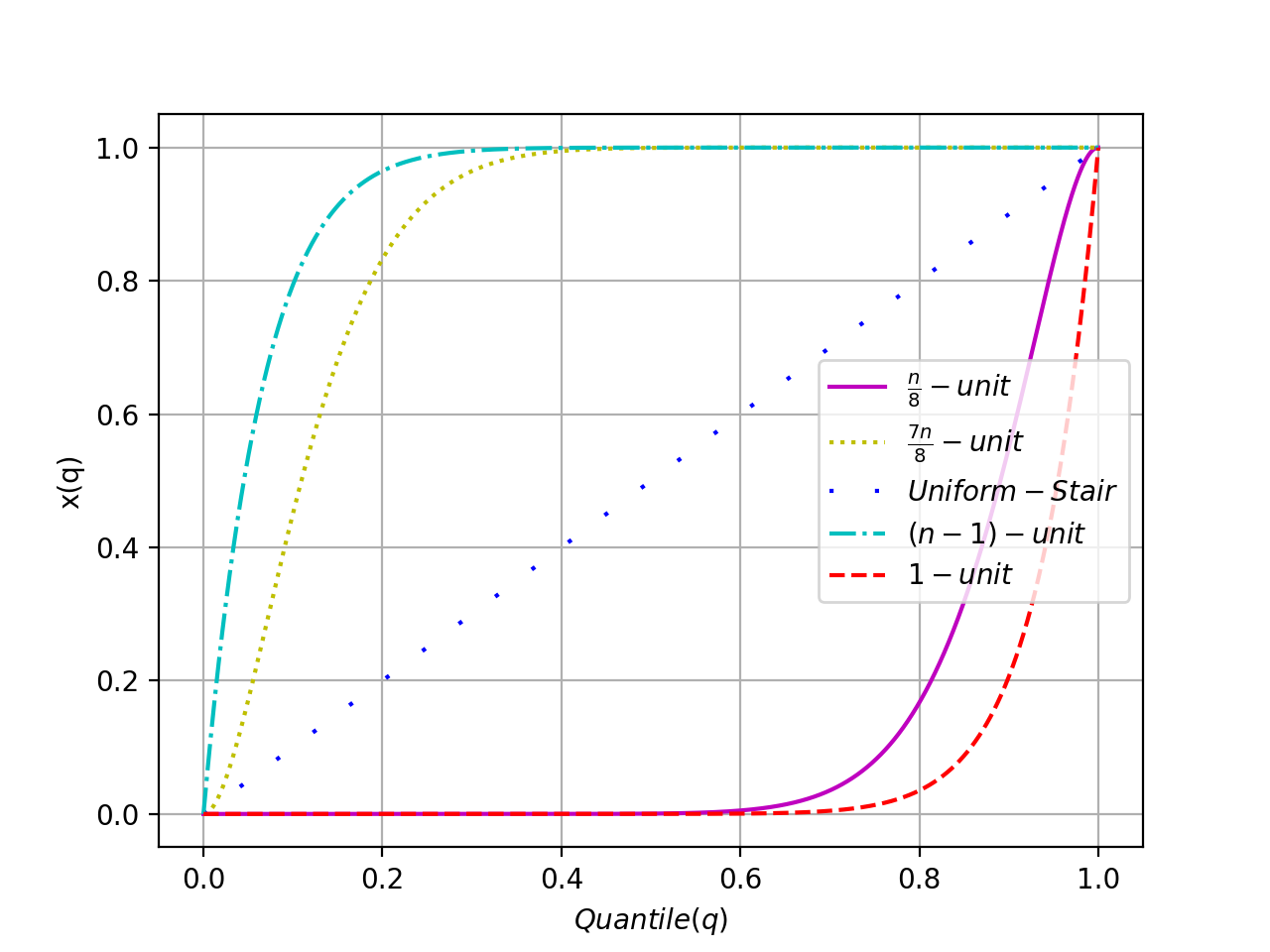} 
    \caption{$n=16$} 
  \end{subfigure}
  \hfill
  \begin{subfigure}[t]{0.48\textwidth}
    \centering
    \includegraphics[width=\linewidth]{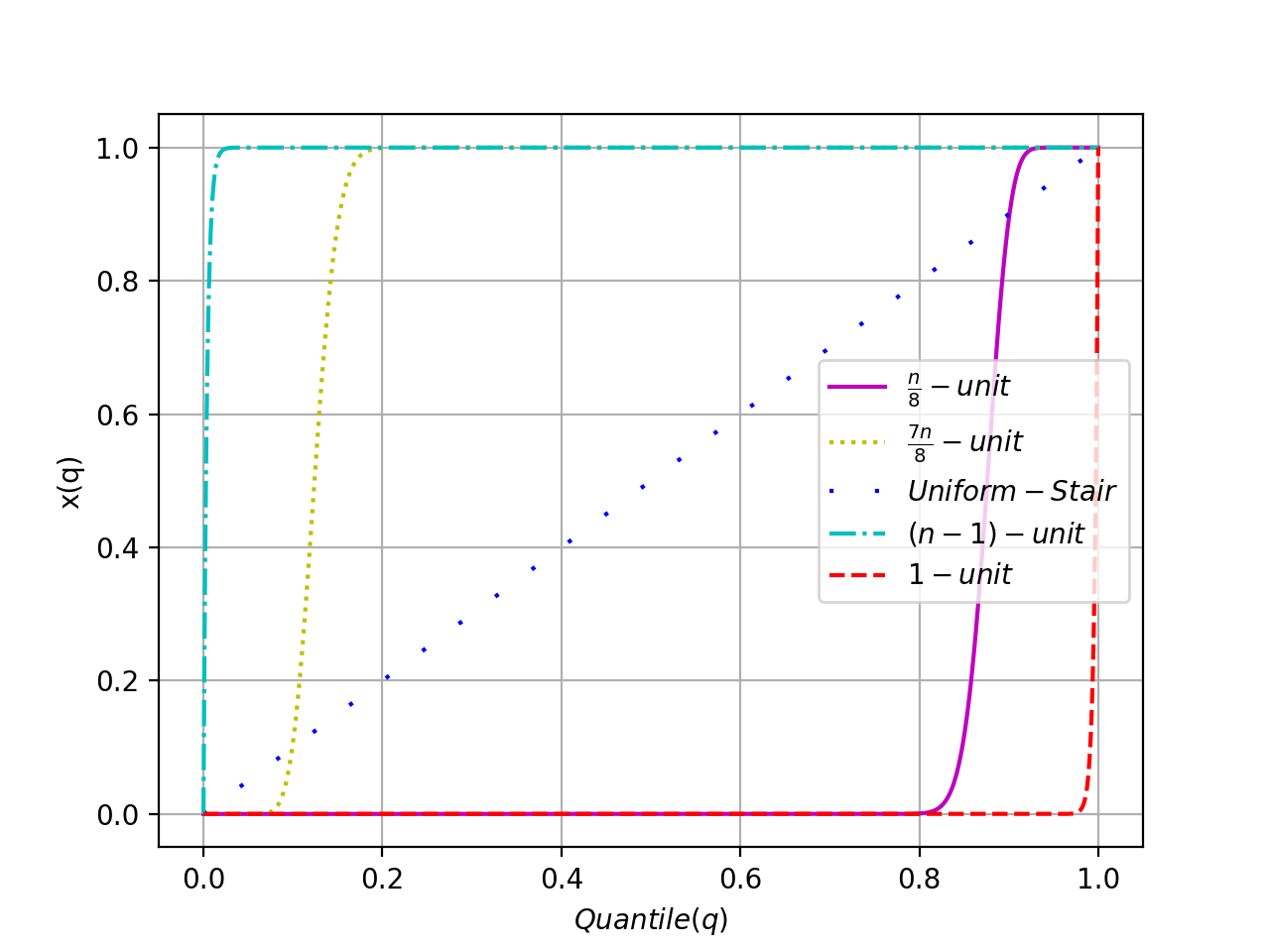} 
    \caption{$n=256$} 
  \end{subfigure}
  \caption{The five allocation rules of our empirical study are
    depicted.  Note that with $n=16$ the active regions of the high
    supply and 1-unit auction overlap, while with $n=256$ they do not
    (similarly for the low supply and $(n-1)$-unit auction).}
  \label{fig:five-alloc-rules}
\end{figure}

Our empirical analysis focuses on the following five $n$-agent position auctions (\Cref{fig:five-alloc-rules}). 
\begin{itemize}
\item the {\em low supply} ($\ceil{n/8}$-unit) auction.
\item the {\em high supply} ($\floor{7n/8}$-unit) auction.
\item the {\em uniform-stair} auction with allocation rule $x(q) = q$ (see \Cref{d:uniform-stair}, below).
\item the {\em 1-unit} auction with allocation rule $x(q) = q^{n-1}$.
\item the {\em $(n-1)$-unit} auction with allocation rule $x(q) = 1-(1-q)^{n-1}$.
\end{itemize}
We will be interested in estimating the revenue of one auction from
the sample of $N$ bids in another auction.  The 1-unit and
$\ceil{n/8}$-unit auctions are extremal low-unit auctions.  The
$\floor{7n/8}$-unit and $(n-1)$-unit auctions are extremal high-unit
auctions.  The uniform stair auction is a position auction with
uniformly decreasing weights, equivalently, constant marginal weights.

\begin{definition}
  \label{d:uniform-stair}
  The {\em uniform-stair allocation rule} is $x(q) = q$; it is induced by the {\em uniform-stair auction}, an $n$-agent position auction defined by weights $\wals = (\walk[1],\ldots,\walk[n]) = (1,\tfrac{n-2}{n-1},\ldots,\tfrac{1}{n-1},0)$.
\end{definition}

Our empirical study allows benchmarking the error of our estimator
against the {\em counterfactual error}, i.e., the estimation error had
the incumbent mechanism been the counterfactual.  With this benchmark,
we see the loss (or gain) in accuracy of our approach relative to the
straightforward statistical task of estimating the revenue of an
auction from samples from the auction's bid distribution.  Over all of
the studies we ran, we did not observe mean absolute error of our
estimator to exceed the counterfactual error by more than a factor of
10.

\subsection*{Methodology}

We perform simulations to calculate the mean absolute deviation of our
estimator $\hat{P}_B$ for the revenue of auction B with the auction's
expected revenue $P_B$.  The allocation rules $\alloc_B$ and
$\alloc_C$, their derivatives $\alloc'_B$ and $\alloc'_C$, and the
revenue curve $\rev$ are calculated analytically.  The expected
revenue $P_B$ is calculated from the revenue curve $\rev$ and
$\alloc'_B$ by equation~\eqref{eq:bne-rev} via numerical integration
(i.e., by averaging the values of $\rev(\quant)\,\alloc'_B(\quant)$ on
a grid).  
The equilibrium bid distribution in auction C for values on a uniform
grid are calculated from equation~\eqref{eq:ap-inf} via numerical
integration on a grid.  
Each simulation draws $N$ bids from this bid distribution, 
the estimated revenue $\hat{P}_B$ is
calculated from Definition~\ref{d:estimator}, and the mean absolute
deviation is calculated by averaging $|P_B - \hat{P}_B|$ over 8000
Monte Carlo simulations.

\subsection{Empirical Evidence versus Theoretical Bound}

\begin{figure}[tb]
  \begin{subfigure}[t]{0.32\textwidth}
    \centering
    \includegraphics[width=\linewidth]{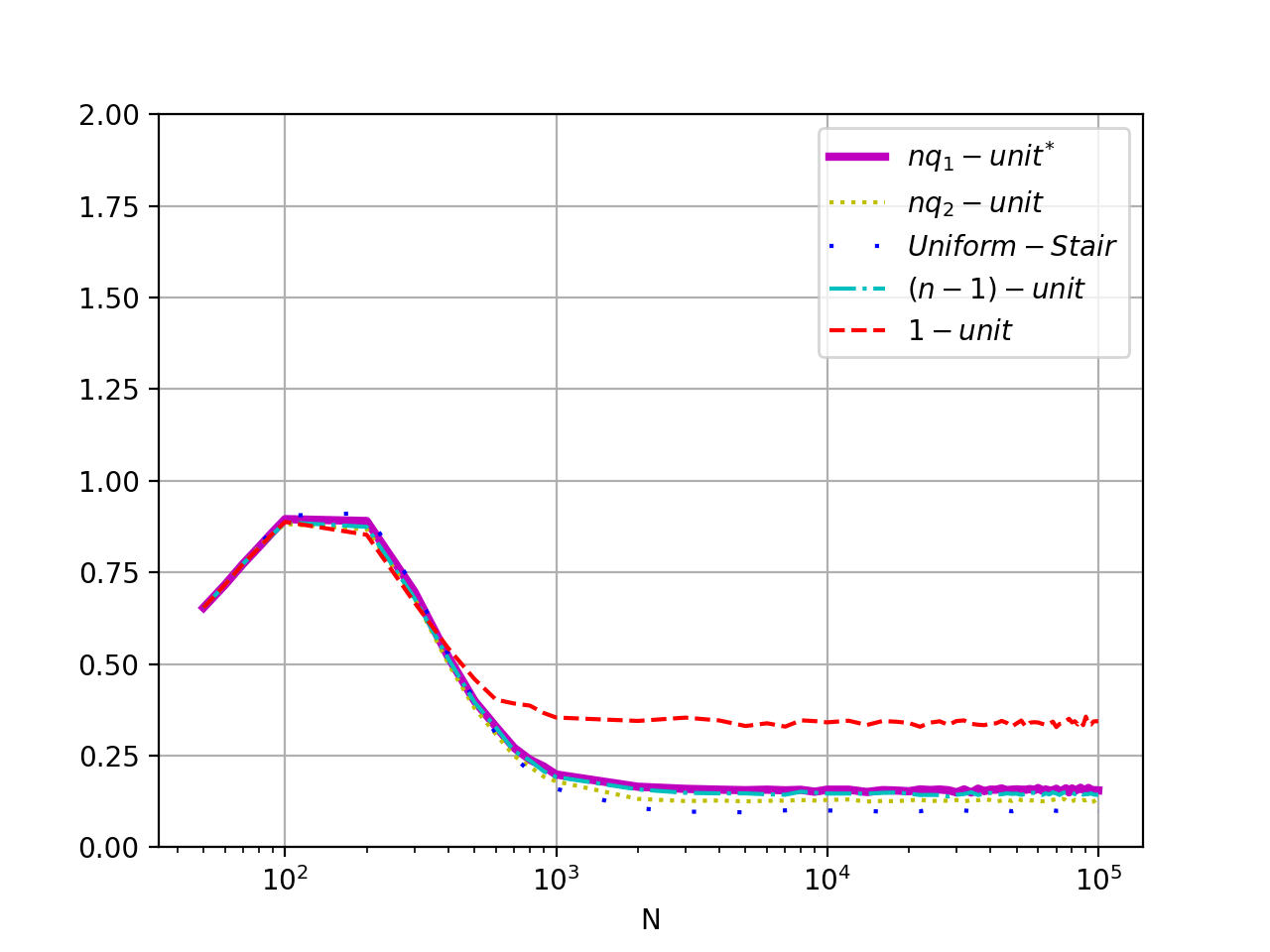} 
    \caption{low supply} 
  \end{subfigure}
  \hfill
  \begin{subfigure}[t]{0.32\textwidth}
    \centering
    \includegraphics[width=\linewidth]{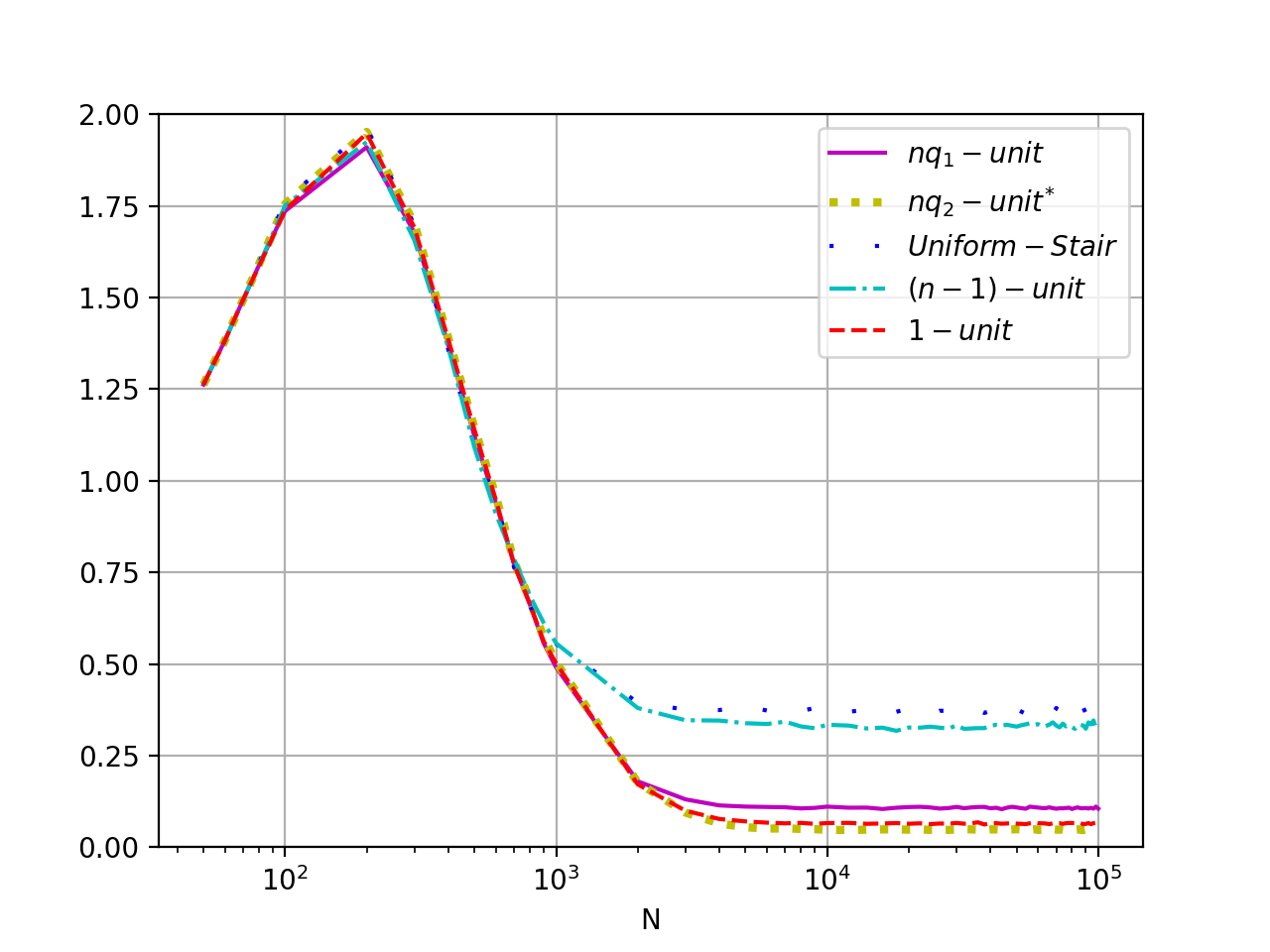} 
    \caption{high supply} 
    \end{subfigure}
  \hfill
  \begin{subfigure}[t]{0.32\textwidth}
    \centering
    \includegraphics[width=\linewidth]{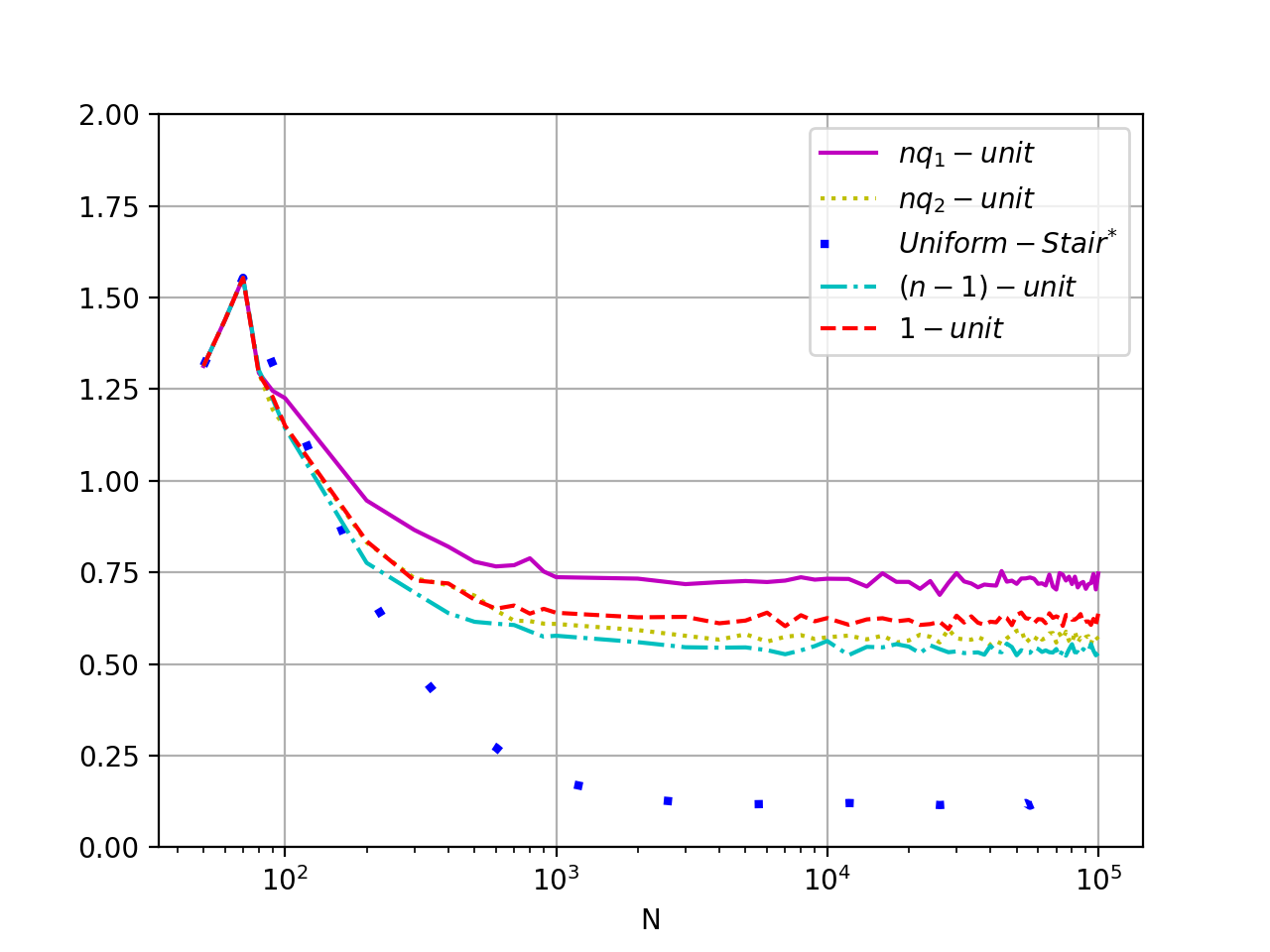} 
    \caption{uniform stair} 
  \end{subfigure}
  \caption{This figure depicts the error normalized by
    $\sqrt{\samples}$ as a function of the sample size $\samples$
    (fixed parameters $n=16$, $\epsilon = 0.001$, $\dist =
    \text{Beta}(2,2)$).  The counterfactual auction B is fixed, and
    the error is depicted as a function of $\samples$ when the data is
    drawn from the bid distribution of the A/B-test mechanism C that
    corresponds to several incumbent mechanisms A. The counterfactual
    mechanism is marked with a ``$*$'' in the key. The thick line in
    each figure shows the error when the counterfactual and the
    incumbent mechanisms are the same, a.k.a., the counterfactual
    error. }
  \label{fig:N}
\end{figure}

In this section we compare and contrast empirical evidence with the
theoretical bound by exploring the dependence of the empirical error
on the $\samples$, the number of samples of the analyst, and $n$, the
number of agents in each auction.  Recall that theoretical dependence
on $\samples$ is $\Theta(\sqrt{1/\samples})$ and on $n$ is $O(n \log
n)$ in the worst-case bound of \Cref{cor1}.

We consider three cases for the counterfactual auction B as the
uniform-stair, the low-supply auction, and the high-supply auction.
We allow auction A to be any of these auctions and also the 1-unit and
$(n-1)$-unit auctions.  We fix the mixing probability of $\eps =
0.001$ and the incumbent mechanism C is $(1-\epsilon)$ A + $\epsilon$
B.  (A subsequent study will explore the role of truncation.)  The
results of these studies are depicted the figures below.  Not shown
here, when mixing in the counterfactual auction with a large
probability $\epsilon$, there is little benefit from truncation.

In \Cref{fig:N} we observe that as a function of the number of samples
$\samples$ the error is indeed the optimal $\Theta(\sqrt{1/\samples})$
rate.  Specifically, we observe that the error in the estimation of
the revenue of a counterfactual auction indeed depends on the
incumbent mechanism as a constant times $\sqrt{1/\samples}$ and this constant
is different for different incumbent mechanisms.  Moreover, unlike the
result obtained by our theoretical bound, this constant is always much
less than 1.  Moreover, this limit behavior is already achieved with $\samples \approx 1000$ bids in the sample.

In \Cref{fig:n} we observe that, within the range where our bound
holds (which requires $\samples > n^2$), the dependence on the number
of agents $n$ is at most slowly increasing and far from the $O(n \log
n)$ worst-case bound.  Considering the whole range, we see that the
dependence varies, and thus precise theoretical analysis may be
difficult.  An important consideration in our choice of counterfactual
auctions is that the per-agent revenue is roughly constant in the
number of agents $n$; thus relative changes in revenue are not
confounding our empirical analysis of the error.  

\begin{figure}[tb]
  \begin{subfigure}[t]{0.32\textwidth}
    \centering
    \includegraphics[width=\linewidth]{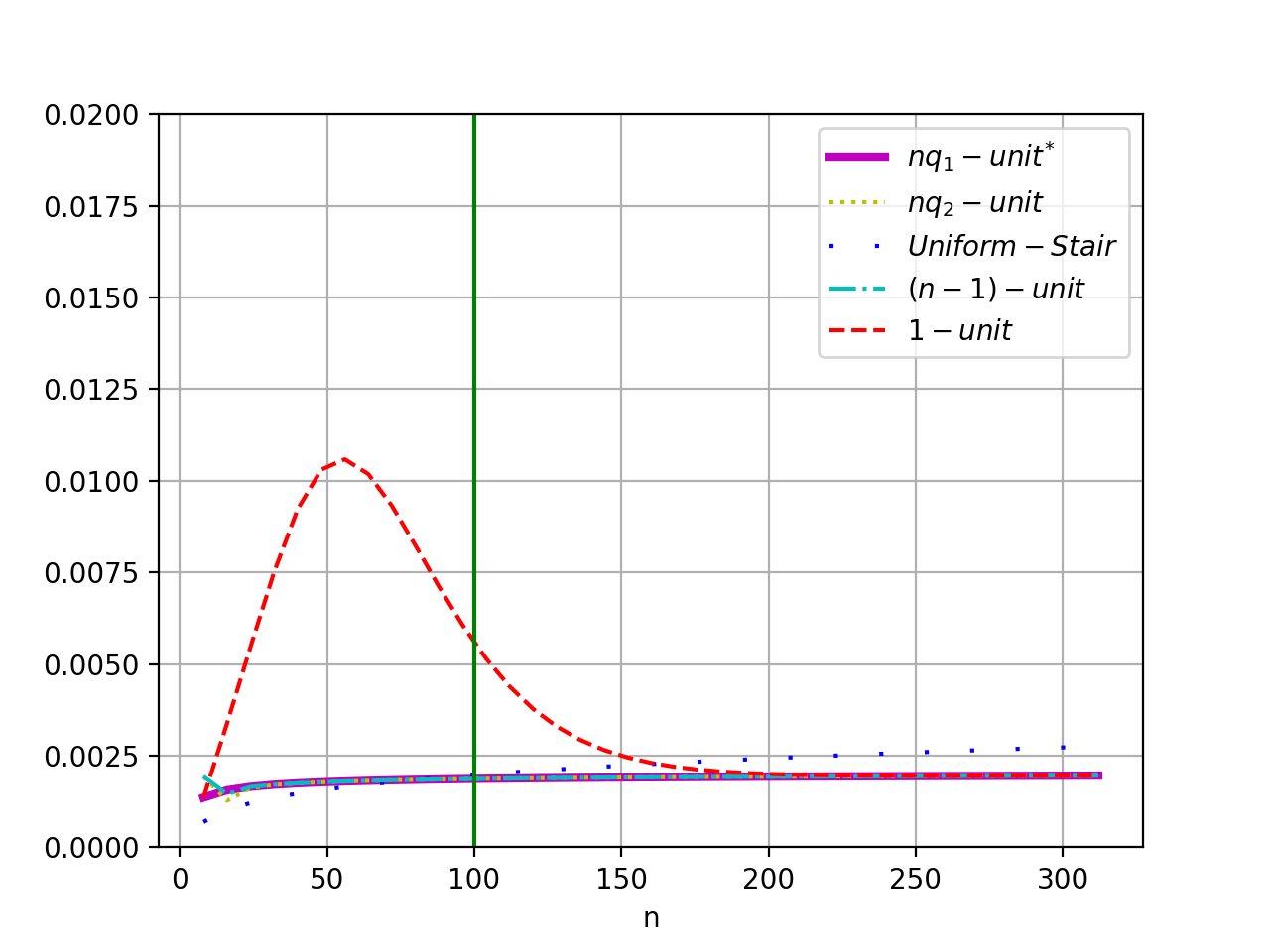} 
    \caption{low supply} 
  \end{subfigure}
  \hfill
  \begin{subfigure}[t]{0.32\textwidth}
    \centering
    \includegraphics[width=\linewidth]{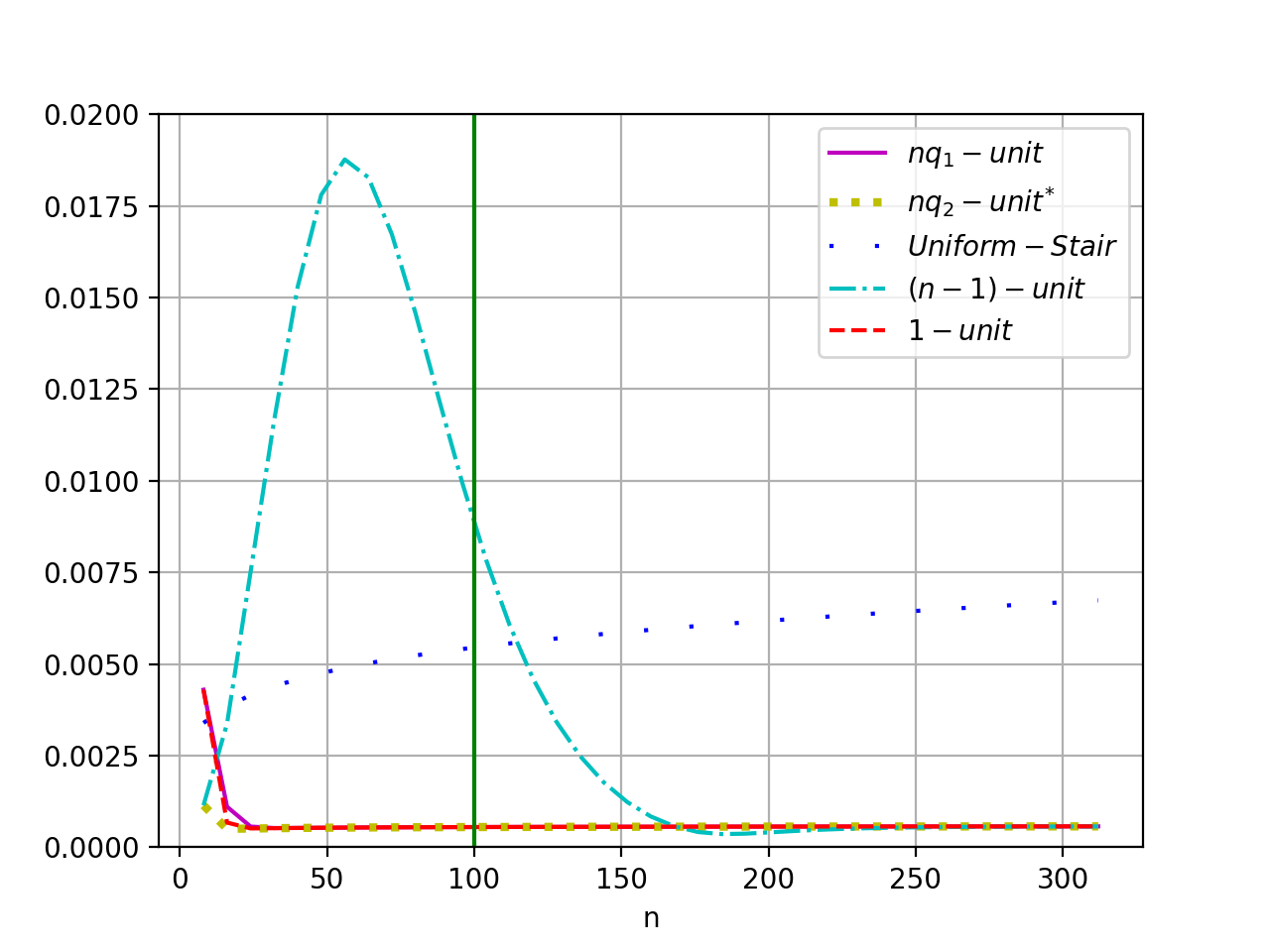} 
    \caption{high supply} 
    \end{subfigure}
  \hfill
  \begin{subfigure}[t]{0.32\textwidth}
    \centering
    \includegraphics[width=\linewidth]{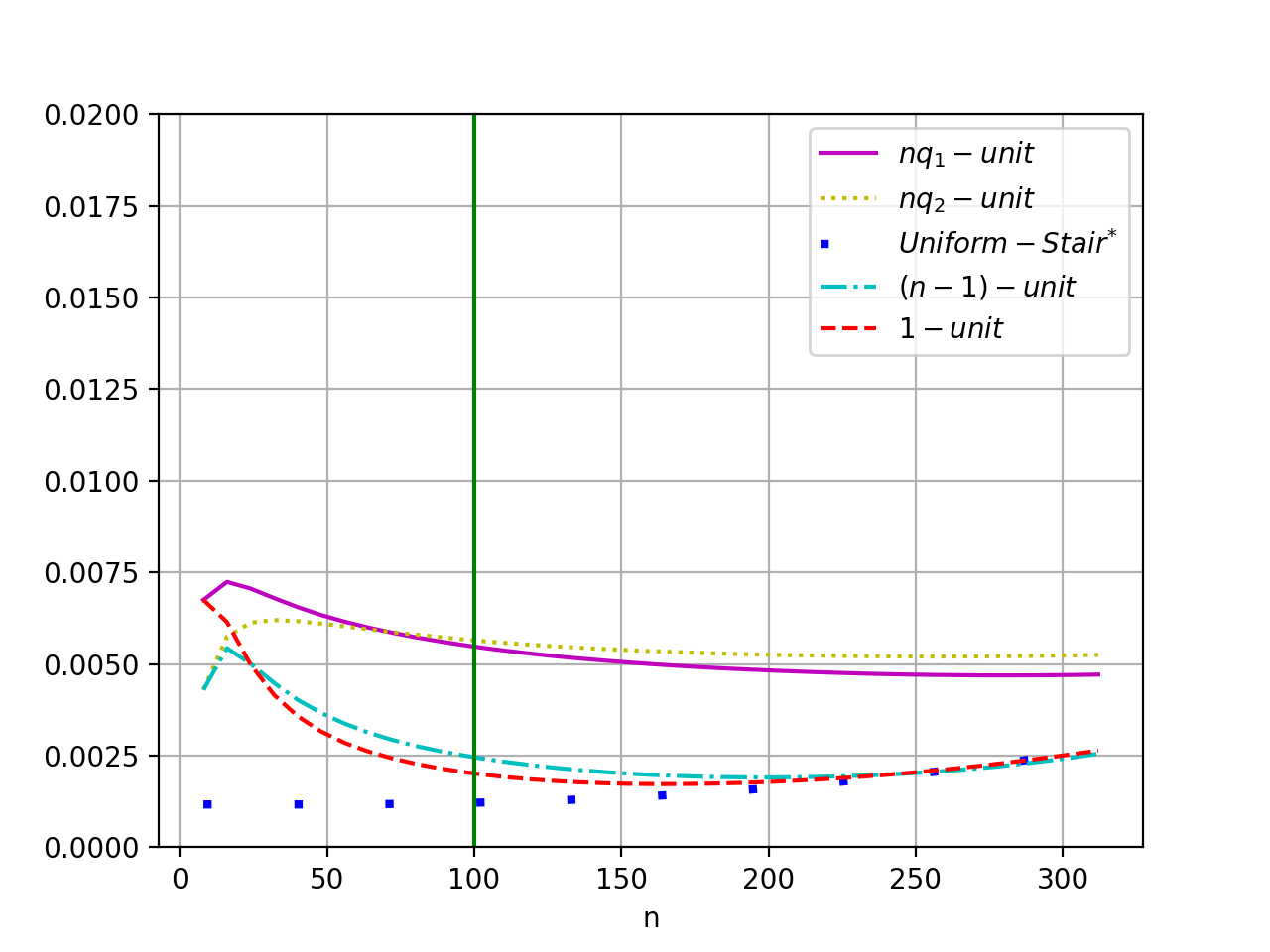} 
    \caption{uniform stair}
    \label{fig:n-unif}
  \end{subfigure}
  \caption{This figure depicts the error as a function of the number
    of agents $n$ (fixed parameters $\samples=10000$, $\epsilon =
    0.001$, $\dist = \text{Beta}(2,2)$).  The counterfactual auction B
    is fixed, and the error is depicted as a function of $n$
    when the data is drawn from the bid distribution of the A/B-test
    mechanism C that corresponds to several incumbent mechanisms A.
    The thick line in
    each figure shows the error when the counterfactual and the
    incumbent are the same, a.k.a., the counterfactual error. The vertical line
    corresponds to $n^2=
    \samples$.}
  \label{fig:n}
\end{figure}

We observe in these empirical results that the error is not increasing
at the same rate as our bounds suggest.  The theoretical bounds are
symmetric with respect to swapping the high supply and low supply
auction.  The empirical results are all better than these bounds:
\begin{itemize}
\item uniform-stair incumbent; low-supply counterfactual: $O(\sqrt{n} \log(1/\epsilon))$

\item high-supply incumbent; low-supply counterfactual: $O(\sqrt{n} \log (n/\epsilon))$
\item low-supply incumbent; uniform-stair counterfactual: $O(\log n)$.
\end{itemize}
In the above bounds, we could swap the low-supply auction for the
$1$-unit auction and the high-supply auction for the 1-unit auction
and the bounds by replacing the $\sqrt{n}$ term with an $n$ term.

One perhaps unexpected outcome that is present in these empirical
results is the large error for small $n$ and similar incumbent and
counterfactual mechanisms, specifically, the low-supply counterfactual
with the 1-unit incumbent or the high-supply counterfactual with the
$(n-1)$-unit incumbent.  These auctions have allocation rules that are
near zero for low values, near 1 for high values, and at some point in
between transitioning from 0 to 1 (see \Cref{fig:five-alloc-rules}).
For small $n$ these transitions overlap and this results in higher
error.  For large $n$ these transitions do not overlap and the error
is small.

As a final note, in \Cref{fig:n-unif} the error is trending upwards
with large $n$ even when the counterfactual and incumbent are
identical.  This trend is from truncation which is increasing with $n$ (relative to the 
fixed sample size $\samples$)
at these parameter settings.

\subsection{Ex Ante Error Control (A.k.a., Instrumentation)}

A main focus of this paper is identifying properties of auctions that
make them good for inference.  Specifically, our A/B-testing method
suggests that better estimates of the revenue of mechanism B are
possible by running mechanism C that mixes in B with mechanism A.  See
\Cref{fig:eps}.  These empirical result should be compared with
\Cref{thm:allpay-general} which gives the dependence on $\epsilon$ as
$O(\log 1/\epsilon)$ when $\samples > 1/\epsilon$, i.e., to the left
of the vertical dashed line. (When $\samples > 1/\epsilon$ the
theoretical bound of \Cref{thm:allpay-simple} with term $\log \samples
/ \sqrt \samples$ becomes the better bound.)  In the relevant region
we observe that the dependence on $1/\epsilon$ is sub-logarithmic
except when the counterfactual mechanism is the uniform-stair auction
where the dependence is $\Theta(\log 1/\epsilon)$.

\begin{figure}
  \begin{subfigure}[t]{0.48\textwidth}
    \centering
    \includegraphics[width=\linewidth]{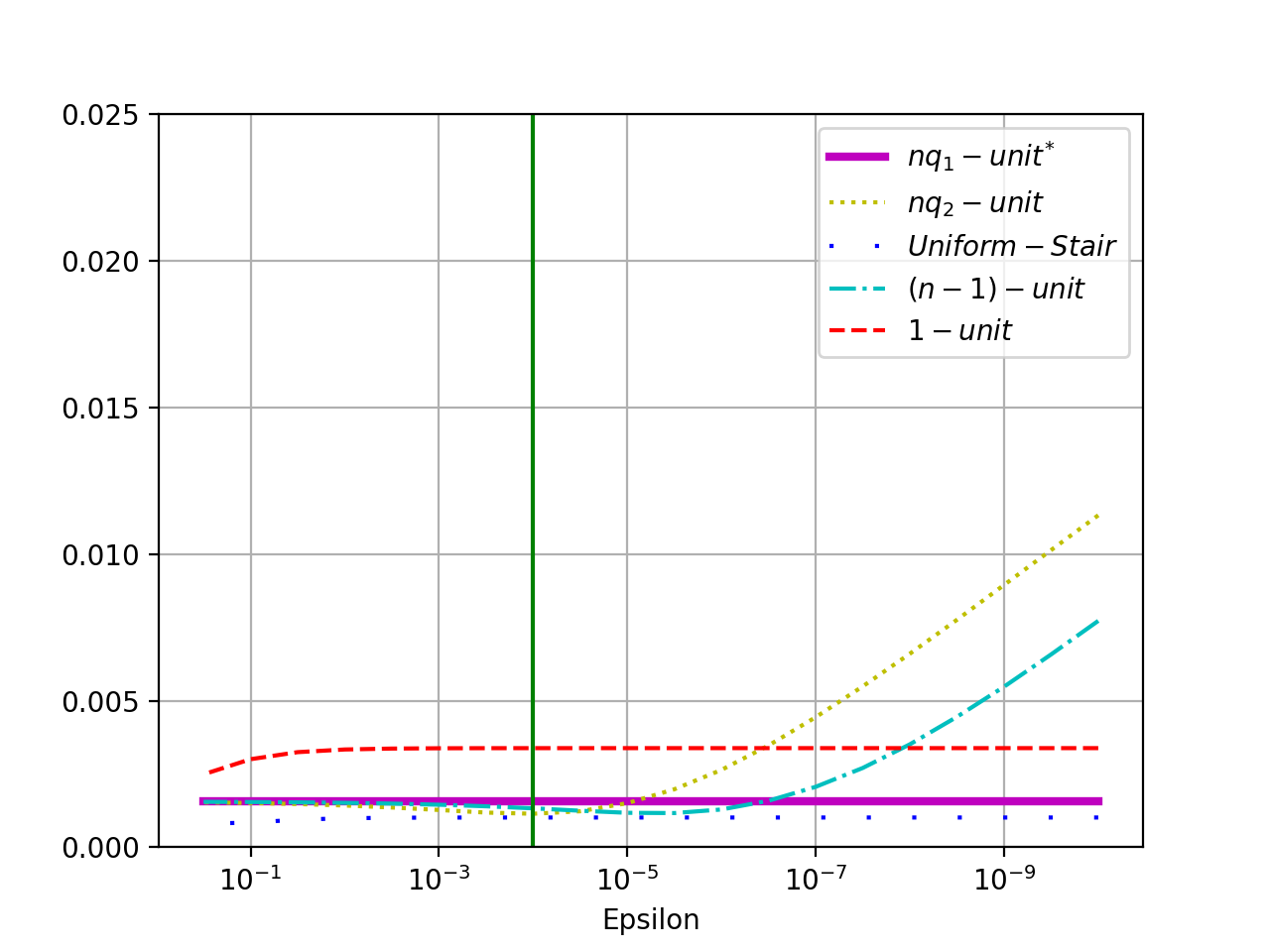} 
    \caption{A/B-test, low supply} 
  \end{subfigure}
  \hfill
    \begin{subfigure}[t]{0.48\textwidth}
    \centering
    \includegraphics[width=\linewidth]{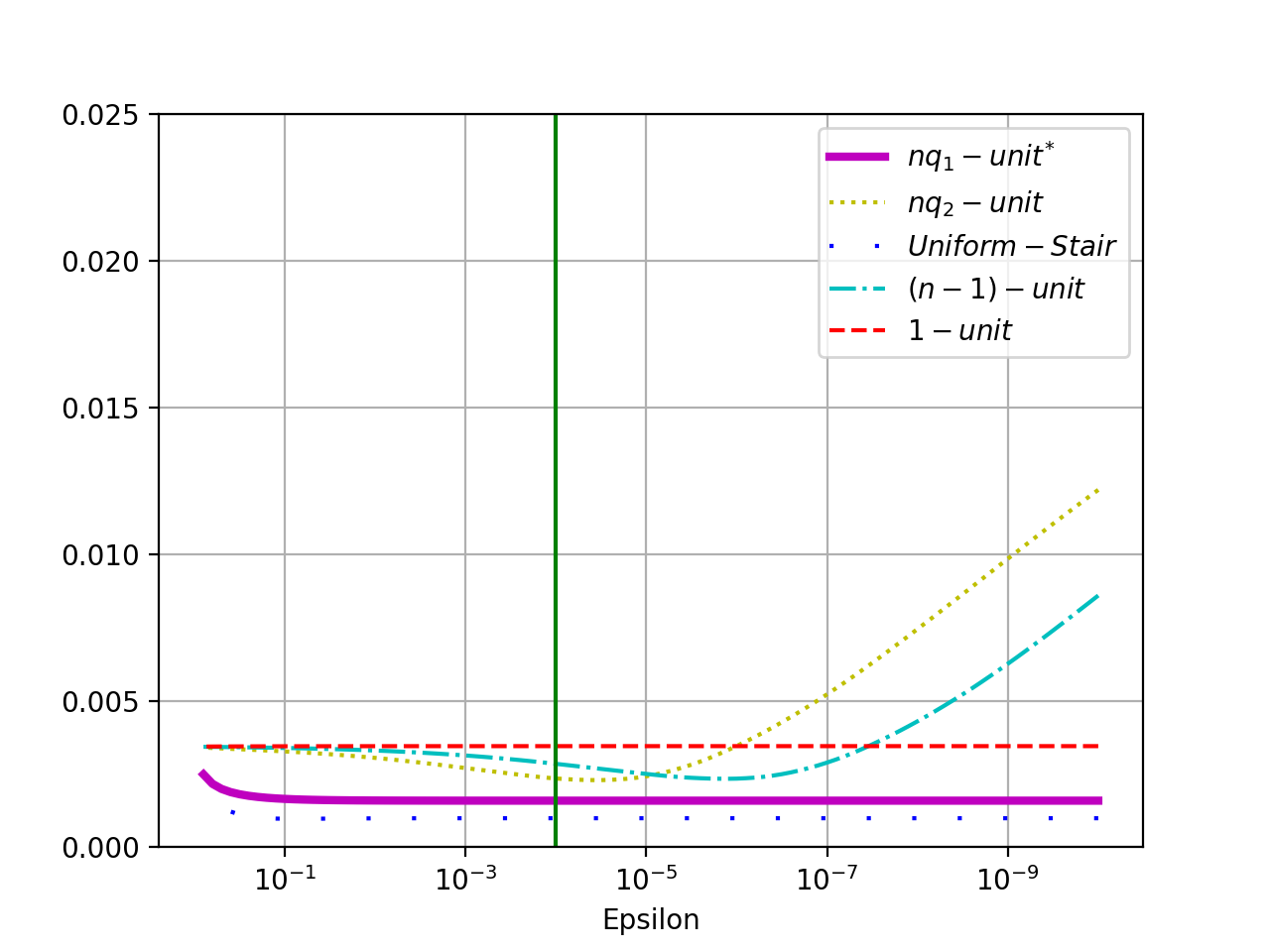} 
    \caption{universal B-test, low supply}
  \end{subfigure}
  \\
  \begin{subfigure}[t]{0.48\textwidth}
    \centering
    \includegraphics[width=\linewidth]{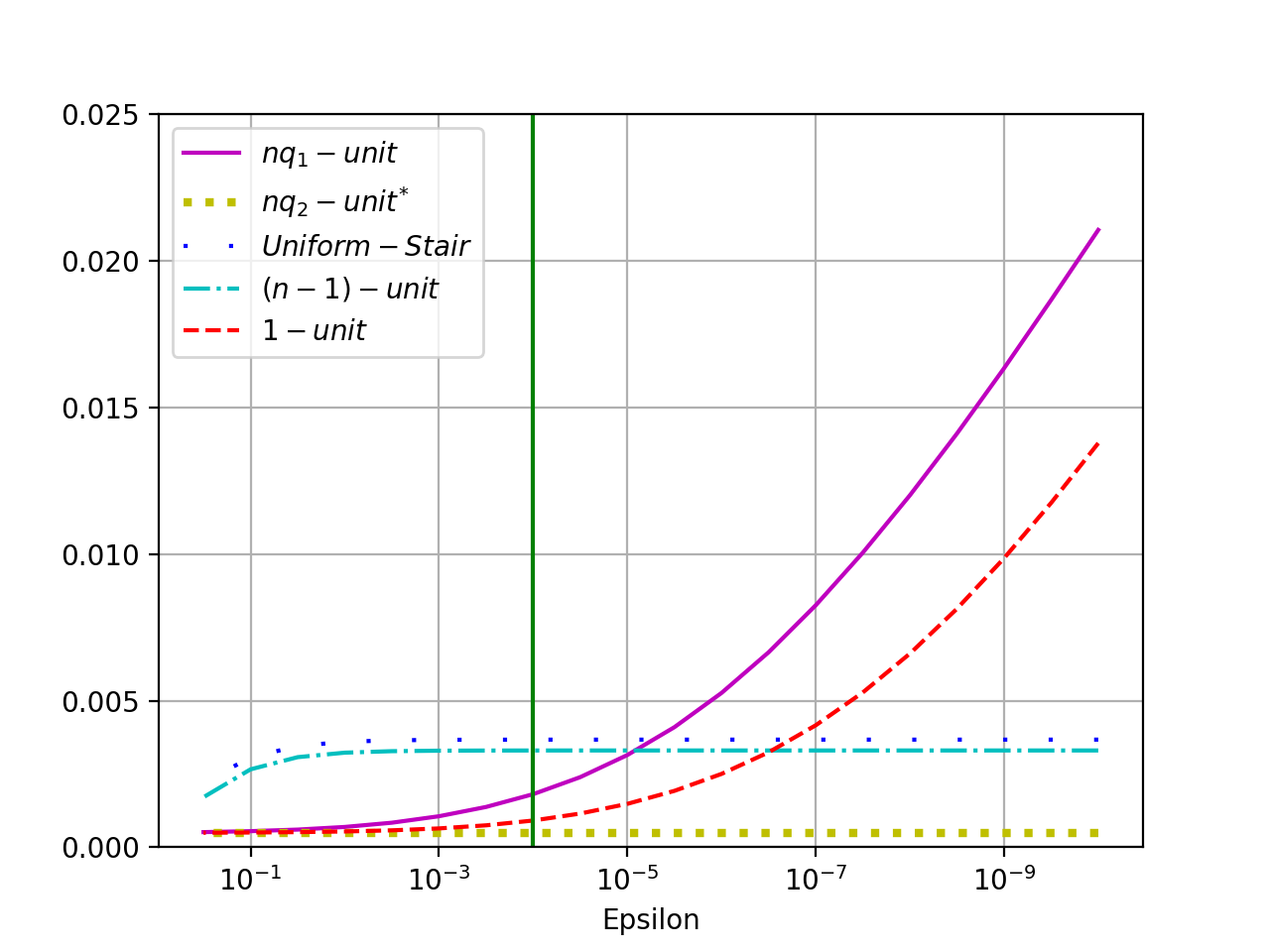} 
    \caption{A/B-test, high supply}
    \end{subfigure}
  \hfill
    \begin{subfigure}[t]{0.48\textwidth}
    \centering
    \includegraphics[width=\linewidth]{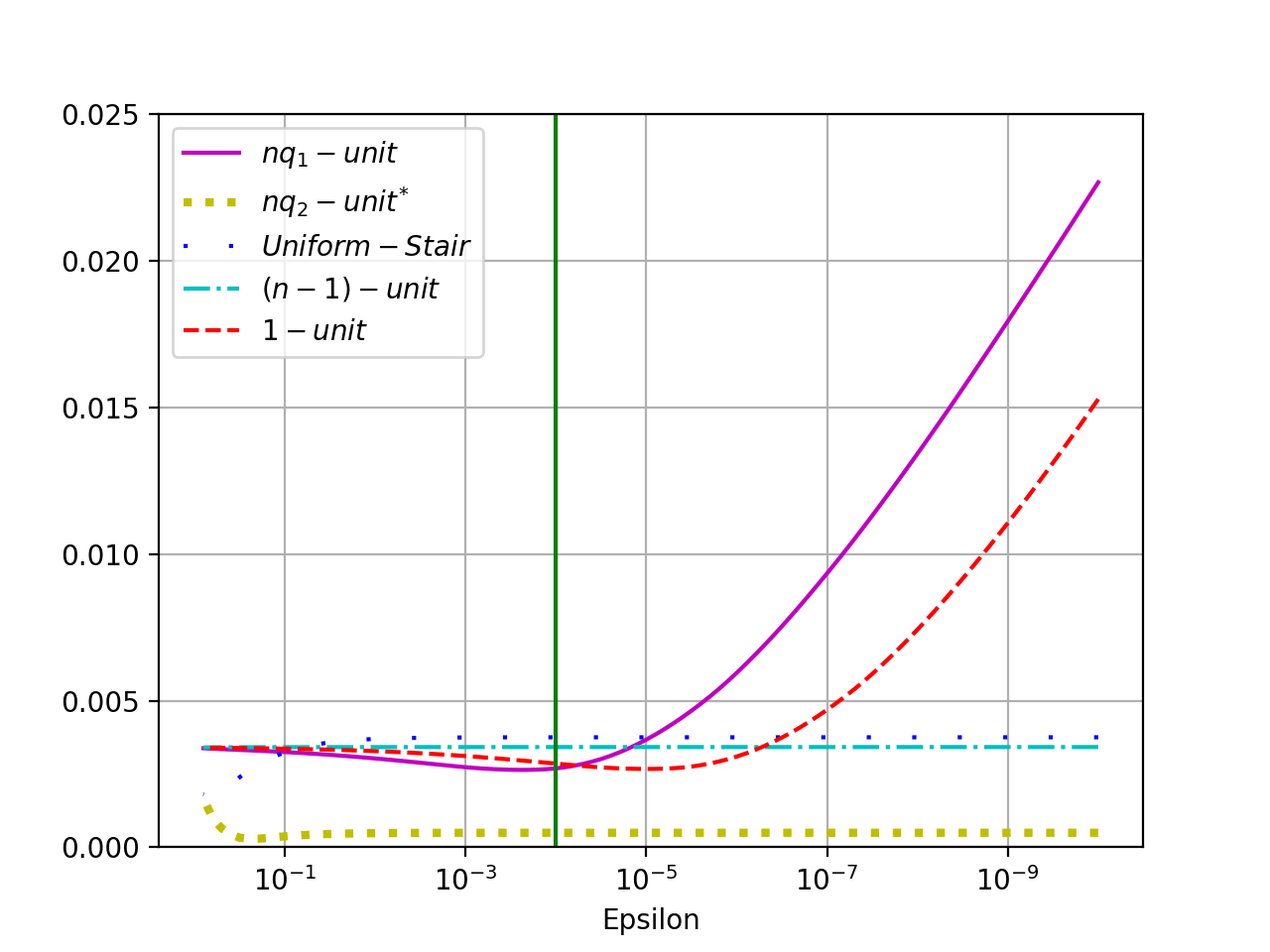} 
    \caption{universal B-test, high supply}
    \end{subfigure}
  \\
  \begin{subfigure}[t]{0.48\textwidth}
    \centering
    \includegraphics[width=\linewidth]{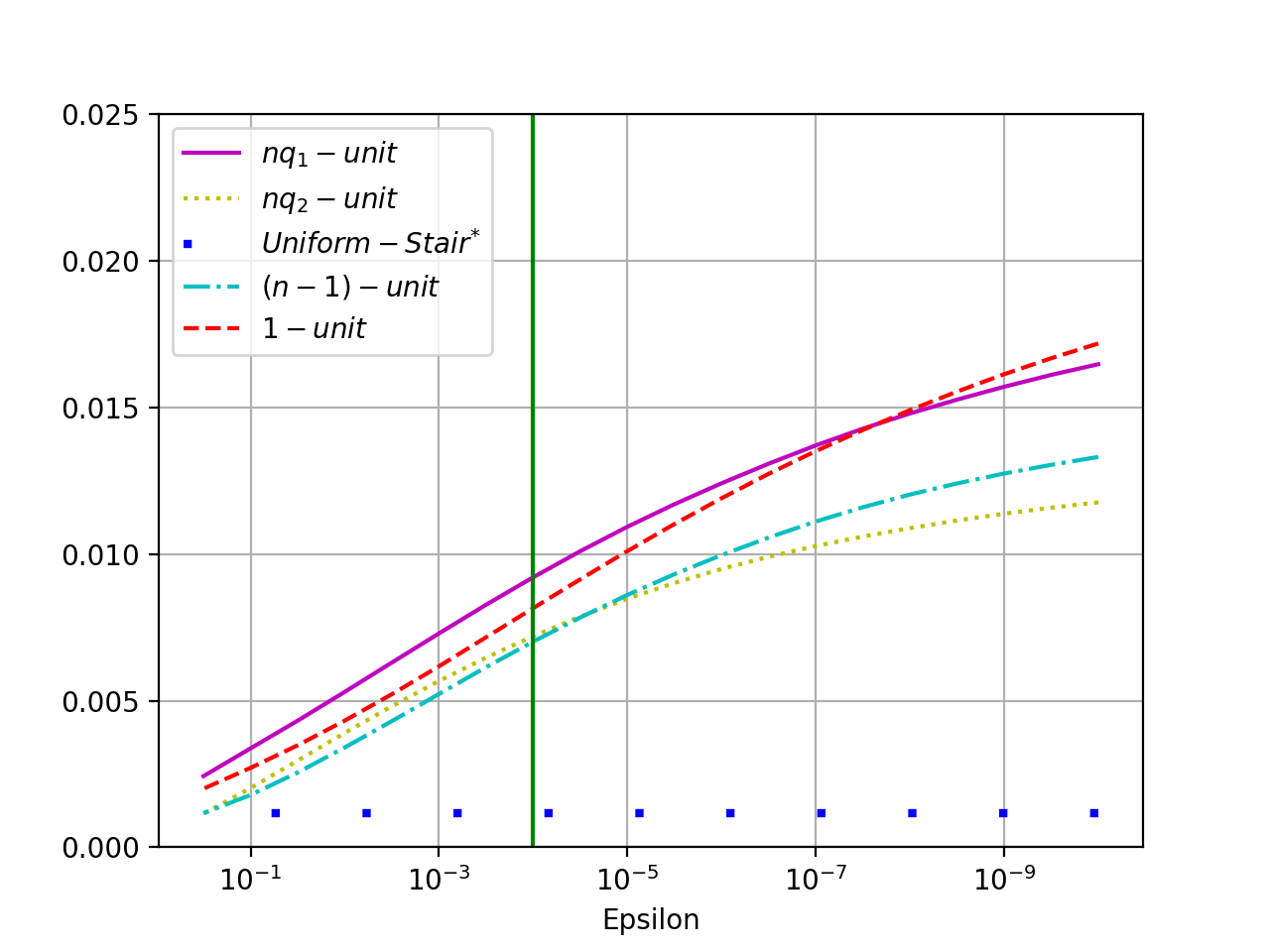} 
    \caption{A/B-test, uniform stair} 
    \label{fig:eps:unif}
  \end{subfigure}
  \hfill
  \begin{subfigure}[t]{0.48\textwidth}
    \centering
    \includegraphics[width=\linewidth]{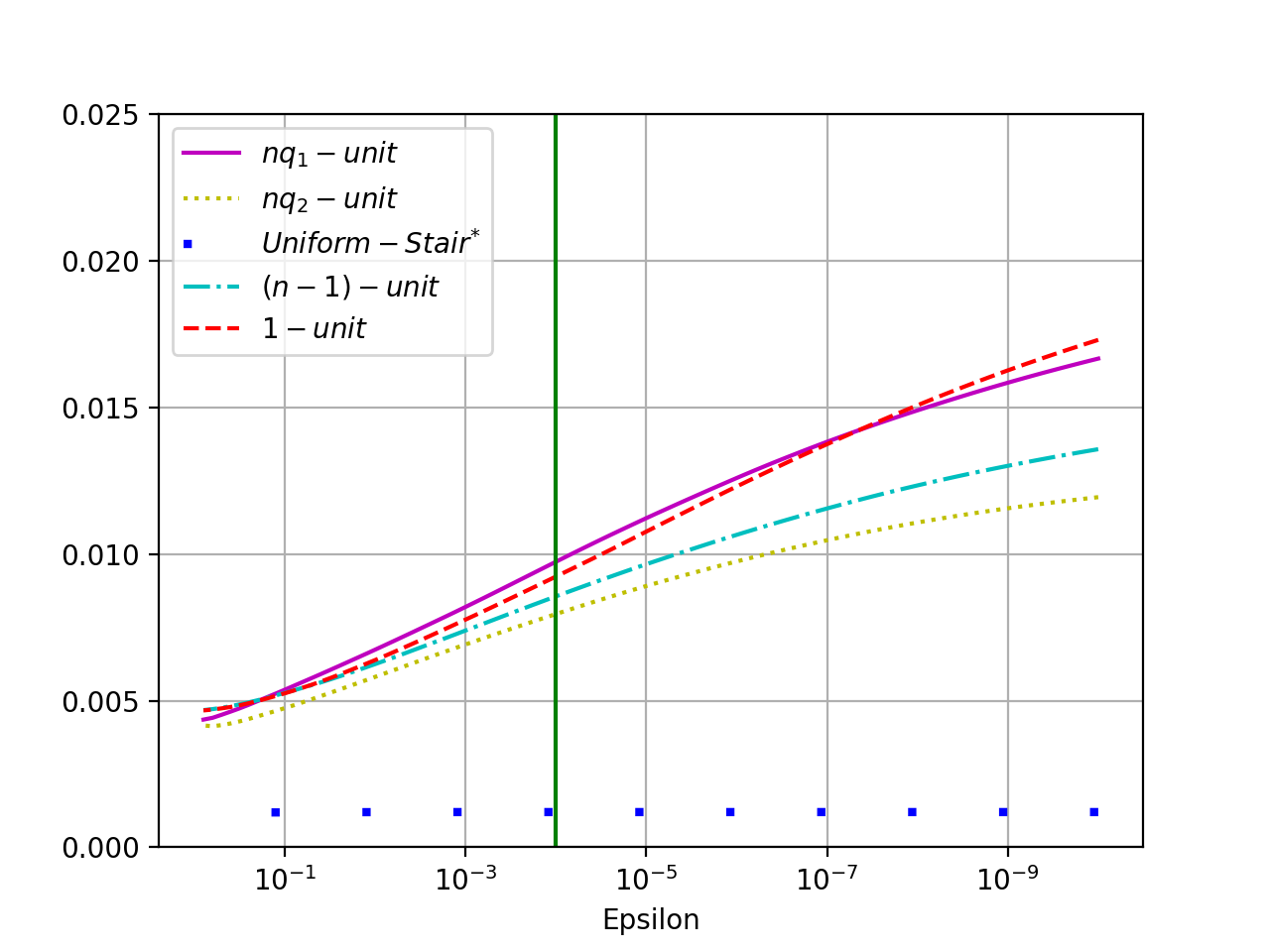} 
    \caption{universal B-test, uniform stair}
  \end{subfigure}
  \caption{This figure depicts the error as a function of the A/B-test
    mixing probability $\epsilon$ (fixed parameters $\samples=10000$,
    $n = 16$, $\dist = \text{Beta}(2,2)$).  The counterfactual auction
    B is fixed, and the error is depicted as a function of $\epsilon$.
    The left column is the A/B test where the counterfactual is mixed
    in with probability $\epsilon$.  The right column is the universal
    $B$ test where a mix of the 1-unit and $(n-1)$-unit auction is
    mixed in with probability $\epsilon$. The vertical
    line corresponds to 
    $\epsilon=1/\samples.$}
  \label{fig:eps}
\end{figure}

As described in \Cref{s:universal}, there is a universal B-test
mechanism which is a mixture of the $1$-unit auction and the
$(n-1)$-unit auction.  Mixing this auction with any other position auction makes it
possible to infer the revenue of that position auction.  See
\Cref{fig:eps}.  Comparing the A/B-test with a universal B-test
empirically we see that there is not much improvement from the
A/B-test.  As described in preceding sections, the benefit of the
universal B-test is that instrumentation with it makes it possible to
infer the performance of any other position auction.

\subsection{Ex Post Error Control}

The classical econometric approach to estimation in auctions is to
use a consistent estimator for the distribution of values of bidders and then to estimate
revenue in a counterfactual auction from this distribution.
To obtain a consistent estimator for the distribution of values, the derivative of the bid
function needs to be estimated and error of this estimator is typically
controlled by smoothing, i.e., averaging bids with adjacent bids in
the sorted order.  In contrast, the estimators of this paper do not
employ smoothing of the bid distribution, and instead errors are
controlled by truncation, i.e., zeroing out the contribution to the
estimated revenue from potentially-ill-behaved extremal quantiles of
the bid distribution.

This section compares these approaches to controlling error and makes
several empirical findings.  First, we observe that ex post methods
for error control are necessary in some scenarios.  This observation
comes from comparing the truncated estimator described above with the
same estimator with no truncation.  We see in \Cref{fig:truncation}
that when the mechanisms are extreme and opposite that our estimator
with truncation has low error while without truncation the error is
generally worse than the trivial bound.\footnote{We have assumed
  values to be bounded on $[0,1]$; thus, the per-agent revenue is at
  most 1 and error bounds that exceed 1 are trivial.}

\begin{figure}
  \begin{subfigure}[t]{0.48\textwidth}
    \centering
    \includegraphics[width=\linewidth]{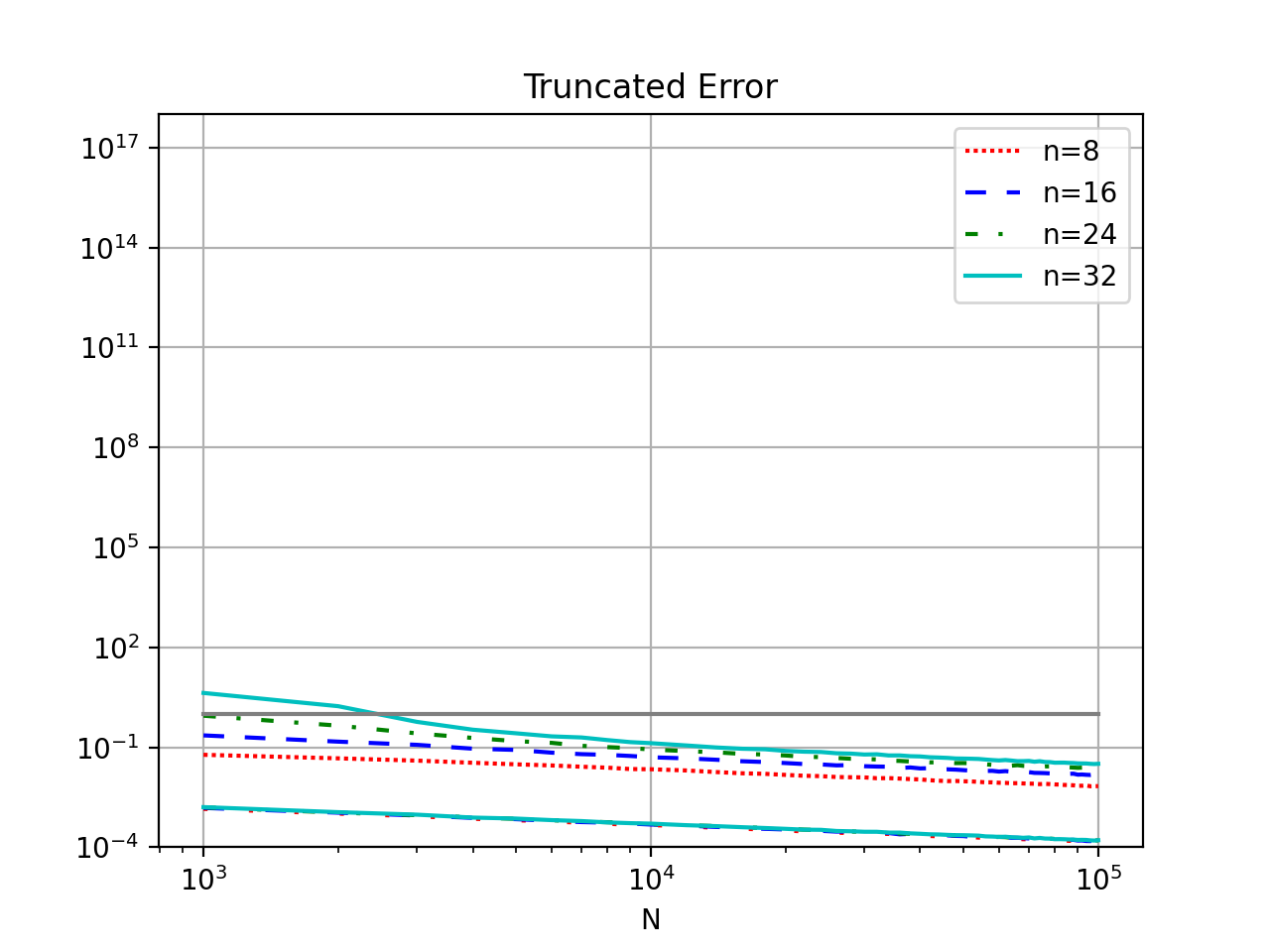} 
    \caption{truncation} 
  \end{subfigure}
  \hfill
  \begin{subfigure}[t]{0.48\textwidth}
    \centering
    \includegraphics[width=\linewidth]{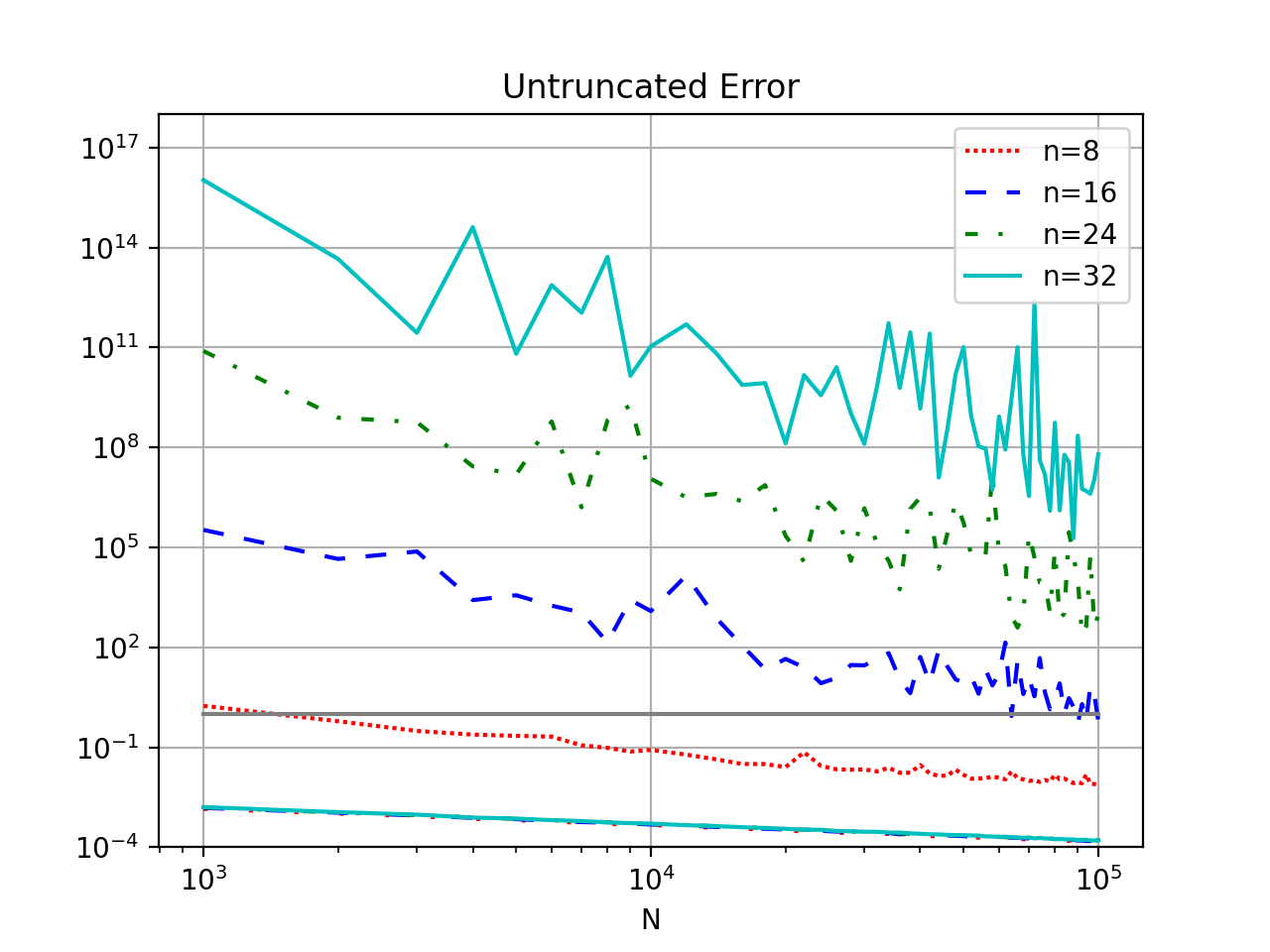} 
    \caption{no truncation} 
  \end{subfigure}
  \caption{This figure depicts on a log-log scale the error as a
    function of the number of samples $\samples$ for several choices
    of the number of agents $n$ (fixed parameters: $\dist =
    \text{Beta}(2,2)$).  The counterfactual auction B is the high
    supply auction, the incumbent auction C is the low supply auction
    (the incumbent does not mix in B).  The trivial error bound of 1
    is depicted with a solid line.  The thick line in each figure
    shows the error when the counterfactual and the incumbent are the
    same, a.k.a., the counterfactual error.}
  \label{fig:truncation}
\end{figure}

As we have discussed previously, when the B-test probability
$\epsilon$ in an A/B-test is large, truncation has limited benefit.
Indeed, \Cref{fig:eps} shows error as a function of $\epsilon$ for the
truncated estimator.  In fact, the same plots result from the
untruncated estimator.  Nonetheless, when we consider very small
$\epsilon$, the truncated estimator gives a non-trivial error, while
the untruncated estimator does not.  This comparison is depicted in
\Cref{fig:eps-truncation}.  The constant-in-$\epsilon$ bound on the truncated estimator is guaranteed by \Cref{thm:allpay-simple}.

\begin{figure}
  \begin{subfigure}[t]{0.48\textwidth}
    \centering
    \includegraphics[width=\linewidth]{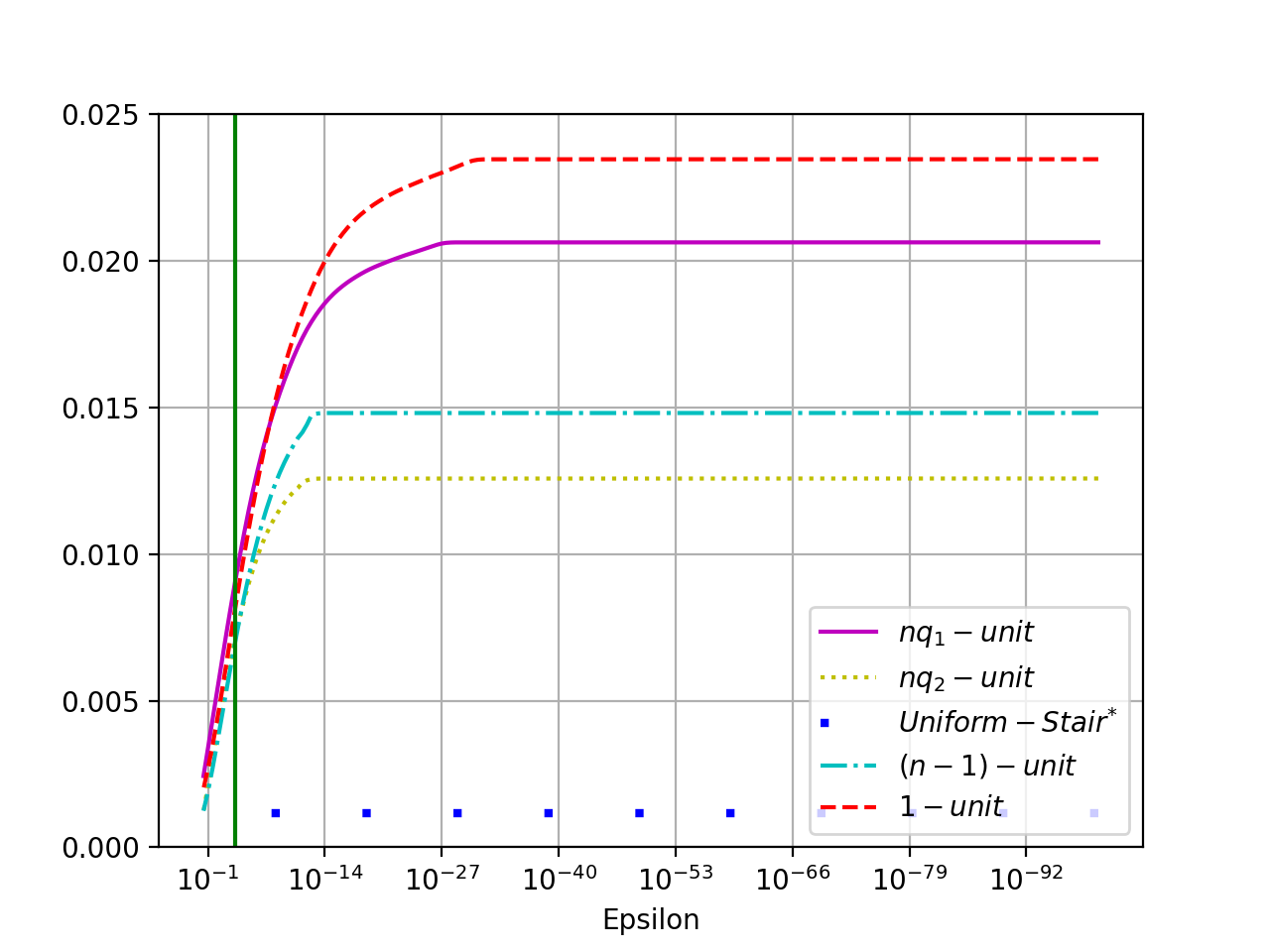} 
    \caption{truncation} 
  \end{subfigure}
  \hfill
  \begin{subfigure}[t]{0.48\textwidth}
    \centering
    \includegraphics[width=\linewidth]{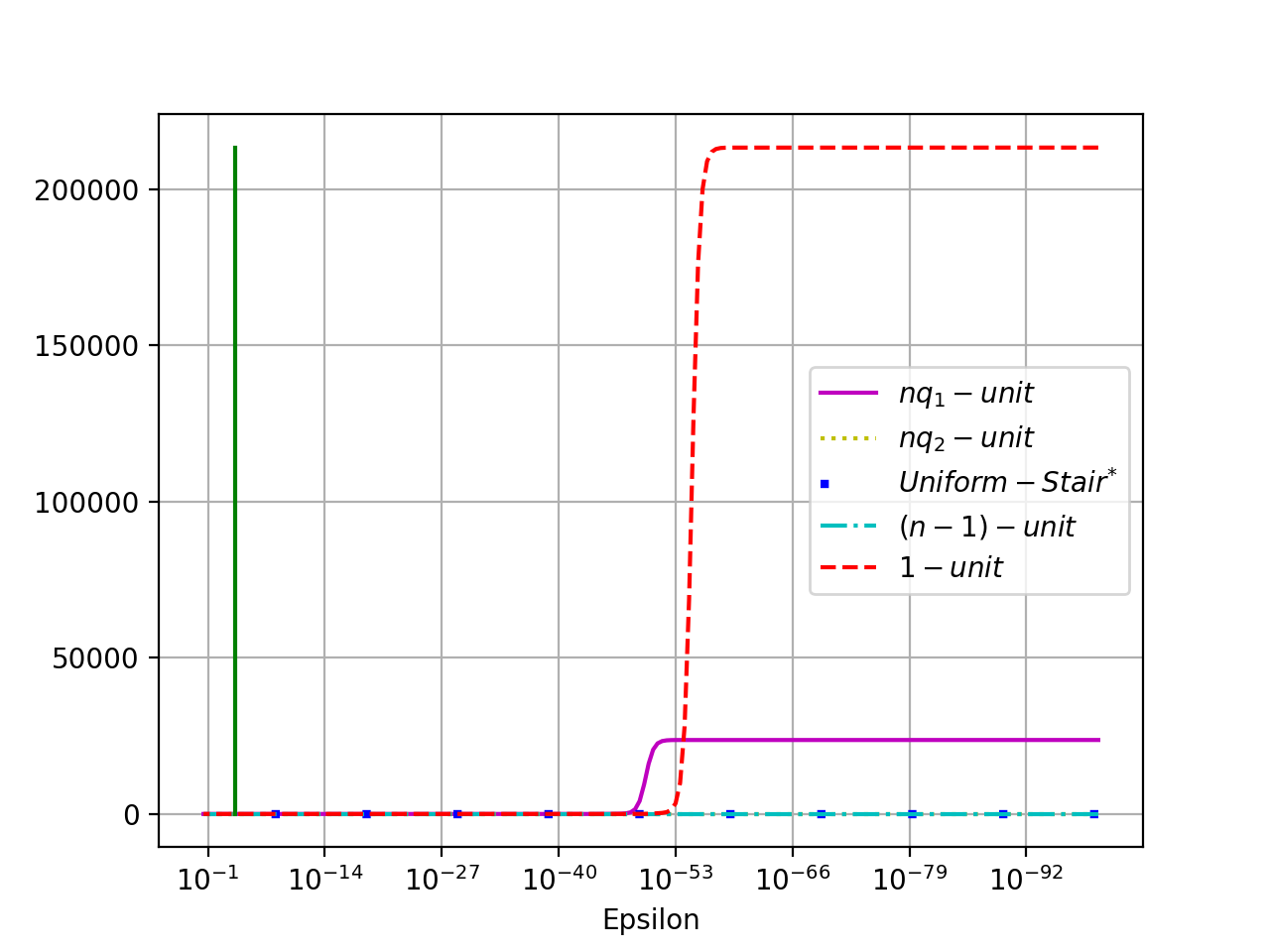} 
    \caption{no truncation} 
  \end{subfigure}
  \caption{This figure depicts the error as a function of the A/B-test mixing probability $\epsilon$ (fixed parameters $\samples=10000$, $n = 16$, $\dist = \text{Beta}(2,2)$).  The counterfactual auction B
    is fixed, and the error is depicted as a function of $\epsilon$
    when the data is drawn from the bid distribution of the A/B-test
    mechanism C that corresponds to several incumbent mechanisms A.
    The thick line in each figure
    shows the error when the counterfactual and the incumbent are the
    same, a.k.a., the counterfactual error. Note that the y-axes in the two figures are different.}
  \label{fig:eps-truncation}
\end{figure}

The truncation we use zeros out the contribution to the estimator
from extreme quantiles.  The truncation parameter does not depend on
fundamentals of the environment and instead was selected to integrate
with theoretical guarantees from statistics.  To show that this
statistically-motivated choice is good, we empirically evaluate the
extent to which other truncation parameters give better error.  We
consider a counterfactual of the $(n-1)$-unit auction, an incumbent of
the 1-unit auction, and various small numbers of agents $n$
(\Cref{fig:optimaltrunc}).  We see that for three selected auction sizes our
truncation has at most four times the error of the optimal truncation;
moreover, for a broad range
of sample size $\samples$ our truncation is at most 50\% worse than the
optimal truncation.

\begin{figure}
  \begin{subfigure}[t]{0.48\textwidth}
    \centering
    \includegraphics[width=\linewidth]{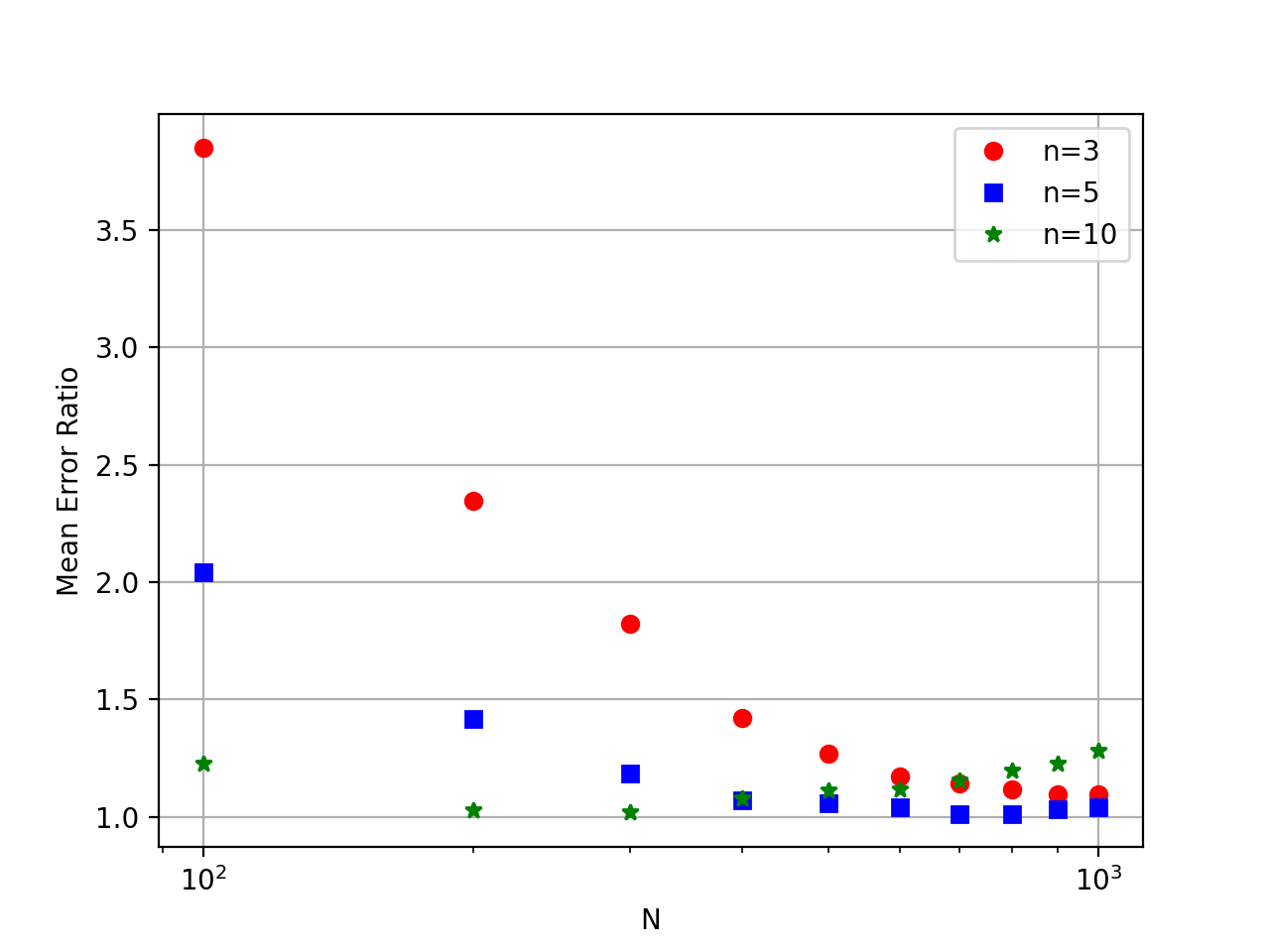} 
    \caption{truncation}
    \label{fig:optimaltrunc}
  \end{subfigure}
  \hfill
  \begin{subfigure}[t]{0.48\textwidth}
    \centering
    \includegraphics[width=\linewidth]{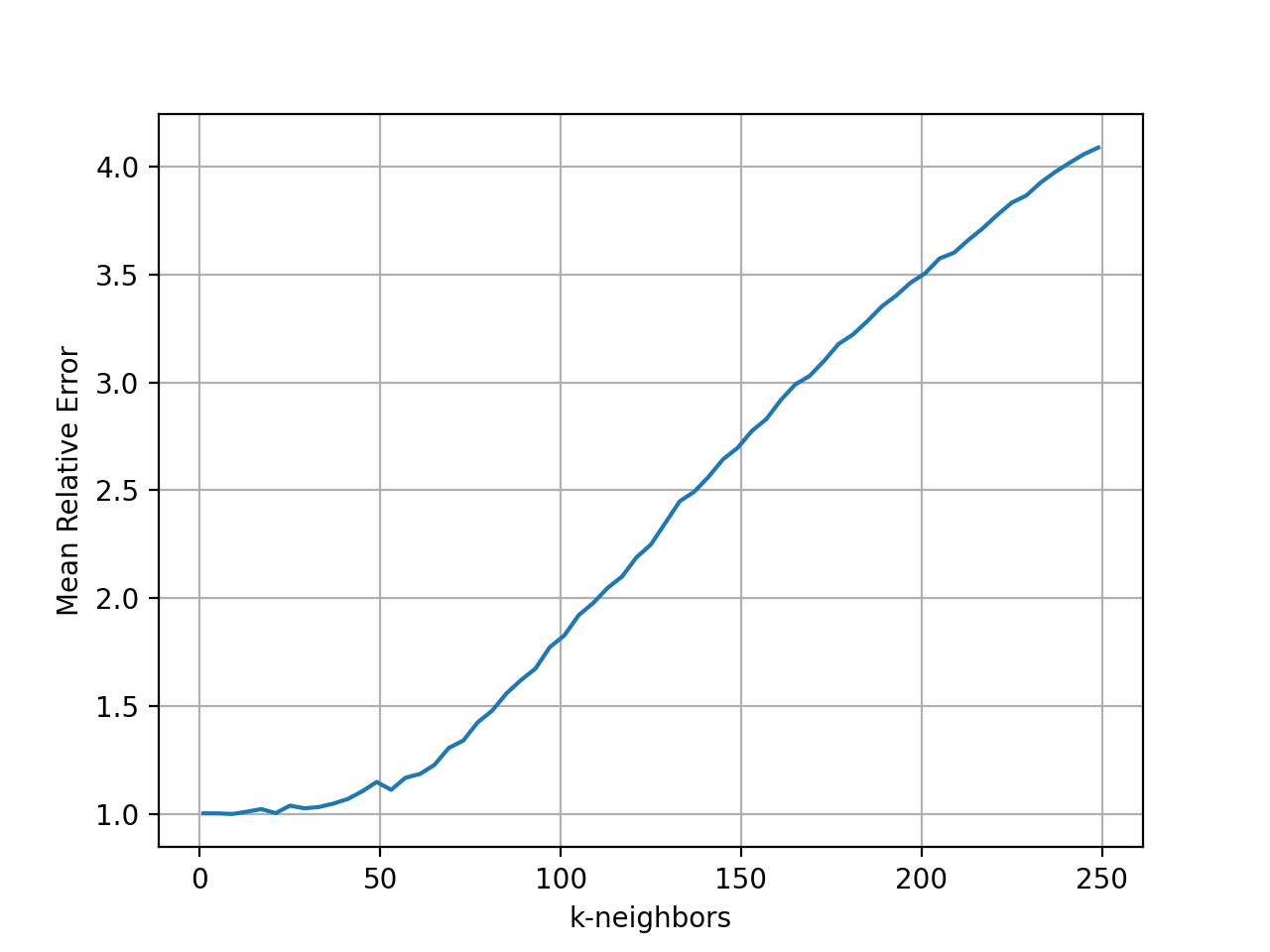} 
    \caption{smoothing}
    \label{fig:optimalsmooth}
  \end{subfigure}
  \caption{On the left, the ratio of our selected truncation parameter
    is compared to the optimal truncation parameter and the ratio of
    the errors is reported.  Ratios closer to 1 show that our
    truncation parameter is nearly optimal.  On the right, the error
    as a function of the number of adjacent bids that are smoothed in
    the classical approach to controlling value estimation error in
    auctions.  It is optimal not to smooth when using value estimates
    with the plug-in estimator for revenue.  On the right, fixed
    parameters are $N = 1000$, $n = 5$, $\dist = \text{Beta}(2,2)$.
    The counterfactual auction B is the $(n-1)$-unit auction, the
    incumbent auction A is the 1-unit auction (the incumbent does not
    mix in B).}
\end{figure}

We compare truncation to the classical approach for controlling error
in auctions which is smoothing the bid distribution.  We consider a
natural approach smoothing, namely averaging for each bid the $k$
adjacent bids in the sorted order.  The classical approach, which asks
for a uniform bound on the error in estimates of values to plug into
the revenue estimator, would tune $k$ depending on properties of the
bid distribution (which is endogenous to the environment).  Here we
show that with the plug-in estimator the optimal smoothing is no
smoothing.  Thus, estimation 
of the bid distribution via smoothing
used by the classical approach of controlling error is
unhelpful for estimating revenue.  See \Cref{fig:optimalsmooth}.
Meanwhile, as we have seen, truncation both controls error and does
not require tuning to endogenous properties of the bid distribution.

\subsection{Distribution Robustness}

We have experimented with a number of distributions over values and the
qualitative results observed above continue to hold.  Here we repeat
the study of truncation with value distributions intended to stress
the estimation procedure.  We observe that there are no significant
changes (\Cref{fig:dist-trunc}).  The distributions considered are the
equal revenue distribution on interval [0.1,1], the uniform
distribution on interval [0.3,1] and a bimodal distribution.  For
example our interest in the bimodal distribution is that the
low-supply auction and high-supply auctions have revenue driven from
different modes.  We were unable to identify any distribution that
resulted in significantly different outcomes from what we observed for
the beta distribution.

\begin{figure}
  \begin{subfigure}[t]{0.48\textwidth}
    \centering
    \includegraphics[width=\linewidth]{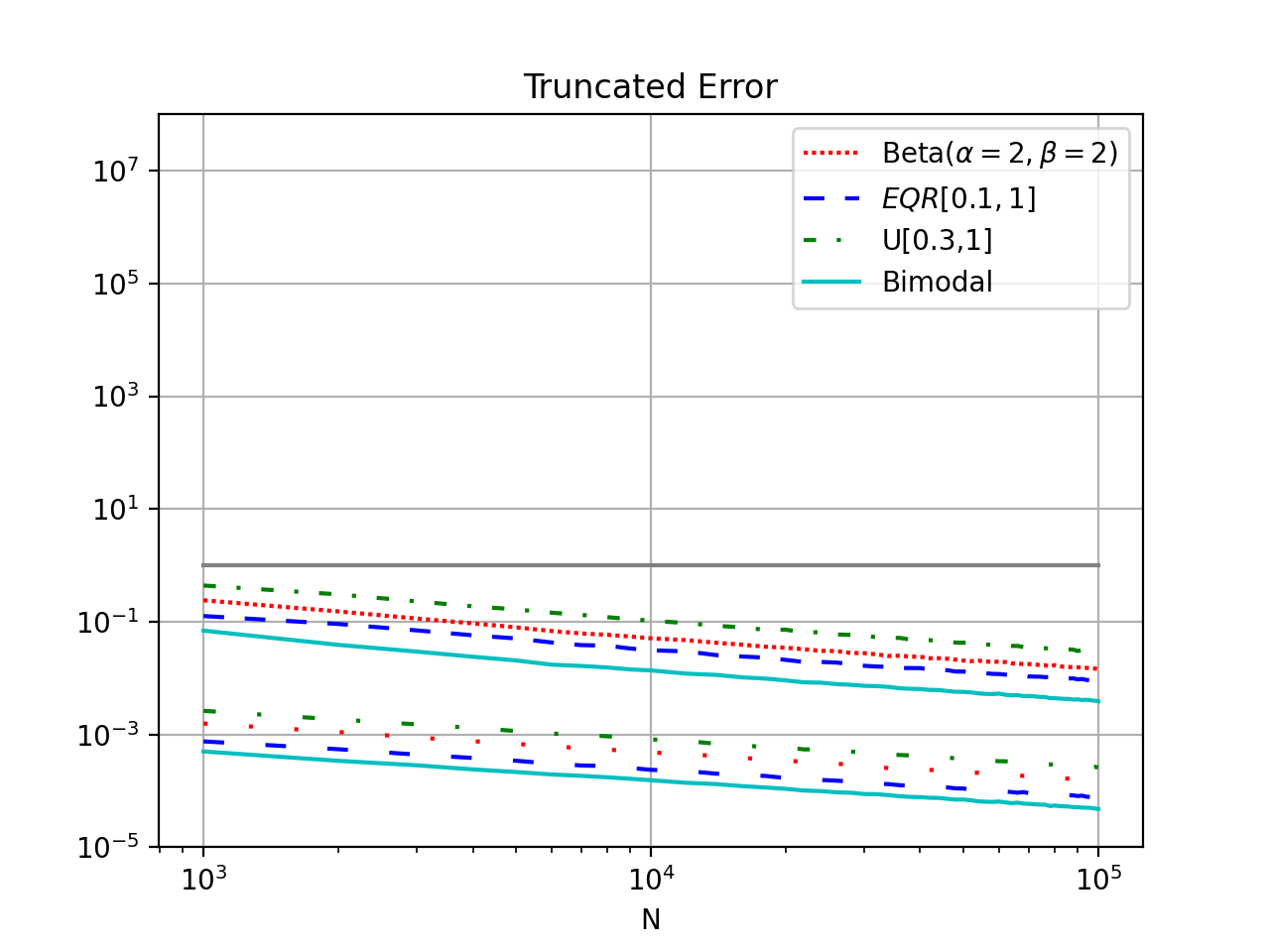} 
    \caption{truncation} 
  \end{subfigure}
  \hfill
  \begin{subfigure}[t]{0.48\textwidth}
    \centering
    \includegraphics[width=\linewidth]{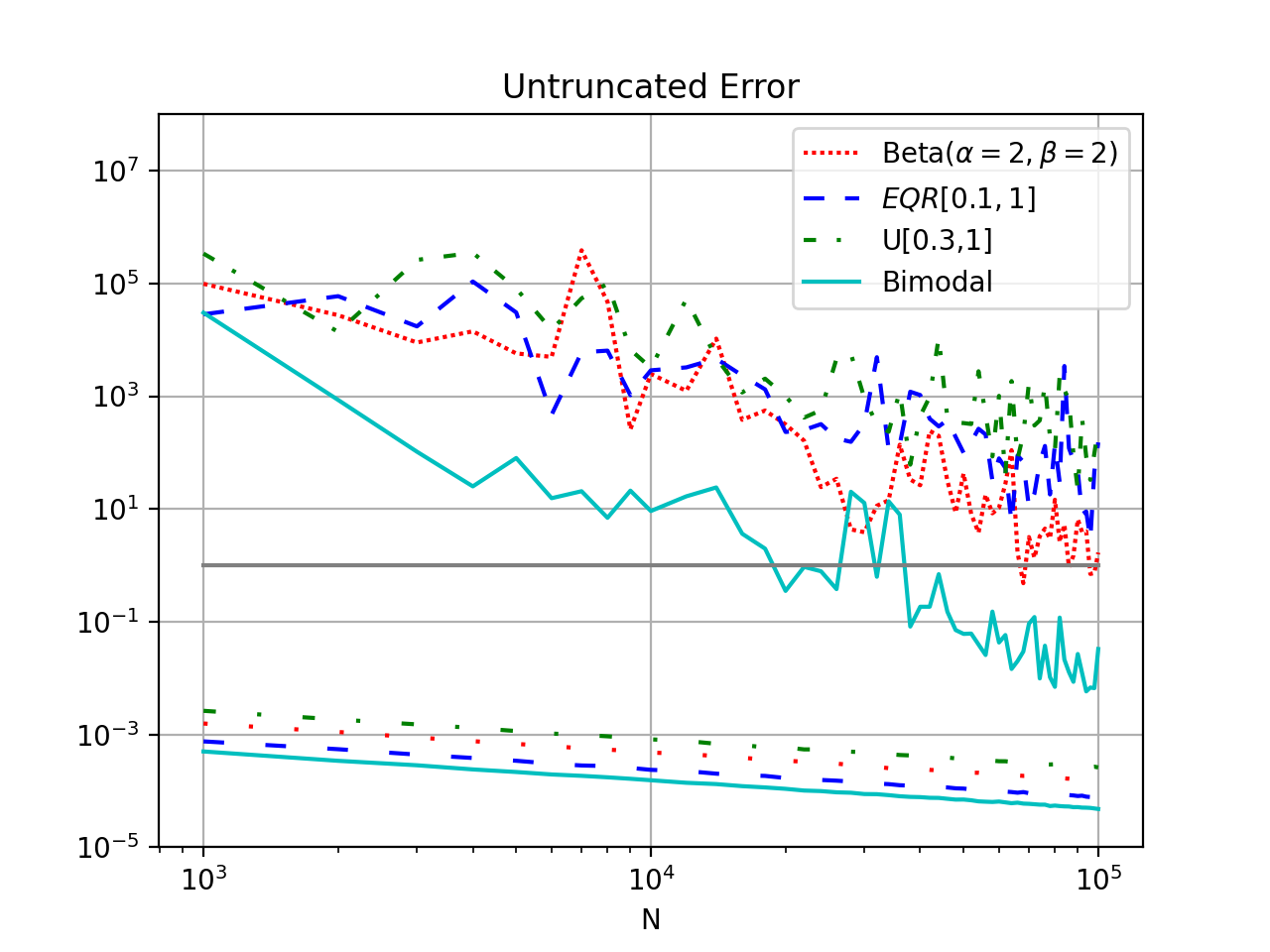} 
    \caption{no truncation} 
  \end{subfigure}
  \caption{This figure depicts on a log-log scale the error as a
    function of the number of samples $\samples$ (fixed parameters
    $n=16$, $\dist = \text{Beta}(2,2)$).  The counterfactual auction B
    is the high supply auction, the incumbent auction A is the low
    supply auction (the incumbent does not mix in B).  The trivial
    error bound of 1 is depicted with a solid line.  The thick line in
    each figure shows the error when the counterfactual and the
    incumbent are the same, a.k.a., the counterfactual error.}
  \label{fig:dist-trunc}
\end{figure}

\section{Derivation of error bounds (Proof of Theorems~\ref{thm:allpay-simple} and \ref{thm:allpay-general})}
\label{s:inference-k}

We will now derive the error bounds stated in Theorems~\ref{thm:allpay-simple} and \ref{thm:allpay-general} for the revenue estimator $\hat{P}$. Since the rate of estimation of bid density is generally worse than that of the bid distribution, we will first express the revenue estimator directly in terms of the empirical bid distribution. Written in this manner, the estimator will turn out to be a weighted order statistic of the empirical bids. We will then use standard bounds on the error in estimating bids to bound the error in estimating revenue. We begin by describing the standard statistical error bounds we use in our analysis.

\subsection{Statistical Model and Methods}

Our framework for counterfactual auction revenue analysis is based on directly
using the distribution of bids for inference. The main error in
estimation of the bid distribution is the {\em sampling error} due to
drawing only a finite number of samples from the bid distribution.
Evaluation of the auction revenue requires the knowledge of
the {\it quantile function} of bid distribution. While estimation of empirical
distributions is standard, quantile functions can be significantly more difficult to estimate
especially if the distribution density can approach zero on its support since the distribution
function is non-invertible at those points. As we show further,
estimation of the counterfactual auction revenues requires the knowledge of the 
{\it density-weighted quantile function} which can be robustly estimated despite the 
potential non-invertibility of the distribution function. In this subsection, we overview
the uniform absolute error bound of the density-weighted quantile function of the
bid distribution of a multi-unit auction based on the results in \citet{csorgo:83}. 

\begin{definition}
For function
$\bid(\cdot)$,estimator $\ebid(\cdot)$ and the weighting function $\omega(\cdot) \geq 0$, the {\em weighted uniform mean absolute
  error} 
is defined as $$\expect[\ebid]{ \sup\nolimits_\quant \omega(\quant)\big| \bid(\quant) -
\ebid(\quant) \big|}.$$
\end{definition}

The main object that will arise in our subsequent analyses will be the
weighted quantile function of the bid distribution where the weights
are determined by the allocation rules of the auctions under
consideration, e.g., $\expect[q]{\omega(q)\,b(q)}$ for some quantile
weighting function given by $\omega(\cdot)$.\footnote{Estimators of
  these functions, i.e., replacing the bid distribution with the
  empirical bid distribution, are called $L$-statistics in the
  statistics literature.} The important insight is that while the
estimation of the quantile function of the bid distribution
$\bhat(\cdot)$ maybe problematic around the points where the density
of the bid distribution is close to zero, the estimation of the
density-weighted quantile function is a lot more robust. As we will
show further, estimation of auction revenues involves such a
density-weighted form of the quantile function. Our error bounds are
based on the uniform convergence of quantile processes and weighted
quantile processes in \citet{csorgo:78}, \citet{csorgo:83}, and
\citet{cheng:97}.
For the quantile weighting function $\omega(q) = 1/b'(q)$, i.e., the
inverse derivative of the bid function, the $\sqrt{N}$-normalized mean
absolute error is bounded by a universal constant.  

\begin{lemma}
\label{error bid function}
Suppose that $b$ and $b'$ exist on $(0,\,1)$ and 
$
\sup_{q \in (0,\,1)}q(1-q)b'(q) <\infty.
$
Then the density-weighted uniform mean absolute error of the empirical quantile function
$\ebid(\cdot)$ on $\quant \in [\delta_N,\,1-\delta_N]$ with $\delta_N=\frac{25 \log\log N}{N}$ is bounded almost surely as
\begin{align*}\expect[\ebid]{
 \sup\nolimits_{\quant \in [\delta_N,\,1-\delta_N]} \big|\sqrt{N}(b^{\prime}(q))^{-1}( \bid(\quant) - \ebid(\quant)) \big|} <
1+16\frac{\log\log N}{\sqrt{N}}
\sup_q\,q(1-q)b'(q).
\end{align*}
\end{lemma}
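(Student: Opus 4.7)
The plan is to reduce the density-weighted quantile error to the uniform empirical quantile process and then invoke classical strong-approximation bounds. Let $G$ denote the true bid distribution, so $\bid = G^{-1}$, and define $U_i = G(\sbidi)$; by monotonicity of $G$ the $U_i$ are i.i.d.\ Uniform$[0,1]$, and $\ebid(q) = \bid(U_N^{-1}(q))$, where $U_N^{-1}$ is their empirical quantile function.

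Using the fundamental theorem of calculus, I would decompose
\begin{align*}
\sqrt{N}\,\frac{\bid(q)-\ebid(q)}{\bid'(q)} \;=\; \sqrt{N}\bigl(q - U_N^{-1}(q)\bigr) \;+\; \sqrt{N}\int_{U_N^{-1}(q)}^{q}\frac{\bid'(t)-\bid'(q)}{\bid'(q)}\,dt.
\end{align*}
The first summand is the uniform quantile process. By the bounds of Csörgő (1978, 1983) on its expected supremum, restricted to the range $[\delta_N, 1-\delta_N]$, this term contributes the leading constant $1$ appearing in the statement.

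For the remainder I would combine two ingredients: (i) the Csörgő--Révész LIL for uniform order statistics, which gives $\sup_{q \in [\delta_N, 1-\delta_N]} |q - U_N^{-1}(q)| = O(\sqrt{\log\log N/N})$ almost surely, and (ii) the hypothesis $C := \sup_q q(1-q)\bid'(q) < \infty$, which implies the pointwise bound $\bid'(t) \le C/(t(1-t))$. Inserting these into the integrand and using the short integration length from (i), as in Cheng's (1997) refinement of Csörgő--Révész, yields a uniform bound on the remainder of the form $16\,(\log\log N/\sqrt{N})\,C$. Adding the two contributions and taking expectations gives the stated inequality.

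The hard part will be the uniform control of the remainder near the endpoints $\delta_N$ and $1-\delta_N$: because $\bid'$ may diverge as $q \to 0$ or $q \to 1$, the bound must exploit that the specific choice $\delta_N = 25\log\log N/N$ keeps both $q$ and $U_N^{-1}(q)$ inside a region where the hypothesis on $q(1-q)\bid'(q)$ still controls $\bid'$ pointwise; the specific constants $25$ and $16$ emerge from balancing the LIL deviation against the $1/(t(1-t))$ blowup of $\bid'$ along the integration interval.
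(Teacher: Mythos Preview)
The paper does not actually prove this lemma: immediately after the statement it writes ``This result is a consequence of statement (3.2.3) in Theorem~3.2.1 in \cite{csorgo:83},'' and then only remarks on how the quantity $\sup_q q(1-q)\bid'(q)$ specializes for all-pay and first-price auctions. So there is no in-paper argument to compare against; your proposal is an attempt to reconstruct the classical quantile-process result that the paper is simply citing.

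Your outline is the standard strong-approximation route (reduce to the uniform quantile process, invoke Cs\"org\H{o}--R\'ev\'esz), and you even name the same references the paper relies on, so in spirit you are aligned with the paper's one-line appeal to \cite{csorgo:83}. One genuine concern with your remainder step: the stated hypothesis $\sup_q q(1-q)\bid'(q) < \infty$ gives only an \emph{upper} bound on $\bid'(t)$, but after dividing through by $\bid'(q)$ your integrand is $(\bid'(t)-\bid'(q))/\bid'(q)$, and controlling this ratio uniformly needs either a lower bound on $\bid'(q)$ or a bound on the local variation of $\bid'$. Cs\"org\H{o}'s Theorem~3.2.1 in fact carries an additional regularity hypothesis of the form $\sup_q q(1-q)\,|f'(F^{-1}(q))|/f(F^{-1}(q))^2 < \infty$ (equivalently a bound on $q(1-q)|\bid''(q)|/\bid'(q)$), which is exactly what closes this gap; the paper's lemma statement appears to be suppressing that condition. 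If you want a self-contained argument, you should either add that hypothesis explicitly or, more simply, do what the paper does and invoke (3.2.3) of Cs\"org\H{o}'s Theorem~3.2.1 directly.
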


This result is a consequence of statement (3.2.3) in Theorem 3.2.1 in
\cite{csorgo:83}.  For all-pay auctions by equation \eqref{eq:ap-inf},
the term $\sup_q\{q(1-q)b'(q)\}$ is bounded by $\frac14
\sup_q\{x'(q)\}$.  For first-price auctions by equation
\eqref{eq:fp-inf}, it is bounded by
$\sup_q\{q(1-q)x'(q)/x(q)\}$.

\subsection{Alternate formulation of the revenue estimator}
\label{s:inference-equation-alternate}

This section expresses the revenue estimator $\hat{P}$ directly as a function of the empirical bids. This alternate formulation is based on the 
same integration by parts technique that was
used to obtain the estimator for the revenue, but without
the subsequent grouping of the terms
that resulted in the weighted
sum of differences $\hat{b}_{i+1}-\hat{b}_i$ in the form of the estimator.

Recall from Section~\ref{s:inference-equation} that we have
\begin{align}
P_y &=
\expect[\quant]{y'(\quant)(1-\quant)\frac{\bid'(\quant)}{\alloc'(\quant)}}
= \expect[\quant]{Z_y(\quant)\,\bid'(\quant)}\\ \intertext{where
  $Z_y(\quant)=(1-\quant)\frac{y'(\quant)}{\alloc'(\quant)}$.
  Treating this expectation as an integral and integrating it by
  parts, when the constant terms are zero, gives:}
\label{eq:inference-integration-by-parts}
P_y &= \expect[\quant]{-Z'_y(\quant)\,\bid(\quant)}.
\end{align}
\noindent
The subsequent analysis will include consideration of the constant
terms when they are not zero.

This analysis gives two ways to write counterfactual revenue.
Equation~\eqref{eq:inference-pre-integration-by-parts} writes the
revenue as linear in the derivative of the bid function while
equation~\eqref{eq:inference-integration-by-parts} writes it as linear
in the bid function.  We will define our estimator in terms of former
for extreme quantiles and in terms of the latter for moderate
quantiles.  The reason for this definition is that the latter gives a
simple and well behaved estimator in terms of the bid function, but
might diverge at the extremes;\footnote{Both
  $Z_y(\quant)=(1-\quant)\frac{y'(\quant)}{\alloc'(\quant)}$ and
  $Z'_y(\quant)$ can be infinite at the boundary $\quant \in \{0,1\}$
  when $\alloc$ and $y$ are polynomials of different degrees.} while
the former at the extremes introduces only modest bias when
approximated by zero.

\begin{lemma}
\label{l:counterfactual-revenue}
\label{weights lemma}
The per-agent counterfactual revenue of a rank-based auction with
allocation rule $y$ can be expressed in terms of the bid function
$\bid$ of an all-pay mechanism $\alloc$ as:
\begin{align}
\label{eq:P_y}
P_y & = 
\overbrace{\expect[\quant\not\in\Lambda]{-Z'_y(\quant)\,\bid(\quant)} + 
Z_y(1-\delta_N)\,\bid(1-\delta_N)-Z_y(\delta_N)\,\bid(\delta_N)}^{\makebox[0in]{\tiny
    contribution from moderate quantiles}}\ + \ \\ \notag &\quad\,
 \underbrace{\expect[\quant\in\Lambda]{Z_y(\quant)\,\bid'(\quant)}}_{\makebox[0in]{\tiny
  contribution from extreme quantiles}}\\
\intertext{%
where $Z_y(\quant) = (1-\quant) \frac{y'(\quant)}{\alloc'(\quant)}$,
extreme quantiles are $\Lambda = [0,\delta_N] \cup [1-\delta_N,1]$,
and the truncation parameter is $\delta_N \in [0,1/2]$. For bid
functions that are constant on the extreme quantiles, the
counterfactual revenue can be written as}
\label{eq:P_y-truncated}
P_y & =
\expect[\quant\not\in\Lambda]{-Z'_y(\quant)\,\bid(\quant)} +
Z_y(1-\delta_N)\,\bid(1).
\end{align}
In this latter case or when $\delta_N = 0$, the expressed revenue is
linear in the bid function.
\end{lemma}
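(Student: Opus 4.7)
My plan is to carry out the two transformations suggested in the text preceding the lemma, then carefully split the resulting integral and apply integration by parts on the moderate range only.

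First I would start from the BNE revenue formula \eqref{eq:bne-rev}. Since the value distribution is supported on $[0,1]$ we have $\rev(0)=\rev(1)=0$, so $P_y = \expect[\quant]{y'(\quant)\,\rev(\quant)} = \expect[\quant]{y'(\quant)\,(1-\quant)\,\val(\quant)}$. Next I plug in the all-pay inference identity \eqref{eq:ap-inf}, $\val(\quant)=\bid'(\quant)/\alloc'(\quant)$, which is exactly the content of the paragraph preceding \eqref{eq:inference-pre-integration-by-parts}, to obtain $P_y=\expect[\quant]{Z_y(\quant)\,\bid'(\quant)}$ with $Z_y(\quant)=(1-\quant)\,y'(\quant)/\alloc'(\quant)$.

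Then I would split the expectation (viewed as a Lebesgue integral over $\quant\in[0,1]$) at the truncation points $\delta_N$ and $1-\delta_N$ into an extreme part over $\Lambda=[0,\delta_N]\cup[1-\delta_N,1]$ and a moderate part over $[\delta_N,1-\delta_N]$. On the moderate part I apply integration by parts once, using that $Z_y$ and $\bid$ are differentiable on the open interval away from the endpoints (which is why $Z_y'$ is integrable there even when it may blow up at $0$ or $1$). This produces the boundary terms $Z_y(1-\delta_N)\,\bid(1-\delta_N) - Z_y(\delta_N)\,\bid(\delta_N)$ together with $\expect[\quant\notin\Lambda]{-Z_y'(\quant)\,\bid(\quant)}$; adding back the extreme-quantile contribution $\expect[\quant\in\Lambda]{Z_y(\quant)\,\bid'(\quant)}$ yields exactly \eqref{eq:P_y}.

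For the second formula, I would specialize to bid functions that are constant on $\Lambda$. Constancy immediately kills the extreme-quantile integral since $\bid'\equiv 0$ there. On the upper extreme, constancy gives $\bid(1-\delta_N)=\bid(1)$. On the lower extreme, constancy gives $\bid(\delta_N)=\bid(0)$; and in any all-pay rank-based auction the value-zero type has nothing to gain from a positive bid and so $\bid(0)=0$, killing the $Z_y(\delta_N)\,\bid(\delta_N)$ term. These reductions collapse \eqref{eq:P_y} to \eqref{eq:P_y-truncated}. The linearity claim in $\bid$ is then immediate: \eqref{eq:P_y-truncated} is the sum of an integral of $\bid$ against a fixed kernel $-Z_y'$ and a fixed scalar $Z_y(1-\delta_N)$ times the point evaluation $\bid(1)$, and similarly for \eqref{eq:P_y} when $\delta_N=0$ and the boundary values $Z_y(0)\bid(0), Z_y(1)\bid(1)$ vanish.

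The only substantive obstacle is the bookkeeping around the endpoints: $Z_y$ and $Z_y'$ may diverge at $0$ or $1$ (as noted in the footnote about polynomials of different degrees), so one must verify that integration by parts is legitimate on the open interval $(\delta_N,1-\delta_N)$ and that in the $\delta_N=0$ case the boundary terms actually vanish (rather than being canceled symbolically). Everything else is routine calculus.
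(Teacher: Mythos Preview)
Your proposal is correct and follows essentially the same approach as the paper: plug the all-pay inference equation~\eqref{eq:ap-inf} into the revenue equation~\eqref{eq:bne-rev}, split at $\delta_N$ and $1-\delta_N$, integrate by parts on the moderate interval, and then simplify using $\bid'\equiv 0$ on $\Lambda$, $\bid(\delta_N)=\bid(0)=0$, and $\bid(1-\delta_N)=\bid(1)$. Your additional remarks on endpoint integrability and the justification that $\bid(0)=0$ are more explicit than the paper's one-line proof but do not diverge from it in substance.
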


\begin{proof} 
The first part of the lemma follows from plugging the all-pay
inference equation~\eqref{eq:ap-inf} into the revenue
equation~\eqref{eq:bne-rev} and integrating by parts on moderate quantiles $[\delta_N,1-\delta_N]$.  The second part
of the lemma simplifies the first part using $\bid'(\quant) = 0$ for
extremal $\quant \in \Lambda$, $\bid(0) = \bid(\delta_N) = 0$, and $\bid(1-\delta_N) = \bid(1)$.
\end{proof}

This formulation allows the estimation of $P_y$ directly as a weighted
order statistic of the observed bids, with $b(\cdot)$ replaced by the
estimated bid distribution $\hat{b}(\cdot)$. Lemma~\ref{error bid
  function} tells us that, except at the extreme quantiles, the
estimated bid distribution $\hat{b}(\cdot)$ closely approximates
$b(\cdot)$.  At the extreme quantiles there is a bias-variance
tradeoff.  The variance from including the contribution to the revenue
from these quantiles in the estimator can greatly exceed the bias from
excluding them entirely.  Thus, to prevent the larger error at the
extreme quantiles from degrading the accuracy of the estimator, these
estimated bids are rounded down to zero and up to the maximum observed
bid at the low and high extremes, respectively.  Recall that the
estimated bid distribution $\hat{b}(\cdot)$ is defined, in
equation~\eqref{bid function}, as a piecewise constant function with
$N$ pieces. Thus, the estimator $\hat{P}_y =
\expect[q]{\smash{-Z'_y(q)\,\hat{\bid}(q)}}$ can be simplified as
expressed in the following definition. 

\begin{definition}
\label{d:estimator-alternate}
The estimator $\hat{P}_y$ (with truncation parameter $\delta_N$) for
the revenue of an auction with allocation rule $y$ from $N$ samples
$\hat{b}_1 \leq \cdots \leq \hat{b}_N$ from the equilibrium bid
distribution of an all-pay auction with allocation rule $x$ can be rewritten as:
\begin{align*}
\hat{P}_y& = \sum_{i=\delta_N N}^{N-\delta_N N} \left[ \left(1-\frac
    {i-1} N\right)\frac{y'(\frac {i-1}N)}{x'( \frac {i-1}N)} -
  \left(1-\frac{i}{N}\right)\frac{y'(\frac{i}N)}{x'(\frac{i}N)}
\right] \, \hat{\bid}_i + \delta_N \frac{y'(1-\delta_N)}{x'(1-\delta_N)} \hat{b}_N.
\end{align*}
\end{definition}

This alternate formulation is in fact numerically identical to the original definition of the revenue estimator, Definition~\ref{d:estimator}
but uses a different grouping
of the terms in the sum. From a computational perspective, Definition~\ref{d:estimator} turns out to be better behaved as the differences $Z'_y((i-1)/N) - Z'_y(i/N)$ can diverge for large $N$. But the alternate Definition~\ref{d:estimator-alternate}  is better suited to analyzing the statistical error, as the error in bids is better behaved than the error in bid derivatives.


\subsection{Derivation of error bounds}

We will now give the main ideas behind the proof
Theorem~\ref{thm:allpay-simple}.  This analysis is non-trivial because
the estimator, which is a weighted order statistic, is based on
weights with magnitudes that can be exponentially large.  Importantly
$Z_k(q) = (1-q) y'(q)/x'(q)$ and, while the numerator $y'(q)$ is
bounded by Fact~\ref{fact:max-slope} by $n$ (the number of agents),
the denominator can be exponentially small.  Thus, both $Z_k(q)$ and
its derivative $Z'_k(q)$ can be exponentially big, specifically
$N^{cn}$ for absolute constant $c \in (0,1)$ (see
Example~\ref{eg:extremal}, below).  The revenue estimator is a
weighted order statistic with weights proportional to $Z'_k$ and thus
straightforward analyses will not give good bounds on the error.  We
begin with one such analysis that gives an error bound that is linear
in the maximum of $Z_k(q)$ and modify it to reduce the dependence on
this term to be logarithmic.  For non-extremal quantiles $\quant$,
$Z_k(\quant)$ is bounded by $N^n$ and thus $\log Z_k(q)$ is at most
the $n \log N$ term that appears in the error bound of
Theorem~\ref{thm:allpay-simple}.

\begin{example}
\label{eg:extremal} The allocation rules and derivatives for the $k=1$ unit auction and the $k=n-1$ unit auction are:
\begin{align*}
\kalloc[n-1](q) &= 1-(1-q)^{n-1}; & \kalloc[n-1]'(q) &= (n-1)\,(1-q)^{n-2}.\\
\kalloc[1](q) &= q^{n-1};& \kalloc[1]'(q) &= (n-1) q^{n-2}. 
\end{align*}
Consider the estimator for the revenue of the $(n-1)$-unit auction
from bids in the one-unit auction, i.e., $y = \kalloc[n-1]$ and
$\alloc = \kalloc[1]$, at the lower extreme quantile $\quant = \log
\log N / N$ and with number of samples $N \gg n$.  We get $Z_k(\quant)
= (1-\quant) y'(\quant) / \alloc'(\quant) = (1-q)^{n-1}\,q^{2-n}
\approx q^{2-n}$ and $Z'_k(\quant) \approx (2-n)\,\quant^{1-n}$; thus
the magnitudes of both $Z_k(\quant)$ and $Z'_k(\quant)$ at quantile
$\quant = \log \log N / N$ are on the order of $[N/(\log \log
  N)]^{n}$ which is upper and lower bounded by $N^{cn}$ for
appropriate absolute constants $c$.  Recall that terms from extreme
quantiles below $\delta_N = O(\log \log N / N)$ are rounded down to zero
in the estimator; thus, this upper bound is tight for the subsequent
analysis.
\end{example}

The remainder of this section sketches the main ideas in deriving the
above logarithmic bound on the error.  The full proof of
Theorem~\ref{thm:allpay-simple}, as well as the more detailed analysis
that gives Theorem~\ref{thm:allpay-general}, is given in
Appendix~\ref{s:proofs-3}.
Assume that the counterfactual auction $y$ is the highest-$k$-bids-win
auction for some $k$; denote the allocation rule of this auction by
$\kalloc$, and let $Z_k = Z_{\kalloc}$. We will prove
Theorem~\ref{thm:allpay-simple} for this special case. Then, by virtue
of the fact that $P_y$ is a weighted average of the constituent
$\murevk$'s, the theorem trivially extends to all rank-based auctions
$y$. 

As in the statement of the theorem, let $\delta_N=\max(25\log\log
N,n)/N$, and let $\Lambda=[0,\delta_N]\cup[1-\delta_N, 1]$ denote the
set of extreme quantiles.  Apply equation~\eqref{eq:P_y} and
equation~\eqref{eq:P_y-truncated} from
\autoref{l:counterfactual-revenue} to the true bid function and
truncated empirical bid function to write the counterfactual revenue
and the estimated revenue, respectively, as:
\begin{align}
\notag \murevk & =
\expect[\quant\not\in\Lambda]{-Z'_k(\quant)\,\bid(\quant)} +
\expect[\quant\in\Lambda]{Z_k(\quant)\,\bid'(\quant)} \\ \notag &\quad +
Z_k(1-\delta_N)\bid(1-\delta_N)-Z_k(\delta_N)\bid(\delta_N).\\ \notag
\emurevk & =
\expect[\quant\not\in\Lambda]{\smash{-Z'_k(\quant)\,\hat{\bid}(\quant)}} +
Z_k(1-\delta_N)\hat{\bid}_N.  
\intertext{The mean absolute error is bounded by the
expected value of the absolute value of the difference in these two quantities:}
\label{eq:error-allpay}
|\emurevk-\murevk| \le & 
\left|\expect[q\not\in\Lambda]{\smash{-Z'_k(\quant)(\hat{\bid}(\quant)-\bid(\quant))}}\right| 
+ \left| \expect[\quant\in\Lambda]{Z_k(\quant)\,\bid'(\quant)}\right|\\
\notag & + \left| \smash{Z_k(1-\delta_N)\,(\bid(1-\delta_N) -\hat{\bid}_N)}\right|
+ \left| Z_k(\delta_N)\,\bid(\delta_N)\right|
\end{align}


There are now two steps to the analysis.  The first step is an analysis
of the contribution to the error from moderate quantiles, i.e., the
first term in equation~\eqref{eq:error-allpay}.  We will sketch this
step below.  The second step is analysis of the contribution to the
error from extreme quantiles, i.e., the remaining terms in
equation~\eqref{eq:error-allpay}.  In Appendix~\ref{s:proofs-3} we
show that the error from these terms is dominated by the error in the
first term.  As a summary of this deferred analysis, e.g., for the
final term $Z_k(\delta_N)\, b(\delta_N)$, the denominator of
$Z_k(\quant)$ can be very small, but it is approximately proportional
to $\bid(\quant)$ in the numerator and can be canceled.  The other
terms are similarly bounded.

The following straightforward analysis gives a bound on the error in
the estimator from moderate quantiles that is linear in $\sup_{q
  \not\in\Lambda} Z_k(q)$.  Specifically, the expected error is bounded as
\begin{align*}
\left|\expect[q\not\in\Lambda]{\smash{Z'_k(\quant) \,
  (\hat{\bid}(\quant)-\bid(\quant))}}\right| 
  &\leq \expect[q\not\in\Lambda]{|Z'_k(\quant)|} \cdot
  \sup\nolimits_{\quant \not\in \Lambda} |\hat{\bid}(\quant)-\bid(\quant)|.
\end{align*}
For the second term in this expression, Lemma~\ref{error bid function} provides a uniform
bound on the absolute error in bids
$|\hat{\bid}(\quant)-\bid(\quant)|$.  For the first term, the
following lemma shows that $Z_k$ is single-peaked and, thus,
$\expect[q\not\in\Lambda]{|Z'_k(\quant)|} \le 2 \sup_{q\not\in\Lambda}
Z_k(q)$.  The proof of this lemma, which is formally given in
Appendix~\ref{s:proofs-3}, follows from the fact that $\alloc$ is a
convex combination of multi-unit auctions and that the ratio of the
derivatives of the allocation rules of two multi-unit auctions is
single-peaked.

\begin{lemma}
\label{lem:Z-bound-1}
For any rank-based auction and $k$-highest-bids-win auction
with allocation rules $\alloc$ and $\kalloc$, respectively, the
function $Z_k(\quant)=(1-\quant)
\frac{\kalloc'(\quant)}{\alloc'(\quant)}$ achieves a single local
maximum for $\quant\in [0,1]$.
\end{lemma}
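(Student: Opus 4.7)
My plan is to reduce, via a monotone change of variables, to an analysis of the critical points of a Laurent polynomial, and then to apply Descartes' rule of signs.

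First, I would use the marginal weights from Section~\ref{s:prelim} to expand $\alloc = \sum_j \margwalk[j] \kalloc[j]$ as a convex combination of multi-unit allocation rules. Applying the closed form $\kalloc[j]'(\quant) = (n-1)\binom{n-2}{j-1}\quant^{n-j-1}(1-\quant)^{j-1}$ (the density of the $(n-j)$-th order statistic of $n-1$ independent uniforms), the ratio simplifies to
\begin{equation*}
\frac{\kalloc[j]'(\quant)}{\kalloc'(\quant)} \;=\; D_j \left(\frac{\quant}{1-\quant}\right)^{k-j}, \qquad D_j := \frac{\binom{n-2}{j-1}}{\binom{n-2}{k-1}} > 0.
\end{equation*}
I then set $t := \quant/(1-\quant)$, a monotone bijection from $(0,1)$ onto $(0,\infty)$, and use $1-\quant = 1/(1+t)$ to rewrite
\begin{equation*}
f(t) \;:=\; \frac{1}{Z_k(\quant)} \;=\; (1+t)\sum_j c_j\,t^{k-j}, \qquad c_j := \margwalk[j]\, D_j \geq 0.
\end{equation*}
Since $t(\quant)$ is strictly monotone, the lemma reduces to showing that $f$ has a unique local minimum on $(0,\infty)$.

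Next, I would expand $f(t) = \sum_{i=-m}^{k} \beta_i\, t^i$ with $m := n-1-k$, obtaining non-negative coefficients $\beta_i = c_{k-i} + c_{k-i+1}$ (where out-of-range $c_j$ are set to zero). Multiplying $f'(t) = 0$ by $t^{m+1}$ then yields the polynomial equation
\begin{equation*}
P(t) \;=\; \sum_{i=0}^{n-1} (i-m)\,\beta_{i-m}\, t^i \;=\; 0,
\end{equation*}
whose coefficient on $t^i$ is non-positive for $i < m$, zero at $i = m$, and non-negative for $i > m$. Thus the coefficient sequence of $P$ has at most one sign change, and Descartes' rule of signs yields at most one positive real root, i.e., at most one critical point of $f$ on $(0,\infty)$.

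A short boundary check finishes the argument. If $\margwalk[j] > 0$ for some $j > k$, then $f$ contains a strictly negative power of $t$ and $f(t) \to \infty$ as $t \to 0^+$; if $\margwalk[j] > 0$ for some $j \leq k$, then $f(t) \to \infty$ as $t \to \infty$. When both hold, the unique critical point of $f$ is its global minimum, and $Z_k$ has a unique interior local maximum; when only one side contributes, $f$ is monotone and $Z_k$ attains its single local maximum at an endpoint of $[0,1]$. The main obstacle will be the Descartes step --- specifically, verifying the coefficient formula for $\beta_i$ and confirming the claimed sign pattern of $P$ --- but once the change of variables is in place this is routine bookkeeping.
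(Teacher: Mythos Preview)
Your proposal is correct, but the paper's proof takes a shorter route. Both start from the same decomposition $1/Z_k(q)=\sum_j c_j\,q^{k-j}(1-q)^{j-k-1}$, but instead of your change of variables to $t=q/(1-q)$ and the Descartes' rule count, the paper simply observes that each summand $q^{k-j}(1-q)^{j-k-1}$ is convex on $(0,1)$; hence $1/Z_k$ is convex, has a unique minimizer, and $Z_k$ has a unique maximizer. The convexity claim does require a short verification (compute the second derivative of $q^m(1-q)^{-m-1}$ and note that the coefficients $m(m-1)$, $2m(m+1)$, $(m+1)(m+2)$ are all nonnegative for integer $m$), but once checked it gives a one-line argument with no need for the Laurent expansion, the coefficient formula $\beta_i=c_{k-i}+c_{k-i+1}$, or the endpoint case analysis. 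Your approach has the virtue of being completely mechanical and of avoiding any unproven convexity assertion; the paper's approach buys brevity and, as a byproduct, the stronger conclusion that $1/Z_k$ is convex in $q$ (not merely unimodal), which you do not obtain through the $t$-substitution since convexity is not preserved under that change of variables.
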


As described in Example~\ref{eg:extremal},
$\sup_{\quant\not\in\Lambda} Z_k(\quant)$ can be very large.  In order
to obtain a better bound, we observe that the error in bids is large
precisely at quantiles where $Z_k$ is small and vice versa: $Z_k$
depends inversely on the slope of the allocation rule of the incumbent
auction, $\alloc'$, whereas, the error in bids is directly
proportional to the bid density $\bid'$, which in turn is proportional
to $\alloc'$.  We utilize this observation as follows:

\begin{align}
\label{eq:error-split}
  \left|\expect[q\not\in\Lambda]{Z'_k(\quant) \,
    (\hat{\bid}(\quant)-\bid(\quant))}\right| & \le
  \expect[q\not\in\Lambda]{\left|\frac{Z'_k(\quant)}{Z_k(\quant)}\right| \,
   \left|Z_k(\quant)\,(\hat{\bid}(\quant)-\bid(\quant))\right|} \\
\notag
  & \le \expect[q\not\in\Lambda]{\left|\frac{Z'_k(\quant)}{Z_k(\quant)}\right|}
   \sup_{q}\left|Z_k(\quant)\,(\hat{\bid}(\quant)-\bid(\quant))\right|.
\end{align}

As the integral of $Z'_k(q) / Z_k(q)$ is $\log Z_k(q)$, this analysis
and the single-peaked-ness of $Z_k(q)$ gives an error bound that is
logarithmic instead of linear in $\sup_{q\not\in\Lambda} Z_k(q)$. The
following lemma, formally proved in Appendix~\ref{s:proofs-3},
summarizes the bound on the error from moderate quantiles.

\begin{lemma}
\label{first-term-bound-simple}
For $Z_k$ and $\Lambda$ defined as above, the first error term in
equation~\eqref{eq:error-allpay} of the estimator $\hat{P}_k$ is bounded by:
\begin{align*}
\expect[\hat{\bid}]{\left|\expect[q\not\in\Lambda]{Z'_k(\quant)\, (\hat{\bid}(\quant)-\bid(\quant))}\right|}
&\le \frac{8n\log\samples}{\sqrt{N}}\sup_{\quant \not\in \Lambda}\{\kalloc'(\quant)\}
\end{align*}
\end{lemma}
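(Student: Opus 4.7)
The plan is to apply the splitting shown in equation~\eqref{eq:error-split}: it bounds the quantity of interest by the product
$$\int_{\Lambda^c} \left|\tfrac{Z_k'(q)}{Z_k(q)}\right| dq \;\cdot\; \sup_{q\not\in\Lambda}|Z_k(q)(\hat b(q)-b(q))|,$$
whose first factor is deterministic and so pulls out of the expectation over $\hat b$. It then suffices to estimate the two factors separately.

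For the deterministic factor, I invoke Lemma~\ref{lem:Z-bound-1} to conclude that $Z_k$ is single-peaked on $[0,1]$, so $\int_{\Lambda^c}|Z_k'/Z_k|\,dq$ equals the total variation of $\log Z_k$ on $[\delta_N, 1-\delta_N]$, namely $2\log Z_k(q^\ast) - \log Z_k(\delta_N) - \log Z_k(1-\delta_N)$, where $q^\ast$ is the peak location. Decomposing $\log Z_k(q) = \log(1-q) + \log x_k'(q) - \log x'(q)$ and using that each $x_k'$ is a scaled binomial density of degree $n-2$ (so $x'$ is a convex combination of such), the choice $\delta_N \ge n/N$ ensures each of the three log terms, evaluated at the relevant arguments, is $O(n\log N)$. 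This gives $\int_{\Lambda^c}|Z_k'/Z_k|\,dq \le c_1\,n\log N$ for an absolute constant $c_1$.

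For the random factor, I exploit the cancellation between $Z_k(q) = (1-q)\,x_k'(q)/x'(q)$ and the equilibrium bid density $b'(q) = v(q)\,x'(q)$ from equation~\eqref{eq:ap-inf}: writing
$$Z_k(q)(\hat b(q) - b(q)) \;=\; (1-q)\,v(q)\,x_k'(q)\,\cdot\,\frac{\hat b(q) - b(q)}{b'(q)},$$
and using $(1-q)v(q)\le 1$ (since values lie in $[0,1]$), I obtain $\sup_{q\not\in\Lambda}|Z_k(\hat b - b)| \le \sup_q x_k'(q) \cdot \sup_{q\not\in\Lambda}|(\hat b - b)/b'(q)|$. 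Lemma~\ref{error bid function}, combined with the post-lemma observation that $\sup_q q(1-q)b'(q) \le \tfrac14 \sup_q x'(q) \le n/4$ for all-pay auctions, then yields $\expect[\hat b]{\sup_{q\not\in\Lambda}|(\hat b - b)/b'(q)|} = O(1/\sqrt N)$, so the expected value of the random factor is at most $c_2\,\sup_q x_k'(q)/\sqrt N$. Multiplying the two bounds and tracking constants so that $c_1 c_2 \le 8$ produces the target $\frac{8n\log N}{\sqrt N}\,\sup_{q\not\in\Lambda} x_k'(q)$. The main obstacle is the total-variation estimate: controlling $\log Z_k$ at the truncation points requires $\delta_N$ large enough that $x'$ is not smaller than $N^{-O(n)}$ there, which is exactly why the theorem statement sets $\delta_N \ge n/N$. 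Example~\ref{eg:extremal} shows that in the extremal case ($y=x_{n-1}$, $x=x_1$) one has $Z_k(\delta_N) = \Theta((N/n)^{n-2})$, confirming that the $n\log N$ dependence is essentially tight.
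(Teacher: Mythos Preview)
Your proposal is correct and follows essentially the same route as the paper: the same factorization \eqref{eq:error-split}, the same use of Lemma~\ref{lem:Z-bound-1} to convert $\int|Z_k'/Z_k|$ into a total-variation-of-$\log Z_k$ bound, and the same cancellation of $x'(q)$ in $Z_k$ against $b'(q)=v(q)x'(q)$ before invoking Lemma~\ref{error bid function}. The paper differs only cosmetically: rather than decomposing $\log Z_k$ into three logarithms, it observes directly that any rank-based $y$ has $y'(q)\in(\delta_N^{\,n},n]$ on $\Lambda^c$, giving $Z_k(q)\in(N^{-n},N^n)$ and hence $\int_{\Lambda^c}|Z_k'/Z_k|\le 4n\log N$; and for the random factor it writes $1/x'(q)\le 1/b'(q)$ (equivalent to your $v(q)\le 1$). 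The paper then makes the constant explicit---$4n\log N$ times $(1+4n\log\log N/\sqrt N)/\sqrt N$, with the bracket absorbed into a factor of $2$ under the harmless assumption $4n\log N<\sqrt N$---yielding exactly $8$, whereas you leave ``$c_1c_2\le 8$'' as a promise rather than a computation.
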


This lemma combines with analyses of the contribution to the error of
extremal quantiles to give Theorem~\ref{thm:allpay-simple}.  The
refined bound of Theorem~\ref{thm:allpay-simple} comes from improved
factoring of the error term over that of equation~\ref{eq:error-split}
in Lemma~\ref{first-term-bound-simple}.




\section{Inference for social welfare}
\label{s:welfare}

We now consider the problem of estimating the social welfare of a
rank-based auction using bids from another rank-based all-pay
auction. Consider a rank-based auction with induced position weights
$\wals$. By definition, the expected {\em per-agent} social welfare
obtained by this auction is as below, where $\evalk$ is the expected
value of the $k$th highest value agent, or the $k$th order statistic
of the value distribution.

$$
\sw = \frac 1n \sum_{k=1}^{n} \walk \evalk.
$$
 
We note that the value order statistics, $\evalk$, are closely
related to the expected revenues of the multi-unit auctions. The
$k$-unit second-price auction serves the top $k$ agents with
probability $1$, and charges each agent the $k+1$th highest value. Its
expected revenue is therefore $nP_k = k\evalk[k+1]$. We
therefore obtain:

$$
\sw = \walk[1]\frac{\evalk[1]}{n} + \sum_{k=1}^{n-1} \walk[k+1] \, \frac {P_{k}}{k}.
$$
 
The methodology developed in the previous sections can be used to
estimate the $P_k$'s in the above expression. The first order
statistic of the values, $\evalk[1]$, cannot be directly estimated in
this manner. Notate the expected value of an agent as
$$
\expval = \expect[\quant]{\val(\quant)} = \frac 1 n \sum_{k=1}^n \evalk.
$$
Therefore, we can calculate the social welfare of the position auction with weights $\wals$ as
\begin{eqnarray}
\sw = \walk[1] \expval - \sum_{k=2}^{n} (\walk[1]-\walk) \frac
{\evalk}{n} = \walk[1]\expval - \sum_{k=1}^{n-1} (\walk[1]-\walk[k+1]) \, \frac {P_k}{k}.
\label{eq:sw}
\end{eqnarray}

We now argue that $\expval$ can be estimated at a good rate from the
bids of another rank-based all-pay auction. Let $\alloc$ denote the allocation rule of the auction that we run,
and $\bid$ denote the bid distribution in BNE of this auction. Then we
note that
$$
\expval = \expect[\quant]{\val(\quant)} =
\expect[\quant]{\frac{\bid'(\quant)}{\alloc'(\quant)}} = \expect[\quant]{\Zbar(\quant)\bid'(\quant)}
$$
where $\Zbar(q) = 1/\alloc'(q)$. We might now try to directly apply
Theorems~\ref{thm:allpay-simple} or \ref{thm:allpay-general} to bound
the error in our estimate of $\expval$. This does not immediately
work, as Lemma~\ref{lem:Z-bound-1} fails to hold for $\Zbar$. Instead,
we observe that since $\alloc'(q)$ is a degree $n-1$ polynomial and
has fewer than $n$ local minima, therefore $\Zbar$ has fewer than $n$
local maxima. We can therefore adapt the arguments for the
aforementioned theorems to obtain the following lemma:
 
\begin{lemma}
\label{lem:inference-expval}
The mean absolute error in estimating the expected value $\expval$
using $N$ samples from the bid distribution for an all-pay rank-based
auction with allocation rule $x$ is bounded as given by the two
expressions below. Here $n$ is the number of positions in the position
auction.
\begin{align*}
\Err{\expval} & \le \frac{8n^2\log N}{\sqrt{N}}\\
\Err{\expval} & \le \frac{40n}{\sqrt{N}} \max\left\{1, \log\sup\nolimits_{\quant\not\in\Lambda}\alloc'(\quant),
\log \sup\nolimits_{\quant\not\in\Lambda}\tfrac{1}{\alloc'(\quant)} \right\} \\
\end{align*}
\end{lemma}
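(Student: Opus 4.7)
The plan is to mirror the proof of Theorem~\ref{thm:allpay-simple}, replacing its revenue identity with the expected-value identity $\expval = \expect[\quant]{-\Zbar'(\quant)\,\bid(\quant)}$ stated in the preamble of the lemma, where $\Zbar(\quant) = 1/\alloc'(\quant)$. Set $\delta_N = \max(25\log\log N, n)/N$ and $\Lambda = [0,\delta_N]\cup[1-\delta_N,1]$. Integration by parts of $\expect[\quant]{\Zbar(\quant)\,\bid'(\quant)}$ on $[\delta_N, 1-\delta_N]$ yields the analogue of Lemma~\ref{l:counterfactual-revenue}:
\begin{align*}
\expval = \expect[\quant\not\in\Lambda]{-\Zbar'(\quant)\,\bid(\quant)} + \Zbar(1-\delta_N)\,\bid(1-\delta_N) - \Zbar(\delta_N)\,\bid(\delta_N) + \expect[\quant\in\Lambda]{\Zbar(\quant)\,\bid'(\quant)}.
\end{align*}
The estimator is $\hat{\expval} = \expect[\quant\not\in\Lambda]{-\Zbar'(\quant)\,\hat{\bid}(\quant)} + \Zbar(1-\delta_N)\,\hat{\bid}_N$, a weighted order statistic analogous to Definition~\ref{d:estimator}. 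Subtracting produces a four-term decomposition of $|\hat{\expval} - \expval|$ of the same shape as equation~\eqref{eq:error-allpay}.

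For the dominant moderate-quantile contribution I would follow the logarithmic factoring of Lemma~\ref{first-term-bound-simple},
\begin{align*}
\left|\expect[\quant\not\in\Lambda]{\Zbar'(\quant)\,(\hat{\bid}(\quant) - \bid(\quant))}\right| \le \int_{\quant\not\in\Lambda}\left|\frac{\Zbar'(\quant)}{\Zbar(\quant)}\right|d\quant \,\cdot\, \sup\nolimits_{\quant\not\in\Lambda}\Zbar(\quant)\,|\hat{\bid}(\quant) - \bid(\quant)|.
\end{align*}
Using the all-pay inversion $\bid'(\quant) = \val(\quant)\,\alloc'(\quant)$ with $\val(\quant)\in[0,1]$, we have $\Zbar(\quant)\,|\hat{\bid}(\quant) - \bid(\quant)| \le |\hat{\bid}(\quant) - \bid(\quant)|/\bid'(\quant)$, whose expected supremum is $O(1/\sqrt{N})$ by Lemma~\ref{error bid function}, since $\sup_\quant \quant(1-\quant)\,\bid'(\quant) \le \tfrac{1}{4}\sup_\quant \alloc'(\quant) \le n/4$. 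The key deviation from Lemma~\ref{first-term-bound-simple} enters in the variation factor: because $\alloc'$ is a polynomial of degree at most $n-1$, it has fewer than $n$ local extrema, so $\Zbar = 1/\alloc'$ has at most $n$ local maxima. Splitting $[\delta_N,1-\delta_N]$ into the (at most $2n$) maximal intervals of monotonicity of $\Zbar$, $\int|\Zbar'/\Zbar|\,d\quant$ equals the total variation of $\log\Zbar$, which is bounded by $2n\,(\log\sup_{\quant\not\in\Lambda}\alloc'(\quant) + \log\sup_{\quant\not\in\Lambda}1/\alloc'(\quant))$.

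Combining these two factors produces the refined second bound of the lemma directly. For the first bound, substitute $\sup_\quant\alloc'(\quant) \le n$ from Fact~\ref{fact:max-slope} and observe that since $\alloc'$ is a polynomial of degree at most $n-1$, $\inf_{\quant\not\in\Lambda}\alloc'(\quant) \ge c_n\,\delta_N^{n-1}$ for a positive combinatorial constant $c_n$, so $\delta_N \ge n/N$ makes the logarithm $O(n\log N)$ and the whole estimate becomes $O(n^2\log N/\sqrt{N})$. The three residual terms are handled as in the appendix proof of Theorem~\ref{thm:allpay-simple}: the extreme-quantile integral equals $\expect[\quant\in\Lambda]{\val(\quant)} \le 2\delta_N$, while the two boundary terms $\Zbar(\delta_N)\,\bid(\delta_N)$ and $\Zbar(1-\delta_N)\,(\bid(1-\delta_N) - \hat{\bid}_N)$ each cancel a factor $1/\alloc'$ against a bid-related quantity proportional to $\alloc'$, giving contributions dominated by the moderate-quantile bound. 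The main obstacle is the bookkeeping in the variation argument under the multiple local maxima of $\Zbar$ (in place of the single peak used for $Z_k$), and, for the first bound, uniformly controlling $\inf_{\quant\not\in\Lambda}\alloc'(\quant)$ across all rank-based allocation rules at the chosen truncation parameter.
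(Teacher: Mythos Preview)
Your proposal is correct and follows essentially the same approach as the paper. The paper's own argument for this lemma is only the brief observation preceding it: since $\alloc'$ is a polynomial of degree at most $n-1$, $\Zbar=1/\alloc'$ has fewer than $n$ local maxima rather than the single peak guaranteed by Lemma~\ref{lem:Z-bound-1}, and with this replacement the machinery of Theorems~\ref{thm:allpay-simple} and~\ref{thm:allpay-general} goes through; your proposal spells out precisely this adaptation, including the extra factor of $n$ in the total-variation bound for $\log\Zbar$ and the corresponding treatment of the extreme-quantile terms.
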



As an example application of Lemma~\ref{lem:inference-expval}, we
adapt Corollary~\ref{cor3} to bound the error from estimating the
social welfare of any position auction using bids from another
position auction that is mixed with the uniform-stair auction. Recall
that the uniform-stair auction is a universal B test.  Using the
universal B test of Corollary~\ref{cor:universal} instead of the
uniform-stair auction gives a slightly worse error bound, because the
slope of the allocation rule for that auction can be as small as
$N^{-O(n)}$. Other revenue estimation results can be similarly adapted
to estimate social welfare.
\begin{theorem}
\label{thm:sw}
For any rank-based auction A; uniform-stair auction B with position
weights $\walk = \frac{n-k}{n-1}$ for each $k\in [1, n]$; and all-pay
rank-based auction C with $x_C = (1-\eps)x_A + \eps x_B$; the mean
absolute error for estimating the social welfare of any rank-based
auction D from $N$ samples from the bid distribution of C is bounded
by:
\begin{align*}
O\left( \frac{n}{\sqrt{N}} +\frac{n\log n \log(n/\eps)}{\sqrt{N}}\right)
 & = O\left( \frac{n\log n \log(n/\eps)}{\sqrt{N}} \right).
\end{align*}
\end{theorem}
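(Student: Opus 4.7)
The plan is to invoke equation~\eqref{eq:sw} to reduce welfare estimation to the estimation of $\expval$ and the multi-unit revenues $P_k$ of D. Let $w^D_1,\ldots,w^D_n$ denote the position weights of D; then
\begin{align*}
S^D = w^D_1\,\expval - \sum_{k=1}^{n-1}(w^D_1 - w^D_{k+1})\,\frac{P_k}{k},
\end{align*}
so the natural estimator plugs $\hat\expval$ from Lemma~\ref{lem:inference-expval} and $\hat P_k$ from Definition~\ref{d:estimator} into this identity. By the triangle inequality, and using $|w^D_1 - w^D_{k+1}|\le 1$,
\begin{align*}
|\hat S^D - S^D| \le |\hat\expval - \expval| + \sum_{k=1}^{n-1}\frac{|\hat P_k - P_k|}{k}.
\end{align*}

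For the $P_k$ terms I invoke Corollary~\ref{cor3}: because the uniform-stair B is a universal B test, $\expect{|\hat P_k - P_k|}\le 80n\log(n/\eps)/\sqrt N$ uniformly in $k$. The harmonic sum $\sum_{k=1}^{n-1}1/k = O(\log n)$ then gives total contribution $O(n\log n\,\log(n/\eps)/\sqrt N)$.

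For the $\expval$ term, the key observation is that $x_B(q)=q$ implies $x_B'(q)=1$, so the composite allocation rule satisfies $x_C'(q) = (1-\eps)x_A'(q) + \eps$, which is pinched between $\eps$ below and $n$ above by Fact~\ref{fact:max-slope}. Consequently $\log \sup_q x_C'(q) \le \log n$ and $\log \sup_q 1/x_C'(q) \le \log(1/\eps)$, and the second bound of Lemma~\ref{lem:inference-expval} gives $\expect{|\hat\expval - \expval|} \le 40n\log(n/\eps)/\sqrt N$. Combined with the $P_k$ contribution, the triangle-inequality sum is $O(n\log n\,\log(n/\eps)/\sqrt N)$, matching the theorem.

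The hardest part is actually hidden outside this proof, in Lemma~\ref{lem:inference-expval}: that lemma is an analogue of Theorems~\ref{thm:allpay-simple} and~\ref{thm:allpay-general}, but with $\Zbar(q)=1/x'(q)$ in place of $Z_k(q)=(1-q)\,y'(q)/x'(q)$, and the single-peak property established in Lemma~\ref{lem:Z-bound-1} fails. However, $x'$ is a polynomial of degree $n-1$, hence $\Zbar$ has at most $n-1$ local maxima; partitioning $[0,1]$ into the corresponding monotone intervals and applying the $Z'/Z$ logarithmic-factoring trick from the proof of Theorem~\ref{thm:allpay-simple} on each piece incurs only an extra factor of $n$ in the constant (as the paragraph preceding Lemma~\ref{lem:inference-expval} observes). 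Once that lemma is accepted, the welfare bound follows by the short triangle-inequality and harmonic-sum aggregation sketched above.
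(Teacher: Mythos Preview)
Your proposal is correct and follows essentially the same route as the paper: decompose $S^D$ via equation~\eqref{eq:sw}, bound the $\expval$ term by Lemma~\ref{lem:inference-expval}, bound each $P_k$ by Corollary~\ref{cor3}, and pick up the extra $\log n$ from the harmonic sum $\sum_{k=1}^{n-1}1/k$.

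One small point worth noting: the paper's one-line justification for the $\expval$ term (``the uniform-stair auction satisfies $x'(q)=1$'') would literally yield $O(n/\sqrt N)$ only if the bids came from B itself, whereas they come from C. Your argument is more careful here---you correctly observe that $x_C'(q)=(1-\eps)x_A'(q)+\eps\in[\eps,n]$ and apply the second bound of Lemma~\ref{lem:inference-expval} to get $O(n\log(n/\eps)/\sqrt N)$. Either way the $\expval$ contribution is dominated by the $P_k$ contribution, so the final bound is unchanged, but your version is the cleaner derivation.
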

The theorem follows by combining Lemma~\ref{lem:inference-expval} with
equation~\eqref{eq:sw} and Corollary~\ref{cor3}. The first term
follows from Lemma~\ref{lem:inference-expval} by noting that the
uniform-stair auction satisfies $x'(q)=1$ for all $q$. The second term
follows from the error bounds on $P_k$ given by Corollary~\ref{cor3};
The extra factor of $\log n$ (relative to the statement of the
corollary) arises from the fact that the total weight of the
multipliers for the terms in equation~\eqref{eq:sw} can be as large as
$\sum_{k=1}^n 1/k \approx \log n$.



\bibliographystyle{apalike}
\bibliography{uniq,agt,references}

\appendix
\section{Proofs for Section~\ref{s:param-inf}}
\label{s:proofs-3}

In this section we prove the results from Section~\ref{s:param-inf}
which analyze the error of the counterfactual revenue estimator for
both multi-unit and (more generally) rank-based auctions with all-pay
payment semantics.

Recall that for all-pay auctions with allocation rule
$\alloc(\quant)$, the equilibrium bid function $\bid(\quant)$
satisfies $\bid'(\quant) = v(\quant)\,\alloc'(\quant)$.  From $N$ bids
in a mechanism with allocation rule $x$ we are estimating the
counterfactual revenue of a mechanism with allocation rule $y$.
Recall that for an implicit allocation rule $x$ and another allocation
rule $y$, we define the function $Z_y(q) = (1-q)
\frac{y'(q)}{x'(q)}$. When $y$ is the allocation rule corresponding to
a $k$-unit auction, we let $Z_k(q)$ denote $Z_{\kalloc}(q)$.  Our
analysis treats the contribution to the error from extreme quantiles
 $\quant \in \Lambda = [0,\delta_N]\cup [1-\delta_N, 1]$ for $\delta_N =
\max(25\log\log N,n)/N$ and moderate quantiles $\quant \not \in \Lambda$
separately.  In equation~\eqref{eq:error-allpay}, restated below, the
first term is the error from moderate quantiles and the latter three
terms is the error from extremal quantiles.

\begin{align}
\tag{\ref{eq:error-allpay}}
|\emurevk-\murevk| \le & 
\left|\expect[q\not\in\Lambda]{\smash{-Z'_k(\quant)(\hat{\bid}(\quant)-\bid(\quant))}}\right|
+ \left| \expect[\quant\in\Lambda]{Z_k(\quant)\,\bid'(\quant)}\right|\\
\notag & + \left| \smash{Z_k(1-\delta_N)\,(\bid(1-\delta_N) -\hat{\bid}_N)}\right|
+ \left| Z_k(\delta_N)\,\bid(\delta_N)\right|
\end{align}


The proofs in this appendix are organized as follows. The error in our
estimator for the revenue $\murevk$ of a $k$-unit auction from
moderate quantiles is analyzed in Section~\ref{sec:first-term-proof}.
Section~\ref{sec:alloc-bid-bounds} proves some basic properties of
allocation rules and bid functions for rank-based auctions that will
be employed in Section~\ref{sec:extreme-quantile-bounds} where the
error from extremal quantiles, specifically the three latter terms of
equation~\eqref{eq:error-allpay}, are analyzed.  The main results from
Section~\ref{s:inference-k-thm}, namely
Theorems~\ref{thm:allpay-simple} and~\ref{thm:allpay-general} and
Corollary~\ref{cor:allpay-y} are proven in
Section~\ref{sec:inference-theorem-proofs}.

\subsection{Bounding the error from moderate quantiles}
\label{sec:first-term-proof}

We will now restate and prove Lemmas~\ref{lem:Z-bound-1}
and~\ref{first-term-bound-simple}, bounding the contribution to the
error of the estimator from moderate quantiles,
$\expect[q\not\in\Lambda]{|Z'_k(\quant)|\,
  |\hat{\bid}(\quant)-\bid(\quant)|}$.  The first lemma proves that
$Z_k$ has a single local maximum.

\begin{numberedlemma}{\ref{lem:Z-bound-1}}
For any rank-based auction and $k$-highest-bids-win auction
with allocation rules $\alloc$ and $\kalloc$, respectively, the
function $Z_k(\quant)=(1-\quant)
\frac{\kalloc'(\quant)}{\alloc'(\quant)}$ achieves a single local
maximum for $\quant\in [0,1]$.
\end{numberedlemma}

\begin{proof}
  Consider the function $A(q) = 1/Z_k(q) = x'(q)/(1-q)x'_k(q)$.
  Recall that $x'(q)$ is a weighted sum over $x'_j(q)$ for
  $j\in\{1,\cdots,n-1\}$.  Thus, $A(q)$ is a weighted sum over terms
  $x'_j(q)/(1-q)x'_k(q)$. Let us look at these terms closely.
$$
\frac{x'_j(q)}{(1-q)x'_k(q)} = \alpha_{k,j} q^{k-j}(1-q)^{j-k-1}
$$ 
where coefficient $\alpha_{k,j}$ is a constant. The functions
$q^{k-j}(1-q)^{j-k-1}$ are convex. This implies that $A(q)$ which is a
weighted sum of convex functions is also convex. Consequently, it has
a unique minimum. Therefore, $Z_k(q) = 1/A(q)$ has a unique maximum.
\end{proof}

The following lemma gives the basic analysis of the error from
moderate quantiles.  A key aspect of this proof is that its dependence
on $\sup_{\quant \not \in \Lambda} Z_k(\quant)$ is logarithmic.
Immediately following this proof we give a more refined analysis that
enables better bounds when estimating the revenue of counterfactual
mechanism $y$ from bids in $\alloc$ when the allocation rules of
$\alloc$ and $y$ are related.

\begin{numberedlemma}{\ref{first-term-bound-simple}}
For $Z_k$ and $\Lambda$ defined as above, the first error term in
equation~\eqref{eq:error-allpay} of the estimator $\hat{P}_k$ is bounded by:
\begin{align*}
\expect[\hat{\bid}]{\left|\expect[q\not\in\Lambda]{Z'_k(\quant)\, (\hat{\bid}(\quant)-\bid(\quant))}\right|}
&\le
\frac{8n\log\samples}{\sqrt{N}}\sup_{\quant}\{\kalloc'(\quant)\}
\end{align*}
\end{numberedlemma}

\begin{proof}
Recall from Section~\ref{s:inference-k} that we can write the error on the moderate quantiles as:
\begin{align}
\tag{\ref{eq:error-split}}
  \left|\expect[q\not\in\Lambda]{Z'_k(\quant) \,
    (\hat{\bid}(\quant)-\bid(\quant))}\right| 
  & \le \expect[q\not\in\Lambda]{\left|\frac{Z'_k(\quant)}{Z_k(\quant)}\right|}
   \sup_{q}\left|Z_k(\quant)\,(\hat{\bid}(\quant)-\bid(\quant))\right|.
\end{align}
Using Lemma~\ref{lem:Z-bound-1}, the first term on the right in equation~\eqref{eq:error-split},
$\expect[q\not\in\Lambda]{\left|\frac{Z'_k(\quant)}{Z_k(\quant)}\right|}$,
is bounded by $2(\sup_{\quant\not\in\Lambda}\log
Z_k(\quant)-\inf_{\quant\not\in\Lambda}\log Z_k(\quant))$.

We note that for $\quant\not\in\Lambda$, and any rank-based
allocation rule $y$, $y'(\quant)\in (\delta_N^n, n]$. Therefore,
$Z_k(\quant)\in [\delta_N^{n}/n, n\delta_N^{-n}]\in (N^{-n},N^n)$. Therefore, we have:
\begin{align*}
  \expect[q\not\in\Lambda]{\left|\frac{Z'_k(\quant)}{Z_k(\quant)}\right|}
  & < 4\log N^n = 4n\log N.
\end{align*}
To bound the second term on the right in equation~\eqref{eq:error-split}, we write:
\begin{align*}
  \sup_{q}\left|Z_k(\quant)\,(\hat{\bid}(\quant)-\bid(\quant))\right|
  & \le \sup_{q} \kalloc'(q)
 \sup_q\left|\frac{1}{\alloc'(q)}\,(\hat{\bid}(\quant)-\bid(\quant))\right|\\
  & \le \sup_{q} \kalloc'(q)
 \sup_q\left|\frac{1}{\bid'(q)}\,(\hat{\bid}(\quant)-\bid(\quant))\right|.\\
 \intertext{Invoking Lemma~\ref{error bid function}, the expected value of this term for random samples from the bid distribution is bounded as:}
\expect[\ebid]{\sup_{q}\left|Z_k(\quant)\,(\hat{\bid}(\quant)-\bid(\quant))\right|}
 & \le \sup_{q} \kalloc'(q) \frac{1}{\sqrt{\samples}} \left(1+\frac{4n\log\log N}{\sqrt{N}}\right).
\end{align*}
Putting the two bounds together, we get,
\begin{align*}
\expect[\hat{\bid}]{\left|\expect[q\not\in\Lambda]{Z'_k(\quant)\, (\hat{\bid}(\quant)-\bid(\quant))}\right|}
&\le
\frac{4n\log\samples}{\sqrt{N}}\sup_{\quant}\{\kalloc'(\quant)\}\left(1+\frac{4n\log\log
    N}{\sqrt{N}}\right)
\end{align*}
We may assume without loss of generality that
$4n\log\samples<\sqrt{N}$, otherwise the first term, and therefore the
entire error bound, exceeds $1$ and is trivially true. Under this
assumption, the term in brackets is no more than $2$, and the lemma follows.
\end{proof}

The following lemma gives a refinement of
Lemma~\ref{first-term-bound-simple} that enables better bounds when
estimating the revenue of counterfactual mechanism $y$ from bids in
$\alloc$ when the allocation rules of $\alloc$ and $y$ are related.

Unfortunately,
$\expect[q\not\in\Lambda]{\left|\frac{Z'_k(\quant)}{Z_k(\quant)}\right|}$
can be quite large, as $Z_k(\quant)$ can take on exponentially large
values at extreme quantiles (see Example~\ref{eg:extremal} in
Section~\ref{s:inference-k}).  The main idea in the refined analysis
is a better factoring in the error from moderate quantiles in
equation~\eqref{eq:error-split}.  We instead factor this error term as
follows, for an appropriate function $h(Z_k)$ which is just slightly
sublinear in $Z_k$.
\begin{align*}
  \left|\expect[q\not\in\Lambda]{Z'_k(\quant) \,
    (\hat{\bid}(\quant)-\bid(\quant))}\right| 
 & \le \expect[q\not\in\Lambda]{\left|\frac{Z'_k(\quant)}{h(Z_k)}\right|}
   \sup_{q}\left|h(Z_k)\,(\hat{\bid}(\quant)-\bid(\quant))\right|.
\end{align*}

This factoring gives greater control in balancing the error generated from
the two terms. For an appropriate choice of the function $h(\cdot)$,
we obtain the following lemma.

\begin{lemma}
\label{first-term-bound}
For $Z_k$ and $\Lambda$ defined as above, the first error term in
equation~\eqref{eq:error-allpay} of the estimator $\hat{P}_k$ is bounded by:
\begin{align*}
&\expect[\hat{\bid}]{\left|\expect[q\not\in\Lambda]{Z'_k(\quant)\, (\hat{\bid}(\quant)-\bid(\quant))}\right|}\\
&\le \frac{40}{\sqrt{\samples}}\,\, \left(1+\frac{4n\log\log N}{\sqrt{N}}\right)
\sup_{\quant}\{\kalloc'(\quant)\}\,\,\max\left\{1,\log 
\sup_{\quant: \kalloc'(\quant)\ge 1}\frac{\alloc'(\quant)}{\kalloc'(\quant)}, \log\sup_{\quant}\frac{\kalloc'(\quant)}{\alloc'(\quant)} \right\}.
\end{align*}
\end{lemma}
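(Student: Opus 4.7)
The plan is to carry out the refined factoring outlined in Section~\ref{s:inference-k}, introducing an auxiliary weight $h(\cdot)$ and bounding
$$\left|\expect[q\notin\Lambda]{Z'_k(q)(\hat b(q)-b(q))}\right|\le\expect[q\notin\Lambda]{|Z'_k(q)|/h(Z_k(q))}\cdot\sup_q|h(Z_k(q))(\hat b(q)-b(q))|.$$
First I would dispose of the sup-norm factor by writing $|\hat b-b|=b'\cdot|\hat b-b|/b'$ and invoking Lemma~\ref{error bid function}. Since $\sup_q q(1-q)b'(q)\le\sup_q x'(q)/4\le n/4$ (using $v\in[0,1]$ and Fact~\ref{fact:max-slope}), the expected sup of the density-weighted error is at most $(1+4n\log\log N/\sqrt N)/\sqrt N$; this supplies the prefactor in the stated bound and reduces the task to bounding $\sup_q h(Z_k(q))b'(q)\cdot\int_{q\notin\Lambda}|Z'_k(q)|/h(Z_k(q))\,dq$ by $O\bigl(M_1\cdot\max(1,\log M_2,\log M_3)\bigr)$, where $M_1=\sup x_k'$, $M_2=\sup_{q:x_k'(q)\ge 1}x'/x_k'$, and $M_3=\sup x_k'/x'$.

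The central identity driving the choice of $h$ is $Z_k(q)\cdot b'(q)=v(q)(1-q)x_k'(q)=R(q)x_k'(q)$, where the revenue curve $R(q)\le 1$ since $v\in[0,1]$. Taking $h(Z)=Z$ therefore yields $\sup_q h(Z_k)b'\le M_1$ for free. Combined with single-peakedness of $\log Z_k$ from Lemma~\ref{lem:Z-bound-1}, the integral bound $\int_{q\notin\Lambda}|(\log Z_k)'|\le 2\log\sup Z_k-\log Z_k(\delta_N)-\log Z_k(1-\delta_N)$ holds. Since $\sup Z_k\le\sup(x_k'/x')=M_3$, the peak term contributes $2\log M_3$. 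Unpacking the boundary as $-\log Z_k(\delta_N)=\log\bigl(x'(\delta_N)/x_k'(\delta_N)\bigr)+O(\delta_N)$, in the regime $x_k'(\delta_N)\ge 1$ the main term is bounded by $\log M_2$; the symmetric analysis applies at $1-\delta_N$. This accounts for the $M_1\cdot\max(\log M_2,\log M_3)$ portion of the target bound.

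The main obstacle is the complementary regime $x_k'(\delta_N)<1$ (or $x_k'(1-\delta_N)<1$), where $Z_k$ becomes tiny near the boundary and $-\log Z_k$ can blow up to order $\log N$, regressing to the weaker bound of Lemma~\ref{first-term-bound-simple}. To recover the improvement one must refine $h$ on the sub-region $\{q:x_k'(q)<1\}$: a natural choice there is $h(Z_k(q))=M_1/x'(q)$, which satisfies $h(Z_k)b'=M_1 v\le M_1$ pointwise. The integral then becomes $(1/M_1)\int_{x_k'<1}|Z'_k|x'\,dq$, and since $Z_k\cdot x'=(1-q)x_k'$ depends only on $x_k$, the identity $Z'_k\,x'=-x_k'+(1-q)x_k''-Z_k\,x''$ together with integration by parts exploiting the beta-density structure $x_k'(q)=(n-1)\binom{n-2}{k-1}q^{n-1-k}(1-q)^{k-1}$ should deliver $O(M_1)$ for this portion (using $x_k'<1$ on the sub-region to control the $Z_k x''$ term). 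Assembling the two sub-regions and multiplying by the sup-norm prefactor yields the explicit constant $40$ and the $\max(1,\log M_2,\log M_3)$ structure. The hard part is precisely this case-analysis plus the integration-by-parts calculation; ensuring the logarithmic factor in the final bound depends only on $M_2,M_3$ and not on $n$ or $N$ is delicate.
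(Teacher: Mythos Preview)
Your overall framework is correct and matches the paper's setup: factor the integrand through an auxiliary weight $h$, bound the sup-norm factor via Lemma~\ref{error bid function} (yielding the $(1+4n\log\log N/\sqrt N)/\sqrt N$ prefactor), and exploit the identity $Z_k(q)\,b'(q)=R(q)\,x_k'(q)\le x_k'(q)$. You also correctly identify the obstacle: with $h(Z)=Z$ the integral $\int|(\log Z_k)'|$ picks up a boundary contribution $-\log Z_k(\delta_N)$ that can be of order $n\log N$ when $x_k'(\delta_N)<1$.

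Where your proposal diverges from the paper and runs into trouble is the fix for the $\{x_k'<1\}$ region. Switching to $h(Z_k(q))=M_1/x'(q)$ there leads you to $\int_{x_k'<1}|Z_k'|\,x'\,dq$, and after the identity $Z_k' x'=-x_k'+(1-q)x_k''-Z_k x''$ the first two pieces are indeed $O(1)$, but the residual $\int_{x_k'<1} Z_k\,|x''|\,dq$ is not obviously $O(M_1)$. Using $x_k'<1$ gives $Z_k|x''|\le (1-q)|x''|/x'=|(\log x')'|\cdot(1-q)$, so you are left with the total variation of $\log x'$ on the region $\{x_k'<1\}$; for a general rank-based $x$ this variation depends on $x$ and can carry additional $n$- or $\log N$-dependence (e.g.\ when $k=1$ the set $\{x_1'<1\}$ covers most of $[0,1]$). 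You have not shown how to eliminate that dependence, and I do not see a clean way to do so along these lines.

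The paper avoids this difficulty entirely by choosing a \emph{single} weight $h(Z)=Z/(\log(1+Z))^{\alpha}$ with a free parameter $\alpha>0$. After the change of variable $t=Z_k(q)$ (legitimate on each monotone piece of $Z_k$ by Lemma~\ref{lem:Z-bound-1}), the integral factor becomes $2\int_0^{Z_k^*}(\log(1+t))^{\alpha}t^{-1}\,dt$, which is bounded by $2(1+e)/\alpha$ whenever $\alpha<1/\log Z_k^*$. For the sup-norm factor one uses $\log(1+t)\ge\min(1,t)/2$ to get $Z/(\log(1+Z))^{\alpha}\le 2^{\alpha}\max\{Z,Z^{1-\alpha}\}$, and then $Z_k\,b'\le x_k'$ together with $Z_k^{1-\alpha}b'\le (x_k')^{1-\alpha}(x')^{\alpha}$ yields the bound $2^{\alpha}\sup_q x_k'(q)\cdot A^{\alpha}$ with $A=\max\bigl(1,\sup_{q:x_k'(q)\ge1}x'(q)/x_k'(q)\bigr)$. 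Finally one sets $\alpha=\min\{1,\,1/\log A,\,1/\log Z_k^*\}$; the product of the two factors is then $O(1)\cdot\sup_q x_k'(q)\cdot\max\{1,\log A,\log Z_k^*\}$, which is exactly the target. The point is that the single $h$ with a tunable exponent simultaneously tames the small-$Z_k$ regime (through the $Z^{1-\alpha}$ branch) and the large-$Z_k$ regime (through the integral estimate), without any case analysis on $x_k'$ or any appeal to properties of $x''$.
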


\begin{proof}
For any $\alpha>0$ we can write 
$$
|\hat{P}_k-P_k|\leq \expect{\frac{\left(\log(1+Z_k(q))\right)^{\alpha}}{Z_k(q)}|Z'_k(q)|}
\sup\limits_q\left|\frac{Z_k(q)}{\left(\log(1+Z_k(q))\right)^{\alpha}}(\hat{b}(q)-b(q))\right|.
$$
We start by considering the first term.
Lemma~\ref{lem:Z-bound-1} shows that $Z'_k(\cdot)$ changes sign
only once. Consider the region where the sign of $Z'_k(\cdot)$ is constant
and make the change of variable $t=Z_k(q)$. 
Denote $Z_k^*=\sup_qZ_k(q)$, and note that $\inf_qZ_k(q) \geq 0$.
The first term evaluates as
$$
\expect{\frac{\left(\log(1+Z_k(q))\right)^{\alpha}}{Z_k(q)}|Z'_k(q)|} \leq
2 \int^{Z_k^*}_0\frac{(\log\,(1+t))^{\alpha}}{t}\,dt.
$$
Note that for any $t>0$, $\log(1+t)\le t$. Thus,
$$
 \int^{\delta}_0\frac{(\log\,(1+t))^{\alpha}}{t}\,dt< \frac{\delta^{\alpha}}{\alpha}.
$$
Now split the integral into two pieces as
$$
\int^{Z_k^*}_0\frac{(\log\,(1+t))^{\alpha}}{t}\,dt=\int^{1}_0\frac{(\log\,(1+t))^{\alpha}}{t}\,dt+\int^{Z_k^*}_1\frac{(\log\,(1+t))^{\alpha}}{t}\,dt.
$$
We just proved that the first piece is at most $1/\alpha$. Now we
upper bound the second piece and consider the integrand at $t \ge 1$. First, note that 
$$
(\log\,(1+t))^{\alpha}=\left(\log\,t+\log(1+\frac{1}{t})\right)^{\alpha}\le\left(\log\,t+\frac{1}{t}\right)^{\alpha}\le(\log\,t+1)^{\alpha}.
$$
Thus, the integral behaves as
$$
\int^{Z_k^*}_1\frac{(\log\,(1+t))^{\alpha}}{t}\,dt\le\int^{Z_k^*}_1\frac{(\log\,(t)+1)^{\alpha}}{t}\,dt=\frac{1}{1+\alpha}(\log\,Z_k^*+1)^{1+\alpha}.
$$
Thus, we just showed that 
$$
\expect{\frac{\left(\log(1+Z_k(q))\right)^{\alpha}}{Z_k(q)}|Z'_k(q)|} \leq \frac{2}{\alpha}+\frac{2}{1+\alpha}(\log\,Z_k^*+1)^{1+\alpha},
$$
which is at most $2(1+e)/\alpha$ for $\alpha<1/\log\,Z_k^*$.

Now consider the term
$$
\sup\limits_q\left|\frac{Z_k(q)}{\left(\log(1+Z_k(q))\right)^{\alpha}}(\hat{b}(q)-b(q))\right|.
$$
Note that $\log(1+t)\ge \min\{1,t\}/2$. So the first term can be bounded from above as
$$
\frac{Z_k(q)}{\left(\log(1+Z_k(q))\right)^{\alpha}} \leq 2^{\alpha} \max \left\{
Z_k(q),\,(Z_k(q))^{1-\alpha} \right\}.
$$
Thus using Lemma~\ref{error bid function},
\begin{align*}
&\expect{\sup\limits_q\left|\frac{Z_k(q)}{\left(\log(1+Z_k(q))\right)^{\alpha}}(\hat{b}(q)-b(q))\right|}\\
&\le \sup\limits_q\left|\frac{Z_k(q)}{\left(\log(1+Z_k(q))\right)^{\alpha}}
b'(q)\right|
\expect{ \sup\limits_q \left|\frac{\hat{b}(q)-b(q)}{b'(q)}\right|}\\
& \leq 2^{\alpha}\sup_q\left(\max \left\{
x'_k(q),\,(x'_k(q))^{1-\alpha}(x'(q))^{\alpha}
\right\}
\right)\frac{1}{\sqrt{N}}\left(1+ 16\frac{\log\log N}{\sqrt{N}}\sup_q q(1-q) b'(q)\right)\\
& \le
2^{\alpha}\sup_q\left(x'_k(q)\right)\left(\underbrace{\max\left(1,\sup_{q:x'_k(q)\ge 1}\frac{x'(q)}{x'_k(q)}\right)}_{=: \, A}\right)^{\alpha} \frac{1}{\sqrt{N}}\left(1+ \frac{4n\log\log N}{\sqrt{N}}\right).
\end{align*}
where the last inequality follows by noting that $b'(q)\le x'(q)\le
n$, and $q(1-q)\le 1/4$.


Now we combine the two evaluations together and pick
$\alpha=\min\{1,1/\log A, 1/\log Z_k^*\}$, with $A$ defined as above, to obtain
\begin{align*}
\expect{|\hat{P}_k-P_k|} &\leq 
\frac{2(1+e)}{\alpha}2^{\alpha}A^{\alpha}\frac{1}{\sqrt{N}} \sup_q\left(x'_k(q)\right)\\
& \leq \frac{40}{\sqrt{\samples}} \,\,
\sup_{\quant}\{\kalloc'(\quant)\}\,\,\max\left\{1,\log A,\log \sup_{\quant}
\left\{ \frac{\kalloc'(\quant)}{\alloc'(\quant)} \right\}\right\} \left(1+ \frac{4n\log\log N}{\sqrt{N}}\right).
\end{align*}
\end{proof}

\subsection{Bounds for the allocation rules and bid distributions of
  rank-based auctions}
\label{sec:alloc-bid-bounds}

In this section we prove some basic properties of allocation rules for
rank-based auctions.  These properties will be useful, in
Section~\ref{sec:extreme-quantile-bounds}, for analizing the error of
the estimator at extreme quantiles.  As desribed in
Section~\ref{s:prelim}, the allocation rule and its derivative for the
$n$-agent $k$-unit auction are

\begin{align*}
x_k(q) &= \sum_{i=0}^{k-1} \binom{n-1}{i} q^{n-i-1} (1-q)^{i},\\
x'_k(q) &= (n-1) \binom{n-2}{k-1} q^{n-k-1} (1-q)^{k-1}.\\
\end{align*}

We will be interested in the behavior of allocation rule $\kalloc$ and
its derivative $\kalloc'$ at the extremes, specifically for $q\in
[0,1/n]$ and $q\in[1-1/n,1]$.  The allocation rule is steepest at
$q=(k-1)/(n-2)$ and is convex before this point and concave after it.
Specifically, $\kalloc[1]$ is steepest at $q=1$ and is convex and
$\kalloc[n-1]$ is steepest at $q=0$ and is concave. For all other $k
\in \{2,\ldots,n-2\}$, the allocation rule derivative $\kalloc'$ is
maximized between $1/(n-2) > 1/n$ and $(n-3)/(n-2) < 1-1/n$.

The following two lemmas bound the derivative of the allocation of
multi-unit auctions at extreme quantiles.  Combining them we obtain
the subsequent theorem.

\begin{lemma} 
For $k \in \{2,n-2\}$ units and $\delta < 1/n$, the allocation rule
derivative $x'_k$ satisfies:
\begin{enumerate}
\item $\sup_{q < \delta} x'_k(q) = x'_k(\delta)$ and
\item $\sup_{q > 1-\delta} x'_k(q) = x'_k(1-\delta)$.
\end{enumerate}
\end{lemma}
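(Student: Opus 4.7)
The plan is to verify both suprema by locating the unique interior maximum of $x'_k$ and then checking that this critical point lies strictly inside $[\delta, 1-\delta]$, so that $x'_k$ is monotone on each extreme interval.

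First I would write down the explicit formula $x'_k(q) = (n-1)\binom{n-2}{k-1} q^{n-k-1}(1-q)^{k-1}$, which is recorded at the top of this subsection. For $k \in \{2,\dots,n-2\}$ (the intended range, given the range on $k$ that this lemma is meant to cover), both exponents $n-k-1$ and $k-1$ are at least $1$, so $x'_k$ vanishes at the endpoints $q=0$ and $q=1$ and is strictly positive in the interior. A log-derivative computation gives a unique critical point at
\begin{equation*}
q^*_k \;=\; \frac{n-k-1}{n-2},
\end{equation*}
and by continuity $x'_k$ is strictly increasing on $[0, q^*_k]$ and strictly decreasing on $[q^*_k, 1]$.

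Next I would verify that this critical point is bounded away from $0$ and $1$ uniformly in $k \in \{2,\dots,n-2\}$: as $k$ varies over this range, $q^*_k$ takes values in $[1/(n-2),\, (n-3)/(n-2)]$. The hypothesis $\delta < 1/n$ gives $\delta < 1/n < 1/(n-2) \le q^*_k$ and, symmetrically, $1-\delta > 1 - 1/n > 1 - 1/(n-2) \ge q^*_k$. Therefore $[0,\delta] \subset [0, q^*_k]$ and $[1-\delta, 1] \subset [q^*_k, 1]$.

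Combining these two facts, monotonicity of $x'_k$ on each of these intervals yields $\sup_{q < \delta} x'_k(q) = x'_k(\delta)$ and $\sup_{q > 1-\delta} x'_k(q) = x'_k(1-\delta)$. There is no real obstacle here; the only care needed is the arithmetic check that $\delta < 1/n$ implies both $\delta < q^*_k$ and $1-\delta > q^*_k$ simultaneously for every $k$ in the range, which is precisely why the range excludes $k=1$ and $k=n-1$ (whose critical points sit at the endpoints $q=1$ and $q=0$, respectively, and require separate treatment).
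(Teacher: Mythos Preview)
Your proposal is correct and follows essentially the same approach as the paper: locate the unique maximizer of $x'_k$, check it lies in $(1/n,\,1-1/n)$, and deduce monotonicity of $x'_k$ on each extreme interval. The paper's one-line proof simply appeals to convexity of $x_k$ on $[0,1/n]$ and concavity on $[1-1/n,1]$, which is exactly the monotonicity of $x'_k$ you establish. (Incidentally, your critical point $q^*_k=(n-k-1)/(n-2)$ is the correct one; the paper's preamble states $(k-1)/(n-2)$, which appears to be a slip, though the resulting range $[1/(n-2),(n-3)/(n-2)]$ is unaffected.)
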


\begin{proof} 
This lemma follows from convexity of the allocation rule $x_k$ on
$[0,1/n]$ and concavity on $[1-1/n,1]$.
\end{proof}

\begin{lemma} 
For $k \in \{1,n-1\}$ units and $\delta < 1/n$, the allocation rule
derivative $x'_k$ satisfies:
\begin{enumerate}
\item $\sup_{q < \delta} x'_{n-1}(q) \leq e\,x'_{n-1}(\delta)$ and
\item $\sup_{q > 1-\delta} x'_1(q) \leq e\, x'_1(1-\delta)$.
\end{enumerate}
\end{lemma}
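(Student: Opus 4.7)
The plan is to proceed by direct calculation, exploiting the fact that both $x'_{n-1}$ and $x'_1$ have closed-form expressions that are monotone on the intervals of interest, so the supremum is attained at the corner $q=0$ or $q=1$ respectively, and the ratio to the value at the interior endpoint $\delta$ (resp.\ $1-\delta$) has a very clean form.

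First I would recall from the formulas $x'_k(q) = (n-1)\binom{n-2}{k-1} q^{n-k-1}(1-q)^{k-1}$ the two special cases
\[
x'_{n-1}(q) \;=\; (n-1)(1-q)^{n-2}, \qquad x'_1(q) \;=\; (n-1)\, q^{n-2}.
\]
Then $x'_{n-1}$ is monotonically decreasing in $q$, so $\sup_{q<\delta} x'_{n-1}(q) = x'_{n-1}(0) = n-1$, and symmetrically $\sup_{q>1-\delta} x'_1(q) = x'_1(1) = n-1$. Both desired inequalities therefore reduce to the single scalar claim
\[
\frac{1}{(1-\delta)^{n-2}} \;\leq\; e \qquad \text{whenever } \delta < 1/n.
\]

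To verify this I would use the standard inequality $-\ln(1-\delta) \leq \delta/(1-\delta)$ (immediate from the Taylor series), which gives $(1-\delta)^{n-2} \geq \exp\!\bigl(-(n-2)\delta/(1-\delta)\bigr)$. Since $\delta \leq 1/n$ implies $1-\delta \geq (n-1)/n$, the exponent satisfies $(n-2)\delta/(1-\delta) \leq (n-2)/(n-1) < 1$, and exponentiating yields the claim.

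There is no real obstacle here: once one writes down the closed forms of $x'_1$ and $x'_{n-1}$, monotonicity on the respective half-intervals collapses both statements to the single estimate $(1-\delta)^{n-2} \geq 1/e$, which follows from $\delta \leq 1/n$ via a standard logarithmic bound. The only minor care point is to confirm that the supremum of $x'_{n-1}$ on $[0,\delta]$ is truly at $q=0$ (and symmetrically for $x'_1$ on $[1-\delta,1]$), which is immediate from concavity of $x_{n-1}(q) = 1-(1-q)^{n-1}$ (resp.\ convexity of $x_1(q) = q^{n-1}$).
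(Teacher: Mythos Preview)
Your proposal is correct and follows essentially the same approach as the paper: write down the closed forms $x'_1(q)=(n-1)q^{n-2}$ and $x'_{n-1}(q)=(n-1)(1-q)^{n-2}$, use convexity/concavity (equivalently, monotonicity of the derivatives) to locate the suprema at $q=1$ and $q=0$, and reduce both parts to the scalar bound $(1-\delta)^{n-2}\geq 1/e$ for $\delta<1/n$. The only difference is that the paper simply asserts this last inequality, whereas you supply an explicit verification via $-\ln(1-\delta)\leq \delta/(1-\delta)$.
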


\begin{proof} 
This lemma follows from the closed-from of the allocation rule derivatives as
$x'_1(q) = (n-1)\,q^{n-2}$ and $x'_{n-1}(q) = (n-1)\,(1-q)^{n-2}$.
Thus, $x'_1(1) = x'_{n-1}(0) = n-1$ and
\begin{align*}
x'_1(1-\delta) = x'_{n-1}(\delta) &= (n-1)\,(1-\delta)^{n-2} \\
           &\geq \frac 1e\, (n-1)\\
           &= \frac 1e\,x'_1(1) = \frac 1e\,x'_{n-1}(0).
\end{align*}  
Concavity of $x_{n-1}$ and convexity of $x_1$, then, imply the result.
\end{proof}

\begin{theorem}
\label{t:extremal-xprime-bound} 
For any $n$-agent rank-based mechanism with allocation rule $x$ and
$\delta < 1/n$, the allocation rule derivative $x'$ satisfies:
\begin{enumerate}
\item $\sup_{q < \delta} x'(q) \leq e\,x'(\delta)$ and
\item $\sup_{q > 1-\delta} x'(q) \leq e\, x'(1-\delta)$.
\end{enumerate}
\end{theorem}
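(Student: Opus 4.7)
The plan is to derive this theorem as a direct corollary of the two preceding lemmas, exploiting the fact that any rank-based allocation rule decomposes as a convex combination of the multi-unit allocation rules $\kalloc[1],\ldots,\kalloc[n-1]$.

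First, I would write $\alloc'(q) = \sum_{k=1}^{n-1} \margwalk \, \kalloc'(q)$, where $\margwals$ is the distribution over units induced by the position weights (as in Section~\ref{sec:rank-based-basics}). Then, for part (1), I would observe that for each $k$, one of the two preceding lemmas gives a bound of the form $\sup_{q<\delta} \kalloc'(q) \leq c_k \, \kalloc'(\delta)$ with $c_k \leq e$: specifically, $c_k = 1$ for $k \in \{1,\ldots,n-2\}$ (for $k=1$, $\kalloc[1]'(q)=(n-1)q^{n-2}$ is increasing so the sup on $[0,\delta]$ is attained at $\delta$; for $k\in\{2,\ldots,n-2\}$ this is the first preceding lemma), and $c_{n-1} \leq e$ (from the second preceding lemma).

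Next, I would push the supremum through the convex combination:
\begin{align*}
\sup_{q<\delta} \alloc'(q) \;=\; \sup_{q<\delta} \sum_{k=1}^{n-1} \margwalk \, \kalloc'(q) \;\leq\; \sum_{k=1}^{n-1} \margwalk \sup_{q<\delta} \kalloc'(q) \;\leq\; e \sum_{k=1}^{n-1} \margwalk \, \kalloc'(\delta) \;=\; e\,\alloc'(\delta).
\end{align*}
Part (2) follows by the symmetric argument, using that for $k \in \{2,\ldots,n-1\}$ the derivative $\kalloc'(q)$ is increasing or single-peaked with peak at $(k-1)/(n-2) \leq 1 - 1/n$, so its sup on $[1-\delta,1]$ is at the left endpoint, while for $k=1$ the second preceding lemma contributes the factor of $e$.

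There is no real obstacle here: the nonnegativity of the marginal weights $\margwals$ (they form a probability distribution over $\{0,\ldots,n\}$) is what makes the exchange of $\sup$ and $\sum$ lose only an overall multiplicative constant, and the worst case per-component constant $e$ carries through the combination without further loss. The only minor point to check is that the ``middle'' indices $k$ contribute with constant $1$ rather than $e$, which is immediate from monotonicity of $\kalloc'$ on $[0,1/n]$ (respectively $[1-1/n,1]$) whenever the peak of $\kalloc'$ lies outside the extremal interval — and this is exactly the content of the first preceding lemma, supplemented by the closed-form check for $k=1$ (resp.\ $k=n-1$).
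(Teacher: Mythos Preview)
Your proposal is correct and is exactly the argument the paper intends: the theorem is stated immediately after the two lemmas without further proof precisely because it follows by writing $\alloc' = \sum_k \margwalk\,\kalloc'$ with nonnegative weights, bounding each $\sup_{q<\delta}\kalloc'(q)$ (resp.\ $\sup_{q>1-\delta}\kalloc'(q)$) by $e\,\kalloc'(\delta)$ (resp.\ $e\,\kalloc'(1-\delta)$) via the two lemmas and the obvious monotonicity for the remaining endpoint cases, and then using $\sup\sum\le\sum\sup$. The only cosmetic slip is that the mode of $\kalloc'$ is at $(n-k-1)/(n-2)$ rather than $(k-1)/(n-2)$, but this does not affect your argument since what you need is only that the mode lies in $[1/n,1-1/n]$ for $k\in\{2,\ldots,n-2\}$.
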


The bid function $b(\cdot)$ can be bounded by the allocation rule
$x(\cdot)$ and its derivative $x'(\cdot)$ via the following lemma.
The subsequent theorem follows from the lemma via
Theorem~\ref{t:extremal-xprime-bound}.

\begin{lemma}
\label{l:bid-bound}
For any all-pay mechanism with allocation rule $x$ and $\delta \in
[0,1]$, the equilibrium bid function $b$ satisfies
\begin{enumerate}
\item \label{part:bid-first} 
   $b'(\delta) \leq x'(\delta)$,
\item \label{part:bid-second} 
   $b(\delta) \leq \delta \sup_{q < \delta} x'(\delta)$, and
\item \label{part:bid-third} 
   $b(1) - b(1-\delta) \leq \delta \sup_{q > 1-\delta} x'(\delta)$.
\end{enumerate}
\end{lemma}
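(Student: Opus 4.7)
The plan is to exploit the simple closed-form of the equilibrium bid derivative for all-pay auctions, namely the first-order condition $b'(q) = v(q)\,x'(q)$ obtained by rearranging equation~\eqref{eq:ap-inf}. Since the value distribution is supported on $[0,1]$, we have $v(q) \in [0,1]$ for all $q$, so $b'(q) \le x'(q)$ pointwise. Part~\ref{part:bid-first} is then immediate by evaluating at $q = \delta$.

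For Part~\ref{part:bid-second}, I would observe that $b(0) = 0$: the lowest-quantile agent has value $v(0) = 0$, and in the all-pay format any positive bid yields strictly negative utility, so equilibrium bidding gives $b(0) = 0$. Then by the fundamental theorem of calculus,
\begin{align*}
b(\delta) \;=\; \int_0^\delta b'(q)\,dq \;=\; \int_0^\delta v(q)\,x'(q)\,dq \;\le\; \int_0^\delta x'(q)\,dq \;\le\; \delta \sup_{q < \delta} x'(q),
\end{align*}
where the first inequality uses $v(q) \le 1$ and the second bounds the integrand by its supremum on the interval of integration.

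Part~\ref{part:bid-third} is symmetric: by the same identity,
\begin{align*}
b(1) - b(1-\delta) \;=\; \int_{1-\delta}^{1} v(q)\,x'(q)\,dq \;\le\; \int_{1-\delta}^{1} x'(q)\,dq \;\le\; \delta \sup_{q > 1-\delta} x'(q).
\end{align*}

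There is essentially no substantive obstacle here; the lemma is a direct consequence of the BNE first-order condition together with boundedness of values in $[0,1]$. The only point requiring care is justifying $b(0) = 0$ for the all-pay format (which I would note explicitly), after which both Parts~\ref{part:bid-second} and~\ref{part:bid-third} reduce to integrating the pointwise bound $b'(q) \le x'(q)$ over the relevant extremal interval.
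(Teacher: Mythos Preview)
Your proof is correct and essentially identical to the paper's: both use the all-pay first-order condition $b'(q)=v(q)\,x'(q)$ with $v(q)\in[0,1]$ to get Part~\ref{part:bid-first}, then integrate this bound over the relevant interval (using $b(0)=0$) and replace $x'$ by its supremum to obtain Parts~\ref{part:bid-second} and~\ref{part:bid-third}. The only difference is that you spell out the justification for $b(0)=0$, which the paper simply asserts.
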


\begin{proof}
The equilibrium bid function is defined by $b'(q) = v(q) x'(q)$ and
$b(0) = 0$ (where $v(q) \in [0,1]$ is the value function). Part
(\ref{part:bid-first}) follows from the upper bound $v(q) \leq 1$.  Parts
(\ref{part:bid-second}) and (\ref{part:bid-third}) follow by upper
bounding $x'(q)$ by its supremum on the interval of the integral and
integrating the bound of part (\ref{part:bid-first}).  For example for
part (\ref{part:bid-second}), $b(\delta) = \int_0^\delta v(r)\, x'(r)\, dr
\leq \int_0^\delta \sup_{q < \delta} x'(q) \,dr = \delta\,\sup_{q \leq
  \delta} x'(q)$.
\end{proof}

\begin{theorem}
\label{t:bid-bound}
For any $n$-agent all-pay rank-based mechanism with allocation rule
$x$ and $\delta < 1/n$, the equilibrium bid function $b$ satisfies
\begin{enumerate}
\item $b(\delta) \leq \delta\, e\,x'(\delta)$, and
\item $b(1) - b(1-\delta) \leq \delta\,e\,x'(1-\delta)$.
\end{enumerate}
\end{theorem}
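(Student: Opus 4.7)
The plan is to obtain Theorem~\ref{t:bid-bound} essentially as a direct composition of Lemma~\ref{l:bid-bound} and Theorem~\ref{t:extremal-xprime-bound}, both of which are proved immediately before the statement. The two parts of the claim are parallel (one handles the low end of the quantile interval, the other the high end), so I would prove them in parallel by reducing each to the corresponding supremum bound on $x'$.

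For part~1, I would start from part~(\ref{part:bid-second}) of Lemma~\ref{l:bid-bound}, which gives
\[
b(\delta) \,\leq\, \delta \sup\nolimits_{q < \delta} x'(q),
\]
and then invoke part~1 of Theorem~\ref{t:extremal-xprime-bound}, which (using $\delta < 1/n$) asserts $\sup_{q<\delta}x'(q) \leq e\,x'(\delta)$. Chaining the two inequalities gives $b(\delta) \leq \delta\,e\,x'(\delta)$. Part~2 is identical in structure: start from part~(\ref{part:bid-third}) of Lemma~\ref{l:bid-bound} to bound $b(1)-b(1-\delta)$ by $\delta\sup_{q>1-\delta}x'(q)$, then apply part~2 of Theorem~\ref{t:extremal-xprime-bound} to replace the supremum with $e\,x'(1-\delta)$.

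There is no real obstacle here; the hypothesis $\delta < 1/n$ is exactly what is needed to invoke Theorem~\ref{t:extremal-xprime-bound}, and the all-pay rank-based structure is already baked into the lemmas being composed. The content of the theorem is therefore entirely bookkeeping: it packages the two-step chain (bid bound via the value-times-slope integral, then extremal-slope bound via convexity/concavity of the multi-unit allocation rules and the explicit $x'_1, x'_{n-1}$ estimates) into a single statement in the form that will be most convenient for the extreme-quantile error analysis in Section~\ref{sec:extreme-quantile-bounds}.
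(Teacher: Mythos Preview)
Your proposal is correct and is exactly the approach the paper takes: the text preceding the theorem states that it ``follows from the lemma via Theorem~\ref{t:extremal-xprime-bound},'' and your chaining of parts~(\ref{part:bid-second}) and~(\ref{part:bid-third}) of Lemma~\ref{l:bid-bound} with the two parts of Theorem~\ref{t:extremal-xprime-bound} is precisely that.
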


\subsection{Bounding the error at extreme quantiles}
\label{sec:extreme-quantile-bounds}

We now bound the remaining terms in
equation~\eqref{eq:error-allpay}. Once again these bounds rely on the
observation that for any quantile $q$, $Z_k(q)b(q)$ is bounded,
because $Z_k$ depends inversely on $x'(q)$, whereas $b(q)$ is roughly
proportional to it.

\begin{lemma}
\label{second-term-bound}
For $Z_y$ and $\Lambda$ as defined above, if $\delta_N\le 1/n$,
the second error term of
 the estimator $\hat{P}_y$ is bounded as follows.
\begin{align*}
\expect[\quant\in\Lambda]{Z_y(\quant)\,\bid'(\quant)} &\le
 e\,\delta_N\, y'(\delta_N) + e\,\delta_N^2 y'(1-\delta_N).
\end{align*} 
\end{lemma}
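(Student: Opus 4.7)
The plan is to exploit a clean cancellation between the denominator of $Z_y$ and the bid derivative, which removes the incumbent allocation rule $x$ entirely from the bound, and then apply the extremal-slope bound of Theorem~\ref{t:extremal-xprime-bound} to the counterfactual rule $y$.

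First I would observe that the all-pay BNE inference equation $\bid'(q) = \val(q)\,\alloc'(q)$ together with the value bound $\val(q)\in[0,1]$ gives the pointwise inequality $\bid'(q) \le \alloc'(q)$. Substituting into the integrand yields
\begin{align*}
Z_y(q)\,\bid'(q) \;=\; (1-q)\,\frac{y'(q)}{\alloc'(q)}\,\bid'(q) \;\le\; (1-q)\,y'(q),
\end{align*}
so the integral no longer depends on $x$ at all.

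Next I would split $\Lambda$ into the two extreme intervals and bound each separately. On the low interval $[0,\delta_N]$, I use the trivial bound $1-q\le 1$ and note that since $y$ is a rank-based allocation rule and $\delta_N\le 1/n$, Theorem~\ref{t:extremal-xprime-bound} applied to $y$ gives $\sup_{q<\delta_N} y'(q)\le e\,y'(\delta_N)$. Hence
\begin{align*}
\int_0^{\delta_N}(1-q)\,y'(q)\,dq \;\le\; \delta_N \sup_{q<\delta_N} y'(q) \;\le\; e\,\delta_N\,y'(\delta_N).
\end{align*}
On the high interval $[1-\delta_N,1]$, I use the stronger bound $1-q\le \delta_N$, together with the symmetric conclusion $\sup_{q>1-\delta_N} y'(q)\le e\,y'(1-\delta_N)$ from the same theorem, to get
\begin{align*}
\int_{1-\delta_N}^1 (1-q)\,y'(q)\,dq \;\le\; \delta_N \cdot \delta_N \sup_{q>1-\delta_N} y'(q) \;\le\; e\,\delta_N^2\,y'(1-\delta_N).
\end{align*}
Summing the two contributions delivers the claimed inequality.

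There is no genuine obstacle beyond getting the extremal bound for $y'$ to line up with the hypothesis $\delta_N\le 1/n$ that Theorem~\ref{t:extremal-xprime-bound} requires; the proof is essentially a one-line cancellation followed by elementary integration. The reason the bound takes this particular asymmetric form ($\delta_N$ at the bottom, $\delta_N^2$ at the top) is precisely that the factor $(1-q)$ is $O(1)$ near zero but $O(\delta_N)$ near one.
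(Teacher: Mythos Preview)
Your proof is correct and essentially identical to the paper's own argument: the paper first invokes part~(\ref{part:bid-first}) of Lemma~\ref{l:bid-bound} (which is exactly your observation $\bid'(q)\le \alloc'(q)$) to cancel $\alloc'$ and reduce to $\expect[q\in\Lambda]{(1-q)\,y'(q)}$, and then applies Theorem~\ref{t:extremal-xprime-bound} on each extreme interval, using the extra factor of $\delta_N$ from $(1-q)$ on the high side, just as you do.
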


\begin{proof}
Apply part~(\ref{part:bid-first}) of Lemma~\ref{l:bid-bound} and the
definition of $Z_y(q) = (1-q)\,y'(q)/x'(q)$ to obtain the following
upper bound:
$$
\expect[\quant\in\Lambda]{Z_y(\quant)\,\bid'(\quant)} 
  \le 
\expect[q \in \Lambda]{(1-q)\,y'(q)}.
$$ 
For $q < \delta_N$, bound this epectation by $e\,\delta_N\,
y'(\delta_N)$ from Theorem~\ref{t:extremal-xprime-bound}.  For $q >
1-\delta_N$, bound this expectation by $e\,\delta_N^2\,
y'(1-\delta_N)$.

Note, we could alternatively obtain the bound $\delta_N n$ by using
the fact that $\sup_q y'(q) \leq n$ (Fact~\ref{fact:max-slope}).
\end{proof}

\begin{lemma}
\label{third-term-bound}
For $Z_y$ and $\Lambda$ as defined above,  if $\delta_N\le 1/n$, the third error term of
 the estimator $\hat{P}_y$ is bounded as follows.
\begin{align*}
\expect[\hat{\bid}]{\left|\smash{Z_y(1-\delta_N)(\bid(1-\delta_N) -\hat{\bid}_N)}\right|}
& \le \delta_N y'(1-\delta_N)\, (e\delta_N+ \tfrac{8}{N}).
\end{align*}
\end{lemma}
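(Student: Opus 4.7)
The plan is to split $|b(1-\delta_N) - \hat{b}_N|$ into a deterministic bias piece and a stochastic piece involving the top order statistic, bound each separately, and then multiply through by $Z_y(1-\delta_N) = \delta_N\,y'(1-\delta_N)/x'(1-\delta_N)$ to recover the claim. The equivalent form of the inequality is $\expect[\hat\bid]{|\bid(1-\delta_N) - \hat{\bid}_N|} \le x'(1-\delta_N)\,(e\delta_N + 8/N)$, so everything reduces to controlling this quantity in terms of $x'(1-\delta_N)$.

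Since each empirical bid lies in $[0,\bid(1)]$, we have $\hat\bid_N \le \bid(1)$ almost surely, so $|\bid(1-\delta_N) - \hat\bid_N| \le (\bid(1) - \bid(1-\delta_N)) + (\bid(1) - \hat\bid_N)$. The first summand is deterministic and, by Theorem~\ref{t:bid-bound}(2), is at most $e\,\delta_N\,x'(1-\delta_N)$; this produces the $e\delta_N$ factor in the target bound. For the second summand, write $\hat\bid_N = \bid(Q)$ where $Q$ is the maximum of $N$ i.i.d.\ uniform quantiles, so $\prob{Q\le r} = r^N$. Then $\bid(1) - \bid(Q) = \int_Q^1 \bid'(r)\,dr$, and Fubini together with $\bid'(r) = v(r)x'(r) \le x'(r)$ gives
\begin{align*}
\expect{\bid(1) - \hat\bid_N} \;\le\; \int_0^1 x'(r)\,r^N\,dr.
\end{align*}

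The key step is to show $\int_0^1 x'(r)\,r^N\,dr \le (e/N)\,x'(1-\delta_N)$. I would expand $x' = \sum_k \margwalk\, x'_k$ with $x'_k(r) = (n-1)\binom{n-2}{k-1} r^{n-k-1}(1-r)^{k-1}$ and compute each integral as a Beta function: $\int_0^1 x'_k(r)\,r^N\,dr = \frac{(n-1)!}{(n-k-1)!\,\prod_{j=1}^k(N+n-j)}$. Comparing this to $x'_k(1-\delta_N)/N$ and cancelling common factors, the ratio simplifies to $\frac{(k-1)!\,N}{(1-\delta_N)^{n-k-1}\,\prod_{j=1}^k(N+n-j)}$. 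Bounding $\prod_j(N+n-j) \ge N^k$ turns this into $\frac{(k-1)!}{(1-\delta_N)^{n-k-1}(N\delta_N)^{k-1}}$. The standing hypothesis $\delta_N \le 1/n$ gives $(1-\delta_N)^{n-k-1} \ge (1-1/n)^{n-1} > 1/e$; together with $N\delta_N \ge n$ (built into $\delta_N \ge n/N$) and $(k-1)! \le n^{k-1}$, the ratio is bounded by $e$ for every $k\in\{1,\dots,n-1\}$. Taking the convex combination over $k$ with weights $\margwalk$ preserves the inequality, yielding $\int_0^1 x'(r)\,r^N\,dr \le (e/N)\,x'(1-\delta_N)$.

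Combining the two pieces, $\expect{|\bid(1-\delta_N) - \hat\bid_N|} \le x'(1-\delta_N)(e\delta_N + e/N) \le x'(1-\delta_N)(e\delta_N + 8/N)$, and multiplying by $Z_y(1-\delta_N) = \delta_N\,y'(1-\delta_N)/x'(1-\delta_N)$ gives the stated bound. The main obstacle is the combinatorial ratio comparison in the third paragraph; the structural facts $(1-1/n)^{n-1} > 1/e$ and $(k-1)! \le n^{k-1}$ are exactly what the hypotheses $\delta_N\le 1/n$ and $N\delta_N \ge n$ were designed to accommodate, and the splitting into $\bid(1) - \bid(1-\delta_N)$ and $\bid(1) - \hat\bid_N$ is what allows $x'(1-\delta_N)$, rather than a less favorable maximum of $x'$, to control the answer.
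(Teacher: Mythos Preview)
Your proof is correct and takes a cleaner route than the paper's. The paper conditions on whether the top-order-statistic quantile $\hat q$ lies above or below $1-\delta_N$: in the first case it bounds $\hat b_N\le b(1)$ and applies Theorem~\ref{t:bid-bound}, while in the second it uses $b(1-\delta_N)-b(\hat q)\le x(1-\delta_N)-x(\hat q)$ and then computes $\expect{1-x_k(\hat q)}=\int_0^1(1-x_k(q))Nq^{N-1}\,dq$ via a Beta-function identity. You avoid the case split entirely by the triangle-inequality decomposition $|b(1-\delta_N)-\hat b_N|\le (b(1)-b(1-\delta_N))+(b(1)-\hat b_N)$ and then compute $\int_0^1 x_k'(r)\,r^N\,dr$ directly. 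These two integrals are related by integration by parts (yours equals the paper's minus $1-x_k(1)\ge 0$), so your bound is even a hair tighter before you relax $e/N$ to $8/N$. The combinatorial estimate is also organized more transparently: bounding the ratio to $x_k'(1-\delta_N)/N$ termwise via $\prod_j(N+n-j)\ge N^k$, $N\delta_N\ge n$, $(k-1)!\le n^{k-1}$, and $(1-1/n)^{n-1}>1/e$ makes clear exactly where each hypothesis is used. The paper's approach has the minor advantage that its case split cleanly separates bias from sampling error, which is reused later in the first-price reduction; your approach has the advantage of being shorter and yielding a sharper constant.
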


\begin{proof}
Let $\equant$ be the quantile of the highest of the $N$ observed bids,
i.e., $\bid(\equant) = \ebid_N$.

Conditioned on $\equant>1-\delta_N$, bid $\ebid_N = \bid(\equant)$ is
upper bounded by $\bid(1)$.  Applying \autoref{t:bid-bound} to bound
$\bid(1) - \bid(1-\delta_N)$ gives conditional error bound of
$$\smash{Z_y(1-\delta_N)(\bid(1) - \bid(1-\delta_N))} \leq
e\,\delta_N^2 y'(1-\delta_N).$$

Now condition on $\equant < 1-\delta_N$.  For this conditioning,
Lemma~\ref{l:bid-bound} shows that $\bid(1-\delta_N) - \bid(\equant)
\le \alloc(1-\delta_N) - \alloc(\equant)$.  We will now bound
$\expect[\equant]{\alloc(1-\delta_N)-\alloc(\equant) \given \equant <
  1-\delta_N}\prob{\equant \leq 1-\delta_N}$ which is at most
$\expect[\equant]{1-\alloc(\equant)}$.

We first analize $\expect[\equant]{1-\alloc(\equant)}$ in the case
that $\alloc=\alloc_k$ is the allocation rule for the $k$-unit
auction. We have,
\begin{align*}
\expect[\equant]{1-\alloc(\equant)} & = \int_0^1 (1-x_k(q)) N q^{N-1} dq \\
& = N \int_0^1 q^{N-1}\left( \sum_{i=k}^{i=n-1} {n-1 \choose i} q^{n-1-i}
  (1-q)^i \right) dq \\
& = N \sum_{i=k}^{i=n-1} {n-1 \choose i} \int_0^1 q^{N+n-2-i} (1-q)^i
dq \\
& = N \sum_{i=k}^{i=n-1} {n-1 \choose i} \frac{(N+n-2-i)!
  i!}{(N+n-1)!} \\
& = \frac{N}{N+n-1} \sum_{i=k}^{i=n-1} \frac{{n-1 \choose i}}{{N+n-2
  \choose i}} \\
& \le \frac{N}{N+n-1}  \sum_{i=k}^{i=n-1} \left(\frac{n}{N}\right)^i \\
& \le \frac{N}{N+n-1} \left(\frac{n}{N}\right)^k \frac{1}{1-n/N} \\
& \le 2 \left(\frac{n}{N}\right)^k,
\end{align*}
where the last inequality uses $N>1.5n$.

Substituting this back, we get for $x=x_k$:
\begin{align*}
& \expect[\hat{\bid}]{\left|\smash{Z_y(1-\delta_N)(\bid(1-\delta_N)
      -\hat{\bid}_N)}\right|}\\
& \le \delta_N \frac{y'(1-\delta_N)}{x'_k(1-\delta_N)}   \left\{
  e\delta_Nx'_k(1-\delta_N) +  2 \left(\frac{n}{N}\right)^k \right\}\\
& = e\delta_N^2 y'(1-\delta_N) + 2\delta_N y'(1-\delta_N) \left\{
  \frac{1}{(n-1){n-2\choose k-1} (1-\delta_N)^{n-1-k}\delta_N^{k-1}}
  \left(\frac{n}{N}\right)^k \right\}\\
& \le e\delta_N^2 y'(1-\delta_N) + \frac{2}{N}\left(\frac{n}{n-1}\right) \delta_N y'(1-\delta_N)
\left(\frac{n}{N\delta_N}\right)^{k-1} \frac{1}{{n-2\choose k-1} (1-\delta_N)^{n-1-k}}\\
& \le e\delta_N^2 y'(1-\delta_N) + \frac{8}{N}\delta_N y'(1-\delta_N).
\end{align*}

Here the last inequality follows by noting that ${n-2\choose k-1}\ge
1$, $(1-\delta_N)^n>1/4$, and using $\delta_N\ge n/N$,
$\frac{n}{N\delta_N}\le 1$.



Finally, since $x$ is a linear combination of the $x_k$'s, we have,
\begin{align*}
& \expect[\hat{\bid}]{\left|\smash{Z_y(1-\delta_N)(\bid(1-\delta_N)
      -\hat{\bid}_N)}\right|}\\
& \le \delta_N \frac{y'(1-\delta_N)}{x'(1-\delta_N)}
\left(e\delta_Nx'(1-\delta_N) + \expect[\hat{\bid}]{|1-x(q)|} \right)\\
& \le \max_k \delta_N \frac{y'(1-\delta_N)}{x'_k(1-\delta_N)}
\left(e\delta_Nx'_k(1-\delta_N) + \expect[\hat{\bid}]{|1-x_k(q)|}
\right)\\
& \le e\delta_N^2 y'(1-\delta_N) + \frac{8}{N}\delta_N y'(1-\delta_N). \qedhere
\end{align*}
\end{proof}

\begin{lemma}
\label{fourth-term-bound}
For $Z_y$ and $\Lambda$ as defined above, if $\delta_N\le 1/n$,
the fourth error term of the estimator $\hat{P}_y$ is bounded as
follows.
\begin{align*}
Z_y(\delta_N)\bid(\delta_N)
& \le e\,\delta_N\,y'(\delta_N).
\end{align*}
\end{lemma}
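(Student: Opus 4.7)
The plan is to simply unpack the definition of $Z_y$ and then invoke \autoref{t:bid-bound}, which was proven earlier precisely to handle situations like this one. By definition,
\[
Z_y(\delta_N)\,b(\delta_N) \;=\; (1-\delta_N)\,\frac{y'(\delta_N)}{x'(\delta_N)}\,b(\delta_N),
\]
so the entire task reduces to bounding the ratio $b(\delta_N)/x'(\delta_N)$.

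The first step is to apply part 1 of \autoref{t:bid-bound}, which gives $b(\delta_N) \le \delta_N\,e\,x'(\delta_N)$ under our hypothesis $\delta_N \le 1/n$. Substituting this bound cancels the $x'(\delta_N)$ in the denominator and produces
\[
Z_y(\delta_N)\,b(\delta_N) \;\le\; (1-\delta_N)\,\delta_N\,e\,y'(\delta_N) \;\le\; e\,\delta_N\,y'(\delta_N),
\]
where the last inequality simply drops the factor $(1-\delta_N) \le 1$. This is exactly the claimed bound.

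There is no real obstacle here: the lemma is a routine consequence of the cancellation between the $1/x'$ factor inside $Z_y$ and the bid function $b$, whose derivative is $v(q)\,x'(q)$ with $v(q)\le 1$. The only ingredient one has to have available is \autoref{t:bid-bound}, and the condition $\delta_N\le 1/n$ is needed only because that theorem requires it. No integration, no single-peakedness argument, and no appeal to the sampling error bound of \autoref{error bid function} is needed, since the quantity being bounded here does not involve the empirical bid function at all — it is a deterministic quantity arising from the bias introduced by truncating the integral at the lower endpoint $\delta_N$.
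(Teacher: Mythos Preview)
Your proof is correct and follows exactly the same approach as the paper's own proof, which is a single sentence stating that the lemma follows directly from the definition of $Z_y$ together with the upper bound on $b(\delta_N)$ from \autoref{t:bid-bound}. Your write-up is simply a more explicit rendering of that one-line argument.
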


\begin{proof}
The lemma follows directly from the definition of $Z_y$ with the
upper-bound on $b(\delta_N)$ of Theorem~\ref{t:bid-bound}.
\end{proof}

\subsection{Proofs of main theorems}
\label{sec:inference-theorem-proofs}

This section gives the complete proofs for the main theorems of
Section~\ref{s:inference-k-thm}.  These theorems follow fairly
directly from the previous lemmas.

\begin{numberedtheorem}{\ref{thm:allpay-simple}}
  The mean absolute error in estimating the revenue of a rank-based
  auction with allocation rule $y$ using $N$ samples from the bid
  distribution for an all-pay rank-based auction with allocation rule
  $x$ is bounded as below. Here $n$ is the number of positions in the
  two auctions, and $\hat{P}_y$ is the estimator in Definition~\ref{d:estimator}
  with $\delta_N$ set to $\max(25\log\log N,n)/N$.
\begin{align*}
\Err{P_y} & \le \frac{16n^2\log N}{\sqrt{N}}.
\end{align*}
\end{numberedtheorem}

\begin{proof}
As in the proof of Lemma~\ref{first-term-bound-simple}, we may assume
without loss of generality that $4n\log\samples<\sqrt{N}$, and indeed,
$16n^2\log N<\sqrt{N}$. This implies $\delta_N<1/n$, and then
Lemmas~\ref{first-term-bound-simple}, \ref{second-term-bound},
\ref{third-term-bound}, and \ref{fourth-term-bound} together imply
that the error in $P_k$ is bounded by:
\begin{align*}
\Err{\murevk} \le &
\,\frac{8n\log\samples}{\sqrt{N}}\sup_{\quant\not\in\Lambda}\{\kalloc'(\quant)\}
+ 2e\delta_N \kalloc'(\delta_N) + (2e+8) \delta_N^2 \kalloc'(1-\delta_N)  
\end{align*}
Further, $16n^2\log N<\sqrt{N}$ also implies that the second and third
terms together are no larger than the first. The theorem then follows
by recalling that $\sup_{\quant} \kalloc'(\quant) \le n$.
\end{proof}

We will now prove the improved error bounds of
Theorem~\ref{thm:allpay-general} and Corollary~\ref{cor:allpay-y}.
Recall the definition of $\rxy$ from equation~\ref{eq:rxy} in
Section~\ref{s:inference-k-thm}.
\begin{align}
\tag{\ref{eq:rxy}}
\rxy &:= \sup_{\quant}\{y'(\quant)\}\,\max\left\{1, \,\log 
\sup_{\quant: y'(\quant)\ge 1}\frac{\alloc'(\quant)}{y'(\quant)}, \,\log\sup_{\quant}\frac{y'(\quant)}{\alloc'(\quant)} \right\}.
\end{align}

Theorem~\ref{thm:allpay-general} follows from 
Lemma~\ref{first-term-bound} in much the same way as
Theorem~\ref{thm:allpay-simple} does from
Lemma~\ref{first-term-bound-simple}. We may assume, without loss of
generality, that $\sqrt{\samples}<80$, in which case the errors from the
extreme quantiles get absorbed into the error from the moderate quantiles.

\begin{numberedtheorem}{\ref{thm:allpay-general}}
  Let $\alloc$ and $\kalloc$ denote the allocation rules for any
  all-pay rank-based auction and the $k$-highest-bids-win auction over
  $n$ positions, respectively. Let $\hat{\murevk}$ denote the
  estimator from Definition~\ref{d:estimator} for estimating the
  revenue $P_k$ of the latter auction from $\samples$ samples of the
  bid distribution of the former, with $\delta_N$ set to
  $\max(25\log\log N,n)/N$. If $\delta_N\le 1/n$, the mean absolute
  error of the estimator $\hat{\murevk}$ is bounded as follows.
\begin{align*}
  \Err{\murevk}
\le & \,\frac{80}{\sqrt{\samples}}\,\rxy[\kalloc].
\end{align*}
\end{numberedtheorem}

We now generalize error bound to estimate the revenue $P_y$ of an
arbitrary rank-based auction with allocation rule $y$ from the bids of
another rank-based auction with allocation rule $x$.

\begin{numberedcorollary}{\ref{cor:allpay-y}}
  Let $\alloc$ and $y$ denote the allocation rules for any two all-pay
  rank-based auctions over $n$ positions. Let $\hat{P}_y$ denote the
  estimator from Definition~\ref{d:estimator} for estimating the
  revenue of the latter from $\samples$ samples of the bid
  distribution of the former, with $\delta_N$ set to $\max(25\log\log
  N,n)/N$. If $\delta_N\le 1/n$, the mean absolute error of the
  estimator $\hat{P}_y$ is bounded as follows.
\begin{align*}
  \Err{P_y}
\le & \,\frac{80}{\sqrt{\samples}}\, n \, \log \sup\nolimits_q n\, \frac{y'(q)}{x'(q)}.
\end{align*}
\end{numberedcorollary}

\begin{proof}
Write $y$ as a rank-based auction with weights $\wals$:
\begin{align*}
y & = \sum\nolimits_k \margwalk\,\kalloc, \,\,\,\text{and, } \,\,\,
P_y = \sum\nolimits_k \margwalk\,P_k.\\
\intertext{Accordingly, the error in $P_y$ is bounded by a weighted sum of the
  error in $P_k$ which are bounded by Theorem~\ref{thm:allpay-general}.  The weighted sum of these errors is simplified by observing
that $\kalloc'(q)\le y'(q)/\margwalk$ for all $k$ and $q$:}
\Err{P_y}  & \le \sum\nolimits_k \margwalk\,\Err{P_k}\\
& \le \tfrac{80}{\sqrt{N}} \sum\nolimits_k \margwalk \rxy[\kalloc]\\
& \le \tfrac{80}{\sqrt{N}} \sum\nolimits_k \margwalk \sup_q
\{\kalloc'(q)\} \max\Big\{\log n,\ \log\tfrac{1}{\margwalk}+ \log\sup_q \tfrac{y'(q)}{x'(q)}\Big\}.\\
\end{align*}
We now simplify the terms one at a time. Recall that $\sup_q
\{\kalloc'(q)\}\le n$ for all $k$. The first and third terms can therefore be
simplified using $\sum\nolimits_k \margwalk\le 1$. For the second
term, we observe $\sum\nolimits_k \margwalk \log \frac{1}{\margwalk}
\le \log n$. We therefore have:
\begin{align*}
  \Err{P_y} 
& \yestag \label{eq:weak-bound-py}
= \frac{80}{\sqrt{N}}\, n\,\log \sup_q n\,\frac{y'(q)}{x'(q)}.
\end{align*}
\end{proof}

\section{Proofs for Section~\ref{s:ab-testing}} 
\label{s:proofs-5}

We will now prove Theorem~\ref{theorem:sign-AB}, restated here for convenience. 

\begin{numberedtheorem}
{\ref{theorem:sign-AB}}
For arbitrary $n$-agent rank-based auctions $A$, $B_1$, and $B_2$ and
$N$ bids from the equilibrium bid distribution of mechanism $C=\eps
B_1+\eps B_2+(1-2\eps)A$, the estimator for the binary classifier
$\gamma={\bf 1}\{P_{B_1}-\alpha\,P_{B_2}>0\}$, that establishes
whether the revenue of mechanism $B_1$ exceeds $\alpha$ times the
revenue of mechanism $B_2$, has error rate bounded by
$$\exp\left(-O\left(\frac{Na^2}{\alpha^2n^3\log(n/\eps)}\right)\right),$$
where $a=|P_{B_1}-\alpha\,P_{B_2}|$, as long as $N\gg n/\epsilon a$.  
\end{numberedtheorem}

Whereas we bound the expected absolute error of our revenue estimator
in Section~\ref{s:param-inf}, in this section we will require a
concentration result for the error. We state this concentration result
below and prove it in Section~\ref{s:concentration-proof}. We focus on
the main term in our error bound,
$\expect[q\not\in\Lambda]{-Z'_y(\quant)(\ebid(\quant)-\bid(\quant))}$. We
can split this error into two components, one corresponding to the
bias in the estimated bid function and the other corresponding to the
deviation of the estimated bids from their mean:

\begin{align}
\left|\expect[q\not\in\Lambda]{-Z'_y(\quant)(\ebid(\quant)-\bid(\quant))}\right|
\le \left|\expect[q\not\in\Lambda]{Z'_y(\quant)(\ebid(\quant)-\btild(\quant))}\right|
+ \left|\expect[q\not\in\Lambda]{Z'_y(\quant)(\btild(\quant)-\bid(\quant))}\right|.
\label{eq:bias-deviation-split}
\end{align}
Here, $\btild$ is a step function that equals the expectation of the
empirical bid function $\ebid$: $\btild(\quant) = \expect{\ebid(\quant)}$.

The bias of the estimator, i.e., the second term above, is small:
\begin{lemma}
\label{lem:bias-bound}
With $\btild$ defined as above,
  $$\left|\expect[q\not\in\Lambda]{Z'_y(\quant)(\btild(\quant)-\bid(\quant))}\right|=
  \frac{O(1)}{N} \sup_{\quant}\{\alloc'(\quant)\}\,\,\sup_{\quant}
\left\{ \frac{y'(\quant)}{\alloc'(\quant)} \right\}.$$
\end{lemma}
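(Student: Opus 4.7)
My plan is to prove the bound via an algebraic identity connecting the bias integral to a Bernstein polynomial approximation error. First I derive the identity
$$\int_0^1 Z'_y(q)(\btild(q) - b(q))\,dq = \int_0^1 b'(u)\bigl[Z_y(u) - B_N(Z_y)(u)\bigr]\,du,$$
where $B_N(Z_y)(u) = \expect{Z_y(B_u/N)}$ is the $N$th Bernstein polynomial of $Z_y$ with $B_u \sim \operatorname{Binomial}(N, u)$. To obtain this, I start with $\btild_i = \expect{b(U_{(i)})} = \int_0^1 b'(u)\prob{U_{(i)} > u}\,du$ (valid because $b(0) = 0$), use Fubini to interchange sum and integral, and apply Abel summation on the telescoping weights $Z_y(i/N) - Z_y((i-1)/N)$ against $\prob{U_{(i)} > u} = \prob{B_u < i}$; this collapses to $Z_y(1) - B_N(Z_y)(u)$. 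Combining with the identity $\int_0^1 Z'_y(q) b(q)\,dq = -\int_0^1 b'(u) Z_y(u)\,du$ obtained by integration by parts (the boundary vanishes because $b(0) = 0$ and $Z_y(1) = 0$) delivers the identity.

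Next I bound $|Z_y(u) - B_N(Z_y)(u)|$ using the quantitative Voronovskaja expansion for Bernstein polynomials, which for sufficiently smooth $Z_y$ gives $Z_y(u) - B_N(Z_y)(u) = -\tfrac{u(1-u)}{2N}\,Z''_y(u) + R_N(u)$ with a higher-order remainder $R_N$. Substituting into the identity, using $|b'(u)| \le x'(u) \le \sup_u x'$ from Fact~\ref{fact:max-slope}, and integrating by parts twice on the resulting $\int_0^1 u(1-u) Z''_y(u)\,du$ (the boundary terms vanish since $u(1-u)$ vanishes at $\{0,1\}$ and $Z_y(1) = 0$) collapses that integral to $Z_y(0) - 2\int_0^1 Z_y(u)\,du$, which is bounded by $3\sup_u Z_y \le 3\sup_u(y'/x')$. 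The weight $b'(u)$ introduced inside the IBP manipulation requires a careful decomposition exploiting the single-peakedness of $Z_y$ from Lemma~\ref{lem:Z-bound-1}, by applying IBP on each monotone segment of $Z_y$ so that the signed integrals are controlled by the peak value of $Z_y$.

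The principal obstacle is making the Voronovskaja expansion rigorous on moderate quantiles $q \not\in \Lambda$, where $Z''_y$ may grow near the boundary of $[\delta_N, 1-\delta_N]$; the restriction to moderate quantiles together with the $u(1-u)$ weight mitigates this, since the weight vanishes at the boundary at the same order that $Z''_y$ blows up. A secondary technical point is that the extreme-quantile correction $\int_{q \in \Lambda} Z'_y(q)(\btild(q) - b(q))\,dq$ must be added to the identity, but this is $O(1/N)$ because $\Lambda$ has length $O(\log\log N / N)$ and the integrand admits the same type of extremal bounds developed in Lemmas~\ref{second-term-bound}--\ref{fourth-term-bound}, so the correction is absorbed into the $O(1)$ constant in the final bound.
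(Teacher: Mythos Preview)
Your Bernstein-polynomial identity is correct on the full interval $[0,1]$, but this is precisely where the approach breaks down. The paper's proof is far simpler and avoids the identity entirely: it establishes the \emph{pointwise} bound $|\btild(q)-b(q)| = O(\sup_q b'(q)/N)$ by writing $\btild(i/N)=\int_0^1 b(t)\,f(t;i,N-i+1)\,dt$ with $f$ the Beta$(i,N-i+1)$ density, Taylor-expanding $b(t)$ about $q$, and using that this Beta law has mean $i/(N+1)$ (within $2/N$ of $q\in[i/N,(i+1)/N]$) and second moment $O(1/N)$. One then multiplies by $\int|Z'_y|\,dq\le 2\sup_q Z_y(q)$ via the single-peakedness of Lemma~\ref{lem:Z-bound-1} to obtain the stated bound. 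No Bernstein polynomials, no Voronovskaja, no double integration by parts.

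Your route has a genuine gap at the Voronovskaja step. The function $Z_y$ can have a non-integrable singularity at the boundary: in Example~\ref{eg:extremal} ($x=x_1$, $y=x_{n-1}$) one has $Z_y(q)\sim q^{2-n}$ near $0$, hence $Z''_y(q)\sim q^{-n}$. Your claimed compensation fails: $b'(u)\,u(1-u)\,|Z''_y(u)|\sim u^{n-2}\cdot u\cdot u^{-n}=u^{-1}$ near $0$, so the leading Voronovskaja term $\tfrac{1}{2N}\int_0^1 b'(u)\,u(1-u)\,Z''_y(u)\,du$ diverges. Worse, the full-interval identity itself is ill-posed here: on $[0,1/N)$ the function $\btild$ equals the positive constant $\btild_1$, while $\int_0^{1/N}Z'_y(q)\,dq = Z_y(1/N)-Z_y(0^+)=-\infty$, so $\int_0^1 Z'_y(q)\,\btild(q)\,dq$ does not exist. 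You therefore cannot write the moderate-quantile integral as the full integral minus an $O(1/N)$ extreme-quantile correction; both pieces are infinite. The restriction to $q\notin\Lambda$ is not a secondary technicality but must be built into the argument from the outset, at which point the clean Bernstein structure you rely on is lost.
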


The deviation from the mean, i.e., the first term in equation~\eqref{eq:bias-deviation-split}, is concentrated. 
\begin{lemma}
\label{lem:deviation-prob}
Let $\Delta=\sup_{\quant\not\in\Lambda}|(b'(q))^{-1}(\ebid(\quant)-\btild(\quant))|$. Then for any $a>0$,
  $$\prob{\left|\expect[q\not\in\Lambda]{Z'_y(\quant)(\ebid(\quant)-\btild(\quant))}\right|\ge a\, \middle| \, \Delta}\le
  \exp \left(-\frac{a^2}{n(80\Delta \rxy)^2}\right).$$
\end{lemma}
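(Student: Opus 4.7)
The plan is to establish concentration of the random variable $\int_{q\not\in\Lambda} Z'_y(q)(\hat{b}(q)-\tilde{b}(q))\,dq$ via a bounded-differences (McDiarmid-type) argument applied to the i.i.d.\@ quantile samples $U_1,\ldots,U_N$ whose order statistics $\hat{b}_i=b(U_{(i)})$ define the empirical bid function. Using Definition~\ref{d:estimator}, I would first rewrite the deviation as $\sum_{i=\delta_N N}^{N-\delta_N N} c_i (\hat{b}_i - \tilde{b}_i)$ plus a boundary term, with $c_i = Z_y((i-1)/N)-Z_y(i/N)$, so that each coefficient $c_i$ is already a telescoping increment of $Z_y$. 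The goal is then to bound the change $d_j$ in this sum when a single $U_j$ is replaced by some $U'_j$, treating $\Delta$ as a deterministic ceiling on the uniform deviation $|\hat b(q)-\tilde b(q)|/b'(q)$.

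The key structural input comes from Lemma~\ref{lem:Z-bound-1}: since $y=\sum_k w_k\,\kalloc$ is a nonnegative combination of at most $n$ multi-unit allocation rules, each of which gives a single-peaked $Z_k$, the function $Z_y$ has at most $n$ intervals of monotonicity, hence the sequence $(c_i)$ changes sign at most $n$ times. Inside each monotone block, the telescoping in $c_i$ lets the effect of swapping $U_j$ for $U'_j$ reduce to boundary contributions of the form $\sup_q Z_y(q)\cdot|\hat{b}_{i+1}-\hat{b}_i|$; the condition on $\Delta$ yields $|\hat{b}_{i+1}-\hat{b}_i|\le (\Delta/N)\,b'(q_i)$, and the product $\sup_q Z_y(q)\,b'(q)=\sup_q (1-q)\,y'(q)\,b'(q)/x'(q)$ is absorbed into $\rxy$ by the definition of that quantity in \eqref{eq:rxy}. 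Squaring and summing the resulting per-sample differences over the $N$ samples, with the $n$ monotone blocks contributing additively, yields $\sum_j d_j^2 = O(n(\Delta\,\rxy)^2)$, after which McDiarmid's inequality delivers the stated bound $\exp(-a^2/(n(80\Delta\,\rxy)^2))$.

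The main obstacle I expect is the per-sample bounded-difference computation: replacing one $U_j$ shifts an entire run of order statistics by one position, and a naive estimate would accumulate $\sum_i |c_i|\cdot|\hat{b}_{i+1}-\hat{b}_i|$ over all shifted ranks, giving a bound that is linear in the run length rather than the clean $\Delta\,\rxy$ scaling. The resolution is to exploit the fact that the $c_i$ are already differences of $Z_y$: summation by parts collapses the shifted-range contribution to an evaluation of $Z_y$ at the two ends of the run, bounded by $2\sup_q Z_y(q)$, plus a single term for the newly inserted $U'_j$. A secondary subtlety is that $\Delta$ is itself a function of all samples, so conditioning on its value breaks the independence that McDiarmid requires; this is handled by proving the bounded-difference estimate pathwise, uniformly over all sample realizations consistent with a given value of $\Delta$, so that the conditional probability statement follows by slicing the underlying product measure.
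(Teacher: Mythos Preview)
Your McDiarmid approach is quite different from the paper's, and I believe it has a genuine gap that prevents it from reaching the stated bound.

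The paper does \emph{not} use bounded differences on the i.i.d.\ samples.  Instead it introduces the set $I=\{i_0,\ldots,i_m\}$ of indices where $\hat b$ crosses $\tilde b$ from below, splits $T_{0,N}=\int_{q\notin\Lambda} Z'_y(q)(\hat b(q)-\tilde b(q))\,dq$ as $\sum_\ell T_{i_\ell,i_{\ell+1}}$, and proves (Lemma~\ref{lem:independent-sums}) that, conditioned on $I$ and $\Delta$, the summands are \emph{mutually independent}.  Each $T_{i_\ell,i_{\ell+1}}$ is bounded by $\tau_\ell\le 80\Delta\,\rxy$ using the argument of Lemma~\ref{first-term-bound} (since $\hat b-\tilde b$ changes sign only once on each block), and moreover $\sum_\ell\tau_\ell\le 80n\Delta\,\rxy$ because $Z'_y$ changes sign at most $2n$ times.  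Hoeffding then gives the exponent $a^2/\sum_\ell\tau_\ell^2$ with $\sum_\ell\tau_\ell^2\le(\max_\ell\tau_\ell)(\sum_\ell\tau_\ell)\le n(80\Delta\,\rxy)^2$.  The factor $n$ (rather than $N$) in the denominator comes precisely from this $\max\times\text{sum}$ step, which has no analogue in your scheme.

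Your plan has two concrete problems.  First, the bound $|\hat b_{i+1}-\hat b_i|\le(\Delta/N)\,b'(q_i)$ is off by a factor of $N$: from $|\hat b(q)-\tilde b(q)|\le\Delta\,b'(q)$ one gets $|\hat b_{i+1}-\hat b_i|\lesssim 2\Delta\,b'(q_i)$, not $\Delta/N$ times that.  With the corrected estimate, the worst-case one-coordinate swing $d_j$ (replacing $U_j$ at one extreme by $U'_j$ at the other shifts \emph{all} order statistics by one) is of order $\Delta\,\rxy$, and since by symmetry all $d_j$ are equal, $\sum_j d_j^2\approx N(\Delta\,\rxy)^2$, which is weaker than the required $n(\Delta\,\rxy)^2$ by a factor $N/n$.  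Your summation-by-parts does not collapse the shifted range to two boundary evaluations of $Z_y$: the sum $\sum_{i=r}^{r'-1} c_i(\hat b_{i+1}-\hat b_i)$ with $c_i=Z_y((i-1)/N)-Z_y(i/N)$ is a product of two difference sequences and does not telescope.

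Second, and more fundamentally, McDiarmid's inequality requires the coordinates to be independent and the bounds $d_j$ to hold uniformly over all realizations; conditioning on $\Delta$ (a symmetric function of all $N$ samples) destroys the product structure, and ``slicing the product measure'' along level sets of $\Delta$ does not recover it.  The paper sidesteps this by conditioning on the crossing set $I$, under which the blocks $T_{i_\ell,i_{\ell+1}}$ are genuinely independent because the empirical bids in one block are constrained to a range disjoint from those in other blocks.  That conditional-independence lemma is the missing structural ingredient in your proposal.
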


The proofs of Lemmas~\ref{lem:bias-bound} and \ref{lem:deviation-prob} are deferred to the next subsection.
We are now ready to prove Theorem~\ref{theorem:sign-AB}.

\begin{proofof}{Theorem~\ref{theorem:sign-AB}}
We need to bound the probability that the error in estimating
$\hat{P}_{B_1}-\alpha\hat{P}_{B_2}$ is greater than $|P_{B_1}-\alpha\,P_{B_2}|$. This
error can in turn be decomposed into the error in estimating $P_{B_1}$ and
that in estimating $P_{B_2}$. Denote $a=|P_{B_1}-\alpha\,P_{B_2}|>0$. Then,
\begin{align*}
& \prob{|(\hat{P}_{B_1}-\alpha\,\hat{P}_{B_2}) 
  -(P_{B_1}-\alpha\,P_{B_2})|>a} \\ 
& \leq
\prob{|\hat{P}_{B_1}-P_{B_1}|>a/2}
+ \prob{|\hat{P}_{B_2}-P_{B_2}|>a/2\alpha}.
\end{align*}
Let $\alloc$ denote the allocation rule of the mechanism $C$ that we
are running, and let $\bid$ be the corresponding bid function. Now, recall that for
\begin{align*}
\Delta & =\sup_q|(b'(q))^{-1}(\bwhat(q)-b(q))| \text{ and }  \\
\rxy[x_{B_1}] & =\sup_{\quant}\{x_{B_1}'(\quant)\}\max\left\{1, \log\sup_{\quant: x'_{B_1}(\quant)\ge 1}\frac{\alloc'(\quant)}{x_{B_1}'(\quant)},
\log\sup_{\quant}\frac{x_{B_1}'(\quant)}{\alloc'(\quant)} \right\}
\end{align*}
equations~\eqref{eq:error-allpay} and \eqref{eq:bias-deviation-split} bound the error in estimation as a sum of five terms. Of these, all but the first term in equation~\eqref{eq:bias-deviation-split} can be bounded by $O(n/\eps N)$ using Lemmas~\ref{second-term-bound}, \ref{third-term-bound}, \ref{fourth-term-bound}, and \ref{lem:bias-bound}. Then, Lemma~\ref{lem:deviation-prob} implies that, conditioned on $\Delta$,
$$
\prob{|\hat{P}_{B_1}-P_{B_1}|>a/2} \leq
2\exp\left(-\frac{1}{n(80\,\Delta\,\rxy[x_{B_1}])^2}\left(\frac a2-O\left(\frac {n}{\epsilon N}\right)\right)^2\right).
$$
Finally, $\rxy<n\log (n/\eps)$, and with high probability
$\Delta$ is at most a constant times $1/\sqrt{N}$ (Lemma~\ref{error
  bid function}). Consequently, for $N\gg n/\epsilon a$,
$$
\prob{|\hat{P}_{B_1}-P_{B_1}|>a/2} \leq
\exp\left(-O\left(\frac{Na^2}{n^3\log(n/\eps)}\right)\right).
$$ 
Likewise,
$$
\prob{|\hat{P}_{B_2}-P_{B_2}|>a/2\alpha} \leq
\exp\left(-O\left(\frac{Na^2}{\alpha^2n^3\log(n/\eps)}\right)\right).
$$
\end{proofof}

\subsection{Concentration bound for the revenue estimator}
\label{s:concentration-proof}

\begin{numberedlemma}{\ref{lem:bias-bound}}
With $\btild$ defined as above,
  $$\left|\expect[q\not\in\Lambda]{Z'_y(\quant)(\btild(\quant)-\bid(\quant))}\right|=
  \frac{O(1)}{N} \sup_{\quant}\{\alloc'(\quant)\}\,\,\sup_{\quant}
\left\{ \frac{y'(\quant)}{\alloc'(\quant)} \right\}.$$
\end{numberedlemma}

\begin{proof}
We can write the function $\tilde{b}(i/N)$ as
\begin{align*}
\tilde{b}(i/N)& =\frac{N!}{(i-1)!(N-i)!}\int G(t)^{i-1}(1-G(t))^{N-i}g(t) t\,dt\\
&=\frac{N!}{(i-1)!(N-i)!}\int t^{i-1}(1-t)^{N-i}b(t)\,dt.
\end{align*}
Note that 
$$\frac{N!}{(i-1)!(N-i)!} t^{i-1}(1-t)^{N-i}$$
is the density of the beta distribution with parameters $\alpha=i$ and $\beta=N-i+1$. Denote this density $f(t;\alpha,\beta)$. Then we can write
$$  \tilde{b}(i/N)=\int^1_0b(t)f(t;\alpha,\beta)\,dt.$$
Now let $q \in [i/N,\,(i+1)/N]$, and consider an expansion of $b(t)$ at $q$ such that 
$$ b(t)=b(q)+b'(q)(t-q)+O((t-q)^2).$$
Now we substitute this expansion into the formula for $\tilde b(\cdot)$ above to get
$$  \tilde{b}(i/N) = b(q)+b'(q)\int^1_0(t-q)f(t;\alpha,\beta)\,dt+O(\int^1_0(t-q)^2f(t;\alpha,\beta)\,dt).$$
The mean of the beta distribution is $\alpha/(\alpha+\beta)$ and the variance is $\alpha\beta/((\alpha+\beta)^2(\alpha+\beta+1))$. This means that 
$$ \tilde{b}\left(\frac{i}{N}\right)-b(q)=b'(q)\left(\frac{i}{N+1}-q\right)+O\left(\frac{1}{N^2}\right). $$
Thus
$$\sup_{q \in [i/N,(i+1)/N]}\left| \tilde{b}(i/N)-b(q)\right| \leq \sup_qb'(q)\frac{2}{N}+O\left(\frac{1}{N^2}\right).$$
Therefore, the expectation $\left|\smash{\Phat_y-\expect{\smash{\Phat_y}}}\right|$ is at most $O(1)/N
\, \sup_q\{\alloc'(q)\} \, \sup_q Z_y(q)$.
\end{proof}

We now focus on the deviation of our estimator from its mean. In order to obtain a concentration bound, we express the estimator as a sum over many independent terms.

To this end, we first identify the set of quantiles at which the
function $\ebid$ ``crosses'' the function $\btild$ from below. This
set is defined inductively. 
Define $i_0=\delta_N N$. Then, inductively, let
$i_\ell$ be the smallest integer strictly greater than $i_{\ell-1}$
such that
$$\ebid\left(\frac{i_{\ell}-1}{N}\right)\le\btild\left(\frac{i_\ell-1}{N}\right) \, \text{ and }\,
\ebid\left(\frac{i_{\ell}}{N}\right)>\btild\left(\frac{i_\ell}{N}\right).$$
Let $i_{\lastcrossing-1}$ be the last integer so defined, and let $i_\lastcrossing=(1-\delta_N)N$. Let $I$
denote the set of indices $\{i_0,\ldots, i_\lastcrossing\}$.
Let $\Tij$ denote the following integral:
$$\Tij = \int_{\quant=i/N}^{\quant=j/N}
Z'_y(\quant)(\ebid(\quant)-\btild(\quant)) \, \text{d}q$$
Then, our goal is to bound the quantity
$\expect[\hat{\bid}]{|\T_{0,N}|}$ where $\T_{0,N}$ can be written
as the sum:
$$ \T_{0,N} = \sum_{\ell=0}^{\lastcrossing-1}\T_{i_{\ell}, i_{\ell+1}}. $$
We now claim that conditioned on $I$ and the maximum weighted bid
error, this is a sum over independent random variables.

\begin{lemma}
\label{lem:independent-sums}
Conditioned on the set of indices $I$ and
$\Delta=\sup_{\quant\not\in\Lambda}|(b'(q))^{-1}(\ebid(\quant)-\btild(\quant))|$, over the
randomness in the bid sample, the random variables $\T_{i_{\ell},
  i_{\ell+1}}$ are mutually independent.
\end{lemma}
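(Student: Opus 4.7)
The plan relies on the Markov property of order statistics of i.i.d.\ samples: for $N$ i.i.d.\ draws with order statistics $\hat{b}_1\le\cdots\le\hat{b}_N$, conditional on the values $\hat{b}_i$ and $\hat{b}_j$ for $i<j$, the intermediate order statistics $\hat{b}_{i+1},\ldots,\hat{b}_{j-1}$ are independent of all order statistics with indices outside $\{i,\ldots,j\}$, and are distributed as the order statistics of $j-i-1$ i.i.d.\ draws from the underlying bid distribution conditioned to lie in $[\hat{b}_i,\hat{b}_j]$.

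First I would verify that $\T_{i_\ell,i_{\ell+1}}$ is a measurable function of a single consecutive block of order statistics. Because $\hat{b}(q)=\hat{b}_j$ on $q\in[(j-1)/N,j/N)$ by~\eqref{bid function} while both $\tilde{b}(\cdot)$ and $Z'_y(\cdot)$ are deterministic, the integral $\T_{i_\ell,i_{\ell+1}}$ accesses $\hat{b}_j$ only for $j\in\{i_\ell+1,\ldots,i_{\ell+1}\}$. Labelling $\mathcal{B}_\ell:=(\hat{b}_{i_\ell+1},\ldots,\hat{b}_{i_{\ell+1}})$, these blocks are index-disjoint across distinct $\ell$.

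Next I would decompose the conditioning event into constraints local to individual gaps. The event $\{I=(i_0,\ldots,i_m)\}$ consists of (i) the upward-crossing inequalities $\hat{b}_{i_\ell}\le\tilde{b}((i_\ell-1)/N)$ and $\hat{b}_{i_\ell+1}>\tilde{b}(i_\ell/N)$ at each crossing, each constraining only the two boundary indices between adjacent blocks, and (ii) the negations of the same inequalities at all intermediate indices $j\in\{i_{\ell-1}+1,\ldots,i_\ell-1\}$ certifying no earlier crossing, each constraining only an index inside a single block. Likewise $\Delta=\max_\ell \Delta_\ell$, where $\Delta_\ell:=\sup_{q\in[i_\ell/N,i_{\ell+1}/N]}|(b'(q))^{-1}(\hat{b}(q)-\tilde{b}(q))|$ is itself a function of block $\mathcal{B}_\ell$. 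Enriching the conditioning to the $\sigma$-algebra generated by $I$, the per-gap sups $(\Delta_0,\ldots,\Delta_{m-1})$, and the boundary values $\{\hat{b}_{i_\ell},\hat{b}_{i_\ell+1}\}_\ell$, the Markov property yields that the conditional law of the sample factors as a product over blocks; since $\T_{i_\ell,i_{\ell+1}}$ is measurable with respect to $\mathcal{B}_\ell$, the $\T_{i_\ell,i_{\ell+1}}$ are mutually independent under this enriched conditioning.

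The main obstacle is reconciling this enriched conditioning with the coarser $\sigma$-algebra $(I,\Delta)$ appearing in the statement, since $\Delta$ is a global maximum rather than a per-gap quantity. This is handled by observing that the subsequent concentration bound in Lemma~\ref{lem:deviation-prob} invokes independence only through a pointwise magnitude control $|\hat{b}(q)-\tilde{b}(q)|\le\Delta\,b'(q)$ when applying Hoeffding's inequality: one applies Hoeffding conditional on the finer vector $(I,\Delta_0,\ldots,\Delta_{m-1})$---where genuine independence holds and each summand admits a $\Delta$-dependent magnitude bound---and then invokes the tower property, yielding a bound that depends only on $\Delta=\max_\ell \Delta_\ell$ as stated.
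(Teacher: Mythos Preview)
Your approach via the Markov property of order statistics is correct, but the paper's argument is simpler and avoids the need to condition on boundary values. The paper exploits that $\tilde{b}$ is deterministic: the crossing condition at $i_\ell$ says $\hat{b}_{i_\ell}\le\tilde{b}((i_\ell-1)/N)$ and $\hat{b}_{i_\ell+1}>\tilde{b}(i_\ell/N)$, which splits into two \emph{separate} inequalities against fixed thresholds, one constraining block $\ell-1$ and one constraining block $\ell$. Combined with monotonicity of the order statistics, knowing $I$ forces the bids $\hat{b}_{i_\ell+1},\ldots,\hat{b}_{i_{\ell+1}}$ into the deterministic interval $[\tilde{b}(i_\ell/N),\tilde{b}((i_{\ell+1}-1)/N)]$, and forces the bids outside that block to lie outside that interval. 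For i.i.d.\ samples, once you condition on the counts falling in each of a family of disjoint deterministic intervals, the samples within each interval are mutually independent across intervals; the remaining constraints from $I$ (no-crossing at intermediate indices) are, as you also noted, block-local and preserve this independence. So the paper gets the conclusion directly from the multinomial-allocation fact, without your enriched $\sigma$-algebra of boundary values.

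On the $\Delta$ conditioning, you are more careful than the paper, which simply does not address it. Your observation that $\Delta=\max_\ell\Delta_\ell$ is a global statistic and your proposed fix---condition on the per-gap vector $(\Delta_0,\ldots,\Delta_{m-1})$, apply Hoeffding there (each summand still bounded via the global $\Delta$), then take expectations---is exactly the right way to make the subsequent concentration argument rigorous. This is a genuine improvement over the paper's presentation, even if it means the lemma as literally stated is slightly stronger than what is proved or needed.
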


\begin{proof}
  Fix $I$ and $\ell$, and note that the function $\btild$ is fixed
  (that is, it does not depend on the empirical bid sample). Then, the
  sum $\T_{i_{\ell}, i_{\ell+1}}$ depends only on the empirical bid
  values $\ebid(\quant)$ for quantiles in the interval $[i_\ell/N,
  i_{\ell+1}/N)$. By the definition of $I$, we know that the smallest
  $i_\ell$ bids in the sample are all smaller than
  $\btild((i_\ell-1)/N)\le\btild(i_\ell/N)$, and the largest $N-i_{\ell+1}$ bids in the
  sample are all larger than $\btild(i_{\ell+1}/N)\ge \btild((i_{\ell+1}-1)/N)$. On the other
  hand, the empirical bids $\ebid(\quant)$ for $\quant\in [i_\ell/N,
  i_{\ell+1}/N)$ lie within $[\btild(i_\ell/N),
  \btild((i_{\ell+1}-1)/N)]$. Therefore, conditioned on $i_\ell$ and
  $i_{\ell+1}$, the latter set of empirical bids is independent of the
  former set of empirical bids.
\end{proof}

Since within each interval $(i_{\ell}, i_{\ell+1})$ the multiplier
$\ebid(\quant)-\btild(\quant)$ changes sign only once, we can apply
the approach of Section~\ref{s:inference-k}, to bound each
individual $\T_{i_{\ell}, i_{\ell+1}}$ by $40\Delta \rxy$. We then
apply Chernoff-Hoeffding bounds to obtain a bound on the proability
that $\expect[\hat{\bid}]{|\T_{0,N}|\, | \, I, \Delta}$ exceeds some
value $a>0$.

\begin{numberedlemma}{\ref{lem:deviation-prob}}
Let $\Delta=\sup_{\quant\not\in\Lambda}|(b'(q))^{-1}(\ebid(\quant)-\btild(\quant))|$. Then for any $a>0$,
  $$\prob{\left|\expect[q\not\in\Lambda]{Z'_y(\quant)(\ebid(\quant)-\btild(\quant))}\right|\ge a\, \middle| \, \Delta}\le
  \exp \left(-\frac{a^2}{n(80\Delta \rxy)^2}\right).$$
\end{numberedlemma}

\begin{proof}
  We will use Chernoff-Hoeffding bounds to bound the expectation of $\T_{0,N}$ over the bid sample, conditioned on $I$ and $\Delta$. We first note that $\T_{0,N}$ has mean zero because for any integer $i\in [0,N]$, $\expect[\text{samples}]{\ebid(i/N)} = \btild(i/N)$.

Next we note that the $\Tij$'s are bounded random variables. Specifically, let $Q$ be an interval of quantiles over which the difference $\ebid(\quant)-\btild(\quant)$ does not change sign. Then, following the proof of Lemma~\ref{first-term-bound}, we can bound 
\begin{align*}
|\T_Q| & = \left|\int_Q Z'_y(\quant)(\ebid(\quant)-\btild(\quant)) \, \text{d}q \right|\\
& \le 40\Delta \underbrace{\,\,\sup_{\quant}\{y'(\quant)\}\,\,\max \left\{ 1,\log\sup_{\quant: y'(\quant)\ge 1} \frac{\alloc'(\quant)}{y'(\quant)},
\log\sup_{\quant}\frac{y'(\quant)}{\alloc'(\quant)} \right\}}_{=:\, \rxy}.
\end{align*}
Likewise, over an interval $Q$ where $Z'_y$ does not change sign, we again get $|\T_Q|\le 40 \Delta \rxy$ with $\rxy$ defined as above. Moreover, for an interval $Q$ over which $Z'_y$ changes sign at most $t$ times, we have 
$$ \int_Q |Z'_y(\quant)(\ebid(\quant)-\btild(\quant))| \, \text{d}q \le t\cdot 40\Delta \rxy.$$ Finally, noting that $Z_y$ is a weighted sum over the $n$ functions $Z_k$ defined for the $k$-unit auctions, and that by Lemma~\ref{lem:Z-bound-1} each $Z_k$ has a unique maximum, we note that $Z'_y$ changes sign at most $2n$ times.

We now apply Chernoff-Hoeffding bounds to bound the probability that the sum $\sum_{\ell=0}^{\ell=\lastcrossing-1}\T_{i_{\ell}, i_{\ell+1}}$ exceeds some constant $a$. With $\tau_\ell$ denoting the upper bound on $|\T_{i_{\ell}, i_{\ell+1}}|$, this probability is at most 
$$\text{exp}\left(-\frac{a^2}{\sum_{\ell} \tau_\ell^2}\right).$$
By our observations above, for all $\ell$, $\tau_\ell\le 80\Delta \rxy$, and $\sum_\ell \tau_\ell \le \int_0^1 |Z'_y(\quant)(\ebid(\quant)-\btild(\quant))| \, \text{d}q \le 80n\Delta \rxy$. Therefore, $\sum_{\ell} \tau_\ell^2\le n(80\Delta \rxy)^2$. Since the bound does not depend on $I$, we can remove the conditioning
on $I$.
\end{proof}

\subsection{Comparing revenues}
\label{s:direct-comparison}

We have considered the case where the empirical task was to recover
the revenues for one mechanism ($y$) using the sample of bids
responding to another mechanism ($x$). In many practical situations
the empirical task is simply the verification of whether the revenue
from a given mechanism is higher than the revenue from another
mechanism. Or, equivalently, the task could be to verify whether one
mechanism provides revenue which is a certain percentage above that of
another mechanism. We now demonstrate that this is a much easier
empirical task in terms of accuracy than the task of inferring the
revenue.
\par
Suppose that we want to compare the revenues of mechanisms $B_1$ and
$B_2$ by mixing them in to an incumbent mechanism $A$, and running the
composite mechanism $C = \eps B_1 + \eps B_2 + (1-2\eps) A$.
Specifically, we would like to determine whether $\REV{B_1}>\alpha
\REV{B_2}$ for some $\alpha>0$. Consider a binary classifier
$\hat{\gamma}$ which is equal to $1$ when $\REV{B_1}>\alpha
\REV{B_2}$ and $0$ otherwise. Let $\gamma={\bf
  1}\{P_{B_1}-\alpha\,P_{B_2}>0\}$ be the corresponding ``ideal"
classifier for the case where the distribution of bids from mechanism
$C$ is known precisely. To evaluate the accuracy of the classifier, we
need to evaluate the probability $\prob{\hat{\gamma}=1|\gamma=0}$,
and likewise, $\prob{\hat{\gamma}=0|\gamma=1}$. The classifier will
give the wrong output if the sampling noise in estimating
$\hat{P}_{B_1}-\alpha\,\hat{P}_{B_2}$ is greater than
$|P_{B_1}-\alpha\,P_{B_2}|$.

Our main result of this section says that fixing the number of
positions $n$, $\alpha$, and the difference
$|P_{B_1}-\alpha\,P_{B_2}|$, with the number of samples from the bid
distribution, $N$, being large enough, the probability of incorrect
output decreases exponentially with $N$.

\begin{theorem}
\label{theorem:sign-AB}
For arbitrary $n$-agent rank-based auctions $A$, $B_1$, and $B_2$ and
$N$ bids from the equilibrium bid distribution of mechanism $C=\eps
B_1+\eps B_2+(1-2\eps)A$, the estimator for the binary classifier
$\gamma={\bf 1}\{P_{B_1}-\alpha\,P_{B_2}>0\}$, that establishes
whether the revenue of mechanism $B_1$ exceeds $\alpha$ times the
revenue of mechanism $B_2$, has error rate bounded by
$$\exp\left(-O\left(\frac{Na^2}{\alpha^2n^3\log(n/\eps)}\right)\right),$$
where $a=|P_{B_1}-\alpha\,P_{B_2}|$, as long as $N\gg n/\epsilon a$.  
\end{theorem}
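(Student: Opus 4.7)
The plan is to reduce the classifier's error to an estimator-concentration question for a single ``signed'' revenue estimator and then to obtain the claimed exponential tail by combining the $Z'_y/Z_y$ factoring of Section~\ref{s:inference-k} with a tail-version of the density-weighted quantile-process bound of Lemma~\ref{error bid function}. Because the estimator of Definition~\ref{d:estimator} is linear in the target allocation rule $y$, the combined statistic $\hat D := \hat P_{B_1} - \alpha\,\hat P_{B_2}$ is itself an instance of Definition~\ref{d:estimator} applied to the signed allocation rule $y := x_{B_1} - \alpha\,x_{B_2}$ from the bids of $C$. The classifier disagrees with the ideal $\gamma$ iff $|\hat D - D|>a$ with $D := P_{B_1}-\alpha\,P_{B_2}$, so it suffices to show
\begin{align*}
\prob{|\hat D-D|>a} \;\le\; \exp\!\left(-\Omega\!\left(\tfrac{Na^2}{\alpha^2 n^3\log(n/\eps)}\right)\right).
\end{align*}

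Write $\hat D-D = (\hat D-\expect{\hat D}) + (\expect{\hat D}-D)$. Applying Corollary~\ref{cor:allpay-y} separately to $\hat P_{B_1}$ and to $\alpha\,\hat P_{B_2}$ and using the triangle inequality, the deterministic bias is $O(\alpha n\log(n/\eps)/\sqrt N)$. The hypothesis $N \gg n/(\eps a)$ ensures that this bias is at most $a/2$, so it remains to establish the stated exponential tail for the stochastic fluctuation $|\hat D-\expect{\hat D}|>a/2$.

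For the stochastic part, Lemma~\ref{l:counterfactual-revenue} gives
\begin{align*}
\hat D-\expect{\hat D} \;=\; \expect[q\notin\Lambda]{-Z'_y(q)(\hat b(q)-b(q))} \;+\; (\text{extreme-quantile terms}),
\end{align*}
with $Z_y(q)=(1-q)y'(q)/x'_C(q)$ and $\Lambda=[0,\delta_N]\cup[1-\delta_N,1]$. Since $x'_C(q)\ge \eps\bigl(x'_{B_1}(q)+x'_{B_2}(q)\bigr)$ while $|y'(q)|\le x'_{B_1}(q)+\alpha\,x'_{B_2}(q)$, we have $\sup_q|Z_y(q)| \le (1+\alpha)/\eps$; and by Lemma~\ref{lem:Z-bound-1} applied to the positive and negative parts of $y$, $Z_y$ is a difference of two single-peaked functions, so $\int|Z'_y/Z_y|\,dq = O(\log\sup|Z_y|)=O(\log(\alpha/\eps))$. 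Factoring $-Z'_y(\hat b-b)=(-Z'_y/Z_y)\cdot Z_y(\hat b-b)$ as in equation~\eqref{eq:error-split} bounds the moderate-quantile contribution by $\int|Z'_y/Z_y|\,dq\,\cdot\,\sup_{q\notin\Lambda}|Z_y(q)(\hat b(q)-b(q))|$. The all-pay inference equation gives $b'=x'_C\,v$ with $v\le 1$, so $Z_y(q)b'(q)\le (1-q)y'(q)v(q)$, and $\sup_q q(1-q)Z_y(q)b'(q)=O(n)$ by Fact~\ref{fact:max-slope}. The exponential-tail version of Lemma~\ref{error bid function}, obtained from the Hungarian-construction estimate (3.2.3) of Theorem~3.2.1 in \citet{csorgo:83}, then gives a sub-Gaussian bound $\prob{\sup_{q\notin\Lambda}|Z_y(q)(\hat b(q)-b(q))|>s}\le \exp(-\Omega(Ns^2/n^{3}))$. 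Extreme-quantile contributions are controlled exactly as in the proof of Theorem~\ref{thm:allpay-simple}. Choosing $s=\Theta(a/\log(\alpha/\eps))$ and composing with the bias absorption delivers the theorem.

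The main obstacle is the exponential-tail upgrade of the density-weighted quantile-process estimate of Lemma~\ref{error bid function}: Lemma~\ref{error bid function} is stated only in mean-absolute form, and while the Hungarian construction in \citet{csorgo:83} does furnish the required exponential-probability form, careful bookkeeping is required so that the factors of $n$ coming from $\sup y'$ (Fact~\ref{fact:max-slope}), from $\sup_q q(1-q)x'_C(q)$, and from $\int|Z'_y/Z_y|\,dq$ multiply to exactly $n^3\log(n/\eps)$ rather than a larger polynomial. A secondary subtlety is verifying that the bias bound of Corollary~\ref{cor:allpay-y}, which was derived for a nonnegative allocation rule, applies to the signed rule $y$: this is immediate from splitting into positive and negative parts, but the factors of $(1+\alpha)$ must be tracked to produce the $\alpha^2$ in the final denominator.
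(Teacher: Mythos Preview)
There is a genuine gap. Your concentration argument hinges on the claim that $\int_{q\notin\Lambda}|Z'_y/Z_y|\,dq = O(\log\sup|Z_y|)$, obtained by asserting that $Z_y$ is ``a difference of two single-peaked functions.'' But Lemma~\ref{lem:Z-bound-1} only establishes single-peakedness of $Z_k$ when the numerator is a \emph{multi-unit} allocation rule $x_k$; for a general rank-based auction $B_1$, $Z_{B_1}$ is a weighted sum of the single-peaked $Z_k$'s and need not be single-peaked. Worse, even granting that $Z_{B_1}$ and $Z_{B_2}$ were single-peaked, the signed $Z_y = Z_{B_1}-\alpha Z_{B_2}$ can have zeros, near which $Z'_y/Z_y$ diverges and the integral is not controlled by $\log\sup|Z_y|$. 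So the factoring in~\eqref{eq:error-split} does not yield the bound you claim. A secondary issue: Corollary~\ref{cor:allpay-y} bounds mean absolute error, not bias, and a $O(\alpha n\log(n/\eps)/\sqrt N)$ bound is not made $\le a/2$ by the hypothesis $N\gg n/(\eps a)$; the paper instead shows the genuine bias (and extreme-quantile terms) are $O(n/(\eps N))$ via Lemma~\ref{lem:bias-bound} and Lemmas~\ref{second-term-bound}--\ref{fourth-term-bound}, which is what the stated hypothesis is calibrated to absorb.

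The paper's route to concentration is quite different and avoids the $Z'_y/Z_y$ issue entirely. It writes the moderate-quantile error as a sum $\sum_\ell T_{i_\ell,i_{\ell+1}}$ over the intervals between successive \emph{crossing points} where $\hat b$ crosses its mean $\tilde b$ from below, observes that conditioned on the crossing set and on $\Delta=\sup_{q\notin\Lambda}|(\hat b-\tilde b)/b'|$ these pieces are mutually independent mean-zero variables, and applies Hoeffding. The bookkeeping is that each piece is bounded by $|T_\ell|\le 80\Delta\,\rxy$ (since $\hat b-\tilde b$ changes sign at most once on each piece, so Lemma~\ref{first-term-bound} applies), while $\sum_\ell|T_\ell|\le 80n\Delta\,\rxy$ because $Z'_y$ changes sign at most $2n$ times (each $Z_k$ being single-peaked); hence $\sum_\ell\tau_\ell^2\le n(80\Delta\,\rxy)^2$, which with $\rxy\le n\log(n/\eps)$ and $\Delta=O(1/\sqrt N)$ yields the exponent $Na^2/(n^3\log(n/\eps))$. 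If you want to salvage your direct-tail approach, you would need to replace the $|Z'_y/Z_y|$ integral by the sign-change count of $Z'_y$ (at most $2n$) and then argue a sub-Gaussian tail for the pieces; but at that point you have essentially rediscovered the paper's crossing-point/Hoeffding argument.
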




We obtain a similar error bound when our goal is to estimate which of
$r$ different novel mechanisms obtains the most revenue, for any
$r>1$:

\begin{corollary}
  Suppose that our goal is to determine which of $r$ rank-based
  auctions, $B_1, B_2, \cdots, B_r$, obtains the most revenue while
  running incumbent mechanism $A$, by running each of the novel
  mechanisms with probability $\eps/r$. Then the error probability of
  the corresponding classifier constructed using $N$ bids from
  composite mechanism $C = \sum_{i=1}^{r} \eps/r B_i + (1-\eps)A$ is
  bounded from above by
$$r\exp\left(-O\left(\frac{Na^2}{n^3\log(rn/\eps)}\right)\right),$$
where $a$ is the absolute difference between the revenue obtained by
the best two of the $r$ mechanisms.
\end{corollary}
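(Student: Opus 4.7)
The plan is to reduce the $r$-way best-mechanism identification problem to $r-1$ pairwise sign tests and then apply Theorem~\ref{theorem:sign-AB} to each, closing with a union bound. Let $i^\star = \argmax_i P_{B_i}$ denote the true best mechanism and let $a = P_{B_{i^\star}} - \max_{j \ne i^\star} P_{B_j}$ be the gap to the second-best. The natural classifier I would analyze outputs $\hat\imath = \argmax_i \hat{P}_{B_i}$, where each $\hat{P}_{B_i}$ is the estimator of Definition~\ref{d:estimator} computed from the $N$ bids drawn from $C = \sum_{i=1}^{r} (\eps/r)\, B_i + (1-\eps) A$. The classifier errs exactly when some $j \ne i^\star$ satisfies $\hat{P}_{B_j} \ge \hat{P}_{B_{i^\star}}$, so a union bound yields $\prob{\hat\imath \ne i^\star} \le \sum_{j \ne i^\star} \prob{\hat{P}_{B_{i^\star}} - \hat{P}_{B_j} \le 0}$, reducing the task to controlling each pairwise failure probability by $\exp(-O(Na^2/(n^3\log(rn/\eps))))$.

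Next, for each fixed $j \ne i^\star$, I would rewrite the composite mechanism in the two-candidate form required by Theorem~\ref{theorem:sign-AB}. The auction
\begin{align*}
A_j' &= (1 - 2\eps/r)^{-1}\Big[(1-\eps) A + \sum\nolimits_{k \ne i^\star, j} (\eps/r)\, B_k\Big]
\end{align*}
is itself rank-based, and we may write $C = (\eps/r)\, B_{i^\star} + (\eps/r)\, B_j + (1 - 2\eps/r)\, A_j'$. Since $P_{B_{i^\star}} - P_{B_j} \ge a$ by the choice of $i^\star$, applying Theorem~\ref{theorem:sign-AB} with candidate mechanisms $B_{i^\star}, B_j$, incumbent $A_j'$, threshold $\alpha = 1$, and mixture weight $\eps/r$ in place of $\eps$ yields a pairwise error rate of $\exp(-O(Na^2/(n^3 \log(rn/\eps))))$. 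Summing the $r-1$ such bounds gives $r \exp(-O(Na^2/(n^3\log(rn/\eps))))$ as claimed.

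The main obstacle is to verify that Theorem~\ref{theorem:sign-AB} extends cleanly under the substitution $\eps \mapsto \eps/r$, with the incumbent replaced by the aggregated $A_j'$. The dependence on $\eps$ in that theorem comes entirely through the per-mechanism revenue error bound of Corollary~\ref{cor:allpay-y}, which scales with $\log \sup_q x_{B_i}'(q)/x_C'(q)$. In our setting $x_C'(q) \ge (\eps/r)\, x_{B_i}'(q)$ because $B_i$ appears in $C$ with weight $\eps/r$, so this ratio is bounded by $r/\eps$ and the log factor cleanly becomes $\log(rn/\eps)$, matching the target exponent. The sample-size hypothesis $N \gg n/\eps a$ of the original theorem similarly tightens to $N \gg rn/\eps a$, which is consistent with the (implicit) regime of the corollary. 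Once this substitution is traced through the concentration argument underlying Theorem~\ref{theorem:sign-AB}, the union-bound step over $j \ne i^\star$ is routine and absorbs the factor $r-1 \le r$ into the leading constant.
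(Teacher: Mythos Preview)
Your proposal is correct and matches the approach the paper implicitly intends: the corollary is stated without proof immediately after Theorem~\ref{theorem:sign-AB} as a direct consequence, and your reduction via a union bound over the $r-1$ pairwise comparisons $(B_{i^\star},B_j)$, with $C$ rewritten as $(\eps/r)B_{i^\star}+(\eps/r)B_j+(1-2\eps/r)A_j'$ so that Theorem~\ref{theorem:sign-AB} applies with mixture weight $\eps/r$, is exactly the intended derivation. The substitution $\eps\mapsto\eps/r$ accounts precisely for the $\log(rn/\eps)$ term, and the union bound supplies the leading factor of~$r$.
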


\section{Inference methodology and error bounds for first-price auctions}
\label{s:fp-inf}

In this section we define and analyze an estimator for counterfactual
revenue the bids in first-price auctions.  Our approach will be to
reduce this estimation problem to the all-pay estimation problem that
we solved previously.  Recall that the all-pay estimator is a weighted
order statistic of the empirical all-pay bid function.  Our
first-price estimator will map the empirical first-price bid function
to an empirical all-pay bid function and then apply to it the all-pay
estimator.

Recall that the Bayes-Nash equilibrium bid function of first-price
auction and all-pay auction are related by the payment identity.
Specifically an all-pay bid is deterministically equal to the expected
payment of the payment identity, while in a first-price auction an
agent only pays upon winning.  To facilitate comparison to previous
results we notate the equilibrium bid function of the all-pay auction
as $\bidap$ and the equilibrium bid function of the first-price
auction as $\bidfp$.  Given the allocation rule $\alloc$, the payment
identity requires $\bidap(\quant) = \alloc(\quant)\,\bidfp(\quant)$.
Consequently, an empirical all-pay bid function can be defined from
the empirical first-price bid function as $\ebidap(\quant) =
\alloc(\quant)\,\ebidfp(\quant)$.  Note that while in previous
sections the empirical all-pay bid function is piece-wise constant
(similarly the empirical first-price bid function is piece-wise
constant), this empirical all-pay bid function is not piece-wise
constant.

Partition the quantile range into extreme quantiles
$\Lambda=[0,\delta_N]\cup[1-\delta_N, 1]$ and the moderate quantiles
$[\delta_N,1-\delta_N]$.  Recall that truncation trades off a
(potentially diverging) variance of the estimator suggested by
\autoref{l:counterfactual-revenue} at the extreme quantiles with a
bias that can be bounded.  Specifically, truncation replaces bids at
low quantiles with zero and bids at high quantiles with the upper
bound $\ebidap(1)$ (which, in terms of the first-price bids, is
$\alloc(1)\,\ebidfp(1)$).  

As in \autoref{s:inference}, the estimator for counterfactual revenue
plugs the truncated empirical bid function into the counterfactual revenue
equation~\ref{eq:P_y-truncated} of
Lemma~\ref{l:counterfactual-revenue}.  We obtain the following
estimator in terms of the empirical first-price bids:
\begin{align*}
\hat{P}_y & =
  \expect[\quant\not\in\Lambda]{-Z'_y(\quant)\,\alloc(\quant)\,\ebidfp(\quant)}
  + Z_y(1-\delta_N)\,\alloc(1)\,\ebidfp(1).
\end{align*}
This estimator is a weighted order statistic as formalized in the
following definition.

\begin{definition}
\label{d:estimator-fp}
The estimator $\hat{P}_y$ (with truncation parameter $\delta_N$) for
the revenue of an auction with allocation rule $y$ from $N$ samples
$\ebidfp_1 \leq \cdots \leq \ebidfp_N$ from the equilibrium bid
distribution of a first price auction with allocation rule $x$ is:
\begin{align*}
\hat{P}_y 
& \def\INTERVAL{[i,i+1]/N}
= \sum\nolimits_{i=\delta_N N}^{N-\delta_N N} \expect[\quant \in \INTERVAL]{-Z'_y(\quant)\,\alloc(\quant)}\,\ebidfp_i + Z_y(q)\,\alloc(1)\,\ebidfp_N.
\end{align*}
\end{definition}

To obtain a bound on the mean absolute error of the estimator
judiciously plug the identity relating first-price and all-pay
equilibrium bids into the error bound of
equation~\eqref{eq:error-allpay} to get:
\begin{align}
\label{eq:error-first-price}
|\emurevk-\murevk| \le & 
\left|\expect[q\not\in\Lambda]{\smash{-Z'_k(\quant)\,\alloc(\quant)\,(\ebidfp(\quant)-\bidfp(\quant))}}\right|
+ \left| \expect[\quant\in\Lambda]{Z_k(\quant)\,\bidap'(\quant)}\right|\\
\notag & + \left| \smash{Z_k(1-\delta_N)\,(\bidap(1-\delta_N) -\alloc(1)\,\ebidfp_N)}\right|
+ \left| Z_k(\delta_N)\,\bidap(\delta_N)\right|
\end{align}
It is clear that terms that depend only on the equilibrium bid
functions and not the empirical bid functions need no further
analysis.  Specifically \autoref{second-term-bound} and
\autoref{fourth-term-bound} bound the contribution to the error of the
second and fourth terms of equation~\eqref{eq:error-first-price}.  It
remains to bound the contribution from the first and third terms.
These bounds come from relatively minor adjustments to the analogous
bounds for all-pay auctions.

For the third term, we can adapt the analysis of
\autoref{third-term-bound}.  Denote the quantile of bid $\ebidfp_N$ by
$\equant$, i.e., $\ebidfp_N = \bidfp(\equant)$.  There are two parts
of the analysis, the first part is for the case $\equant \geq
1-\delta_N$ and the second part is for the case $\equant \leq
1-\delta_N$.  

For the first part, the proof of \autoref{third-term-bound} upper
bounds $\ebidap_N$ by $\bidap(1)$.  We can do the same for
$\alloc(1)\,\ebidfp_N$: $\alloc(1)\,\bidfp(\equant) \leq \alloc(1)
\bidfp(1) = \bidap(1)$.  The first inequality follows from the
monotonicity of the equilibrium bid function, i.e., $\ebidfp(\equant)
\leq \ebidfp(1)$ for $\equant \leq 1$.  Thus, we can upper bound the
error in the case that $\equant \geq 1-\delta_N$ by
$\smash{Z_k(1-\delta_N)\,(\bidap(1) - \bidap(1-\delta_N))}$ which was
bounded already in the proof of \autoref{third-term-bound}.

For the second part, write
\begin{align*}
\alloc(1)\,\bidfp(\equant)
&= \alloc(1)\,\bidap(\equant) / \alloc(\equant)\\
&\geq \alloc(\equant)\,\bidap(\equant) / \alloc(\equant)\\
&= \bidap(\equant),
\end{align*}
where inequality follows from monotonicity of $\alloc$.  Thus, we can
upper bound the error in the case that $\equant \leq 1-\delta_N$ by
$\smash{Z_k(1-\delta_N)\,(\bidap(1-\delta_N) - \bidap(\equant))}$ which was
bounded already by \autoref{third-term-bound}.

To analyze the first term in the error bound of
equation~\eqref{eq:error-first-price}, we begin with the following upper
bound:
\begin{align*}
\left|\expect[q\not\in\Lambda]{\smash{-Z'_y(\quant)\alloc(\quant)(\ebidfp(\quant)-\bidfp(\quant))}}\right|
& \le \expect[q\not\in\Lambda]{\left|\frac{Z'_y(q)}{h(Z_y(q))}\right|}
\sup\limits_q\left|\alloc(\quant)h(Z_y(q))(\ebidfp(q)-\bidfp(q))\right|\\
& \le \expect[q\not\in\Lambda]{\left|\frac{Z'_y(q)}{h(Z_y(q))}\right|}
\sup\limits_q\left|h(Z_y(q))(\ebidfp(q)-\bidfp(q))\right|.
\end{align*}
We can then carry out an analysis identical to the proof of
\autoref{first-term-bound} with an appropriate choice of
$h(\cdot)$. The only difference is in the application of
\autoref{error bid function}. Whereas for all-pay auctions the lemma
bounds the weighted error in bids in terms of $\sup_q
\{q(1-q)x'(q)\}\le n/4$, in the case of first-price auctions, this
term is replaced by $\sup_q \{q(1-q)x'(q)/x(q)\}$, which is no more
than $n$ for rank-based allocation rules. We obtain the following
theorem.
\begin{theorem}\label{th: first price}
  The expected absolute error in estimating the revenue of a position
  auction with allocation rule $y$ using $N$ samples from the bid
  distribution for a first-pay position auction with allocation rule
  $x$ is bounded by both of the expressions below; Here $n$ is the number of positions
  in the two position auctions.
\begin{align*}
\Err{P_y} & \le \frac{28n^2\log N}{\sqrt{N}},\\
\Err{P_y} & \le \frac{80}{\sqrt N}\,n \log \sup\nolimits_{\quant}
n \frac{y'(\quant)}{\alloc'(\quant)}.
\end{align*}
When $y$ is the highest-$k$-bids-win allocation rule, the latter bound
improves to:
\begin{align*}
\Err{P_k}
& \le \frac{80}{\sqrt{N}} \rxy[\kalloc]
\end{align*}
with $\rxy[\kalloc]$ as defined in equation~\eqref{eq:rxy}.
\end{theorem}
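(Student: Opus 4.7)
\begin{proofsketch}
The plan is to reduce the analysis to the all-pay case already handled in Appendix~\ref{s:proofs-3}. The first-price and all-pay equilibrium bids satisfy the payment identity $\bidap(q) = \alloc(q)\,\bidfp(q)$, and the estimator $\hat{P}_y$ of Definition~\ref{d:estimator-fp} is exactly the all-pay estimator of Definition~\ref{d:estimator} applied to the converted empirical all-pay bid function $\ebidap(q) := \alloc(q)\,\ebidfp(q)$. Plugging this identity into equation~\eqref{eq:error-allpay} yields the four-term error decomposition in equation~\eqref{eq:error-first-price}, which I would bound term by term.

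The second and fourth terms depend only on the true all-pay bid function $\bidap$ and not on the empirical bids, so they are controlled directly by Lemma~\ref{second-term-bound} and Lemma~\ref{fourth-term-bound}. For the third term, as sketched in the discussion preceding the theorem, I would condition on whether the quantile $\equant$ of the largest observed bid $\ebidfp_N$ lies above or below $1-\delta_N$. In the former case, monotonicity of $\bidfp$ gives $\alloc(1)\,\bidfp(\equant)\le \alloc(1)\,\bidfp(1) = \bidap(1)$; in the latter, monotonicity of $\alloc$ gives $\alloc(1)\,\bidfp(\equant)\ge \alloc(\equant)\,\bidfp(\equant) = \bidap(\equant)$. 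In both cases the resulting error is bounded by a quantity already controlled in Lemma~\ref{third-term-bound}.

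The substantive work is on the first term. Following the same factoring as in the proofs of Lemma~\ref{first-term-bound-simple} and Lemma~\ref{first-term-bound}, I would write
\begin{align*}
\left|\expect[q\not\in\Lambda]{-Z'_y(q)\,\alloc(q)\,(\ebidfp(q)-\bidfp(q))}\right|
\le \expect[q\not\in\Lambda]{\left|\tfrac{Z'_y(q)}{h(Z_y(q))}\right|}\,\sup_q \left|h(Z_y(q))\,(\ebidfp(q)-\bidfp(q))\right|,
\end{align*}
absorbing $\alloc(q)\le 1$ into the supremum, with $h$ chosen identically to the all-pay proof. The first factor is handled exactly as before via the single-peakedness of $Z_y$ from Lemma~\ref{lem:Z-bound-1}. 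The second factor is where the first-price setting differs: rather than invoking Lemma~\ref{error bid function} with the all-pay bound $\sup_q q(1-q)\bid'(q)\le \tfrac14\sup_q \alloc'(q)$, we use the first-price bound $\sup_q q(1-q)\bidfp'(q)\le \sup_q q(1-q)\alloc'(q)/\alloc(q)$ noted after Lemma~\ref{error bid function}, which for rank-based allocation rules is at most $n$.

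The main obstacle will be tracking how this quantitative swap propagates through the rest of the argument. In the all-pay analysis the density weight $\bidap'$ cancels naturally with the denominator $\alloc'$ in $Z_y$ near extreme quantiles; for first-price bids the analogous cancellation is mediated by $\alloc(q)$ in the numerator of the inference equation~\eqref{eq:fp-inf}, and one must check that the ratio $q(1-q)\alloc'(q)/\alloc(q)$ remains $O(n)$ uniformly over rank-based allocations. Once this is verified, the first stated bound follows by collecting constants (the extra factor relative to $16$ in Theorem~\ref{thm:allpay-simple} coming from the substitution above), and the second bound follows by using the finer factoring of Lemma~\ref{first-term-bound} together with the averaging step used to derive Corollary~\ref{cor:allpay-y} from Theorem~\ref{thm:allpay-general}. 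The sharper bound for $y=\kalloc$ comes by applying the $\rxy[\kalloc]$-based refinement directly, without the $\log n$ loss incurred in the averaging over multi-unit components.
\end{proofsketch}
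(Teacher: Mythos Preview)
Your proposal is correct and follows essentially the same approach as the paper: reduce to the all-pay analysis via the payment identity, reuse Lemmas~\ref{second-term-bound} and~\ref{fourth-term-bound} verbatim for the second and fourth terms, handle the third term by the two-case monotonicity argument reducing to Lemma~\ref{third-term-bound}, and for the first term absorb $\alloc(q)\le 1$ and rerun the proof of Lemma~\ref{first-term-bound} with the first-price version of the weighted bid-error bound $\sup_q q(1-q)\alloc'(q)/\alloc(q)\le n$ in place of the all-pay bound. The only point to be careful about is your attribution of the constant $28$ to the swap from $n/4$ to $n$; in the paper that constant is simply asserted after collecting terms, and a naive multiplication would overshoot, so when you write it up you should track the constants through the combined first-term and extreme-quantile bounds rather than rely on a single-factor heuristic.
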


Because the error bounds in Theorem~\ref{th: first price} are
identical up to constant factors to those in
Theorems~\ref{thm:allpay-simple}, \ref{thm:allpay-general} and
Corollary~\ref{cor:allpay-y}, other results in Lemma~\ref{lem:univ},
Corollaries~\ref{cor1}, \ref{cor2}, \ref{cor3}, \ref{cor:universal},
and Theorems~\ref{theorem:sign-AB} and \ref{thm:sw} continue to hold
when bids are drawn from a first-price auction.

\section{Finding the optimal iron by rank auction}
\label{s:iron-opt-app}

Recall that iron by rank auctions are weighted sums of multi-unit
auctions. Therefore, their revenue can be expressed as a weighted sum
over the revenues $P_k$ of $k$-unit auctions. We consider a position environment given by non-increasing weights $\wals
 = (\walk[1],\ldots,\walk[n]$), with $\walk[0] = 0$, $\walk[1]=1$, and $\walk[n+1] = 0$.  Define the cumulative position
weights $\cumwals = (\cumwalk[1],\ldots,\cumwalk[n])$ as $\cumwalk =
\sum_{j \leq i} \walk[j]$.

Define the {\em multi-unit revenue
  curve} as the piece-wise constant function connecting the points
$(0,\murevk[0],\ldots,(n,\murevk[n])$.  This function may or may not
be concave.  Define the {\em ironed multi-unit revenue curve} as
$\imurevs = (\imurevk[1],\ldots,\imurevk[n])$ the smallest concave
function that upper bounds the multi-unit revenue curve.  Define the
multi-unit marginal revenues as $\mumargs =
\mumargk[1],\ldots,\mumargk[n]$ and $\imumargs =
\imumargk[1],\ldots,\imumargk[n]$ as the left slope of the multi-unit
and ironed multi-unit revenue curves, respectively.  I.e., $\mumargk = \murevk - \murevk[k-1]$ and $\imumargk = \imurevk - \imurevk[k-1]$.

We now see how the revenue of any position auction can be expressed in
terms of the multi-unit revenue curves and marginal revenues.
\begin{align*}
\expect{\text{revenue}} &= \sum_{k=0}^n \murevk\,\margwalk 
                         = \sum_{k=0}^n \mumargk\,\walk\\
                        &\leq \sum_{k=0}^n \imurevk\,\margwalk 
                         = \sum_{k=0}^n \imumargk\,\walk.
\end{align*}
The first equality follows from viewing the position auction with
weights $\wals$ as a convex combination of multi-unit auctions (where
its revenue is the convex combination of the multi-unit auction
revenues).  The second and final inequality follow from rearranging
the sum (an equivalent manipulation to integration by parts).  The
inequality follows from the fact that $\imurevs$ is defined as the
smallest concave function that upper bounds $\murevs$ and, therefore,
satisfies $\imurevk \geq \murevk$ for all $k$.  Of course the
inequality is an equality if and only if $\margwalk = 0$ for every
$k$ such that $\imumargk > \mumargk$.

We now characterize the optimal ironing-by-rank position auction.
Given a position auction weights $\wals$ we would like the
ironing-by-rank which produces $\iwals$ (with cumulative weights
satisfying $\cumwals \geq \cumiwals$) with optimal revenue.  By the
above discussion, revenue is accounted for by marginal revenues, and
upper bounded by ironed marginal revenues.  If we optimize for ironed
marginal revenues and the condition for equality holds then this is
the optimal revenue.  Notice that ironed revenues are concave in $k$,
so ironed marginal revenues are monotone (weakly) decreasing in $k$.
The position weights are also monotone (weakly) decreasing.  The
assignment between ranks and positions that optimizes ironed marginal
revenue is greedy with positions corresponding to ranks with negative
ironed marginal revenue discarded.  Tentatively assign the $k$th rank
agent to slot $k$ (discarding agents that correspond to discarded
positions).  This assignment indeed maximizes ironed marginal revenue
for the given position weights but may not satisfy the condition for
equality of revenue with ironed marginal revenue.  To meet this
condition with equality we can randomly permute (a.k.a., iron by rank)
the positions that corresponds to intervals where the revenue curve is
ironed.  This does not change the surplus of ironed marginal revenue
as the ironed marginal revenues on this interval are the same, and the
resulting position weights $\iwals$ satisfy the condition for equality
of revenue and ironed marginal revenue.

\section{Constructing a position auction with a target vector of position weights}
\label{a:position-auction-construction}

In this section we show that in any position auction environment given
by position weights $\wals$, we can construct a rank based auction
with induced position weights $\iwals$ satisfying $\cumiwals \leq
\cumwals$. The allocation rule of the auction is constructed as a
(random) sequence of iron by rank and rank reserve operations.

\begin{lemma}
In any position auction environment with position weights $\wals$ and
target weights $\iwals$ satisfying $\cumiwals \leq \cumwals$, there
exists a (random) sequence of iron by rank and rank reserve operations
following which the induced position weights are exactly $\iwals$.
\end{lemma}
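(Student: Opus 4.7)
The plan is to characterize the extreme points of the feasibility polytope
\[
\mathcal{P} = \{\iwals \in \mathbb{R}^n : \iwalk[1] \geq \cdots \geq \iwalk[n] \geq 0,\;\cumiwals \leq \cumwals\},
\]
show that each extreme point is realized by some deterministic sequence of iron-by-rank and rank-reserve operations, and then invoke Krein-Milman to write an arbitrary $\iwals \in \mathcal{P}$ as a convex combination of extreme points, implementing it by a random choice among the corresponding deterministic sequences.

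To characterize the extreme points, group the positions of a candidate $\iwals$ into maximal runs on which $\iwalk$ is constant, obtaining blocks $B_1,\ldots,B_r$ with strictly decreasing common values $c_1 > c_2 > \cdots > c_r \geq 0$. The $n-r$ within-block monotonicity inequalities are automatically tight, so extremality requires $r$ further tight inequalities drawn from $\{\cumiwalk = \cumwalk\}_{k=1}^n \cup \{\iwalk[n] = 0\}$. I claim that these are forced to be: the cumulative constraint tight at the right endpoint of each $B_i$ for $i < r$, and, at the right endpoint of $B_r$, either the cumulative constraint tight or $c_r = 0$. The argument is a perturbation: if $\cumiwalk < \cumwalk$ at the right endpoint of some $B_i$ with $i < r$, then concavity of $\cumwals - \cumiwals$ restricted to a block (each block contributes a constant-slope segment to $\cumiwals$) forces the gap to be strictly positive throughout both $B_i$ and $B_{i+1}$; since $c_i > c_{i+1}$ strictly, one can shift an infinitesimal amount of weight between $B_i$ and $B_{i+1}$ in either direction while preserving all monotonicity and cumulative constraints, exhibiting a two-sided direction of motion and contradicting extremality. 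A symmetric argument handles $i = r$.

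Given the characterization, the tight cumulative constraints determine each $c_i$ (for $i < r$, and for $i = r$ when $c_r > 0$) as exactly the average of $\walk$ over $B_i$, which is the value produced by the iron-by-rank operation on the interval $B_i$. A deterministic realization of the extreme point $\iwals$ is therefore: apply iron-by-rank to each block $B_i$ with $c_i > 0$, and, if $c_r = 0$, follow with a rank-reserve at the last position of $B_{r-1}$ to zero out $B_r$. Since the blocks are disjoint and rank-reserve only affects positions strictly beyond its reserve rank, these operations compose without interference and produce exactly the target weights $\iwals$.

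Finally, $\mathcal{P}$ is a nonempty compact convex polytope, so by Krein-Milman any $\iwals \in \mathcal{P}$ admits a representation $\iwals = \sum_j p_j \iwals^{(j)}$ with $p_j \geq 0$, $\sum_j p_j = 1$, and each $\iwals^{(j)}$ an extreme point. Executing the deterministic sequence realizing $\iwals^{(j)}$ with probability $p_j$ gives a random sequence of iron-by-rank and rank-reserve operations whose expected induced position weights equal $\iwals$. The main obstacle is the perturbation argument in the extreme-point characterization; once that is in hand, the realization by iron-by-rank plus rank-reserve and the subsequent randomization via Krein-Milman are essentially bookkeeping.
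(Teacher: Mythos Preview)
Your proof is correct and takes a genuinely different route from the paper's. The paper proceeds by an explicit iterative construction: it maintains an intermediate weight vector that already agrees in cumulative weight with the target $\iwals$ on an initial segment $\{1,\ldots,i\}$, and at each step extends this agreement by one position using a random mixture of two iron-by-rank operations (or one iron-by-rank plus one rank-reserve in the terminal case). After at most $n$ such randomized steps the target is reached exactly. Your approach instead identifies the vertices of the feasibility polytope $\mathcal{P}$, shows each vertex is realized by a single deterministic sequence of disjoint block-ironings (plus possibly one rank-reserve), and then randomizes at the level of vertices via Krein--Milman. The paper's construction is more hands-on and gives an explicit $n$-step procedure; your approach is structurally cleaner---it explains \emph{why} the feasible region is exactly the convex hull of ``iron-each-block-then-maybe-reserve'' outcomes---and, if you invoke Carath\'eodory rather than bare Krein--Milman, yields a mixture over at most $n+1$ deterministic sequences, each of length at most $n$. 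One small remark: your concavity step (slack at $r_i$ forces slack throughout the interiors of $B_i$ and $B_{i+1}$) is correct but deserves one more sentence---the point is that $\cumwals - \cumiwals$ is concave on each block separately (concave minus affine), so a zero at an interior point of a block would force zeros at both block endpoints, and in particular at $r_i$, contrary to assumption.
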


\begin{proof}
  Suppose that we have $\cumiwals \leq \cumwals$. We will describe how
  to assign agents to slots so as to obtain position weights
  $\iwals$. Let $\yals$ denote the position weights corresponding
  to an intermediate assignment. We begin by assigning the agent with
  the $i$th largest bid to the $i$th slot for all $i\in [n]$. The
  position weights for this assignment are given by
  $\yals=\wals$. We will then construct a series of transformations
  or reassignments of agents to positions, each time making a small
  change to the weights $\yals$, so as to bring them closer to the
  target $\iwals$. Each transformation is either an rank-based ironing
  operation or a rank reserve.

  Let $i$ denote the largest index such that $\cumiwalk=\cumyalk$
  for all $k\le i$. We will now present a (randomized) tranformation
  that will increase the value of $i$. Specifically, we will reassign
  some agents to positions in a manner such that the resulting
  position weights $\newyals$ satisfy: $\cumiwalk=\cumnewyalk$ for
  $k\le i+1$ and $\cumyalk\ge \cumnewyalk\ge \cumiwalk$ for
  $k>i+1$.

  Consider the operation of ironing by rank over the interval
  $\{i,\ldots,i'\}$ for some index $i'>i$. Recall that this operation
  averages out the position weights over this interval, setting each
  such weight equal to $(\cumyalk[i']-\cumyalk[i])/(i'-i)$,
  while leaving all other position weights intact. It also preserves
  cumulative weights at positions $k\le i$ and positions $k\ge
  i'$. Note also that the larger that $i'$ is, the smaller is the
  average weight $(\cumyalk[i']-\cumyalk[i])/(i'-i)$. In
  particular, for any $i'>i+1$, this operation strictly decreases the
  $i+1$th position weight.

  Suppose that there exists an index $i'$, with $i+1<i'< n$, such that
  $(\cumyalk[i']-\cumyalk[i])/(i'-i) \ge \iwalk[i+1]$ and
  $(\cumyalk[i'+1]-\cumyalk[i])/(i'+1-i) < \iwalk[i+1]$. Let
  $A:= (\cumyalk[i']-\cumyalk[i])/(i'-i)$ and
  $B:=(\cumyalk[i'+1]-\cumyalk[i])/(i'+1-i)$. Let $\alpha\in
  (0,1]$ be defined such that $\alpha A + (1-\alpha) B =
  \iwalk[i+1]$. Now consider the following transformation. With
  probability $\alpha$, we iron over the rank interval
  $\{i,\ldots,i'\}$ and with probability $1-\alpha$, we iron over the
  rank interval $\{i,\ldots,i'+1\}$. Let $\newyals$ and
  $\cumnewyals$ denote the new positions weights and cumulative
  position weights at the end of the (randomized) ironing
  operation. Note that both of these ironing operations preserve the
  position weights over positions $k\le i$ and $k>i'+1$. Over
  positions $k\in \{i,\ldots,i'\}$, the new position weight
  $\newyalk$ is exactly $\alpha A + (1-\alpha)B =
  \iwalk[i+1]$. Finally, both ironing operations maintain the same
  cumulative weight at position $i'+1$. Since $\cumnewyalk[i'] =
  \cumyalk[i']= \cumiwalk[i']$ and $\cumnewyalk[i'+1] =
  \cumyalk[i'+1]>\cumiwalk[i'+1]$, we get that the new position
  weight at $i'+1$ is at least $\iwalk[i'+1]$. This completes one step
  of our transformation.

  Alternately, suppose that for $i'=n$, we have
  $(\cumyalk[i']-\cumyalk[i])/(i'-i) \ge \iwalk[i+1]$. Let $A:=
  (\cumyalk[i']-\cumyalk[i])/(i'-i)$ and let $\alpha\in [0,1]$ be
  defined such that $\alpha A = \iwalk[i+1]$. Now consider the
  following transformation. With probability $\alpha$, we iron over
  the rank interval $\{i,\ldots,n\}$ and with probability $1-\alpha$,
  we set a rank reserve of $i$, that is, we reject every agent with
  rank $>i$. Note that both of these operations preserve the position
  weights over positions $k\le i$. For $k>i$, the new position weights
  are exactly $\alpha A = \iwalk[i+1]$. Therefore, once again, we
  obtain $\cumiwalk=\cumnewyalk$ for $k\le i+1$ and $\cumyalk\ge
  \cumnewyalk\ge \cumiwalk$ for $k>i+1$.

  To summarize, we described a sequence of randomized operations. Each
  step of the sequence increases the number of positions over which
  the position weights corresponding to our current assignment,
  $\yals$, match the target position weights, $\iwals$. After at
  most $n$ such operations we obtain a randomized assignment of agents
  to positions achieving the target position weights.
\end{proof}

\section{Approximation via rank-based auctions}
\label{s:rank-approx}

In this section we show that the revenue of optimal rank-based auction
approximates the optimal revenue (over all auctions) for position
environments.  Instead of making this comparison directly we will
instead identify a simple non-optimal rank-based auction that
approximates the optimal auction.  Of course the optimal rank-based
auction of Theorem~\ref{thm:rank-based-opt} has revenue at least that
of this simple rank-based auction, thus its revenue also satisfies the
same approximation bound.

Our approach is as follows.  Just as arbitrary rank-based mechanisms
can be written as convex combinations over $k$-highest-bids-win
auctions, the optimal auction can be written as a convex combination
over optimal $k$-unit auctions.  We begin by showing that the revenue
of optimal $k$-unit auctions can be approximated by multi-unit
highest-bids-win auctions when the agents' values are distributed
according to a regular distribution (Lemma~\ref{lem:approx-regular},
below). In the irregular case, on the other hand, rank-based auctions
cannot compete against arbitrary optimal auctions. For example, if the
agents' value distribution contains a very high value with probability
$o(1/n)$, then an optimal auction may exploit that high value by
setting a reserve price equal to that value; on the other hand, a
rank-based mechanism cannot distinguish very well between values
correspond to quantiles above $1-1/n$. We show that rank-based
mechanisms can approximate the revenue of any mechanism that does not
iron or reserve price within the quantile interval $[1-1/n,1]$ (but
may arbitrarily optimize over the remaining
quantiles). Theorem~\ref{thm:rank-based-approx} presents the precise
statement.




\begin{lemma}
\label{lem:approx-regular}
For regular $k$-unit $n$-agent environments, there exists a $k' \leq
$k such that the highest-bid-wins auction that restricts supply to
$k'$ units (i.e., a rank reserve) obtains at least half the revenue of
the optimal auction.
\end{lemma}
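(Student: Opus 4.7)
The plan is to fix $k' = \min(k, k^\star)$ where $k^\star = \arg\max_{j \in [n]} P_j$ is the supply that maximizes rank-reserve revenue over all unconstrained choices, and show that $P_{k'} \ge P^\star_k / 2$ (here $P^\star_k$ denotes the optimal $k$-unit revenue). Under regularity, the per-agent revenue $P_j = E[R(\tilde V_{(j)})]$ of the $j$-highest-bids-win auction (where $\tilde V_{(j)}$ is the $j$-th largest of $n-1$ i.i.d.\ uniforms) is concave in $j$: its increment $P_{j+1} - P_j$ equals the expected virtual value of the $(j+1)$-st largest bidder, which is decreasing in $j$ by monotonicity of $\phi$. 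Hence $k^\star$ is the largest $j$ with $E[\phi(v_{(j)})] \ge 0$, and roughly $k^\star \approx n(1-q^\star)$ where $q^\star$ is the monopoly quantile.

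The proof proceeds by case analysis. In the supply-constrained case $k \le k^\star$, set $k' = k$. By integration by parts of the virtual-surplus formula, the excess satisfies
\[
P^\star_k - P_k \;=\; R(q^\star)\,x_{k:n}(q^\star) \;-\; \int_0^{q^\star} R(q)\,x'_{k:n}(q)\,dq \;\le\; R(q^\star)\,x_{k:n}(q^\star).
\]
I then aim to show $R(q^\star)\,x_{k:n}(q^\star) \le P_k$ by combining the concavity lower bound
\[
R(q) \;\ge\; R(q^\star)\,\min\!\bigl(q/q^\star,\,(1-q)/(1-q^\star)\bigr)
\]
(valid since $R$ is concave with $R(0)=R(1)=0$) with the fact that $\tilde V_{(k)}$ concentrates above $q^\star$ when $k \le n(1-q^\star)$. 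This yields $P^\star_k \le 2 P_k$.

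In the reserve-constrained case $k > k^\star$, set $k' = k^\star$. Here $\tilde V_{(k^\star)}$ concentrates around $q^\star$, so the tent-function lower bound on $R$ plus order-statistic concentration gives $P_{k^\star} = E[R(\tilde V_{(k^\star)})] \ge R(q^\star)/2$. Combined with the universal per-agent bound $P^\star_k \le R(q^\star)$ (since each agent's ex-ante revenue cannot exceed the monopoly revenue), this immediately yields $P_{k^\star} \ge P^\star_k/2$.

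The main obstacle is making the factor of $1/2$ precise in the supply-constrained case, where the comparison between $R(q^\star)\,x_{k:n}(q^\star)$ and $E[R(\tilde V_{(k)})]$ via the tent-function bound requires careful accounting of Beta-distribution moments (and is tightest near the boundary $k \approx k^\star$). If this direct route proves fragile, a cleaner alternative is a virtual-value coupling argument: write $n P^\star_k = E[\sum_{i=1}^k \phi(v_{(i)})^+]$ and $n P_{k'} = E[\sum_{i=1}^{k'} \phi(v_{(i)})]$, and condition on the random count $J = \#\{i : v_i \ge v(q^\star)\}$. The boundary ranks near $k^\star$ contribute bounded-magnitude terms (by monotonicity of $\phi$ and the fact that $E[\phi(v_{(k^\star)})] \approx 0$), and these telescope against the top-rank positive contributions to yield the desired factor-two bound in both cases uniformly.
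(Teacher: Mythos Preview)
Your approach is genuinely different from the paper's and, as you yourself suspect, the supply-constrained case does not go through. The specific inequality you are aiming for, $R(q^\star)\,x_{k:n}(q^\star)\le P_k$, is false in general. Take $n=2$, $k=1$, and let $R$ be the tent function with peak at $q^\star=0.9$ (this is a concave revenue curve, hence regular). Then $x_{1:2}(q)=q$, so $R(q^\star)x_{1:2}(q^\star)=0.9\,R^\star$, while $P_1=\int_0^1 R(q)\,dq = R^\star/2$. Thus $0.9\,R^\star \not\le 0.5\,R^\star$. The lemma still holds here ($P_1/P_1^\star = 0.5/0.95 > 1/2$), but not via your decomposition. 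Your tent-function lower bound on $R$ only gives $P_k\ge R(q^\star)/2$ at best, which is too weak to absorb the excess term $R(q^\star)x_{k:n}(q^\star)$ when $q^\star$ is close to $1$; no amount of Beta-moment accounting will fix this because the target inequality is simply not true. The fallback coupling argument you sketch is too vague to assess, and the heuristic ``$E[\phi(v_{(k^\star)})]\approx 0$'' is not a usable bound on the boundary terms.

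The paper's proof avoids all of this by invoking Bulow--Klemperer: for regular i.i.d.\ values, the $k'$-unit $n$-agent highest-bids-win auction earns at least the revenue of the $k'$-unit $(n-k')$-agent \emph{optimal} auction. The case split is then on $k$ versus $n/2$, not on $k$ versus $k^\star$. When $k\le n/2$, take $k'=k$: Bulow--Klemperer gives $nP_k\ge \mathbf{OPT}(k,n-k)$, and a simple simulation argument (run the $n$-agent optimal auction on $n-k$ real agents plus $k$ fake draws) shows $\mathbf{OPT}(k,n-k)\ge \tfrac{n-k}{n}\mathbf{OPT}(k,n)\ge \tfrac{1}{2}\mathbf{OPT}(k,n)$. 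When $k>n/2$, take $k'=n/2$: then $nP_{n/2}\ge \mathbf{OPT}(n/2,n/2)=\tfrac{1}{2}\mathbf{OPT}(n,n)\ge \tfrac{1}{2}\mathbf{OPT}(k,n)$. This black-box use of Bulow--Klemperer is what makes the factor $1/2$ fall out cleanly; a direct revenue-curve argument of the kind you attempt would need a substantially different decomposition.
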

\begin{proof}
This lemma follows easily from a result of \citet{BK-96} that states
that for agents with values drawn i.i.d.\@ from a regular distribution
the revenue of the $k'$-unit $n$-agent highest-bid-wins auction is at
least the revenue of the $k'$-unit $(n-k')$-agent optimal auction. To
apply this theorem to our setting, let us use $\opt{k}{n}$ to denote
the revenue of an optimal $k$-unit $n$-agent auction, and recall that
$nP_k$ is the revenue of a $k$-unit $n$-agent highest-bids-win
auction.

When $k\le n/2$, we pick $k'=k$. Then,  
$$nP_k\ge \opt{k}{n-k} \ge \frac{(n-k)}{n} \opt{k}{n}\ge \frac
12\opt{k}{n},$$ and we obtain the lemma. Here the first inequality
follows from \citeauthor{BK-96}'s theorem and the third from the
assumption that $k\le n/2$.  The second inequality follows via by
lower bounding $\opt{k}{n-k}$ by the following auction which has
revenue exactly $\frac{(n-k)}{n} \opt{k}{n}$: simulate the optimal
$k$-unit $n$-agent on the $n-k$ real agents and $k$ fake agents with
values drawn independently from the distribution.  Winners of the
simulation that are real agents contribute to revenue and the
probability that an agent is real is $(n-k)/n$.

When $k>n/2$, we pick $k'=n/2$. As before we have: 
\begin{align*}
nP_{n/2}\ge \opt{n/2}{n/2} &= \frac 12\opt{n}{n}\ge \frac 12\opt{k}{n}.\qedhere
\end{align*}
\end{proof}

\begin{lemma}
\label{lem:approx-irregular} 
For (possibly irregular) $n$-agent environments with revenue
curve $\rev(\cdot)$ and quantile $q\le 1-1/n$, there exists an integer
$k\le (1-q)n$ such that the revenue of the $k$-highest-bids-win
auction is at least a quarter of $n \rev(q)$, the revenue from posting
a price of $\val(q)$.
\end{lemma}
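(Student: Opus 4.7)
The plan is to reduce the inequality to a binomial tail estimate, then handle the reduction by cases in $\mu := (1-q)n$. By revenue equivalence, the $k$-highest-bids-win auction obtains total expected revenue $k\cdot\mathbb{E}[v_{(k+1)}]$, where $v_{(k+1)}$ is the $(k+1)$-st largest value among the $n$ bidders. Letting $X\sim\mathrm{Bin}(n,1-q)$ count the bidders with value at least $v(q)$, on the event $\{X\ge k+1\}$ the $(k+1)$-st largest value is at least $v(q)$, so the revenue is at least $k\,v(q)$. Hence
\[
n P_k \;\ge\; k\,v(q)\,\Pr[X\ge k+1],
\]
and since $nR(q)=v(q)\,\mathbb{E}[X]$, the claim $nP_k\ge nR(q)/4$ reduces to exhibiting an integer $k\le \mu$ with $k\,\Pr[X\ge k+1]\ge\mu/4$; note $\mu\ge 1$ by the hypothesis $q\le 1-1/n$.

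For $\mu\ge 3$ I would take $k=\lceil\mu/2\rceil$. A short case check confirms $k+1\le\lfloor\mu\rfloor$, so by the Jogdeo--Samuels theorem (the median of $\mathrm{Bin}(n,p)$ lies in $\{\lfloor\mu\rfloor,\lceil\mu\rceil\}$) we have $\Pr[X\ge k+1]\ge\Pr[X\ge\lfloor\mu\rfloor]\ge 1/2$, giving $k\,\Pr[X\ge k+1]\ge\lceil\mu/2\rceil/2\ge\mu/4$. For $\mu\in[2,3)$ the median bound is too weak at $k=2$, so I would instead pick $k=1$ and verify $\Pr[X\ge 2]\ge\mu/4\le 3/4$ by a direct binomial tail estimate, noting that the Poisson limit $n\to\infty$ is the worst case.

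For $\mu\in[1,2)$ only $k=1$ is admissible and the inequality becomes $\Pr[X\ge 2]\ge\mu/4$. I would prove it by writing $\Pr[X\le 1]=(1-p)^{n-1}\bigl((1-p)+np\bigr)$ for $p=\mu/n$ and bounding it from above. The tight instance is $\mu=1$, $n=2$, $p=1/2$: here $\Pr[X=2]=1/4=\mu/4$ exactly, which both certifies that the constant $1/4$ in the lemma is optimal and identifies $n=2$ as the worst $n$ at $\mu=1$.

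The hard part is exactly this tightness at $\mu=1$: the reduction leaves essentially no slack, so the small-$\mu$ case cannot be proved by a generic concentration inequality and must track the two lowest binomial masses explicitly. By contrast, the $\mu\ge 3$ regime goes through cleanly from the median characterization of $\mathrm{Bin}(n,p)$, and the narrow intermediate band $\mu\in[2,3)$ is handled by the same type of explicit tail bound used for $\mu\in[1,2)$.
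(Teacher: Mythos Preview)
Your approach is essentially the paper's: both reduce to the bound $nP_k \ge k\,v(q)\,\Pr[X\ge k+1]$ with $X$ binomial, dispatch the main range of $\mu=(1-q)n$ via the binomial median, and handle the small-$\mu$ band with $k=1$ by the direct tail inequality $\Pr[X\ge 2]\ge \mu/4$ (which the paper also leaves as ``we can verify for $n\ge 2$''). The one tactical difference is the choice of $k$ in the main case: the paper takes $k=\lfloor\mu\rfloor-1$ and shifts the threshold to $z=v(1-(k{+}1)/n)$ so that the binomial has \emph{integer} mean $k{+}1$ and hence median exactly $k{+}1$, yielding the ratio $k/(2(k{+}2))$ which tightens toward $1/2$ for large $\mu$; you instead keep the threshold at $v(q)$, take $k=\lceil\mu/2\rceil$, and invoke Jogdeo--Samuels to get $\Pr[X\ge \lfloor\mu\rfloor]\ge 1/2$, landing directly on $1/4$. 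Both routes meet at the same worst-case constant, and both identify $\mu=1$, $n=2$ as the tight instance.
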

\begin{proof}
First we get a lower bound on $\murevk$ for any $k$.  For any value
$z$, the total expected revenue of the $k$-highest-bids-win auction is
at least $zk$ times the probability that at least $k+1$ agents have
value at least $z$.  The median of a binomial random variable
corresponding to $n$ Bernoulli trials with success probability
$(k+1)/n$ is $k+1$.  Thus, the probability that this binomial is at
least $k+1$ is at least $1/2$.  Combining these observations by
choosing $z = \val(1-(k+1)/n)$ we have,
\begin{align*}
n \, \murevk &\geq  \val(1-(k+1)/n)\, k / 2.\\
\intertext{Choosing $k = \lfloor (1-q)n\rfloor -1$, for which $\val(1-(k+1)/n) \geq \val(q)$, the bound simplifies to,}
n \, \murevk &\geq  \val(q)\, k / 2.
\end{align*}
The ratio of $P_k$ and $\rev(q) = (1-q)\,\val(q)$ is therefore at least 
\begin{align*}
\frac{k}{2(1-q)n} &> \frac{k}{2(k+2)}.
\end{align*}
For $q\le 1-3/n$ (or, $k\ge 2$) this ratio is at least $1/4$.

For $q\in (1-3/n,1-1/n]$, we pick $k=1$. Then, $\murevk[1]$ is at least $1/n$
  times $\val(q)$ times the probability that at least two agents have a
  value greater than or equal to $\val(q)$. We can verify for $n\ge 2$ that
$$\murevk[1] \ge \frac {\val(q)}n  \left( 1-q^n-n(1-q)q^{n-1} \right)\ge \frac 14 (1-q)\,\val(q).$$
\end{proof}

\begin{theorem}
\label{thm:rank-based-approx}
For regular value distributions and position environments, the optimal
rank-based auction obtains at least half the revenue of the optimal
auction.  For any value distribution (possibly irregular) and position
environments, the optimal rank-based auction obtains at least a
quarter of the revenue of the optimal auction that does not iron or
set a reserve price for the highest $1/n$ measure of values i.e.,
$q \in [1-1/n,1]$.
\end{theorem}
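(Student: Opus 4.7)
The plan is, for each case, to exhibit a specific non-optimal rank-based auction achieving the claimed fraction of the (possibly constrained) optimum; since the revenue-optimal rank-based auction of Theorem~\ref{thm:rank-based-opt} can only do better, the theorem will follow. The shared structural observation is that, for any position environment with weights $\wals$, the telescoping identity $\walk[r]=\sum_{k\ge r}\margwalk$ together with Myerson's characterization of the optimal (or constrained-optimal) allocation expresses the total optimal revenue as $\sum_k \margwalk\,\opt{k}{n}$ in the regular case, and as $\sum_k \margwalk\,\opt{k}{n}^{\mathrm{constr}}$ under the theorem's constraint in the irregular case, where $\opt{k}{n}^{\mathrm{constr}}$ denotes the revenue of the optimal $k$-unit $n$-agent auction restricted to auctions that neither iron nor reserve-price on $q\in[1-1/n,1]$. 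The same identity, applied to allocation rules, shows that whenever one selects some $k'(k)\le k$ for each $k$, the combined rank-based auction with allocation $\sum_k \margwalk\,\kalloc[k'(k)](q)$ is feasible in the position environment: its cumulative induced position weight at rank $j$ is $\sum_k \margwalk\min(j,k'(k))\le \sum_k \margwalk\min(j,k)=\cumwalk[j]$, so Lemma~\ref{l:rank-based-feasibility} applies.

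For the regular case, Lemma~\ref{lem:approx-regular} furnishes exactly the required $k'(k)\le k$ with $n\,\murevk[k'(k)]\ge \tfrac{1}{2}\,\opt{k}{n}$ for each $k$. The combined rank-based auction with allocation $\sum_k \margwalk\,\kalloc[k'(k)]$ is therefore feasible and achieves total revenue $\sum_k \margwalk\,n\,\murevk[k'(k)] \ge \tfrac12 \sum_k \margwalk\,\opt{k}{n}$, which is half the optimum, as required.

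For the irregular case, I will mirror this strategy using Lemma~\ref{lem:approx-irregular}. For each $k$, I will seek a quantile $q_k\in[1-k/n,\,1-1/n]$ with $\opt{k}{n}^{\mathrm{constr}} \le n\,\rev(q_k)$; Lemma~\ref{lem:approx-irregular} at $q_k$ then yields some $k(q_k)\le(1-q_k)n\le k$ with $n\,\murevk[k(q_k)]\ge n\,\rev(q_k)/4\ge \opt{k}{n}^{\mathrm{constr}}/4$, and combining with weights $\margwalk$ delivers a feasible rank-based auction achieving a quarter of the constrained optimum. The main obstacle is establishing such a $q_k$. The lower bound $q_k\ge 1-k/n$ should follow from the supply-limited structure of $k$-unit auctions: for $q\le 1-k/n$ the revenue from posting the reserve $\val(q)$ is capped at $k\,\val(q)=n\,\rev(1-k/n)$, so the constrained optimum is unaffected by reserves below $1-k/n$ and its effective peak lies at $q\ge 1-k/n$. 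The upper bound $q_k\le 1-1/n$ should in turn follow from the theorem's constraint forbidding reserves and ironing on $(1-1/n,1]$, which pins the effective peak of the $k$-unit revenue curve into $[0,1-1/n]$ whenever $k\ge 1$.
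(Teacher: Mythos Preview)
Your regular-case argument is essentially the paper's, and your feasibility check via cumulative weights is, if anything, more explicit than what the paper writes.

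The irregular case has a genuine gap. You assert that for each $k$ there exists a \emph{single} quantile $q_k\in[1-k/n,\,1-1/n]$ with $\opt{k}{n}^{\mathrm{constr}}\le n\,\rev(q_k)$, but this can fail for irregular distributions. Your heuristic for the lower bound already contains an algebraic slip (the claimed identity $k\,\val(q)=n\,\rev(1-k/n)$ holds only when $q=1-k/n$), and more fundamentally the constrained-optimal $k$-unit auction may iron across the point $1-k/n$. Its per-agent revenue is upper-bounded by $\irev(1-k/n)$, the constrained ironed revenue curve, which is a convex combination $\lambda\,\rev(a)+(1-\lambda)\,\rev(b)$ for some $a\le 1-k/n\le b<1-1/n$. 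When $\rev$ dips on $[1-k/n,\,1-1/n]$ while remaining large at some $a<1-k/n$, the value $\irev(1-k/n)$ can strictly exceed $\max_{q\in[1-k/n,\,1-1/n]}\rev(q)$, so no single $q_k$ in your range suffices.

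The paper's remedy is to use \emph{both} endpoints of the ironing interval rather than one quantile. Apply Lemma~\ref{lem:approx-irregular} separately at $a$ and at $b$ to obtain $k_a\le(1-a)n$ and $k_b\le(1-b)n$ with $\murevk[k_a]\ge\rev(a)/4$ and $\murevk[k_b]\ge\rev(b)/4$; then mix the $k_a$-unit and $k_b$-unit auctions with weights $\lambda$ and $1-\lambda$. Feasibility for the $k$-unit slot follows because $\lambda k_a+(1-\lambda)k_b\le n\bigl(1-\lambda a-(1-\lambda)b\bigr)=n\bigl(1-(1-k/n)\bigr)=k$, after which your cumulative-weight argument in the opening paragraph applies verbatim. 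This two-point decomposition via the ironed revenue curve is the missing idea.
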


\begin{proof}
In the regular setting, the theorem follows from
Lemma~\ref{lem:approx-regular} by noting that the optimal auction
(that irons by value and uses a value reserve) in a position
environment is a convex combination of optimal $k$-unit auctions:
since the revenue of each of the latter can be approximated by that of
a $k'$-unit highest-bids-win auction with $k'\le k$, the revenue of
the convex combination can be approximated by that of the same convex
combination over $k'$-unit highest-bids-win auctions; the resulting
convex combination over $k'$-unit auctions satisfies the same position
constraint as the optimal auction.

In the irregular setting, once again, any auction in a position
environment is a convex combination of optimal $k$-unit auctions. The
expected revenue of any $k$-unit auction is bounded from above by the
expected revenue of the optimal auction that sells at most $k$ items
in expectation. The per-agent revenue of such an auction is bounded by
$\irev(1-k/n)$, the revenue of the optimal allocation rule with ex
ante probability of sale $k/n$. Here $\irev(\cdot)$ is the ironed
revenue curve (that does not iron on quantiles in
$[1-1/n,1]$). $\irev(1-k/n)$ is the convex combination of at most two
points on the revenue curve $\rev(a)$ and $\rev(b)$, $a\le 1-k/n\le b
< 1-1/n$.  Now, we can use Lemma~\ref{lem:approx-irregular} to obtain
an integer $k_a < n(1-a)$ such that $P_{k_a}$ is at least a quarter of
$\rev(a)$, likewise $k_b$ for $b$. Taking the appropriate convex
combination of these multi-unit auctions gives us a $4$-approximation
to the optimal auction $k$-unit auction (that does not iron over the
quantile interval $[1-1/n,1]$).  Finally, the convex combination of
the multi-unit auctions with $k_a$ and $k_b$ corresponds to a position
auction with that is feasible for a $k$ unit auction (with respect to
serving the top $k$ positions with probability one, service
probability is only shifted to lower positions).
\end{proof}

\end{document}